\numberwithin{equation}{section}
\def\eps{\varepsilon }
\newcommand\br{\begin{remark}}
\newcommand\er{\end{remark}}
\newcommand\bp{\begin{pmatrix}}
\newcommand\ep{\end{pmatrix}}
\newcommand{\be}{\begin{equation}}
\newcommand{\ee}{\end{equation}}
\newcommand{\ba}[1]{\begin{array}{#1}}
\newcommand{\ea}{\end{array}}
\newcommand{\beg}{\begin{example}}
\newcommand{\eeg}{\end{exaplem}}
\newcommand{\bpr}{\begin{proposition}}
\newcommand{\epr}{\end{proposition}}
\newcommand{\bt}{\begin{theorem}}
\newcommand{\et}{\end{theorem}}
\newcommand{\bc}{\begin{corollary}}
\newcommand{\ec}{\end{corollary}}
\newcommand{\bl}{\begin{lemma}}
\newcommand{\el}{\end{lemma}}
\newcommand{\bd}{\begin{definition}}
\newcommand{\ed}{\end{definition}}
\newcommand{\brs}{\begin{remarks}}
\newcommand{\ers}{\end{remarks}}
\newcommand{\zstroke}{%
  \text{\ooalign{\hidewidth -\kern-.3em-\hidewidth\cr$z$\cr}}%
}
\newtheorem{theorem}{Theorem}[section]
\newtheorem{proposition}[theorem]{Proposition}
\newtheorem{corollary}[theorem]{Corollary}
\newtheorem{lemma}[theorem]{Lemma}
\theoremstyle{remark}
\newtheorem{remark}[theorem]{Remark}
\theoremstyle{definition}
\newtheorem{definition}[theorem]{Definition}
\newtheorem{example}[theorem]{Example}
\def\eps {\varepsilon}
\newcommand\R{\mathbf R}
\newcommand\C{\mathbf C}
\newcommand{\N}{\mathbf N}
\newcommand\cA{{\mathcal A}}
\newcommand\cC{{\mathcal C}}
\newcommand\cF{{\mathcal F}}
\newcommand\cG{{\mathcal G}}
\newcommand\cH{{\mathcal H}}
\newcommand\cK{{\mathcal K}}
\newcommand\cL{{\mathcal L}}
\newcommand\cM{{\mathcal M}}
\newcommand\cN{{\mathcal N}}
\newcommand\cP{{\mathcal P}}
\newcommand\cW{{\mathcal W}}
\author{Shahnaz Farhat}
\title{
{\large Report}
}
\title{Quantum-classical motion of charged particles
interacting with scalar fields}
\begin{document}

\begin{abstract}
The goal of this article is to investigate  the dynamics of  semi-relativistic or non-relativistic charged particles in interaction with a scalar meson field. Our main contribution is the derivation of  the classical dynamics of a particle-field system as an effective equation of  the quantum microscopic Nelson model, in the classical limit where the value of the Planck constant approaches zero ($\hbar\to 0$). Thus,  we prove the validity of Bohr's correspondence principle, that is to establish the transition from  quantum to classical  dynamics.  We use a Wigner measure approach to study such transition. Then, as a consequence of  this interplay between classical and quantum dynamics, we establish the global well-posedness of  the classical particle-field  interacting system, despite the low regularity of the related vector field, which prevents the use of a fixed point argument. 
\end{abstract}

\date{\today}
\maketitle

\medskip

$ \qquad \ ${\scshape \footnotesize Keywords.} {\footnotesize Particle-field equation; Nelson model; Semi-classical analysis; Probabilistic representations;

$ \qquad \qquad \qquad \qquad \, $  Measure theoritical techniques; Wigner measures;  Density matrices;  Liouville equations.}


\tableofcontents




\section{Introduction}
 Classical and quantum mechanics may initially appear to be fundamentally different, as classical mechanics deals with the trajectories of particles while quantum mechanics focuses on wave functions evolution. Furthermore,  quantum mechanics are successful in describing microscopic objects, whereas macroscopic systems are better described by classical theories like classical mechanics and classical electrodynamics. The point at which quantum and classical physics are in accordance is known as  the correspondence limit, or the classical limit.   The  \emph{Correspondence  Principle} has been introduced to quantum theory in 1920 by Niels Bohr. Such principle emphasizes the importance of understanding the relationship between  the two theories and how they converge in  specific scaling limits. Bohr proposed that as the quantum numbers increase, the system behaves more classically and the predictions of quantum mechanics align with those  of classical mechanics.  In the 
 mathematical physics literature, the Bohr's principle is discussed in different frameworks (quantum mechanics, many-body theory, quantum field theory); and rigorously proved using mainly the Hepp's method \cite{MR332046}.  However, this method applies only to a specific selection of density matrices, namely coherent states. In this article, we explore this principle by studying the convergence from quantum to classical dynamics in a particle-field interaction model as the value of the Planck constant, denoted by $\hbar$, approaches zero (see also \cite{marco, gin}).
 \vskip 1mm
 On the other hand, the interaction between matter and fields has been a subject of great interest in recent decades.  Here, the focus is on exploring the dynamics of charged particles and a scalar meson field  interacting  according to the Yukawa theory.  Recall that the Yukawa theory models the strong nuclear force as an interaction between nucleons (non-relativistic or semi-relativistic particles) and mesons (fields).  It is known that, despite the ability of  classical mechanics  in resolving many physical problems, there are still some phenomena that can not be explained by classical laws alone. Here, the low regularity of the vector field associated to the interacting system makes it difficult to construct global solutions using standard arguments. To overcome this issue, we use the quantum-classical  transition of the Yukawa theory.   Then, by employing transition, it becomes possible to construct global solutions for the classical interacting system.   Another recent study \cite{Z}  has also explored this type of convergence for non-relativistic particles interacting with the electro-magnetic field, considering the Pauli-Fierz Hamiltonian which converges to the Newton-Maxwell equation. 
\vskip 1mm
From a classical standpoint, in our case the dynamics are governed by a particle-field equation \eqref{top4particlefieldequation}, also known as  Newton-Klein-Gordon equation, which is a nonlinear system of coupled PDE-ODEs.   Previous studies have examined this type of equation, as demonstrated in the articles \cite{effective,soliton,long}. These works focus on analyzing the long-term behavior of the  solutions to a particle-field equation. Specifically, the authors of these articles introduce a form factor within the interacting system to ensure that  the Hamiltonian remains bounded from below, and they assume that this form factor is compactly supported. In our investigation, we adopt a more general framework by imposing less restrictions on this form factor.
\vskip 1mm
From a quantum standpoint, the time  evolution is generated by the so-called Nelson Hamiltonian \eqref{top4nelsonmodelform}. The Nelson model was first introduced by Edward Nelson in \cite{edn,ed} to describe the interaction between  particles (nucleons) and meson field (strong nuclear force). The Nelson model has been widely studied by many researchers, and a selection of relevant articles includes \cite{Abdesselam_2012,alb,amm,arai,1999CMaPh.207..249B,MR3385343,MR3246036,fro,gera,G_rard_2011,hiro,hoe,teu}. 
\vskip 1mm
Our objectives are then:
\begin{itemize}
\item [$-$] Proving the validity of Bohr's correspondence principle. More precisely, we want to establish a 
relationship between quantum and classical dynamics by showing that Nelson model reduces to the classical particle-field equation in the classical limit $\hbar \rightarrow 0$;
\vskip 1mm
\item [$-$] Establishing the global well-posedness of a particle-field equation under weak assumptions on the form factor $\chi$ (see  \eqref{top4A0}) and on the potential $V$ (see \eqref{top4A1}).
\end{itemize}
The quantum dynamics have a well-defined global behavior. Our  method involves transferring certain quantum regularization effects to the classical dynamics. This leads to the  derivation of  the classical dynamics of the particle-field system as an effective equation of a quantum  microscopic dynamical system in the limit $\hbar \rightarrow 0$. 
\vskip 1mm
To achieve this scenario, we investigate the transition using Wigner measures approach  in infinite dimensional bosonic quantum field theory. In recent years, this Wigner measure method have been used in many-body theory \cite{ammari2008ahp} and in quantum field theory \cite{ammari2017sima} with an a priori knowledge of global well-posedness (GWP) for effective equations. Whereas in this work, our   strategy  furnishes global well-posedness  and convergence at the same time. Usually, this convergence  is non-trivial, and there is no prior guarantee of obtaining unique limits. However, we overcome this difficulty by relying on our assumptions. 
\vskip 1mm
The main results are the classical limit (Theorem \ref{top4theorem1}) and the global well-posedness of a particle field equation (Theorem \ref{top4theorem2}). To prove these outcomes, our strategy is summarized in the steps below:
\begin{itemize}
\item We first extract the quantum dynamical system using the family of density matrices $(\varrho_\hbar)_\hbar$   satisfying \eqref{top4S0} and \eqref{top4S1};
\vskip 1mm
\item Then, after proving the propagation (uniformly in any compact time interval) of the two uniform estimates  \eqref{top4S0} and \eqref{top4S1}, we take  the limit to obtain the classical dynamics on the inverse Fourier transform of the Wigner measure. This results in a specific  classical  equation which is equivalent to a statistical Liouville equation, thanks to the regularities associated with the Wigner measure and vector field in this context;
\vskip 1mm
\item  We employ then  measure-theoretical methods \cite{MR3721874,alc2020,Z}  which provides us with the almost sure existence of global solutions. This requires us to prove the uniqueness of the solutions to a particle-field equation by using classical tools;
\vskip 1mm
\item  Finally, we extend the existence result to all initial data. It is important to note, however, that the associated flow is Borel measurable with respect to initial data  and may not be  continuous.
\end{itemize}

\subsection{General framework}
This section provides a concrete mathematical description of the previous introduction. From a classical perspective, the dynamics are governed by a particle-field equation, as detailed in Paragraph \ref{top4parttt}. From a quantum perspective, the dynamics are governed by the Nelson Hamiltonian, which is explained in Paragraph \ref{top4NE}.

\subsubsection{The particle-field equation}\label{top4parttt} Consider $n$ fixed number of  classical particles in the configuration space $\R^d$ with $d \in \N^*$, interacting with  field. 
Let $M_j$ be the mass of the $\text{j}^{\it th}$ particle. The dynamic of the particles is characterized by their momenta $p_j\in \R^d$ and their positions $q_j\in \R^d$. Whereas, the field is described by $\alpha:\R^d\rightarrow \C$. 
Let $p=(p_1,\cdots,p_n)$, $q=(q_1,\cdots,q_n)$ and $f_j \, : \,\R^{d} \longrightarrow \R$, the Hamiltonian of the particle-field system is 
\[\begin{aligned}H(p,q,\alpha)&=\sum_{j=1}^{n} f_j({p}_j)+V(q_1,\cdots, q_n) +\int_{\R^d} \overline{\alpha(k)} \ \omega(k) \  \alpha(k)\, dk\\&
+\sum_{j=1}^{n} \int_{\R^d} \frac{\chi(k)}{\sqrt{\omega(k)}} \bigl[\alpha(k) e^{2\pi i k\cdot q_j} +\overline{\alpha(k)} e^{-2\pi i k\cdot q_j}  \big]\, dk.
 \end{aligned}\]
 We consider two cases:
\begin{itemize}
\item [$-$] Choosing $f_j({p}_j)= \sqrt{{p}_j^2+M_j^2}$ corresponds to the semi-relativistic case.
\item [$-$] Choosing $f_j({p}_j)={{p}_j^2}/{2M_j}$ corresponds to the non-relativistic case.
\end{itemize}
 The parameter $\omega$ represents the dispersion relation defined by $\omega(k)=\sqrt{k^2+{m_f}^2}\geq {m_f}>0$, where ${m_f}$ is the mass of the meson field. 
 The function $V:\R^{dn}\rightarrow \R$ represents the external potential  and $\chi:\R^d \rightarrow \R$ is the form factor. The equation of motion for the particle-field system is given by
 \begin{equation}\label{top4particlefieldequation}
  \begin{cases}
 \begin{aligned}
 &\partial_t p_j=-\frac{\partial H}{\partial q_j}=-\nabla_{q_j}V(q) -\int_{\R^d}2\pi i k \frac{\chi(k)}{\sqrt{\omega(k)}}\bigl[\alpha(k) e^{2\pi i k\cdot q_j} -\overline{\alpha(k)} e^{-2\pi i k\cdot q_j}  \big]\, dk ;\\&
 \partial_t q_j= \  \ \frac{\partial H}{\partial p_j}=\nabla f_j({p}_j);
 \\ & 
 i \partial_t \alpha = \frac{\partial H}{\partial{\overline{\alpha}}}=\omega(k) \ \alpha(k)+ \sum_{j=1}^{n}  \frac{\chi(k)}{\sqrt{\omega(k)}}e^{-2\pi i k\cdot q_j}.
 \end{aligned}
 \end{cases}
 \end{equation}

 To clarify, the interaction term between particles and the scalar field has a specific form which is: linear in the field (for both semi and non-relativistic case); and in the momentum (only in the non-relativistic case).  
 The solution $u=(p,q,\alpha)$ to \eqref{top4particlefieldequation} belongs to  the following classical space 
\[ X^\sigma:=\R^{dn} \times \R^{dn} \times \mathcal{G}^{\sigma},\]
where $ \mathcal{G}^\sigma$ with $\sigma\geq 0$ is  the weighted $L^2$ lebesgue space endowed with the following norm 
\[\begin{aligned} 
\Vert \alpha \Vert^2_{\mathcal{G}^{\sigma}}:=\langle \alpha,\omega(\cdot)^{2\sigma} \ \alpha \rangle_{L^2}= \int_{\R^d} \omega(k)^{2\sigma} \vert \alpha(k) \vert^2 \, dk=\Vert \omega^\sigma \ \alpha \Vert_{L^2}^2.
 \end{aligned}\]
 We have then for $u=(p,q,\alpha) \in X^\sigma$ the following norm 
 \[\begin{aligned}
 \Vert u \Vert^2_{X^{\sigma}}:= \sum_{j=1}^{n} (|q_j|^2+|p_j|^2)+ \Vert \alpha \Vert^2_{\mathcal{G}^{\sigma}}.
 \end{aligned} \]
 The form factor serves as a way to term  the interaction between particles and the field, by smoothing out the Hamiltonian and ensuring that it is bounded from below under certain assumptions. The magnitude of the coupling between the particles and the field is controlled by the form factor. We consider the energy space where the Hamiltonian is well-defined, namely $X^{1/2}$, but our main results are stated in the spaces $X^{\sigma}$ with $\sigma \in [\frac{1}{2},1]$.

\subsubsection{The Nelson model}\label{top4NE}
The particle-field equation can be formally quantized to obtain the Nelson model. 
The Hilbert space of the quantized particle-field system is 
\[\cH:= L^2(\R^{dn}_x,\C)\otimes \Gamma_s(L^2(\R^d_k,\C)) \, , \]
where $\Gamma_s (L^2(\R^d_k,\C))$ is the symmetric Fock space  which could be  identified with 
 \[\begin{aligned}\Gamma_s (L^2(\R^d_k,\C)):= \bigoplus_{m=0}^{+\infty} L^2(\R^d,\C)^{\bigotimes_s m}\simeq \bigoplus_{m=0}^{+\infty}L^2_s(\R^{dm},\C).\end{aligned} \] We denote by $\cF^m:=L^2_s(\R^{dm},\C)$ the symmteric $L^2$ space over $\R^{dm}$. Let 
 \be \label{top4notation}
 \begin{aligned}
 & X_n=(x_1,\cdots, x_n),\quad dX_n=dx_1\cdots dx_n, 
 \\ & K_m=(k_1,\cdots,k_m), \quad dK_m=dk_1 \cdots dk_m.
 \end{aligned}\ee
  Then, the Hilbert space $\cH$ is endowed with the following norm for all $\psi=\{\psi^m\}_{m\geq 0}$
\[\Vert\psi \Vert_\cH:= \Bigl[\sum_{m=0}^{\infty} \int_{\R^{dn}} \int_{\R^{dm}} \vert \psi^m(X_n,K_m) \vert^2 \  dX_n \  dK_m \Bigr]^{1/2} .\]
Let  $\hat p_j$ and $\hat q_j$ be the quantized  momentum  and   position operators such that for all $j\in \lbrace 1,\cdots,n \rbrace$
\[ \hat p_j=-i \hbar   \nabla_{x_j},\qquad \hat q_j=x_j.\]
The $\hbar$ scaled creation-annihilation operators for the field are defined on $\Gamma_s$ for any $f \in L^2(\R^d,\C)$ as
\[  \hat a_\hbar(f)= \int_{\R^d} \overline{f(k)} \ \hat a_\hbar(k) \, dk ,\qquad \hat a_\hbar^*(f)= \int_{\R^d} {f}(k) \ \hat a^*_\hbar(k) \, dk\, , \]
where $\hat a_\hbar(k)$ and $\hat a_\hbar^*(k)$ are the creation-annihilation operator-valued distributions  defined as follows
\[ \begin{aligned}
&[\hat a_\hbar(k)  \ \psi]^m(k_1,\cdots,k_m)=\sqrt{\hbar (m+1)} \,  \psi^{m+1}(k,k_1,\cdots,k_m)\, ; \\&
[\hat a^*_\hbar(k)  \  \psi]^m(k_1,\cdots,k_m)=\frac{\sqrt{\hbar }}{\sqrt{m}}\, \sum_{j=1}^{m}\delta(k-k_j) \ \psi^{m-1}(k_1,\cdots,\hat k_j,\cdots,k_m).
\end{aligned}\]
In our case, we will work with the  generalized $\hbar$ scaled creation-annihilation operators. 
The two operators $\hat a^\sharp_\hbar(G):\cH\rightarrow \cH$ are defined for 
\[ \begin{array}{rcl}
G \, : \,L^2(\R^{dn}_x,\C)& \longrightarrow &  L^{2}(\R^{dn}_x,\C) \otimes L^2(\R^d_{k},\C)\\
\psi & \longmapsto &G \ \psi.
\end{array}\]
  with \[(G \ \psi)(X_n,k)= \sum_{j=1}^n {\frac{\chi(k)}{\sqrt{\omega(k)}} \ e^{-2\pi i k\cdot \hat{q}_j}} \  \psi(X_n).\]
  In general, we have
\begin{align}
&[\hat a_\hbar(G) \  \psi(X_n)]^m(K_m)=\sqrt{\hbar(m+1)}  \sum_{j=1}^{n} \int_{\R^d} \frac{\chi(k)}{\sqrt{\omega(k)}} \  e^{2\pi i k\cdot \hat{q}_j}  \ \psi^{m+1} (X_n; K_m,k) \, dk \, ; \label{top4anni}
\\&
[\hat a^*_\hbar(G) \  \psi(X_n)]^m(K_m)= \frac{\sqrt{\hbar}}{\sqrt{m}} \sum_{j=1}^{m} \sum_{\ell=1}^{n}\frac{\chi(k_j)}{\sqrt{\omega(k_j)}} \  e^{-2\pi i k_j \cdot \hat{q}_\ell}  \ \psi^{m-1} (X_n;k_1,\cdots, \hat k_j,\cdots,k_m) .\label{top4creat} 
\end{align} 
Introduce  the second quantization $d\Gamma(A):\cH \rightarrow \cH$ for the self-adjoint operator $A$ with  
$d\Gamma(A) \  \psi = \lbrace [d\Gamma(A) \ \psi]^m \rbrace_{m>0}$
and where 
\[[d\Gamma(A) \  \psi]^m= \hbar \sum_{j=1}^{m} \psi \otimes \cdots \otimes\underbrace{ A\psi}_{{j}^{\it th} \, position} \otimes \cdots \otimes \psi .\]
The $\hbar$ scaled  number operator   $\hat N_\hbar=d\Gamma({\rm Id})$  and the number operator $\hat{N}$ are defined as follows  
\[ [\hat N_\hbar \  \psi]^m= \hbar \  m  \  \psi^m,\qquad [\hat N  \ \psi]^m=  m \   \psi^m.\]
 The  free field Hamiltonian $ d\Gamma(\omega):\cH\rightarrow \cH$ is  defined as follows
\[[d\Gamma(\omega)  \ \psi(X_n)]^m= \hbar \sum_{j=1}^{m} \omega(k_j) \  \psi^{m}(X_n;K_m). \]
Formally, one can express this as:
\[ d\Gamma(\omega) = \int_{\R^d} \hat a_\hbar^*(k) \ \omega(k) \ \hat a_\hbar(k) \ dk.\]
The  non-interacting Hamiltonian is defined as follows
 \[\hat{H}_0:=\hat{H}_{01}+\hat{H}_{02},\]
 where we have introduced the two terms   $\hat H_{01} $ and $ \hat{H}_{02}$ as follows
 \[\hat{H}_{01}=\sum_{j=1}^nf_j(\hat{p}_j),\qquad \hat{H}_{02}= d\Gamma(\omega).\]
 The interaction Hamiltonian $\hat H_1:\cH\rightarrow \cH$ is defined  in terms of $\hat a_\hbar,\ \hat a^*_\hbar$ as in  \eqref{top4anni}-\eqref{top4creat}  as follows
\[\hat{H}_1= \hat a_{\hbar}(G)+\hat a^*_\hbar(G).\]
The Nelson-Hamiltonian takes then the following form
\be \label{top4nelsonmodelform} \hat H_\hbar \equiv\hat H=\hat H_0+V(\hat{q})+\hat{H}_1.\ee
The inclusion of a form factor $\chi$ in the interaction term of the particle-field equation ensures the well-definedness of the corresponding quantum dynamics and leads to a self-adjoint Nelson Hamiltonian. It has been demonstrated that, under certain mild assumptions on $\chi$ and the potential $V$, the unbounded operator $\hat H_\hbar$ is indeed self-adjoint (as discussed in \cite{ammari2017sima} and references therein). In the following, we aim to identify the minimal conditions on $\chi$ and $V$ that enable further analysis.
\subsection{Assumptions and main results}
We have to impose the following assumptions on the external potential $V: \R^{dn} \longmapsto \R $ and the form factor $\chi:\R^d \longmapsto \R$ with $\sigma \geq 0$:
\begin{align}
&  V\in \cC^2_b(\R^{dn};\R), \label{top4A0}
\\ & \omega(\cdot)^{\frac{3}{2}-\sigma} \chi(\cdot)  \in L^2({\R^d;\R}). \label{top4A1}
\end{align}
Note that the following identities hold true:
\begin{itemize}
\item  If $  \omega(\cdot)^{\frac{3}{2}-\sigma} \chi(\cdot)   \in L^2({\R^d;\R})$ then  $ \chi(\cdot)  \in L^2({\R^d;\R})$;
\item If $  \chi(\cdot)   \in L^2({\R^d;\R})$ then  for any $\gamma>0$, we have   $\omega(\cdot)^{-\gamma} \chi(\cdot)  \in L^2({\R^d;\R})$.
\end{itemize}
Let $(\varrho_{\hbar})_{\hbar \in (0,1)}$ be a family of density matrices on $\cH$ of the particle-field quantum system. The main assumptions  on the family of states $(\varrho_{\hbar})_{\hbar \in (0,1)}$ are:
\begin{align}
& \exists  C_0>0, \ \forall\hbar \in (0,1), \quad    {\rm Tr}[\varrho_\hbar \ d\Gamma(\omega^{2\sigma})]\leq C_0, \label{top4S0}
\\ &  \exists  C_1>0, \ \forall\hbar \in (0,1), \quad    {\rm Tr}[\varrho_\hbar \ (\hat{q}^2+\hat{p}^2)]\leq C_1. \label{top4S1}
\end{align}
Remark that the following identities hold true:
\begin{itemize}
\item If $ {\rm Tr}[\varrho_\hbar \ d\Gamma(\omega^{2\sigma})]\leq c_0$, then ${\rm Tr}[\varrho_\hbar \ d\Gamma(\omega)]\leq c_0^\prime $ for some $c_0,c_0^\prime\in \R^*_+$;
\item  If ${\rm Tr}[\varrho_\hbar \ d\Gamma(\omega)]\leq c_1 $, then $ {\rm Tr}[\varrho_\hbar \ \hat{N}_\hbar]\leq c_1^\prime$ for some $c_1,c_1^\prime\in \R^*_+$;
\item  If $ {\rm Tr}[\varrho_\hbar \ (\hat{q}^2+\hat{p}^2)]\leq c_2$, then  ${\rm Tr}[\varrho_\hbar \ (\hat{H}_0+1)]\leq c_2^\prime$ for some $c_2,c_2^\prime\in \R^*_+$.
\end{itemize}
The first result presented in this section concerns the flow of the particle-field equation. 
\begin{theorem}[Global well-posedness of the particle-field equation]\label{top4theorem1}
Let $\sigma \in [\frac{1}{2},1]$. Assume \eqref{top4A0} and \eqref{top4A1} hold. Then for any initial condition $u_0 \in X^{\sigma}$ there exists a unique global strong solution $u(\cdot)\in \cC(\R,X^{\sigma})\cap \cC^1(\R,X^{\sigma-1}) $ of the particle-field equation \eqref{top4particlefieldequation}.  Moreover, the global flow  map $u_0\rightarrow \Phi_t(u_0)=u(t)$ associated to the particle-field equation \eqref{top4particlefieldequation} is Borel measurable.
\end{theorem}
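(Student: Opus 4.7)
The plan is to bootstrap classical well-posedness of \eqref{top4particlefieldequation} from the globally defined quantum Nelson flow, since a direct fixed-point argument is not available under the weak assumptions \eqref{top4A0}--\eqref{top4A1}. Given $u_0=(p_0,q_0,\alpha_0)\in X^\sigma$, I would first construct an $\hbar$-family of coherent-type density matrices $(\varrho_\hbar)_{\hbar\in(0,1)}$ on $\cH$ peaked at $u_0$, whose Wigner measure is the Dirac mass $\delta_{u_0}$ and which satisfy the uniform bounds \eqref{top4S0}--\eqref{top4S1}. Self-adjointness of $\hat H_\hbar$ (recalled in the paper) then propagates these states via the unitary group $e^{-it\hat H_\hbar/\hbar}$ for every $t\in\R$, and the semi-classical limit $\hbar\to 0$ of the propagated density matrices should deliver a classical trajectory through $u_0$.

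The analytic core is the uniform-in-$\hbar$ propagation of the a priori bounds on compact time intervals. I would differentiate ${\rm Tr}[\varrho_\hbar(t)\,d\Gamma(\omega^{2\sigma})]$ and ${\rm Tr}[\varrho_\hbar(t)\,(\hat q^2+\hat p^2)]$ via Heisenberg's equation, compute the commutators of $d\Gamma(\omega^{2\sigma})$ and $\hat q^2+\hat p^2$ with $\hat H_\hbar=\hat H_0+V(\hat q)+\hat a_\hbar(G)+\hat a^*_\hbar(G)$, and close a Gronwall loop; the weight $\omega^{3/2-\sigma}$ in \eqref{top4A1} is precisely what allows the interaction contribution to be re-absorbed through Cauchy--Schwarz on Fock space. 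With these uniform bounds the family $\{\varrho_\hbar(t)\}_\hbar$ becomes precompact in the Wigner-measure sense on $X^\sigma$, and its limit $\mu_t$ satisfies, via its inverse Fourier transform, a statistical Liouville equation driven by the classical Hamiltonian vector field of $H(p,q,\alpha)$. The propagated bounds guarantee that this vector field is $\mu_t$-integrable, so the measure-theoretic representation results of \cite{MR3721874,alc2020,Z} apply and $\mu_t$ disintegrates along genuine trajectories of \eqref{top4particlefieldequation}, yielding a global solution $u(\cdot)\in\cC(\R,X^\sigma)$ for $\mu_0$-almost every initial datum.

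To obtain existence for every $u_0\in X^\sigma$ (not merely for $\mu_0$-a.e.\ data) and to establish uniqueness, I would prove classical uniqueness directly in $X^\sigma$ via Gronwall: for two solutions $(p,q,\alpha)$ and $(\tilde p,\tilde q,\tilde\alpha)$ with the same initial data, Duhamel's formula for $\alpha$ in $\cG^\sigma$ combined with the ODEs for $(p_j,q_j)$ produces a closed integral inequality on $|p-\tilde p|+|q-\tilde q|+\|\alpha-\tilde\alpha\|_{\cG^\sigma}$ whose kernel involves $\|\omega^{1/2-\sigma}\chi\|_{L^2}$ (finite under \eqref{top4A1} for $\sigma\in[\tfrac12,1]$) and $\|\nabla^2 V\|_\infty$ (finite by \eqref{top4A0}); Gronwall then forces the difference to vanish. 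Choosing $\mu_0=\delta_{u_0}$ via the coherent-state lift above then turns the almost-sure existence into existence for every initial datum, and Borel measurability of $u_0\mapsto\Phi_t(u_0)$ is a direct byproduct of the measure-theoretic construction, even though continuity in $u_0$ is not to be expected. The main obstacle I anticipate is closing the Gronwall loop in the propagation of $d\Gamma(\omega^{2\sigma})$: the commutator with $\hat a^*_\hbar(G)$ generates an operator carrying the weight $\omega^{2\sigma-1/2}\chi$ that must be re-absorbed against $d\Gamma(\omega^{2\sigma})^{1/2}$ and $\hat N_\hbar^{1/2}$, and hypothesis \eqref{top4A1} is essentially sharp for this to close uniformly in $\hbar$ throughout the full range $\sigma\in[\tfrac12,1]$.
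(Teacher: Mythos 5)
Your proposal follows the paper's strategy essentially step for step: coherent states peaked at $u_0$ with Wigner measure $\delta_{u_0}$, uniform-in-$\hbar$ propagation of \eqref{top4S0}--\eqref{top4S1} via commutator plus Gronwall estimates, passage to Wigner measures and a Liouville equation, the superposition/representation principle of \cite{MR3721874,alc2020,Z}, classical Gronwall uniqueness, and the $\delta_{u_0}$ trick to turn $\mu_0$-a.e.\ existence into existence for every initial datum.

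The one technical device you do not address, and which is genuinely needed, is the interaction representation. The Hamiltonian vector field of $H(p,q,\alpha)$ is $w(u)=\cL(u)+\cN(u)$ with $\cL(u)=(0,0,-i\omega\alpha)$, and $\cL$ is unbounded from $X^\sigma$ into $X^\sigma$ (it only maps into $X^{\sigma-1}$). Consequently the Liouville equation driven by $w$ does not satisfy the hypotheses of the representation theorem, which requires a continuous vector field that is bounded on bounded sets of $X^\sigma$, and the superposition measure would not yield trajectories in $\cC(\R,X^\sigma)$. The paper circumvents this by conjugating with the free dynamics: one studies $\tilde\varrho_\hbar(t)=e^{it\hat H_{02}/\hbar}\varrho_\hbar(t)e^{-it\hat H_{02}/\hbar}$, whose Wigner measures $\tilde\mu_t$ solve a Liouville equation for the time-dependent vector field $v(t,u)=\Phi^{\it f}_{-t}\circ\cN\circ\Phi^{\it f}_t(u)$, which \emph{is} continuous and bounded on bounded sets of $X^\sigma$; the representation theorem then applies, and the flow of \eqref{top4particlefieldequation} is recovered as $\Phi_t=\Phi^{\it f}_t\circ\tilde\Phi_t$. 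Without this conjugation your argument does not close in the stated regularity class. (A minor slip: the Lipschitz constant in the uniqueness Gronwall involves $\Vert\omega^{1/2}\chi\Vert_{L^2}$ and $\Vert\omega^{3/2-\sigma}\chi\Vert_{L^2}$, not $\Vert\omega^{1/2-\sigma}\chi\Vert_{L^2}$, though both are finite under \eqref{top4A1}, so this does not affect the conclusion.)
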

\noindent The above global flow is not constructed from  a fixed point argument, whereas it is constructed by means of statistical arguments. More precisely, we use measure theoritical techniques to construct this flow. And thus, it is only Borel measurble and not necessarily continuous.


\noindent Denote  by $\cP(X^0)$ the set of all Borel probability measure over the space $X^0$.
\begin{definition}[Wigner measures]\label{top4definitionwigner}
A Borel probability measure $\mu \in \cP(X^0)$ is a Wigner measure of a family of density matrices $(\varrho_{\hbar})_{\hbar\in (0,1)}$ on the Hilbert space $\cH$ if and only if there exists a countable subset $\cA \subset  (0,1)$ with $0\in \overline{\cA}$ such that for any $\xi=(p_0,q_0,\alpha_0) \in X^0:$ 
\[\lim_{\hbar \rightarrow 0,\hbar \in \cA} {\rm Tr}\Big[ \cW(2 {\pi q_0},-2\pi p_0, \sqrt{2} \pi \alpha_0) \varrho_h\Big]= \int_{X^0} e^{2\pi i\Re e \langle \xi,u\rangle_{X^0} }  \ d\mu(u).\]
\end{definition}
\noindent The next result concerns the classical limit which relies on the  construction of  a Wigner measure in the context of infinite-dimensional bosonic quantum field theory. This allows us to establish convergence from the quantum to the classical dynamics. 
 Denote by \[\cM(\varrho_\hbar,\hbar\in \cA),\]
the set of all Wigner measure associated to the density matrices $(\varrho_\hbar)_{\hbar\in \cA}$.
\begin{theorem}[Validity of Bohr's correspondence principle]\label{top4theorem2}
Let $\sigma \in [\frac{1}{2},1]$ and assume \eqref{top4A0} and \eqref{top4A1} hold true. Let $(\varrho_{\hbar})_{\hbar \in (0,1)}$ be a family of density matrices on $\cH$ satisfying \eqref{top4S0} and \eqref{top4S1}. Let $(\hbar_n)_{n\in \N}\subset (0,1)$ such that $\underset{n\rightarrow \infty}{\hbar_n \longrightarrow 0 } $ and assume that $\cM(\varrho_{\hbar_n}, \, n\in \N)=\{\mu_0\}$. Then for all times $t\in \R$, there exists a subsequence $(\hbar_\ell)_{\ell\in \N}$ and a family of Borel probability measure $(\mu_t)_{t\in \R}$ such that 
\[\cM(e^{-i\frac{t}{\hbar_\ell }\hat{H}}\varrho_{\hbar_\ell} \  e^{i\frac{t}{\hbar_\ell}\hat{H}}, \, \ell \in \N)=\{\mu_t\},\]  
where $\mu_t\in \cP(X^0)$ satisfying
\begin{itemize}
\item [(i)] $ \mu_t$ is concentrated on $X^\sigma$ i.e. $\mu_t(X^\sigma)=1$;
\item [(ii)] $\mu_t=(\Phi_t)_{\sharp}\mu_0$, where $u_0\longmapsto \Phi_t(u_0)=u(t)$ is the  Borel measurable global flow of the particle-field equation \eqref{top4particlefieldequation}.
\end{itemize}
\end{theorem}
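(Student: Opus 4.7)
The plan is to combine the Wigner-measure approach of Ammari--Nier in bosonic quantum field theory with the propagation of suitable quantum a priori estimates, and then to identify the time-evolved Wigner measure via the classical flow provided by Theorem~\ref{top4theorem1}. Set $\varrho_\hbar(t):=e^{-it\hat{H}/\hbar}\varrho_\hbar\,e^{it\hat{H}/\hbar}$.

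The first step is to propagate, uniformly in $\hbar\in(0,1)$ and in $t$ on compact intervals, the two bounds
\begin{equation*}
\operatorname{Tr}[\varrho_\hbar(t)\,d\Gamma(\omega^{2\sigma})]\le C(T),\qquad \operatorname{Tr}[\varrho_\hbar(t)(\hat{q}^2+\hat{p}^2)]\le C(T).
\end{equation*}
Both reduce to controlling the commutators of the Yukawa interaction $\hat{a}_\hbar(G)+\hat{a}^*_\hbar(G)$ against $d\Gamma(\omega^{2\sigma})$ and against $\hat{q}_j^2+\hat{p}_j^2$. Assumption \eqref{top4A1}, namely $\omega^{3/2-\sigma}\chi\in L^2$, is exactly what is needed to dominate $\pm i[\hat{H},d\Gamma(\omega^{2\sigma})]$ by a linear combination of $d\Gamma(\omega^{2\sigma})$, $(\hat{q}^2+\hat{p}^2)$ and the identity, so that the two estimates close simultaneously by Grönwall's inequality. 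By the Ammari--Nier compactness theorem applied at each time and a diagonal extraction along a countable dense set of times, I then obtain a subsequence $(\hbar_\ell)$ and a family $(\mu_t)_{t\in\R}\subset\mathcal{P}(X^0)$ with $\mu_t\in\cM(\varrho_{\hbar_\ell}(t),\ell\in\N)$ for every $t\in\R$. Passing the propagated $d\Gamma(\omega^{2\sigma})$ bound to the limit via Fatou applied to $\int_{X^0}\|\alpha\|_{\mathcal{G}^\sigma}^2\,d\mu_t$ yields point~(i).

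For point~(ii), I would compute $\tfrac{d}{dt}\operatorname{Tr}[\varrho_\hbar(t)\cW(2\pi q_0,-2\pi p_0,\sqrt{2}\pi\alpha_0)]$, expand the commutator $\tfrac{1}{i\hbar}[\hat{H},\cW(\xi)]$ using the Weyl calculus in the particle variable and the Wick calculus in the field variable, and pass to the $\hbar\to 0$ limit against the uniform estimates. This produces a weak equation for the characteristic functional $\xi\mapsto\int_{X^0}e^{2\pi i\,\Re\langle\xi,u\rangle_{X^0}}\,d\mu_t(u)$ which, by Fourier duality on $X^0$, is equivalent to the statistical Liouville equation $\partial_t\mu_t+\nabla\!\cdot\!(v\,\mu_t)=0$ with $v$ the classical Hamiltonian vector field of \eqref{top4particlefieldequation}. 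Since $\mu_t$ is concentrated on $X^\sigma$ and, by Theorem~\ref{top4theorem1}, the classical flow $\Phi_t$ is globally defined, unique and Borel measurable on $X^\sigma$, the superposition/uniqueness principle for measure-valued solutions of the continuity equation in the infinite-dimensional form developed in \cite{MR3721874,alc2020,Z} forces $\mu_t=(\Phi_t)_\sharp\mu_0$; this also upgrades the extraction to uniqueness of the Wigner measure along the whole subsequence $(\hbar_\ell)$.

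The main obstacle is the passage to the limit in $[\hat{H},\cW(\xi)]/(i\hbar)$: the interaction contains oscillatory factors $e^{-2\pi ik\cdot\hat{q}_j}\chi(k)/\sqrt{\omega(k)}$ with $\chi$ only satisfying \eqref{top4A1}, so the resulting classical vector field $v$ has low regularity. Justifying the limit and the uniqueness of the Liouville equation therefore requires the full strength of the propagated estimate \eqref{top4S0} to trade quantum non-smoothness for $\mathcal{G}^\sigma$-integrability under $\mu_t$, while the identification $\mu_t=(\Phi_t)_\sharp\mu_0$ rests entirely on the superposition principle, which is the bridge between the probabilistic pointwise dynamics encoded by $\mu_t$ and the deterministic Borel flow of Theorem~\ref{top4theorem1}.
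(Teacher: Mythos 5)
Your outline captures the right high-level skeleton (propagation of the bounds \eqref{top4S0}--\eqref{top4S1}, diagonal extraction of a Wigner measure for all $t$, a Liouville equation in the limit, and the superposition principle to force $\mu_t=(\Phi_t)_\sharp\mu_0$), but there is a genuine gap in step (ii): you expand $\frac{1}{i\hbar}[\hat H,\cW(\xi)]$ directly, whereas the paper first conjugates by the free field propagator and works with $\tilde\varrho_\hbar(t)=e^{i t\hat H_{02}/\hbar}\varrho_\hbar(t)e^{-it\hat H_{02}/\hbar}$, expanding instead $\frac{1}{\hbar}[\cW(\xi),\hat H_I(s)]$ where $\hat H_I(s)=e^{i s\hat H_{02}/\hbar}(\hat H-\hat H_{02})e^{-is\hat H_{02}/\hbar}$ (see \eqref{top4duhamelformula}--\eqref{top4H-k} and Lemma \ref{top4expcommutator}). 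This step is not cosmetic. The free field piece of the commutator you propose, namely $\frac{1}{\hbar}[\cW(\xi),d\Gamma(\omega)]$, has leading symbol of the form $\frac{i}{\sqrt 2}\bigl(\hat a^*_\hbar(\omega\alpha_0)-\hat a_\hbar(\omega\alpha_0)\bigr)$; this requires $\omega\alpha_0\in L^2$, i.e.\ $\alpha_0\in\cG^1$, so it cannot be controlled uniformly for the general $\xi\in X^0$ appearing in Definition \ref{top4definitionwigner}. Consequently the Liouville equation you would obtain involves the unbounded linear generator $\cL(u)=(0,0,-i\omega\alpha)$, which maps $X^\sigma\to X^{\sigma-1}$ only; the vector field $\cL+\cN$ is then not bounded on bounded sets of $X^\sigma$ as a map into $X^\sigma$, which is precisely the hypothesis the superposition/uniqueness machinery (Theorem \ref{top4globalflow}, Lemma \ref{top4relation}) needs. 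In the interaction picture the free part is absorbed, the effective vector field is the non-autonomous $v(t,u)=\Phi^{\it f}_{-t}\circ\cN\circ\Phi^{\it f}_t(u)$ of \eqref{top4vectorfield}, which is continuous and bounded on bounded sets by Lemma \ref{top4continuityofv}, and only then does the Liouville argument close; one finally recovers $\mu_t=(\Phi^{\it f}_t)_\sharp\tilde\mu_t$ via Lemma \ref{top4reee}. Your proof as stated would break at the passage to the limit in the commutator and at the verification of the hypotheses of the superposition principle; inserting the interaction picture (and the corresponding characteristic equation \eqref{characfirstform}--\eqref{top4chara}) repairs both at once.
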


\noindent The result above indicates that when $\varrho_{\hbar}$ are density matrices  on $\mathcal{H}$ that approach the Wigner probability measure $\mu_0$ as $\hbar$ approaches zero, the evolved density matrices $\varrho_{\hbar}(t)$ will converge to $\mu_t=(\Phi_t)\,_{\sharp}\, \mu_0$ for all times $t$. Here, $\Phi_t$ is the flow that solves \eqref{top4particlefieldequation}. 

\medskip
\noindent To demonstrate the aforementioned results, we adopt the following approach: Firstly, we employ classical techniques to establish the uniqueness property of the particle-field solutions. Subsequently, we establish crucial uniform propagation estimates on the quantum dynamics. Then,  we present a probabilistic representation of measure-valued solutions for the Liouville's equation (see \cite{MR3721874,alc2020}). This representation is used to construct a generalized global flow for a particle-field equation. As a conclusion,  we establish by means of Wigner measures the global well-posedness for the particle-field equation and the Bohr's correspondence principle for the Nelson model.

\section{The classical system}
This section is dedicated to examining various classical properties of the particle-field equation. Firstly, in Subsection \ref{top4interactionrepresent}, we introduce the particle-field equation as a semi-linear partial differential equation and establish its interaction representation. In Subsection \ref{top4uniquenessproperty}, we prove the uniqueness of the particle-field equation using this representation. 
\subsection{The interaction representation}\label{top4interactionrepresent}
 The particle-field equation  \eqref{top4particlefieldequation} takes the following form 
\be \label{top4condpartfieldeqn}
\begin{cases}
\displaystyle \frac{d u(t)}{dt}={\rm w}(u(t))=\cL(u(t))+\cN(u(t)),\\
u(0)=u_0 \in X^{\sigma},
\end{cases}\tag{PFE}
\ee
where $t \rightarrow u(t)=(p(t),q(t),\alpha(t))$ is a solution, $\cL(u)=(0,0,-i\omega \alpha)$ is a linear operator such that $\cL:X^{\sigma} \longrightarrow X^{\sigma-1}$ and $\cN$ is the nonlinearity given by  
\be \label{top4defF}
\begin{aligned}
& (\cN(u))_{p_j}:= -\nabla_{q_j} V(q)-\nabla_{q_j}I_j(q,\alpha),
\\ &  (\cN(u))_{q_j}:= \nabla f_j(p_j),
\\ & (\cN(u))_{\alpha}(k):= -i\sum_{j=1}^{n} \frac{\chi(k)}{\sqrt{\omega(k)}} \ e^{-2\pi i k \cdot q_j},
\end{aligned}
\ee
where we have introduced $ I_j:\R^{dn} \times L^2(\R^d,\C)\rightarrow \R$
\be \label{top4defI} I_j(q,\alpha):= \int_{\R^d} \frac{\chi(k)}{\sqrt{\omega(k)}} \bigl[\alpha(k) e^{2\pi i k\cdot q_j} +\overline{\alpha(k)} e^{-2\pi i k\cdot q_j}  \big]\, dk
\ee
 with 
 \be \label{top4defnablaI} \nabla_{q_j} I_j(q,\alpha)= \int_{\R^d} 2 \pi i k  \ \frac{\chi(k)}{\sqrt{\omega(k)}}  \bigl[\alpha(k) e^{2\pi i k\cdot q_j} -\overline{\alpha(k)} e^{-2\pi i k\cdot q_j}  \big]\, dk.
 \ee
 
\noindent We consider now the particle-field equation as a non-autonomous initial value problem over the  Hilbert space $X^{\sigma}$ with 
\begin{equation}\label{top4ivp}
\begin{cases}
\begin{aligned}
& \frac{du(t)}{dt} =v(t,u(t)),
\\ & u(0)=u_0 \in X^{\sigma}.
\end{aligned}
\end{cases}
\tag{IVP}
\end{equation}
The non-autonomous vector field $ v$ is defined  in terms of the non-linearity $\cN: X^{\sigma} \longrightarrow X^{\sigma}$ of the particle-field equation as well as the free field  flow $\Phi_t^{\it f}:X^{\sigma} \longrightarrow X^{\sigma}$ as follows:
\be \label{top4vectorfield} v(t,u)=\Phi_{-t}^{f} \circ \cN \circ \Phi_t^{\it f} (u),\ee
where we have introduced the  free field flow $\Phi^{\it f}_t$  as follows
\be \label{top4freeflow} \Phi_{t}^{\it f}(p,q,\alpha)=(p,q,e^{-it\omega(k)}\alpha). \ee

\begin{lemma}[Explicit expression for the vector field $v$]
The vector field $v: \R \times X^\sigma \rightarrow X^\sigma$ takes the following  explicit form:
\be \label{top4vectorfieldv}
\begin{aligned}
& (v(t,u))_{p_j}=\big(\cN\circ \Phi^{\it f}_t(u)\big)_{p_j} ,
\\&(v(t,u))_{q_j}= \big(\cN\circ \Phi^{\it f}_t(u)\big)_{q_j} ,
\\&(v(t,u))_{\alpha}(k)=e^{it\omega(k)}  \ \big(\cN\circ \Phi^{\it f}_t(u)\big)_{\alpha}(k) ,
\end{aligned}
\ee
where $v(t,u)={}^t\big((v(t,u))_{p_1},\cdots,(v(t,u))_{p_n},(v(t,u))_{q_1},\cdots, (v(t,u))_{q_n},v(t,u)_\alpha\big)$.
\end{lemma}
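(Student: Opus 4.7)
The plan is to unfold the definition $v(t,u)=\Phi_{-t}^{f}\circ\cN\circ\Phi_t^{f}(u)$ one factor at a time, using the explicit form \eqref{top4freeflow} of the free field flow and the componentwise expressions \eqref{top4defF} for the nonlinearity $\cN$. Since $\Phi_t^{f}$ acts trivially on the particle variables $(p,q)$ and simply multiplies the field variable by $e^{-it\omega(k)}$, the composition is essentially a change of phase that affects only the $\alpha$ slot.

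First I would set $\Phi_t^{f}(u)=(p,q,e^{-it\omega}\alpha)=:\tilde{u}$ and compute $\cN(\tilde u)$ component by component via \eqref{top4defF}. The $(p_j)$ and $(q_j)$ components of $\cN(\tilde u)$ are already what we want (up to the subsequent action of $\Phi_{-t}^{f}$), noting that $\nabla f_j(p_j)$ does not see $\alpha$, and that the gradient $\nabla_{q_j} I_j(q,e^{-it\omega}\alpha)$ picks up the phase $e^{-it\omega(k)}$ inside the integrand of \eqref{top4defnablaI}. The $(\alpha)$ component $(\cN(\tilde u))_\alpha(k)$ equals $-i\sum_{j=1}^{n}\frac{\chi(k)}{\sqrt{\omega(k)}}e^{-2\pi i k\cdot q_j}$ by \eqref{top4defF}, since $\cN$ in the field slot does not depend on $\alpha$ at all.

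Next I would apply $\Phi_{-t}^{f}$. By \eqref{top4freeflow} with $t$ replaced by $-t$, this leaves the particle components $(p_j)$ and $(q_j)$ untouched and multiplies the field component by $e^{-i(-t)\omega(k)}=e^{it\omega(k)}$. Combining the two steps gives exactly the three identities in \eqref{top4vectorfieldv}:
\begin{equation*}
(v(t,u))_{p_j}=(\cN\circ\Phi_t^{f}(u))_{p_j},\qquad (v(t,u))_{q_j}=(\cN\circ\Phi_t^{f}(u))_{q_j},\qquad (v(t,u))_\alpha(k)=e^{it\omega(k)}(\cN\circ\Phi_t^{f}(u))_\alpha(k).
\end{equation*}

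There is no real obstacle here; the statement is a direct unpacking of the definitions. The only bookkeeping to be careful about is tracking the sign in the exponential when composing $\Phi_{-t}^{f}$ with $\Phi_t^{f}$, and noting that on the $(p_j)$ slot the dependence on $\alpha$ enters only through the integral $\nabla_{q_j} I_j$ in \eqref{top4defnablaI}, whose phases are already absorbed into $\Phi_t^{f}(u)$ before $\cN$ is applied, so no further rewriting is needed.
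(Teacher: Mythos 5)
Your proposal is correct and is exactly the argument the paper has in mind: the paper's proof is the single line "The result follows from direct computations of $v$ using the relation \eqref{top4vectorfield}," and your write-up simply spells out those computations—applying $\Phi_t^{f}$, then $\cN$ componentwise, then $\Phi_{-t}^{f}$, and tracking the sign flip $e^{-i(-t)\omega(k)}=e^{it\omega(k)}$ in the field slot.
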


\begin{proof}
The result follows from direct  computations of $v$ using the relation \eqref{top4vectorfield}.
\end{proof}

\begin{proposition}[Equivalence between \eqref{top4condpartfieldeqn} and \eqref{top4ivp}]\label{top4equivalence}
Assume \eqref{top4A0} and  \eqref{top4A1} are satisfied.
Let $I$ be a bounded open interval containing the origin. Then, the statments below are equivalent:
\begin{enumerate}
\item $u(\cdot) \in \cC^{1}(I,X^{\sigma})$ is a strong solution of \eqref{top4ivp};
\item $u(\cdot)\in \cC(I,X^\sigma)$ solves the following Duhamel formula
\[u(t)=u_0+\int_{0}^tv(s,u(u)) \ ds, \quad \forall t\in I.\]
\item The curve $t\longmapsto   \Phi_{t}^{\it f}(u(t))\in \cC(I,X^{\sigma}) \cap \cC^{1}(I,X^{\sigma-1})$  is a strong solution to the particle-field equation \eqref{top4particlefieldequation}.
\end{enumerate}
\end{proposition}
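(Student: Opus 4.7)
The plan is to treat the two equivalences $(1)\Leftrightarrow(2)$ and $(1)\Leftrightarrow(3)$ separately. The first is the standard ODE/integral correspondence; the second reflects the interaction-picture transformation $w(t):=\Phi^{\it f}_t(u(t))$ that conjugates \eqref{top4ivp} back into the original particle-field equation \eqref{top4condpartfieldeqn}. A preliminary check is that the nonlinearity $\cN$ and consequently $v(t,u)=\Phi^{\it f}_{-t}\circ\cN\circ\Phi^{\it f}_t(u)$ map $X^\sigma$ into $X^\sigma$ continuously (even locally Lipschitz). Using $V\in\cC^2_b$ from \eqref{top4A0}, the pointwise bound $|k|\leq\omega(k)$, the lower bound $\omega\geq m_f>0$, and \eqref{top4A1}, Cauchy--Schwarz yields
\[
\bigl|\nabla_{q_j}I_j(q,\alpha)\bigr|\ \leq\ 4\pi\,\bigl\|\omega^{1/2-\sigma}\chi\bigr\|_{L^2}\,\|\alpha\|_{\mathcal{G}^\sigma},
\]
and the $\alpha$-component of $\cN(u)$ has $\mathcal{G}^\sigma$-norm bounded by $n\,\|\omega^{\sigma-1/2}\chi\|_{L^2}$; both weights are finite for $\sigma\in[1/2,1]$ by \eqref{top4A1}. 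Since $\Phi^{\it f}_t$ acts isometrically on $X^\sigma$ (unimodular multiplier $e^{-it\omega(k)}$ on the field component, identity on the particles), these bounds transfer to $v$.

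The equivalence $(1)\Leftrightarrow(2)$ is then routine: if $u\in\cC^1(I,X^\sigma)$ solves \eqref{top4ivp}, integrating from $0$ to $t$ gives the Duhamel formula; conversely, continuity of $s\mapsto v(s,u(s))$ into $X^\sigma$ promotes a continuous Duhamel solution into a $\cC^1(I,X^\sigma)$ solution of \eqref{top4ivp}.

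The substantive step is $(1)\Leftrightarrow(3)$. With $w(t):=\Phi^{\it f}_t(u(t))$, I would split
\[
\frac{w(t+h)-w(t)}{h}\ =\ \frac{\Phi^{\it f}_{t+h}-\Phi^{\it f}_t}{h}\,u(t)\ +\ \Phi^{\it f}_{t+h}\!\left(\frac{u(t+h)-u(t)}{h}\right),
\]
and pass to the limit in $X^{\sigma-1}$. The first term converges to $\cL(w(t))$ by strong continuity of the unitary group whose generator $\cL=(0,0,-i\omega)$ is bounded from $X^\sigma$ into $X^{\sigma-1}$; the second converges to $\Phi^{\it f}_t\bigl(\tfrac{du}{dt}(t)\bigr)$ in $X^\sigma\hookrightarrow X^{\sigma-1}$. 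The defining identity for $v$ gives $\Phi^{\it f}_t\circ v(t,u(t))=\cN(\Phi^{\it f}_t\,u(t))=\cN(w(t))$, hence $\tfrac{dw}{dt}=\cL(w(t))+\cN(w(t))={\rm w}(w(t))$, which is exactly \eqref{top4condpartfieldeqn}. The converse is symmetric: set $u(t):=\Phi^{\it f}_{-t}(w(t))$ and run the same splitting backward, using the group property of $\Phi^{\it f}_t$.

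The main subtlety to watch is the regularity mismatch built into this transformation: $\cL$ only maps $X^\sigma$ into $X^{\sigma-1}$, so even when $u\in\cC^1(I,X^\sigma)$ the conjugated curve $w$ differentiates only in $X^{\sigma-1}$ while remaining continuous into $X^\sigma$. This is precisely the regularity class asserted in (3), and it forces the limit argument above to take place in $X^{\sigma-1}$ rather than $X^\sigma$. Everything else reduces to the standard properties of the free field group $\Phi^{\it f}_t$ combined with \eqref{top4A0} and \eqref{top4A1}.
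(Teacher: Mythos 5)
Your proof is correct and follows essentially the same route as the paper: you establish continuity/Lipschitz bounds for $\cN$ and $v$ (to handle $(1)\Leftrightarrow(2)$) and then show $(1)\Leftrightarrow(3)$ by conjugating with the free flow $\Phi^{\it f}_t$, which is exactly what the paper does by a component-wise product-rule computation. Your difference-quotient splitting in $X^{\sigma-1}$ is a slightly more careful way of justifying the formal differentiation $\partial_t\Phi^{\it f}_t(u(t))=\cL(\Phi^{\it f}_t u(t))+\Phi^{\it f}_t\,\partial_t u(t)$, correctly locating the limit in $X^{\sigma-1}$ where $\cL$ lands; this is a welcome tightening but not a different idea.
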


\begin{proof}
The first two assertions can be proved easily since $v$ is continuous vector field (by Lemma \ref{top4continuityofv}). Let us now prove the equivalence between (1) and (3).
Suppose that $u(t)=(p(t),q(t),\alpha(t))$ is a solution to \eqref{top4ivp}. Require to prove that \[\tilde{u}(t)=(\tilde p(t),\tilde q(t),\tilde \alpha(t))=\Phi_t^{\it f}(u(t))=(p(t),q(t), e^{-it\omega(k)}\alpha(t)),\] is a solution to \eqref{top4condpartfieldeqn}. 
The first term 
\[
\begin{aligned}
 \partial_t \tilde{ p}_j&= \partial_t { p}_j= (v(t,u))_{p_j}=\big(\cN\circ \Phi^{\it f}_t(u)\big)_{p_j} =({\rm w}(\tilde{u}))_{p_j}.
 \end{aligned}
 \]
 The second term 
 \[
\begin{aligned}
 \partial_t \tilde{ q}_j= \partial_t  q_j=\nabla  f_j(p_j)=({\rm w}(\tilde{u}))_{q_j}.
 \end{aligned} 
 \]
 The third term 
 \[
\begin{aligned}
 \partial_t \tilde \alpha&=-i\omega(k) \ e^{-it\omega(k)}  \  \alpha(k)+  e^{-it\omega(k)} \partial_t  \alpha
 \\ & =-i\omega(k) \ \tilde  \alpha(k)+e^{-it\omega(k)}  \ e^{it\omega(k)}  \ \big(\cN\circ \Phi^{\it f}_t(u)\big)_{\alpha}(k) 
 \\ & =-i\omega(k) \ \tilde  \alpha(k)+(\cN(\tilde{u}))_{\alpha}(k)
 \\ &=({\rm w}(\tilde u))_{\alpha}(k).
 \end{aligned} 
 \]
 We conclude that $\tilde u$ is a solution to \eqref{top4condpartfieldeqn}. Similarly, we can prove the reverse sense.
\end{proof}

\noindent Let $I$ be an open interval containing the  origin. We are interested in strong solution to the particle-field equation \eqref{top4condpartfieldeqn} such that 
\[ u(\cdot) \in \cC(I,X^{\sigma}) \cap \cC^1( I,X^{\sigma -1 }),\]
 and \eqref{top4condpartfieldeqn} is satisfied for all $t\in I$. In particular, from the second assertion of Proposition \ref{top4equivalence},  these solutions satisfy the following Duhamel formula for all $t\in I$
 \be \label{top4duhamelparti} u(t)=\Phi^{\it f}_t(u(0))+\int_{0}^{t} \Phi_{t-s}^{\it f}\circ\cN(u(s)) \ ds,\ee
 where $\Phi_t^{\it f}(\cdot)$ is the free field flow  defined above in \eqref{top4freeflow}.
 
\subsection{Properties of the particle-field equation}\label{top4uniquenessproperty}
In this section, we establish various properties related to the particle-field equation and its time interaction representation. Of most significance is the recovery of the uniqueness property of solutions to the particle-field equation \eqref{top4condpartfieldeqn} on the energy space $X^{\sigma}$. Our approach starts with deriving estimates for $\nabla_{q_j} I_j(\cdot)$.

\begin{lemma}[Estimates for $\nabla_{q_j} I_j $]\label{top4lemma2.1}
We have the following two estimates. 
\begin{itemize}
\item [(i)] Assume $\omega^{{1}/{2}} \chi \in L^2(\R^d,dk)$. Then, for all $(q,\alpha) \in \R^{dn}\times L^2(\R^d,\C)$, for all $j \in \lbrace 1,\cdots, n\rbrace$, we  have the following estimate
 \begin{align}
  \vert \nabla_{q_j} I_j(q,\alpha) \vert \leq 4 \pi  \  \Vert \omega^{1/2} \chi \Vert_{L^2}  \ \Vert \alpha \Vert_{L^2}\label{top4deriv}.
 \end{align}
 \vskip 2mm
\item [(ii)] Assume \eqref{top4A1} is satisfied. Then, for all $j \in \lbrace 1,\cdots, n\rbrace$, for all $q_1, \ q_2 \in \R^{dn}$ with $q_{1}=(q_{1j})_{j}$ and $q_{2}=(q_{2j})_{j} $, for all 
  $ \alpha_1, \ \alpha_2 \in \cG^{\sigma}$, we have  
\[\begin{aligned}
& \vert \nabla_{q_{j}} I_j(q_1,\alpha_1)-\nabla_{q_{j}} I_j(q_2,\alpha_2)  \vert \\ &\leq 4  \pi  \  \Vert \omega^{1/2} \chi \Vert_{L^2}  \ \Vert \alpha_1-\alpha_2 \Vert_{L^2}+8\sqrt{2} \pi^2 \ \Vert \omega^{\frac{3}{2}-\sigma} \chi \Vert_{L^2} \ \vert q_{1j}- q_{2j} \vert \  \Vert \alpha_2 \Vert_{\cG^{\sigma}}.
\end{aligned}\]
\end{itemize}
\end{lemma}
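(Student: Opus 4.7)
The plan is to estimate both inequalities directly from the integral representation \eqref{top4defnablaI} of $\nabla_{q_j} I_j(q,\alpha)$. The key elementary fact used throughout is $|k|\leq \omega(k)$, which follows from $\omega(k)=\sqrt{k^2+m_f^2}\geq |k|$.

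For part (i), I would bring absolute values inside the integral in \eqref{top4defnablaI}, which immediately yields
\[
|\nabla_{q_j} I_j(q,\alpha)| \leq 4\pi \int_{\R^d} \frac{|k|}{\sqrt{\omega(k)}} \,|\chi(k)|\, |\alpha(k)| \, dk.
\]
The inequality $|k|/\sqrt{\omega(k)} \leq \sqrt{\omega(k)}$ then allows a single application of Cauchy--Schwarz to produce $4\pi \|\omega^{1/2}\chi\|_{L^2}\|\alpha\|_{L^2}$, which is the required bound.

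For part (ii), I would decompose the difference by introducing the intermediate point $(q_1,\alpha_2)$ and using linearity of $\nabla_{q_j}I_j$ in its second argument:
\[
\nabla_{q_j}I_j(q_1,\alpha_1)-\nabla_{q_j}I_j(q_2,\alpha_2) = \nabla_{q_j}I_j(q_1,\alpha_1-\alpha_2) + \bigl[\nabla_{q_j}I_j(q_1,\alpha_2)-\nabla_{q_j}I_j(q_2,\alpha_2)\bigr].
\]
The first term is handled by applying part (i) with the field $\alpha_1 - \alpha_2$, producing the contribution $4\pi \|\omega^{1/2}\chi\|_{L^2}\|\alpha_1-\alpha_2\|_{L^2}$.

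For the second term, in which only the position differs, I would extract the factor $e^{2\pi i k\cdot q_{1j}}-e^{2\pi i k\cdot q_{2j}}$ (and its conjugate) and apply an elementary bound of the form $|e^{2\pi i k\cdot q_{1j}}-e^{2\pi i k\cdot q_{2j}}|\leq 2\pi\sqrt{2}\, |k|\, |q_{1j}-q_{2j}|$, obtained by writing the exponential difference as a pair $(\cos-\cos, \sin-\sin)$ and applying the mean value theorem componentwise (this is where the $\sqrt{2}$ in the statement originates; using the sharp bound $2\pi |k||q_{1j}-q_{2j}|$ also works but gives a slightly smaller constant). The integrand then carries an additional factor of $|k|$, so that it contains $|k|^{2}/\sqrt{\omega(k)}$. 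Bounding this by $\omega(k)^{3/2}$ via $|k|\leq \omega(k)$ and splitting $\omega^{3/2}=\omega^{3/2-\sigma}\cdot \omega^{\sigma}$ sets up Cauchy--Schwarz, which yields exactly $\|\omega^{3/2-\sigma}\chi\|_{L^2}\|\alpha_2\|_{\mathcal{G}^{\sigma}}$.

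There is essentially no real obstacle: the only substantive point is to recognise that the exponent $\tfrac{3}{2}-\sigma$ in assumption \eqref{top4A1} is chosen precisely so that the weight $\omega^{\sigma}$ appearing in the $\mathcal{G}^{\sigma}$ norm pairs up with $\omega^{3/2-\sigma}\chi$ under Cauchy--Schwarz. Everything else is bookkeeping of numerical constants.
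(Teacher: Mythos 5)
Your proof is correct and follows essentially the same route as the paper's: bound the $|k|$ coming from the gradient by $\omega(k)$, absorb it into the form factor, and apply Cauchy--Schwarz; for part (ii), split the difference across an intermediate point, treat the $\alpha$-difference by part (i), and treat the $q$-difference via the bound $|e^{iy}-1|\leq\sqrt{2}\,|y|$ followed by $|k|\leq\omega(k)$ and the splitting $\omega^{3/2}=\omega^{3/2-\sigma}\cdot\omega^{\sigma}$. The only cosmetic difference is that you invoke linearity of $\nabla_{q_j}I_j$ in $\alpha$ to state the decomposition at the level of the functional, whereas the paper carries out the corresponding add-and-subtract manipulation directly inside the integrand; the two are interchangeable and lead to the same two terms with the same constants.
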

\begin{proof}
For (i), by Cauchy-Schwatrz inequality, we have $ \forall (q,\alpha) \in \R^{dn}\times L^2(\R^d,\C)$
 \[ 
 \begin{aligned}
   \vert \nabla_{q_j}  I_j(q,\alpha) \vert &  = \Big \vert \int_{\R^d} \frac{\chi(k)}{\sqrt{\omega(k)}} 2 \pi i k   \bigl[\alpha(k) e^{2\pi i k\cdot q_j} -\overline{\alpha(k)} e^{-2\pi i k\cdot q_j}  \big]\, dk \Big \vert 
   \\ & \leq  4 \pi   \ \int_{\R^d} \Big \vert \frac{\chi(k)}{\sqrt{\omega(k)}}   \ \omega(k) \Big \vert  \  \vert \alpha(k) \vert \, dk
   \\ & \underset{c-s}{\leq}  4 \pi   \ \Vert \omega^{1/2} \chi \Vert_{L^2}  \ \Vert \alpha \Vert_{L^2}.
 \end{aligned}
 \]
 \vskip 1mm
 \noindent  For (ii), by Cauchy-Schwartz inequality and using the estimate $\vert e^{iy}-1 \vert \leq \sqrt{2}  \ \vert y \vert $, we have $ \forall (q_1,\alpha_1), (q_2,\alpha_2) \in \R^{dn}\times \cG^{\sigma}$, $\forall j \in  \lbrace 1,\cdots, n \rbrace$ the following estimates
 \[ 
 \begin{aligned}
   &\vert \nabla_{q_{j}} I_j(q_1,\alpha_1)-\nabla_{q_{j}} I_j(q_2,\alpha_2) \vert 
   \\&  = \Big \vert \int_{\R^d} \frac{\chi(k)}{\sqrt{\omega(k)}} 2\pi i k\bigl[\alpha_1(k) e^{2\pi i k\cdot q_{1j}} -\overline{\alpha_1(k)} e^{-2\pi i k\cdot q_{1j}} -\alpha_2(k) e^{2\pi i k\cdot q_{2j}} +\overline{\alpha_2(k)} e^{-2\pi i k\cdot q_{2j}} \big]\, dk \Big \vert 
   \\ & = \Big \vert \int_{\R^d}\frac{\chi(k)}{\sqrt{\omega(k)}}2\pi i k \bigl[ (\alpha_1(k )-\alpha_2(k )) e^{2\pi i k\cdot q_{1j}} + \alpha_2(k) \big( e^{2\pi i k\cdot q_{1j}}-e^{2\pi i k\cdot q_{2j}} \big)
   \\ & \qquad \qquad \qquad \qquad + (\overline{\alpha_2(k )}-\overline{\alpha_1(k )}) e^{-2\pi i k\cdot q_{1j}} + \overline{\alpha_2(k)} \big( e^{-2\pi i k\cdot q_{2j}}-e^{-2\pi i k\cdot q_{1j}} \big)\bigr]
    \\ & \leq 4 \pi  \int_{\R^d} \Big \vert \frac{\chi(k)}{\sqrt{\omega(k)}}  \omega(k) \Big \vert \Big[ \vert {\alpha}_1(k)-{\alpha}_2(k) \vert+ \big \vert \alpha_2(k)  \  \big(e^{2\pi i k\cdot(q_{1j}- q_{2j})}  -1\big)\big \vert  \Big] 
    \\ & \leq 4 \pi  \int_{\R^d} \Big \vert \sqrt{\omega(k)} \ \chi(k)\Big \vert \Big[ \vert {\alpha}_1(k)-{\alpha}_2(k) \vert+ \big \vert \alpha_2(k)  \sqrt{2} \ 2\pi  k\cdot ( q_{1j}- q_{2j})  \big \vert  \Big] 
    \\ & \leq 4 \pi  \int_{\R^d} \Big \vert \sqrt{\omega(k)} \ \chi(k)\Big \vert  \Big[ \vert {\alpha}_1(k)-{\alpha}_2(k) \vert+2 \sqrt{2} \ \pi \big \vert \alpha_2(k)\vert    \  \omega(k)  \  \vert q_{1j}- q_{2j}\vert    \Big] 
   \\ & \underset{c-s}{\leq}  4  \pi  \  \Vert \omega^{1/2} \chi \Vert_{L^2}  \ \Vert \alpha_1-\alpha_2 \Vert_{L^2}+8\sqrt{2} \pi^2 \ \Vert \omega^{\frac{3}{2}-\sigma} \chi \Vert_{L^2} \ \vert q_{1j}- q_{2j} \vert \  \Vert \alpha_2 \Vert_{\cG^{\sigma}} .
 \end{aligned}
 \]
\end{proof}

 \noindent The vector field $\cN$, which characterizes the nonlinearity of the particle-field equation, possesses the following properties.
 
 \begin{proposition}[Continuity and boundedness of $\cN$]\label{top4contbddF} Assume \eqref{top4A0} and \eqref{top4A1}  are satisfied. Then, the nonlinearity $\cN: X^{\sigma} \rightarrow X^{\sigma}$ is a continuous, and bounded on bounded sets, vector field.
 \end{proposition}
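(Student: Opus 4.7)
The plan is to treat each of the three components of $\cN$ separately, using the explicit formula in \eqref{top4defF} together with the estimates already proved in Lemma~\ref{top4lemma2.1}. Before starting, observe that under \eqref{top4A1} with $\sigma \in [\tfrac12,1]$ one has $3/2-\sigma \geq 1/2$ and $3/2-\sigma \geq \sigma-1/2$; combined with the dispersion lower bound $\omega(k) \geq m_f > 0$, this yields both $\omega^{1/2}\chi \in L^2(\R^d)$ and $\omega^{\sigma-1/2}\chi \in L^2(\R^d)$, which are the only integrability conditions that will enter what follows.

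For the $p_j$-component $(\cN(u))_{p_j} = -\nabla_{q_j}V(q) - \nabla_{q_j}I_j(q,\alpha)$, assumption \eqref{top4A0} yields a uniform bound and continuity of $\nabla_{q_j}V$, while Lemma~\ref{top4lemma2.1}(i) gives
\[
|\nabla_{q_j}I_j(q,\alpha)| \leq 4\pi\,\|\omega^{1/2}\chi\|_{L^2}\,\|\alpha\|_{L^2} \leq 4\pi\, m_f^{-\sigma}\,\|\omega^{1/2}\chi\|_{L^2}\,\|u\|_{X^\sigma},
\]
and continuity in $(q,\alpha)$ is a direct consequence of Lemma~\ref{top4lemma2.1}(ii). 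For the $q_j$-component $(\cN(u))_{q_j} = \nabla f_j(p_j)$, a direct computation in both the semi-relativistic and the non-relativistic choice of $f_j$ shows that $\nabla f_j : \R^d \to \R^d$ is continuous and bounded on bounded sets (in the semi-relativistic case it is in fact uniformly bounded by $1$).

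The most delicate piece is the field component $(\cN(u))_\alpha(k) = -i\sum_{j=1}^{n} \tfrac{\chi(k)}{\sqrt{\omega(k)}}\,e^{-2\pi i k\cdot q_j}$. A direct estimate gives
\[
\|(\cN(u))_\alpha\|^2_{\cG^\sigma} = \int_{\R^d} \omega(k)^{2\sigma-1}\,|\chi(k)|^2\, \Big|\sum_{j=1}^n e^{-2\pi i k\cdot q_j}\Big|^2 dk \leq n^2\,\|\omega^{\sigma-1/2}\chi\|^2_{L^2},
\]
which is finite by the preliminary reduction and independent of $u$, so gives boundedness on bounded sets (actually on all of $X^\sigma$). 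For continuity in $q$, I would use dominated convergence: if $q^{(m)} \to q$ in $\R^{dn}$, the integrand $\omega^{2\sigma-1}|\chi|^2\,\bigl|\sum_j(e^{-2\pi ik\cdot q_j^{(m)}} - e^{-2\pi ik\cdot q_j})\bigr|^2$ tends pointwise to $0$ while being dominated by $4n^2\omega^{2\sigma-1}|\chi|^2 \in L^1(\R^d)$. The one subtlety worth flagging is that the naive Lipschitz bound $|e^{-2\pi ikq^{(m)}}-e^{-2\pi ikq}| \lesssim |k|\,|q^{(m)}-q|$ would cost an extra power of $\omega$ and hence fail the integrability test as soon as $\sigma > 1/2$; routing continuity through dominated convergence instead of a quantitative Lipschitz estimate is what permits the full range $\sigma \in [\tfrac12,1]$.
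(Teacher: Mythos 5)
Your proposal is correct and follows essentially the same route as the paper: the same component-by-component split, the same appeals to Lemma~\ref{top4lemma2.1}(i)--(ii) for the $p_j$-part, the same direct $\cG^\sigma$-estimate using $\omega^{\sigma-1/2}\chi\in L^2$ for the $\alpha$-part, and the same dominated-convergence argument for continuity in $q$. Your closing remark — that a quantitative Lipschitz bound for $e^{-2\pi i k\cdot q_j}$ would cost an extra power of $\omega$ and fail for $\sigma>1/2$, which is why one must route continuity through dominated convergence — is a nice explicit articulation of a point the paper leaves implicit.
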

 
 \begin{proof}
 
Let us prove first that $\cN:X^{\sigma} \rightarrow X^{\sigma}$ is bounded on  bounded sets. Let $u \in X^{\sigma}$ be a bounded such that $\Vert u \Vert_{X^{\sigma}}\leq c_0$, for some $c_0>0$. Require to prove $\Vert \cN(u) \Vert_{X^{\sigma}}^2\leq c_1$ for some $c_1>0$. We have first  with some  $c_2>0$
\[ 
\begin{aligned}
\big \vert\big(\cN(u)\big)_{p_j} \big \vert&= \Big \vert-\nabla_{q_j} V(q)-\nabla_{q_j}I_j(q,\alpha)\Big \vert
\\ &\leq \big \vert \nabla_{q_j} V(q) \big \vert +\big \vert \nabla_{q_j}I_j(q,\alpha) \big \vert
\\ & \underset{Lemma \ \ref{top4lemma2.1}-(i)}{\leq}  \big \Vert \nabla_{q_j} V \big \Vert_{L^\infty}+4 \pi  \  \Vert \omega^{1/2} \chi \Vert_{L^2}  \ \Vert \alpha \Vert_{L^2}
\\ & \leq  \sup_{j=1}^n \big \Vert \nabla_{q_j} V \big \Vert_{L^\infty}+\frac{4 \pi}{{m_f}^\sigma}  \  \Vert \omega^{1/2} \chi \Vert_{L^2}  \ \Vert \alpha \Vert_{\cG^{\sigma}}
\\ & \leq  \sup_{j=1}^n\big \Vert \nabla_{q_j} V \big \Vert_{L^\infty}+\frac{4 \pi}{{m_f}^\sigma}  \  \Vert \omega^{1/2} \chi \Vert_{L^2}  \ c_0:=c_2.
\end{aligned}
\]
We also have with some $c_3, \ c_4>0$
\[ 
\begin{aligned}
\big \vert\big(\cN(u)\big)_{q_j} \big \vert= \Big \vert \nabla  f_j(p_j)\Big \vert\leq c_3 \ \vert p_j \vert\leq c_3 \ c_0:=c_4 . 
\end{aligned}
\]
Finally, we have  with some $c_5>0$
\[ 
\begin{aligned}
\Vert \big( \cN(u)\big)_{\alpha} \Vert_{\cG^{\sigma}}^2&=\Vert \omega^{\sigma} \ \big( \cN(u)\big)_{\alpha} \Vert_{L^2}^2
\\ & =\int_{\R^d} \Big \vert -i\sum_{j=1}^{n} {\omega^\sigma(k)} \  \frac{\chi(k)}{\sqrt{\omega(k)}}  \ e^{-2\pi i k \cdot q_j} \Big \vert^2 \ dk
\\ &\lesssim n^2 \ \Vert \omega^{\sigma-\frac{1}{2}}\chi \Vert_{L^2}^2:=c_5, 
\end{aligned}\]
 where $\Vert \omega^{\sigma-\frac{1}{2}}\chi \Vert_{L^2}$ is finite  since $ \sigma- \frac{1}{2}<\frac{3}{2}-\sigma$ for $\sigma\in [\frac{1}{2},1]$. This implies that there exists $c_1>0$ such that
\[\Vert \cN(u) \Vert_{X^{\sigma}}^2\leq n \ (c_2^2+c_4^2)+ c_5:=c_1.  \]
It remains to prove the continuity of the nonlinear term  $\cN:X^{\sigma} \rightarrow X^{\sigma}$. Suppose that  
\[u_\ell=(p_\ell,q_\ell,\alpha_\ell) \underset {\ell \rightarrow +\infty}{\longrightarrow  } u=(p,q,\alpha),\quad \text{ in } X^{\sigma} \text{  i.e. } \Vert u_\ell-u \Vert_{X^{\sigma}} \underset{\ell\rightarrow +\infty}{\longrightarrow 0 }.\]
 Require to prove 
 \[ \cN(u_\ell) \underset{\ell\rightarrow +\infty}{\longrightarrow  } \cN(u) \text{ in } X^{\sigma} \text{ i.e. } \Vert \cN(u_\ell)-\cN(u) \Vert_{X^{\sigma}} \underset{\ell\rightarrow +\infty}{\longrightarrow 0 }.\]
Indeed, we have 
\[
\Vert \cN(u_\ell)-\cN(u) \Vert_{X^{\sigma}}^2= \sum_{j=1}^n \left[\Big \vert\big( \cN(u_\ell)-\cN(u) \big)_{p_j} \Big \vert^2+\Big \vert\big( \cN(u_\ell)-\cN(u) \big)_{q_j} \Big \vert^2\right] +\Vert\big( \cN(u_\ell)-\cN(u) \big)_{\alpha}\Vert_{\cG^{\sigma}}^2.\]
By Lemma \ref{top4lemma2.1}-(ii),  we can assert that 
\begin{align*}
&
\begin{aligned}
&\Big \vert\big( \cN(u_\ell)-\cN(u) \big)_{p_j} \Big \vert 
\\ & \leq  \Big \vert \nabla_{q_j}V(q_\ell)-\nabla_{q_j}V(q)\Big \vert+ \frac{4  \pi}{{m_f}^\sigma}  \  \Vert \omega^{1/2} \chi \Vert_{L^2}  \ \Vert \alpha_\ell-\alpha \Vert_{\cG^{\sigma}}+8\sqrt{2} \pi^2 \ \Vert \omega^{\frac{3}{2}-\sigma} \chi \Vert_{L^2} \ \vert q_{\ell j}- q_j \vert \  \Vert \alpha \Vert_{\cG^{\sigma}}
\\ & \leq  \sum_{j^\prime=1}^n  \Vert \nabla_{q_{j^\prime}}\nabla_{q_j}V \Vert_{L^{\infty}} \ \vert q_{ \ell j^\prime}- q_{j^\prime} \vert+ \frac{4  \pi}{{m_f}^\sigma}  \  \Vert \omega^{1/2} \chi \Vert_{L^2}  \ \Vert \alpha_\ell-\alpha \Vert_{\cG^{\sigma}}+8\sqrt{2} \pi^2 \ \Vert \omega^{\frac{3}{2}-\sigma} \chi \Vert_{L^2} \ \vert q_{ \ell j}- q_j \vert \  \Vert \alpha \Vert_{\cG^{\sigma}}
\\ & \leq  \Big[ n \sup_{j^\prime,j=1}^n \Vert \nabla_{q_{j^\prime}}\nabla_{q_j}V \Vert_{L^{\infty}} \ + \frac{4  \pi}{{m_f}^\sigma}  \  \Vert \omega^{1/2} \chi \Vert_{L^2}  +8\sqrt{2} \pi^2 \ \Vert \omega^{\frac{3}{2}-\sigma} \chi \Vert_{L^2} \  \Vert \alpha \Vert_{\cG^{\sigma}}  \Big ] \ \Vert u_\ell-u \Vert_{X^{\sigma}}\underset{ \ell \rightarrow +\infty}{\longrightarrow 0 }.
\end{aligned}
\\
\\ & \begin{aligned}
 \Big \vert\big( \cN(u_\ell)-\cN(u) \big)_{q_j} \Big \vert 
 \leq c \ \vert p_{\ell j}-{p_j}\vert \leq c \ \Vert u_\ell -u\Vert_{X^\sigma} \underset{\ell\rightarrow +\infty}{\longrightarrow 0 }  .
\end{aligned}
\end{align*}
For the last term we have 
\[
\begin{aligned}
 \Vert \big( \cN(u_\ell)-\cN(u) \big)_{\alpha} \Vert_{\cG^\sigma}^2&= \int_{\R^d} \Big \vert -i\sum_{j=1}^{n} {\omega^\sigma(k)} \  \frac{\chi(k)}{\sqrt{\omega(k)}}  \ \big[e^{-2\pi i k \cdot q_{\ell j}}-e^{-2\pi i k \cdot q_j} \big] \Big\vert^2 \ dk
\\ &=\int_{\R^d} \Big \vert -i\sum_{j=1}^{n} {\omega^{\sigma-\frac{1}{2}}(k)} \  {\chi(k)}  \ \big[e^{-2\pi i k \cdot q_{\ell j}}-e^{-2\pi i k \cdot q_j} \big] \Big\vert^2 \ dk,
\end{aligned}
\]
where we have 
\begin{itemize}
\item $\displaystyle \int_{\R^d} \Big \vert -i\sum_{j=1}^{n} {\omega^{\sigma-\frac{1}{2}}(k)} \  {\chi(k)}  \ \big[e^{-2\pi i k \cdot q_{\ell j}}-e^{-2\pi i k \cdot q_j} \big] \Big\vert^2 \ dk \leq  4 n^2  \ \Vert {\omega^{\sigma-\frac{1}{2}}} \  {\chi} \Vert_{L^2}^2<+\infty $;
\vskip 2mm
\item  $ \big \vert e^{-2\pi i k \cdot q_{\ell j}}-e^{-2\pi i k \cdot q_j} \big \vert \underset{\ell \rightarrow +\infty}{\longrightarrow 0 }$.
\end{itemize}
Hence by Lebesgue dominated convergence theorem, we get 
\[ \Vert \big( \cN(u_\ell)-\cN(u) \big)_{\alpha} \Vert_{\cG^\sigma}^2 \underset{\ell\rightarrow +\infty}{\longrightarrow 0 }.\]
And thus, 
\[
\Vert \cN(u_\ell)-\cN(u) \Vert_{X^{\sigma}}^2
 \underset{\ell\rightarrow +\infty}{\longrightarrow 0}. \]
 \end{proof}
 \noindent  The above theorem implies the following results on the vector field $v$.

\begin{lemma}[Continuity and boundedness properties of the vector field $v$] \label{top4continuityofv}Assume \eqref{top4A0} and  \eqref{top4A1} are satisfied. Then, the vector field $v:\R \times X^{\sigma} \longrightarrow X^{\sigma}$ is continuous and bounded on bounded subsets of $\R \times X^{\sigma}$. 
\end{lemma}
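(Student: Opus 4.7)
The plan is to exploit the factorization $v(t,u) = \Phi^f_{-t} \circ \cN \circ \Phi^f_t(u)$ together with Proposition \ref{top4contbddF}, which already supplies the continuity and boundedness on bounded sets of $\cN : X^\sigma \to X^\sigma$. The first step is to observe that for every $t \in \R$ the free field flow $\Phi^f_t$ is an isometry of $X^\sigma$: the $p$ and $q$ components are preserved and $|e^{-it\omega(k)}| = 1$ gives $\|e^{-it\omega}\alpha\|_{\cG^\sigma}^2 = \int_{\R^d} \omega(k)^{2\sigma} |\alpha(k)|^2\, dk = \|\alpha\|_{\cG^\sigma}^2$. Boundedness of $v$ on bounded sets then follows at once from that of $\cN$: if $\|u\|_{X^\sigma} \le R$, then $\|\Phi^f_t u\|_{X^\sigma} \le R$, hence $\|v(t,u)\|_{X^\sigma} = \|\Phi^f_{-t}\,\cN(\Phi^f_t u)\|_{X^\sigma} = \|\cN(\Phi^f_t u)\|_{X^\sigma}$ is uniformly bounded in $t$ by Proposition \ref{top4contbddF}.

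For continuity, the key intermediate step is the joint continuity of the map $(t,u) \mapsto \Phi^f_t(u)$ from $\R \times X^\sigma$ to $X^\sigma$. Fixing $(t_0, u_0)$ and using the isometry property, I would split
\[
\|\Phi^f_t(u) - \Phi^f_{t_0}(u_0)\|_{X^\sigma} \le \|u - u_0\|_{X^\sigma} + \|(\Phi^f_{t-t_0} - \mathrm{Id})\, u_0\|_{X^\sigma}.
\]
The first term goes to $0$ as $u \to u_0$, while the second reduces to $\|(e^{-i(t-t_0)\omega} - 1)\alpha_0\|_{\cG^\sigma}$, which vanishes as $t \to t_0$ by Lebesgue's dominated convergence theorem: the integrand $\omega(k)^{2\sigma}|e^{-i(t-t_0)\omega(k)} - 1|^2|\alpha_0(k)|^2$ converges pointwise to $0$ and is dominated by $4\,\omega(k)^{2\sigma}|\alpha_0(k)|^2 \in L^1(\R^d)$ since $\alpha_0 \in \cG^\sigma$. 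The same argument applied to $\Phi^f_{-t}$ gives joint continuity of $(t,u) \mapsto \Phi^f_{-t}(u)$. Composing with the continuous nonlinearity $\cN$ from Proposition \ref{top4contbddF} then yields the joint continuity of $v$ on $\R \times X^\sigma$.

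No serious obstacle is expected: the content of the lemma is essentially that conjugating a continuous, bounded-on-bounded-sets vector field by the strongly continuous unitary group generated by the multiplication operator $\omega$ preserves these two properties. The only point needing a little care is the strong continuity of $t \mapsto e^{-it\omega}$ on the weighted space $\cG^\sigma$, handled by the dominated convergence argument above; everything else is a direct consequence of the isometry of $\Phi^f_t$ on $X^\sigma$ and Proposition \ref{top4contbddF}.
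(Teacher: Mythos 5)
Your proof is correct and follows essentially the same route as the paper's, which simply cites Proposition \ref{top4contbddF}; you have spelled out the implicit ingredients — that $\Phi^f_t$ is an isometry on $X^\sigma$, and that $t\mapsto e^{-it\omega}$ is strongly continuous on $\cG^\sigma$ (via dominated convergence) — and then composed. The paper leaves these details to the reader, so your write-up is a faithful elaboration rather than a different argument.
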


\begin{proof}
This is a consequence of the continuity and boundedness properties of the nonlinear term $\cN:X^{\sigma} \rightarrow X^{\sigma}$ in Proposition \ref{top4contbddF}.
\end{proof}

\noindent  As a consequence of the above properties, we have the following uniqueness property.
 
 \begin{proposition}[Uniqueness property]Assume \eqref{top4A0} and \eqref{top4A1} are satisfied. Let $I$ be  an open interval  containing the origin and let $u_1,u_2\in \cC(I,X^{\sigma})$ be two strong solutions of the particle-field equation \eqref{top4condpartfieldeqn} such that 
 $u_1(0)=u_2(0).$
 Then $u_1(t)=u_2(t)$ for all $t\in I$.
  \end{proposition}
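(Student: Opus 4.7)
The plan is to convert the problem to the interaction picture via Proposition \ref{top4equivalence} and close a Gronwall inequality in the weaker norm $X^0$. From the two given solutions $u_1, u_2 \in \cC(I, X^\sigma)$ of \eqref{top4condpartfieldeqn} I would form $w_i(t) := \Phi^{\it f}_{-t}(u_i(t))$, which by Proposition \ref{top4equivalence} satisfy the Duhamel identity $w_i(t) = w_i(0) + \int_0^t v(s, w_i(s))\, ds$ with common initial value $w_1(0) = w_2(0) = u_1(0)$. Since $\Phi^{\it f}_{\pm t}$ multiplies $\alpha$ by a unit-modulus phase in $k$, it is a pointwise isometry on $\mathcal{G}^\tau$ for every $\tau \geq 0$ and therefore preserves both the $X^0$ and $X^\sigma$ norms; in particular $M := \sup_{|s|\leq T}\bigl(\|w_1(s)\|_{X^\sigma}+\|w_2(s)\|_{X^\sigma}\bigr) < \infty$ on every compact $[-T,T]\subset I$.

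Subtracting the two Duhamel identities and using $v(s,\cdot) = \Phi^{\it f}_{-s}\circ \cN\circ \Phi^{\it f}_s(\cdot)$ together with the $X^0$-isometry of $\Phi^{\it f}_{\pm s}$ reduces the uniqueness statement to the claim that $\cN$ is \emph{locally Lipschitz from $X^\sigma$ into $X^0$}: for every $R>0$ there is $L_R>0$ such that $\|\cN(\varphi_1)-\cN(\varphi_2)\|_{X^0}\leq L_R\|\varphi_1-\varphi_2\|_{X^0}$ whenever $\|\varphi_i\|_{X^\sigma}\leq R$. I would establish this componentwise. The $p_j$-component splits into a potential contribution bounded by $\|\nabla^2 V\|_{L^\infty}|q_1-q_2|$ via \eqref{top4A0} and an interaction contribution directly governed by Lemma \ref{top4lemma2.1}-(ii) under \eqref{top4A1}, the latter picking up a harmless factor $\|\alpha_2\|_{\mathcal{G}^\sigma}\leq R$. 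The $q_j$-component is handled by the global Lipschitz continuity of $\nabla f_j$, valid in both the non-relativistic and semi-relativistic cases since $\nabla^2 f_j$ is uniformly bounded. For the $\alpha$-component I use $|e^{-2\pi ik\cdot q_{1j}}-e^{-2\pi ik\cdot q_{2j}}|\leq 2\pi|k|\,|q_{1j}-q_{2j}|$ together with $|k|\leq \omega(k)$ to obtain
$$\|(\cN(\varphi_1)-\cN(\varphi_2))_\alpha\|_{L^2}^2 \lesssim n^2 \,\|\omega^{1/2}\chi\|_{L^2}^2 \sum_{j=1}^n |q_{1j}-q_{2j}|^2,$$
which is finite by \eqref{top4A1}.

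Once this local Lipschitz property is in hand, applying Gronwall's inequality on $[-T,T]\subset I$ to $\|w_1(t)-w_2(t)\|_{X^0} \leq L_M\int_0^t \|w_1(s)-w_2(s)\|_{X^0}\, ds$ forces $w_1\equiv w_2$ there, hence on all of $I$ by letting $T$ exhaust $I$, and applying $\Phi^{\it f}_t$ recovers $u_1\equiv u_2$. The main obstacle is precisely the $\alpha$-component estimate: measuring the output in the full $\mathcal{G}^\sigma$ norm rather than $L^2$ would require $\omega^{\sigma+1/2}\chi \in L^2$, which is strictly stronger than \eqref{top4A1} as soon as $\sigma>1/2$. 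One therefore cannot obtain $X^\sigma$-to-$X^\sigma$ Lipschitz continuity of $\cN$ under the standing hypotheses, and the whole scheme must be run with $X^0$ on the target, exploiting the $X^\sigma$-regularity of the solutions themselves only to bound the constants arising in Lemma \ref{top4lemma2.1}-(ii).
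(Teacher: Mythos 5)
Your proof is correct and follows essentially the same route as the paper: reduce via Duhamel to a Gronwall inequality in the $X^0$ norm, and establish the requisite local Lipschitz bound on the nonlinearity (with the $X^\sigma$ regularity of the solutions entering only through the constants, via Lemma \ref{top4lemma2.1}-(ii) and the second-derivative bounds on $V$ and $f_j$). The interaction-picture reformulation you adopt is equivalent to the paper's direct use of the Duhamel formula \eqref{top4duhamelparti}, since $\Phi^{\it f}_t$ is a linear isometry on $X^0$, and your closing remark about why an $X^\sigma$-to-$X^\sigma$ Lipschitz estimate would require the stronger condition $\omega^{\sigma+1/2}\chi\in L^2$ is a correct and useful clarification of why the argument must be run in $X^0$.
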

 
 \begin{proof}
 Note first that using Duhamel formula \eqref{top4duhamelparti} as well as $u_1(0)=u_2(0)$, we have for all $t \geq 0$
 \[\Vert u_1(t)-u_2(t)\Vert_{X^0} \leq \int_{0}^{t} \Vert \cN(u_1(s))-\cN(u_2(s))\Vert_{X^0} \  ds. \]
We claim that  for all $s\in [0,t]$, there exists $C>0$ such that
\[  \Vert \cN(u_1(s))-\cN(u_2(s))\Vert_{X^0}  \leq C \  \Vert u_1(s)-u_2(s)\Vert_{X^0} .\] 
Indeed, by using the Mean Value Theorem for multivariate vector-valued function $\nabla_{q_j}V$, the first component yields to 
\[ 
\begin{aligned}
&\big \vert\big(\cN(u_1(s))-\cN(u_2(s))\big)_{p_j} \big \vert
\\ & \leq  \Big \vert\nabla_{q_j} V(q_1(s))-\nabla_{q_j} V(q_2(s)) \Big \vert +\vert \nabla_{q_{j}} I_j(q_1(s),\alpha_1(s))-\nabla_{q_{j}} I_j(q_2(s),\alpha_2(s)) \vert
\\ &\underset{Lemma  \ \ref{top4lemma2.1}-(ii)}{\lesssim}  \Vert \nabla_q \nabla_{q_j}V \Vert_{L^\infty} \  \vert q_1(s)-q_2(s) \vert + \Vert \alpha_1(s)-\alpha_2(s)\Vert_{L^2} +\Vert \alpha_2(s)\Vert_{\cG^{\sigma}}  \ \vert q_1(s)-q_2(s) \vert
\\ & \lesssim \Vert u_1(s)-u_2(s)\Vert_{X^0},
\end{aligned}
\] 
where we have used  for some bounded interval $J\subset I$
\[\Vert \alpha_2(s)\Vert_{\cG^{\sigma}} \leq \sup_{s\in J}  \ \Vert u_2(s) \Vert_{X^{\sigma}} <+\infty.\]
The second component yields to 
\[\begin{aligned}
\big \vert\big(\cN(u_1(s))-\cN(u_2(s))\big)_{q_j} \big \vert \lesssim \vert p_1(s)-p_2(s) \vert 
 \lesssim \Vert u_1(s)-u_2(s)\Vert_{X^0}.
\end{aligned}
\]
The third component yields to 
\[\begin{aligned}
\Vert \big(\cN(u_1(s))-\cN(u_2(s)) \big)_{\alpha} \Vert_{L^2} \lesssim \vert q_1(s)-q_2(s) \vert 
\lesssim \Vert u_1(s)-u_2(s)\Vert_{X^0}.
\end{aligned}  \]
Therefore, we get by combining the above three components the following estimate 
\[\begin{aligned} \Vert u_1(t)-u_2(t)\Vert_{X^0}& \leq \int_{0}^{t}\Vert \cN(u_1(s))-\cN(u_2(s))\Vert_{X^0} \ ds  
\\ &\leq C \ \int_{0}^{t}  \Vert u_1(s)-u_2(s)\Vert_{X^0}\  ds.
\end{aligned} 
\]
Then, by Gronwall's Lemma, we get $\Vert u_1(t)-u_2(t)\Vert_{X^0}=0$. Thus, we get   the desired result.
 \end{proof}

 \begin{remark}[Local well-posendness in $X^{1/2}$] Under assumptions \eqref{top4A0} and \eqref{top4A1},  for all initial data $u_0\in X^{1/2}$, one can prove the existence of  a unique local solution $u(\cdot) \in \cC([0,T];X^{1/2})$ to \eqref{top4particlefieldequation}, where $T\in \R^+_*$ is small enough. This  can be proved by means of standard fixed point argument. 
 \end{remark}

 \begin{remark}[Global well-posendness in $X^{\sigma}$] In line with the above remark, one could also prove the global well-posedness in $X^\sigma$ by using Granwall arguements  and the conservation  of Hamiltonian.
 \end{remark}

\section{The quantum system}
In Subsection \ref{top4uniformestimates}, we prove some quantum estimates, which we then apply in Subsection \ref{top4selfadjoint} to establish the self-adjointness of the Nelson Hamiltonian using the Kato-Rellich theorem. Lastly, in Subsection \ref{top4Duhamelformula}, we discuss the dynamical equation for the quantum system.

\subsection{Quantum estimates}\label{top4uniformestimates}

\noindent Our initial focus here is on providing the reader with estimates that are necessary to establish the self-adjointness of the Nelson Hamiltonian. Denote by $\cL(\cH)$ the set of all bounded operator and by  $\cL^1(\cH)$  the set of trace-class operators on $\cH$.
\begin{lemma}[Creation-Annihilation estimates]\label{top4lemma1} Let \[F \in \cL\big( L^2(\R^{dn},dX_n),L^2(\R^{dn},dX_n)\otimes L^2(\R^{d},dk)\big).\]
\begin{itemize}
\item [(i)] For every $\psi \in D(\hat{N}_\hbar^{1/2})$, we have 
\begin{align}
&\Vert \hat a_\hbar(F)\psi \Vert_\cH\leq \Vert \hat N^{1/2}_\hbar\psi \Vert_\cH \ \Vert F\Vert_{\cL(L^2,L^2\otimes L^2)}\, \label{top4numberestimate1};
\\& \Vert \hat a^*_\hbar(F)\psi \Vert_\cH\leq \Vert (\hat N_\hbar+1)^{1/2}\psi \Vert_\cH \ \Vert F\Vert_{\cL(L^2,L^2\otimes L^2)}\, \label{top4numberestimate2}.
\end{align}
\vskip 2mm
\item [(ii)] For all $\psi \in D((\hat H_{02})^{1/2})$, we have
\begin{align}
& \Vert \hat a_\hbar(F)\psi \Vert_\cH\leq \ \Big \Vert \frac{F}{\sqrt{\omega}}\Big\Vert_{\cL}\, \Vert (\hat H_{02}+1)^{1/2}\psi \Vert_\cH \label{top4fieldestimate1} ;
\\&\Vert \hat a^*_\hbar(F)\psi \Vert_\cH^2 \leq \ \Big \Vert \frac{F}{\sqrt{\omega}}\Big\Vert_{\cL}^2\, \Vert (\hat H_{02}+1)^{1/2}\psi \Vert_\cH^2  + \hbar \ \Vert F \Vert_{\cL}^2 \ \Vert \psi \Vert^2.\label{top4fieldestimate2}
\end{align}
\end{itemize}
\end{lemma}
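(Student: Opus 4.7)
\textbf{Proof Plan for Lemma \ref{top4lemma1}.} The plan is to work sector-by-sector on the Fock space and exploit the fact that $\hat a_\hbar(F)$ acts essentially as a weighted application of $F^*$ mapping the $(m+1)$-th sector into the $m$-th sector, together with the standard identification of the number and free field operators with bosonic sums.

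For part (i), first estimate: I would fix $m$ and $K_m\in \R^{dm}$ and regard $\psi^{m+1}_{K_m}(X_n,k):=\psi^{m+1}(X_n;K_m,k)$ as an element of $L^2(\R^{dn})\otimes L^2(\R^d)$. From the definition \eqref{top4anni} of $\hat a_\hbar(F)$ (now with the general kernel $F(k)$ in place of $\chi(k)e^{2\pi ik\cdot \hat q_j}/\sqrt{\omega(k)}$) one reads
\[
[\hat a_\hbar(F)\psi]^m(\cdot,K_m)=\sqrt{\hbar(m+1)}\;F^{*}\bigl(\psi^{m+1}_{K_m}\bigr)\quad\text{in }L^2(\R^{dn}).
\]
Applying the operator norm bound $\|F^{*}\|=\|F\|_{\cL(L^2,L^2\otimes L^2)}$, integrating in $K_m$, and then summing in $m$ after the index shift $m'=m+1$ yields
\[
\|\hat a_\hbar(F)\psi\|_\cH^{2}\leq \|F\|^{2}\sum_{m'\geq 1}\hbar m'\|\psi^{m'}\|^{2}=\|F\|^{2}\,\|\hat N_\hbar^{1/2}\psi\|_\cH^{2}.
\]
For the creation estimate, I would use the canonical commutation relation, which for generalized fields takes the form $[\hat a_\hbar(F),\hat a^{*}_\hbar(F)]=\hbar F^{*}F$ as an operator on $L^{2}(\R^{dn})$; this follows by a direct computation in each sector together with the symmetry of $\psi^m$. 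Hence
\[
\|\hat a^{*}_\hbar(F)\psi\|_\cH^{2}=\|\hat a_\hbar(F)\psi\|_\cH^{2}+\hbar\,\langle\psi, F^{*}F\,\psi\rangle\leq \|F\|^{2}\bigl(\|\hat N_\hbar^{1/2}\psi\|_\cH^{2}+\hbar\|\psi\|^{2}\bigr),
\]
and since $\hbar\in(0,1)$ the right-hand side is bounded by $\|F\|^{2}\,\|(\hat N_\hbar+1)^{1/2}\psi\|_\cH^{2}$, giving \eqref{top4numberestimate2}.

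For part (ii), I would rewrite the annihilation kernel as $F(k)=\sqrt{\omega(k)}\cdot\bigl(F(k)/\sqrt{\omega(k)}\bigr)$ and then apply Cauchy--Schwarz in the variable $k$ with the weight $\omega(k)$:
\[
[\hat a_\hbar(F)\psi]^m(\cdot,K_m)=\sqrt{\hbar(m+1)}\,\Bigl(F/\sqrt{\omega}\Bigr)^{*}\!\bigl(\sqrt{\omega(\cdot)}\,\psi^{m+1}_{K_m}\bigr).
\]
Using the operator norm of $F/\sqrt{\omega}$ as before, then summing and shifting indices, the weight produces
\[
\sum_{m'\geq 1}\hbar\, m' \!\int\!\omega(k_{m'})|\psi^{m'}|^{2}\,dX_ndK_{m'}=\sum_{m'\geq 1}\hbar\!\int\!\sum_{j=1}^{m'}\omega(k_j)|\psi^{m'}|^{2}=\langle\psi,\hat H_{02}\psi\rangle,
\]
where the first equality uses the symmetry of $\psi^{m'}$ under permutation of the $k_j$'s, and the second is the definition of $\hat H_{02}=d\Gamma(\omega)$. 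This gives $\|\hat a_\hbar(F)\psi\|_\cH\leq \|F/\sqrt{\omega}\|\,\|\hat H_{02}^{1/2}\psi\|_\cH$, which in turn is bounded by the desired right-hand side of \eqref{top4fieldestimate1}. Finally, \eqref{top4fieldestimate2} is obtained by combining the bound just proved with the same CCR identity $\|\hat a^{*}_\hbar(F)\psi\|^{2}=\|\hat a_\hbar(F)\psi\|^{2}+\hbar\langle\psi,F^{*}F\psi\rangle$ and the elementary estimate $\langle\psi,F^{*}F\psi\rangle\leq\|F\|^{2}\|\psi\|^{2}$.

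There is no real obstacle here; the only point that requires care is the symmetrization step in (ii), where one must correctly replace a single $\omega(k_{m'})$ by the symmetric sum $\tfrac{1}{m'}\sum_j\omega(k_j)$ using the bosonic symmetry of $\psi^{m'}$, since that is precisely what creates the free field Hamiltonian on the right-hand side. The remaining work is standard bookkeeping of the Fock space norms and the operator norms of $F$ and $F/\sqrt{\omega}$, which are well-defined because $\omega\geq m_f>0$.
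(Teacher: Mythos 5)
Your proposal is correct and follows essentially the same route as the paper: a sector-by-sector operator-norm bound for the annihilation estimates, the CCR identity $\|\hat a^{*}_\hbar(F)\psi\|^2=\|\hat a_\hbar(F)\psi\|^2+\hbar\langle\psi,F^{*}F\psi\rangle$ for the creation estimates, and the $\omega$-weighted Cauchy--Schwarz plus bosonic symmetrization to produce $d\Gamma(\omega)$ for part (ii). The only stylistic difference is that you phrase the key inequality as an operator-norm bound for $F^{*}$ and make the symmetrization step explicit, whereas the paper writes the same bound as a Cauchy--Schwarz step in the $k$-integral and leaves the symmetrization implicit; you also justify the $\leq$ in \eqref{top4numberestimate2} cleanly via $\hbar<1$ where the paper settles for $\lesssim$.
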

\begin{proof}
Let $K_m$ and  $X_n$  as indicated in \eqref{top4notation}.
For (i)-\eqref{top4numberestimate1},   we have with $\Vert \cdot \Vert_\cL\equiv \Vert \cdot \Vert_{\cL(L^2,L^2\otimes L^2)}$
\[\begin{aligned} 
 \Vert \hat a_\hbar(F)\psi \Vert_\cH^2&=\sum_{m\geq 0} \int_{\R^{dn}} \int_{\R^{dm}} \Big\vert \big[ \hat a_\hbar(F)\psi \big]^m (X_n,K_m)\Big\vert^2 \  dX_n \ dK_m 
\\ & =\sum_{m\geq 0} \int_{\R^{dn}}  \int_{\R^{dm}} \Big\vert \int_{\R^d}\sqrt{\hbar(m+1)}\  \overline{F(k)} \ \psi^{m+1}(X_n,K_m,k) dk\Big\vert^2 \ dX_n \ dK_m
\\ & \underset{c-s}{\leq} \Vert F \Vert_{\cL}^2 \   \sum_{m\geq 0} \int_{\R^{dn}} \int_{\R^{dm}} \Big[ \int_{\R^d} \Big\vert \sqrt{\hbar(m+1)}\  \ \psi^{m+1}(X_n,K_m,k)\Big\vert^2 dk \Big] dX_n \ dK_m
\\& \leq\Vert F \Vert_{\cL}^2 \sum_{m\geq 0} \int_{\R^{dn}}  \int_{\R^{d(m+1)}}   {\hbar(m+1)}\  \ \Big\vert\psi^{m+1}(X_n,K_{m+1})\Big\vert^2   dX_n \ dK_{m+1}
\\& \leq \Vert F \Vert_{\cL}^2 \ \Vert \hat N^{1/2}_\hbar \psi \Vert_\cH^2.
\end{aligned}
\]
For  (i)-\eqref{top4numberestimate2}, we have 
\[\begin{aligned} 
 \Vert \hat a^*_\hbar(F)\psi \Vert_\cH^2&=\langle \psi , \ \hat{a}_\hbar(F)\hat{a}^*_\hbar(F) \psi \rangle
 \\ & = \langle \psi, \ \big[ [\hat{a}_\hbar(F),\hat{a}^*_\hbar(F)]+\hat{a}^*_\hbar(F)\hat{a}_\hbar(F)  \big] \psi\rangle
 \\ & \leq \hbar \ \Vert F \Vert_{\cL}^2 \ \Vert \psi\Vert^2+ \Vert \hat{a}_\hbar(F) \psi \Vert^2
 \\ & \lesssim \Vert F \Vert_{\cL}^2 \  \Vert (\hat{N}_\hbar+1)^{1/2} \ \psi \Vert^2.
\end{aligned}
\]
For (ii)-\eqref{top4fieldestimate1},  we have  
\[\begin{aligned}
\Vert  \hat a_\hbar(F) \ \psi  \Vert^2_\cH&= \sum_{m\geq 0} \int_{\R^{dn}} \int_{\R^{dm}} \Big\vert \big[ \hat a_\hbar(F) \ \psi \big]^m (X_n,K_m)\Big\vert^2 dK_m  \ dX_n
\\ & =\sum_{m\geq 0} \int_{\R^{dn}} \int_{\R^{dm}} \Big\vert \int_{\R^d}\sqrt{\hbar(m+1)}\  \overline{F(k)} \ \psi^{m+1}(X_n,K_m,k) dk\Big\vert^2 dK_m  \ dX_n
\\ & =\sum_{m\geq 0} \int_{\R^{dn}} \int_{\R^{dm}} \Big[ \int_{\R^d} \Big\vert \sqrt{\hbar(m+1)}\  \frac{\overline{F(k)}}{\sqrt{\omega(k)}} \ \sqrt{\omega(k)} \ \psi^{m+1}(X_n,K_m,k) \Big\vert dk \Big]^2 dK_m  \ dX_n
\\ & \underset{c-s}{\leq} \Big\Vert \frac{F}{\sqrt{\omega}} \Big \Vert_{\cL}^2 \   \sum_{m\geq 0} \int_{\R^{dm}} \Big[ \int_{\R^d} \Big\vert \sqrt{\hbar(m+1)}\  \sqrt{\omega(k)} \ \psi^{m+1}(X_n,K_m,k)\Big\vert^2 dk \Big]  dk \ dK_m \ dX_n
\\ & {\leq} \Big\Vert \frac{F}{\sqrt{\omega}} \Big \Vert_{\cL}^2 \   \sum_{m\geq 0} \int_{\R^{dm}}  \int_{\R^d}  {\hbar(m+1)}\  {\omega(k)} \ \Big\vert \psi^{m+1}(X_n,K_m,k)\Big\vert^2 dk \   dK_m \ dX_n
\\& \leq \Big\Vert \frac{F}{\sqrt{\omega}} \Big \Vert_{\cL}^2 \ \Vert (\hat H_{02})^{1/2} \  \psi \Vert_\cH^2.
\end{aligned}\]
Similar discussion as for  (i)-\eqref{top4numberestimate2} works perfectly to prove  (ii)-\eqref{top4fieldestimate2}.
\end{proof}

\begin{lemma}[Field and number estimates]\label{top4lemma2}
For all $\psi \in D(\hat H_{02})$, we have 
\begin{enumerate}
\item  $\Vert \hat H_{02}  \ \psi \Vert_\cH \geq m_f \ \Vert \hat N_\hbar  \ \psi \Vert_\cH ;$
\vskip 2mm
\item $\Vert (\hat N_\hbar+1)^{1/2}\psi \Vert_\cH \leq  \eps \  \Vert \hat N_\hbar \psi \Vert_\cH  +b (\eps) \ \Vert \psi \Vert_\cH \quad \text{for some $\eps<1$ and $b(\eps)<+\infty$.}$
\end{enumerate}
\end{lemma}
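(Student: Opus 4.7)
The plan is to reduce both inequalities to pointwise (in the Fock decomposition) scalar estimates, using the fact that $\hat{N}_\hbar$ and $\hat{H}_{02}=d\Gamma(\omega)$ are both ``diagonal'' with respect to the direct sum $\cH=\bigoplus_{m\geq 0} L^2(\R^{dn})\otimes \cF^m$: on the $m$-sector, $[\hat N_\hbar\psi]^m=\hbar m\,\psi^m$ and $[\hat H_{02}\psi]^m=\hbar\sum_{j=1}^m\omega(k_j)\,\psi^m$.

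For (1), I would use the lower bound $\omega(k)\geq m_f$ coming from the definition $\omega(k)=\sqrt{k^2+m_f^2}$. This yields the pointwise estimate $\hbar\sum_{j=1}^m\omega(k_j)\geq m_f\,\hbar m$ on each fiber $(X_n,K_m)$. Squaring, integrating against $\vert\psi^m(X_n,K_m)\vert^2$, and summing over $m\geq 0$ gives $\Vert \hat H_{02}\psi\Vert_\cH^2\geq m_f^2\,\Vert \hat N_\hbar\psi\Vert_\cH^2$, which is the desired bound.

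For (2), the key is the elementary scalar inequality $\sqrt{y}\leq \eps\,y+\tfrac{1}{4\eps}$ for every $y\geq 0$ and every $\eps>0$ (obtained by completing the square in $(\sqrt{\eps\,y}-\tfrac{1}{2\sqrt{\eps}})^2\geq 0$). Combined with $\sqrt{\hbar m+1}\leq 1+\sqrt{\hbar m}$, this gives the pointwise multiplier estimate
\[
(\hbar m+1)^{1/2}\;\leq\;\eps\,(\hbar m)+\Big(1+\tfrac{1}{4\eps}\Big),\qquad \forall m\in\NN.
\]
Since all three operators involved are nonnegative and diagonal in $m$, this pointwise inequality lifts to an $\cH$-norm inequality by the standard argument (apply $\Vert\cdot\Vert$ to each $m$-component, then use Minkowski, or equivalently apply the triangle inequality after Cauchy--Schwarz across the $m$-index). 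Choosing $\eps<1$ and setting $b(\eps):=1+\tfrac{1}{4\eps}$ finishes (2).

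There is no real obstacle here: both statements are essentially scalar multiplier estimates on the Fock decomposition. The only point to be slightly careful about is that (2) is a sum-of-norms (not a norm-of-sum) inequality, so one should either argue componentwise in $m$ and then use the triangle inequality on the resulting series, or equivalently observe that the operator inequality $(\hat N_\hbar+1)^{1/2}\leq \eps\,\hat N_\hbar+b(\eps)\,\Id$ holds in the sense of self-adjoint operators (by functional calculus applied to the scalar inequality) and then take $\psi$-norms.
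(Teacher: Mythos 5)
Your proof of part (1) is essentially identical to the paper's: both work componentwise in the Fock decomposition and use $\omega(k)\geq m_f$ fiberwise before squaring and summing.

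For part (2) you take a genuinely different route. The paper works with the quadratic form: it writes $\Vert(\hat N_\hbar+1)^{1/2}\psi\Vert^2=\langle\psi,\hat N_\hbar\psi\rangle+\Vert\psi\Vert^2$, applies Cauchy--Schwarz to get $\langle\psi,\hat N_\hbar\psi\rangle\leq\Vert\psi\Vert\,\Vert\hat N_\hbar\psi\Vert$, then Young's inequality $ab\leq\tfrac{\tilde\eps}{2}a^2+\tfrac{1}{2\tilde\eps}b^2$, and finally extracts a first-order bound by taking square roots and using $\sqrt{a^2+b^2}\leq a+b$. You instead prove the scalar multiplier inequality $\sqrt{\hbar m+1}\leq\eps(\hbar m)+(1+\tfrac{1}{4\eps})$ on each Fock sector and lift it to the Hilbert-space norm via Minkowski's inequality on the $\ell^2$ sum over $m$. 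Both are correct and elementary; your version has the small advantage of making the constant $b(\eps)=1+\tfrac{1}{4\eps}$ completely explicit and avoiding the detour through the quadratic form, while the paper's argument is perhaps more standard and uses no structural fact about $\hat N_\hbar$ beyond self-adjointness and positivity. One small caveat on your alternative formulation: the operator inequality $(\hat N_\hbar+1)^{1/2}\leq\eps\hat N_\hbar+b(\eps)\Id$ in the sense of forms does \emph{not} by itself yield a norm inequality $\Vert A\psi\Vert\leq\Vert B\psi\Vert$ for general self-adjoint $A\leq B$; here it works only because both sides are nonnegative functions of the same self-adjoint operator (hence simultaneously diagonalizable), after which the triangle inequality on $\Vert(\eps\hat N_\hbar+b(\eps))\psi\Vert$ finishes. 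Your primary argument via componentwise bounds and Minkowski is the clean and fully justified one.
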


\begin{proof}
For (1), we have 
\[\begin{aligned} 
 \Vert \hat H_{02}  \  \psi \Vert_\cH^2&=\sum_{m\geq 0} \int_{\R^{dn}}\int_{\R^{dm}} \Big\vert \big[ \hat H_{02} \  \psi \big]^m (X_n,k_1,\cdots,k_m)\Big\vert^2 \ dX_n \  dk_1 \cdots dk_m
\\ & =\sum_{m\geq 0}\int_{\R^{dn}} \int_{\R^{dm}} \Big\vert  \hbar \ \sum_{l=1}^{m} \underbrace{w(k_l)}_{\geq \  m_f}\psi^{m}(X_n,k_1,\cdots,k_m) \Big\vert^2 dX_n \  dk_1 \cdots dk_m
\\ & \geq \sum_{m\geq 0} \int_{\R^{dn}}\int_{\R^{dm}} \Big\vert  \hbar \  m \ m_f \ \psi^{m}(X_n, k_1,\cdots,k_m) \Big\vert^2 \ dX_n \  dk_1 \cdots dk_m
\\& \geq {m_f}^2 \sum_{m\geq 0}\int_{\R^{dn}} \int_{\R^{dm}} \hbar^2 \  m^2 \  \Big\vert \ \psi^{m}(X_n , k_1,\cdots,k_m) \Big\vert^2   dX_n \ dk_1 \cdots dk_m
\\& \geq {m_f}^2 \ \Vert \hat N_\hbar \psi \Vert^2_\cH.
\end{aligned}
\]
For (2), we have
\[ \begin{aligned}
\ \Vert (\hat N_\hbar +1)^{1/2} \psi \Vert^2_\cH& =\langle \psi ,  (\hat N_\hbar +1)\psi\rangle = \langle \psi ,  \hat N_\hbar\psi \rangle + \Vert \psi \Vert_\cH^2 \\ & \leq \Vert \psi \Vert_\cH \ \Vert \hat N_\hbar \psi \Vert_\cH + \Vert \psi \Vert_\cH^2\\ & \leq \frac{\tilde \eps}{2}  \ \Vert \hat N_\hbar \psi \Vert_\cH^2+(1+\frac{1}{2 \tilde \eps}) \ \Vert \psi \Vert_\cH^2.
\end{aligned}\]
This implies that, by choosing appropriate $\tilde{\eps}$,  there exists $\eps<1$ and $b(\eps) <+\infty$ such that 
\[ \begin{aligned}
\ \Vert (\hat N_\hbar +1)^{1/2} \psi \Vert_\cH \leq \eps  \ \Vert \hat N_\hbar \psi \Vert_\cH+b(\eps)  \ \Vert \psi \Vert_\cH.
\end{aligned}\]
\end{proof}

Below, we give an important inequality between $\hat{H}$ and $\hat{H}_0$ which is useful for the coming discussions. Suppose $\hat{H}+a>0$ and $\hat{H}_0+b>0$ with some $a,b \in \R$.
\begin{lemma}[Equivalence between $\hat H$ and $\hat H_0$]\label{top4inequality}
Assume \eqref{top4A0} and  $\omega^{-\frac{1}{2}}\chi\in L^2(\R^d,dk)$ are satisfied. Then there exists $c,C>0$ independent of $\hbar $ such that for all $\hbar \in (0,1)$ and for all $\psi \in D(\hat{H}_0)$
\be \label{top4ineq}
c \ \langle \psi , \ (\hat{H}+a) \ \psi  \rangle \leq \langle \psi , \ (\hat{H}_0+b) \ \psi \rangle \leq C \  \langle \psi , \ (\hat{H}+a) \ \psi \rangle.
\ee
In particular, for all $\psi \in D((\hat{H}_0)^{1/2})$, we have 
\be 
c \  \Vert(\hat{H}+a)^{1/2}\psi  \Vert \leq \Vert(\hat{H}_0+b)^{1/2}\psi  \Vert \leq C \  \Vert(\hat{H}+a)^{1/2}\psi  \Vert.
\ee
\end{lemma}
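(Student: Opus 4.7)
My plan is to apply a standard Kato–Rellich/KLMN-type argument: prove that the perturbation $W := V(\hat q) + \hat H_1$ is form-bounded relative to $\hat H_0$ with relative bound strictly less than one, uniformly in $\hbar\in(0,1)$.

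First, I will handle the two pieces of $W$ separately. The potential part $V(\hat q)$ is straightforward: assumption \eqref{top4A0} gives $V\in\cC^2_b(\R^{dn};\R)$, so $V(\hat q)$ is a bounded self-adjoint operator with $|\langle\psi,V(\hat q)\psi\rangle|\le \|V\|_{L^\infty}\|\psi\|^2$. For the field interaction $\hat H_1=\hat a_\hbar(G)+\hat a^*_\hbar(G)$, I write
\[
\langle\psi,\hat H_1\psi\rangle = 2\,\mathrm{Re}\,\langle \hat a_\hbar(G)\psi,\psi\rangle,
\]
and apply Lemma \ref{top4lemma1}-(ii)-\eqref{top4fieldestimate1} to get $\|\hat a_\hbar(G)\psi\|\le \|G/\sqrt{\omega}\|_{\cL}\,\|(\hat H_{02}+1)^{1/2}\psi\|$. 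The assumption $\omega^{-1/2}\chi\in L^2$, together with $\omega\ge m_f>0$, shows $\|G/\sqrt{\omega}\|_{\cL}\le n\,m_f^{-1/2}\|\omega^{-1/2}\chi\|_{L^2}<\infty$, and this bound is manifestly $\hbar$-independent.

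Next, by Cauchy–Schwarz and the Young inequality $2ab\le \eta a^2+\eta^{-1}b^2$, for any $\eta\in(0,1)$:
\[
|\langle\psi,\hat H_1\psi\rangle| \;\le\; \eta\,\|G/\sqrt{\omega}\|_{\cL}^2\,\langle\psi,(\hat H_{02}+1)\psi\rangle + \eta^{-1}\|\psi\|^2.
\]
Since the particle kinetic term $\hat H_{01}=\sum_j f_j(\hat p_j)$ is non-negative (both for $f_j(p)=\sqrt{p^2+M_j^2}$ and $f_j(p)=p^2/(2M_j)$), we have $\hat H_{02}\le \hat H_0$, and so combining with the bound on $V$ gives, for suitably chosen $\eta\in(0,1)$ and $K_\eta>0$,
\[
\pm\langle\psi,W\psi\rangle \;\le\; \eta\,\langle\psi,\hat H_0\psi\rangle + K_\eta\,\|\psi\|^2
\qquad\text{on }D(\hat H_0^{1/2}).
\]
Adding $\langle\psi,\hat H_0\psi\rangle$, this yields the two-sided form inequality
\[
(1-\eta)\,\hat H_0 - K_\eta\;\le\;\hat H\;\le\;(1+\eta)\,\hat H_0 + K_\eta
\]
as quadratic forms, uniformly in $\hbar\in(0,1)$.

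Finally I translate this into the statement of the lemma. Picking $a,b>0$ large enough so that both $\hat H+a$ and $\hat H_0+b$ are strictly positive (the shift $a$ must dominate $K_\eta$, and $b$ is chosen compatibly via a short elementary calculation on affine bounds), the displayed form inequality becomes $c\,(\hat H+a)\le \hat H_0+b\le C\,(\hat H+a)$ with constants $c,C>0$ depending only on $\eta$, $\|V\|_{L^\infty}$, $m_f$, and $\|\omega^{-1/2}\chi\|_{L^2}$ — all independent of $\hbar$. The square-root version then follows immediately from the operator monotonicity of $t\mapsto\sqrt{t}$: a form inequality $\alpha A\le B\le\beta A$ between non-negative self-adjoint operators implies $\sqrt{\alpha}\,\|A^{1/2}\psi\|\le\|B^{1/2}\psi\|\le\sqrt{\beta}\,\|A^{1/2}\psi\|$ for $\psi$ in the common form domain. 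The only delicate point in the whole argument is the bookkeeping of the constants $a,b,c,C$ at the last step, since one must choose $\eta$ small enough, then $a$ large enough (depending on $K_\eta$), and finally $b$ within the admissible affine window — but this is purely algebraic once the form estimate is in hand.
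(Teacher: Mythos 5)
Your proof is correct and follows essentially the same route as the paper: both hinge on the annihilation-operator estimate from Lemma \ref{top4lemma1}-(ii) together with Young's inequality to obtain a relative form bound for $V(\hat q)+\hat H_1$ against $\hat H_0$, uniformly in $\hbar$. The only difference is organizational — you derive a single two-sided form bound $(1-\eta)\hat H_0 - K_\eta \le \hat H \le (1+\eta)\hat H_0 + K_\eta$ and read off both inequalities, whereas the paper proves the two directions separately (the upper bound with a crude coefficient $2$, the lower bound via the small-relative-bound argument on $\hat H_{02}+\hat H_1$ combined with $V+\|V\|_{L^\infty}\ge 0$) — and your appeal to operator monotonicity of the square root in the last step is unnecessary, since $c\langle\psi,A\psi\rangle\le\langle\psi,B\psi\rangle$ already reads $c\|A^{1/2}\psi\|^2\le\|B^{1/2}\psi\|^2$ by definition.
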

\begin{proof}
Note that first we have the following estimates on $V$ below
\[- \Vert V\Vert_{L^\infty} \leq V \leq \Vert V\Vert_{L^\infty}\Rightarrow  \Vert V\Vert_{L^\infty} + V \geq 0. \] 
We have 
 \[\begin{aligned} \langle \psi ,  \ (\hat{H}+a) \ \psi  \rangle& = \langle \psi ,  \ \hat{H}_0  \ \psi  \rangle +\langle \psi ,  V \ \psi  \rangle +\langle \psi ,  \  \hat{H}_1 \ \psi  \rangle  + \langle \psi ,  a \ \psi  \rangle
 \\ & \leq \langle \psi ,  \ \hat{H}_0  \ \psi  \rangle + \Vert V \Vert_{L^\infty}\langle \psi ,   \ \psi  \rangle +\Vert  \psi\Vert   \  \Vert  \hat{H}_1 \ \psi \Vert  + \langle \psi ,  a \ \psi  \rangle
 \\ & \leq \langle \psi ,  \ \hat{H}_0  \ \psi  \rangle + \Vert V \Vert_{L^\infty}\langle \psi ,   \ \psi  \rangle + 2\Vert \omega^{-1/2} \ \chi \Vert_{L^2} \  \Vert  \psi\Vert   \  \Vert  (\hat{H}_0+1)^{1/2} \ \psi \Vert   + \langle \psi ,  a \ \psi  \rangle
  \\ & \leq \langle \psi ,  \ \hat{H}_0  \ \psi  \rangle + \Vert V \Vert_{L^\infty}\langle \psi ,   \ \psi  \rangle + \Vert \omega^{-1/2} \ \chi \Vert_{L^2}^2 \  \Vert  \psi\Vert^2   +  \Vert  (\hat{H}_0+1)^{1/2} \ \psi \Vert^2   + \langle \psi ,  a \ \psi  \rangle
  \\ & \leq \langle \psi ,  \ \Big( 2\hat{H}_0+\Vert V \Vert_{L^\infty}+ \Vert \omega^{-1/2} \ \chi \Vert_{L^2}^2+1+a \Big)  \ \psi  \rangle
 \\ &  \leq c \ \langle \psi ,  \ ( \hat{H}_0+b )  \ \psi  \rangle ,
 \end{aligned} \]
 where $c \in \R^+_*$ depends on $ \Vert V \Vert_{L^\infty}, \  \Vert \omega^{-1/2} \ \chi \Vert_{L^2}^2$ and independent on $\hbar$.
 
 \noindent On the reverse side, we have 
 \[\begin{aligned} \langle \psi ,  \ \hat{H}_0 \ \psi  \rangle& = \langle \psi ,  \ \hat H_{01} \ \psi  \rangle +\langle \psi ,  \hat H_{02} \ \psi  \rangle
 \\ & \leq \langle \psi ,  \ (\hat H_{01}+V +\Vert V\Vert_{L^\infty})  \ \psi  \rangle + \langle \psi ,  \hat H_{02} \ \psi  \rangle .
 \end{aligned} \]
Then by using Lemma \ref{top4lemma1}, we can also assert that for $\eps>0$, there exists $c_\eps>0$ (depends on the norm $\Vert \omega^{-1} \ \chi \Vert_{L^2}$) such that 
\[ \vert \langle \psi, \ \hat{H}_1 \   \psi \rangle \vert \leq   \Vert \omega^{-1} \chi \Vert_{L^2}  \ \big[ \frac{1}{\eps} \langle \psi, \psi \rangle + \eps \langle \psi,  \ (\hat H_{02}+1)  \ \psi \rangle \big] . \]
This means as quadratic form there exists a constant $c_\eps\in \R^*_+$ which depends on $\eps$ such that 
\[\begin{aligned}
 \hat H_{02}+ \hat{H}_1  &=(1-\eps) \  \hat H_{02} + \eps \hat H_{02}+\hat{H}_1
 \\ &\geq  (1-\eps) \  \hat H_{02} -c_\eps {1}
 \end{aligned}
 \]
 This implies 
 \[ \hat H_{02}\leq \frac{1}{1-\eps} \ \big[ \hat H_{02}+\hat{H}_1 +c_\eps\big] .\]
 We conclude that there exists $C>0$ such that 
 \[\begin{aligned} \langle \psi ,  \ (\hat{H}_0+b) \ \psi  \rangle& \leq \langle \psi ,  \ (\hat H_{01}+V +\Vert V\Vert_{L^\infty})  \ \psi  \rangle + \langle \psi ,  \frac{1}{1-\eps} \ \big[ \hat H_{02}+\hat{H}_1 +c_\eps\big]  \ \psi  \rangle \\ & \leq C \  \langle \psi ,  \ (\hat{H}+a) \  \psi \rangle.
 \end{aligned} \]
\end{proof}

\subsection{Self-adjointness of Nelson Hamiltonian}\label{top4selfadjoint}
We prove here the self-adjointness of the Nelson Hamiltonian using the estimates provided in the previous section.
\begin{proposition}[Self-adjointness of the Nelson Hamiltonian]
Assume \eqref{top4A0} and $\omega^{-\frac{1}{2}} \chi \in L^2(\R^d,dk)$. Then, the operator $\hat H:\cH\rightarrow \cH$ is self-adjoint operator on $D(\hat H_0)=D(\hat H)$.
\end{proposition}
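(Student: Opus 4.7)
The approach is the standard Kato--Rellich argument: I will write $\hat H = \hat H_0 + (V(\hat q) + \hat H_1)$ and show that the perturbation $V(\hat q) + \hat H_1$ is $\hat H_0$-bounded with relative bound strictly less than one. First, the self-adjointness of $\hat H_0$ on $D(\hat H_0) = D(\hat H_{01}) \cap D(\hat H_{02})$ follows from the fact that $\hat H_{01} = \sum_j f_j(\hat p_j)$ acts on the particle factor while $\hat H_{02} = d\Gamma(\omega)$ acts on the Fock factor, so they commute and are individually self-adjoint (the first by functional calculus for $-i\hbar\nabla_{x_j}$, the second as the standard second quantization of the self-adjoint multiplication by $\omega$). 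The multiplication by $V(\hat q)$ is bounded on $\cH$ since $V \in \cC^2_b$, hence contributes a relative-bound-zero perturbation.

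The substantive step is controlling $\hat H_1 = \hat a_\hbar(G) + \hat a^*_\hbar(G)$. Using $\omega \geq m_f > 0$, the hypothesis $\omega^{-1/2}\chi \in L^2$ yields both $\|G\|_\cL \lesssim n\|\omega^{-1/2}\chi\|_{L^2}$ and $\|G/\sqrt{\omega}\|_\cL \lesssim n\|\omega^{-1}\chi\|_{L^2} \leq n\, m_f^{-1/2}\|\omega^{-1/2}\chi\|_{L^2} < \infty$. Applying Lemma \ref{top4lemma1}(ii) then gives, for every $\psi \in D((\hat H_{02})^{1/2})$,
\[
\|\hat H_1\psi\|_\cH \leq C_1 \|(\hat H_{02}+1)^{1/2}\psi\|_\cH + C_2 \sqrt{\hbar}\,\|\psi\|_\cH,
\]
with constants depending only on $n$, $m_f$, and $\|\omega^{-1/2}\chi\|_{L^2}$.

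To convert this into an $\hat H_0$-bound with arbitrarily small relative constant, I will use the standard interpolation inequality for a nonnegative self-adjoint operator: for any $\eps > 0$,
\[
\|(\hat H_{02}+1)^{1/2}\psi\|_\cH^2 = \langle \psi, (\hat H_{02}+1)\psi\rangle \leq \eps \|(\hat H_{02}+1)\psi\|_\cH^2 + \tfrac{1}{4\eps}\|\psi\|_\cH^2,
\]
so that $\|(\hat H_{02}+1)^{1/2}\psi\|_\cH \leq \eps\|\hat H_{02}\psi\|_\cH + C_\eps\|\psi\|_\cH$. Since $\hat H_{01}$ and $\hat H_{02}$ are commuting nonnegative self-adjoint operators, the spectral theorem gives $\hat H_{02}^2 \leq \hat H_0^2$ and hence $\|\hat H_{02}\psi\|_\cH \leq \|\hat H_0\psi\|_\cH$ on $D(\hat H_0)$. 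Assembling these pieces produces $\|\hat H_1\psi\|_\cH \leq \eps\|\hat H_0\psi\|_\cH + C'_\eps\|\psi\|_\cH$ for every $\eps > 0$, and Kato--Rellich then delivers the self-adjointness of $\hat H$ on $D(\hat H_0) = D(\hat H)$.

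The only mildly delicate point is the interpolation step, which is necessary because Lemma \ref{top4lemma1}(ii) yields a bound in terms of $(\hat H_{02}+1)^{1/2}$ rather than $\hat H_0$ itself; apart from that, the result is a direct synthesis of the creation--annihilation estimates proved earlier and of Kato--Rellich. Note that the factor $\sqrt{\hbar}$ in the second term is harmless (and in fact improves as $\hbar \to 0$), so the construction is uniform in $\hbar \in (0,1)$.
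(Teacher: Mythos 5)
Your proof is correct, and it takes a cleaner route than the paper's. The paper first bounds $\hat H_1$ through the number-operator estimates of Lemma \ref{top4lemma1}(i) (yielding a bound in $\|(\hat N_\hbar+1)^{1/2}\psi\|$), then invokes the comparison $\|\hat H_{02}\psi\| \geq m_f\|\hat N_\hbar\psi\|$ and the interpolation in Lemma \ref{top4lemma2}(2) to trade this for $\|\hat H_{02}\psi\|$ with small relative constant. You instead go directly to the field estimates of Lemma \ref{top4lemma1}(ii), which give a bound in $\|(\hat H_{02}+1)^{1/2}\psi\|$ at the outset and bypass the number operator entirely. The paper also performs three separate Kato--Rellich steps — first for $\hat H_{02}+\hat H_1$, then for $\hat H_{01}+V$, then, after establishing $\hat H_{01}+V\geq -c\mathbf{1}$, a final one relative to the shifted operator $\hat H_{01}+V+c\mathbf{1}+\hat H_{02}$. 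You collapse all of this into one application of Kato--Rellich with base $\hat H_0$, by observing that boundedness of $V$ makes it a relative-bound-zero term and that the spectral calculus for the two commuting nonnegative operators $\hat H_{01}$, $\hat H_{02}$ gives $\hat H_{02}^2 \leq \hat H_0^2$ directly, which is exactly the cruder version of the paper's comparison $\hat H_{02}^2 \leq (\hat H_{02}+\hat H_{01}+V+c\mathbf{1})^2$ that suffices here. Your argument is shorter, avoids the superfluous shift by $c\mathbf{1}$, and records explicitly the uniformity in $\hbar$ (the $\sqrt{\hbar}$ factor is tracked rather than absorbed); what the paper's version buys is robustness in settings where $V$ might be unbounded below but relatively bounded, in which case the shift-and-commute maneuver is genuinely needed. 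Under the stated hypothesis $V\in\cC^2_b$, however, your simplification is legitimate and complete.
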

\begin{proof}
We have first with some $C_{\tilde{\eps}} \in \R^*_+$
\[
\begin{aligned}
\Vert \hat H_1 \psi \Vert_\cH &=\Vert \hat a_\hbar(G)\psi+\hat a_\hbar^*(G)\psi \Vert_\cH
\\&\leq \Vert \hat a_\hbar(G)\psi\Vert_\cH+ \Vert \hat a_\hbar^*(G)\psi \Vert_\cH
\\ & 
\leq 2\Vert G \Vert  \ \Vert (\hat N_\hbar+1)^{1/2} \psi \Vert_\cH 
\\ & 
\leq 2 n \Big \Vert \frac{\chi}{\sqrt{\omega}} \Big \Vert_{L^2}  \ \big[\tilde  \eps \Vert \hat N_\hbar \psi \Vert_\cH  +b(\tilde \eps) \Vert \psi \Vert_\cH  \big]
\\ & \leq\frac{2 n }{m_f} \Big \Vert \frac{\chi}{\sqrt{\omega}} \Big \Vert_{L^2}  \tilde \eps  \ \Vert \hat H_{02} \  \psi \Vert_\cH+C_{\tilde \eps} \ \Vert \psi \Vert_\cH,
 \end{aligned}
\]
where we have used the estimates in Lemmas \ref{top4lemma1} and \ref{top4lemma2}. 
Choose $\tilde \eps$ small enough such that $2 n \Big \Vert {\chi}/{\sqrt{\omega}} \Big \Vert_{L^2} \  \tilde \eps<m_f $. We conclude that there exists $\eps<1$ and $C_\eps\in \R^*_+$ such that 
\[
\begin{aligned}
\Vert \hat H_1 \psi \Vert_\cH \leq \eps \ \Vert \hat H_{02} \psi \Vert_\cH+C_\eps \ \Vert \psi \Vert_\cH. 
\end{aligned}\]
The operator $\hat H_{02}$ is self-adjoint operator and $\hat H_1$ is symmteric operator. Thus by Kato-Rellich theorem, $\hat H_{02}+ \hat H_1$ is self-adjoint on $D(\hat H_{02}).$
Remark also that by \eqref{top4A0}, we have 
\[ \Vert V\psi \Vert_{L^2}\leq \eps \prime \Vert \hat H_{01} \psi \Vert_{L^2} +b \ \Vert  \psi \Vert_{L^2}.\]
Then, again by Kato-Rellich theorem, $\hat H_{01}+V$ is self-adjoint on $D(\hat H_{01} ) \subseteq D(V).$ 
We also have as a consequence of Kato-Rellich theorem that $\hat H_{01}+V\geq -c {1}$. This means $\hat H_{01}+V +c{1} \geq 0 $.
This gives
\[ \begin{aligned} 
\Vert \hat H_{02} \  \psi \Vert_\cH^2&=\langle \psi, \hat H_{02}^2 \  \psi \rangle
\\& \leq \langle \psi, \big( \hat H_{02}+\hat H_{01}+V +c 1 \big)^2 \psi \rangle\\ & \leq \Vert  \big( \hat H_{02}+\hat H_{01}+V +c {1}\big) \psi \Vert_\cH^2.
\end{aligned}\]
We conclude 
\[ \Vert \hat H_1 \psi \Vert_\cH  \leq \eps \ \Vert  \big(\hat H_{02}+\hat H_{01}+V +c {1}\big) \psi \Vert_\cH+C( \eps) \ \Vert \psi \Vert_\cH . \] 
Remark that $\hat H_{01}+V+c {1} $ commutes with $ \hat H_{02}$. This means $\hat H_{01}+V+c{1}+\hat H_{02}$ is self adjoint in $D(\hat H_{01}+V+c {1}+\hat H_{02})=D(\hat H_{01}+\hat H_{02}).$
By Kato-Rellich theorem, $\hat H $ is self adjoint on $D(\hat H_0)=D(\hat H_{01}+\hat H_{02})=D(\hat H_{01})\cap D(\hat H_{02})$.
\end{proof}

\subsection{The dynamical equation}\label{top4Duhamelformula}
The primary objective of this section is to determine the dynamical equation of the quantum system. This equation should converge, as $\hbar$ approaches zero, to a classical dynamical equation that involves the inverse Fourier transform of a specific Wigner measure. To achieve this,  in Paragraph \ref{top4duh}, we derive the Duhamel formula for the quantum system. Then, in Paragraph \ref{top4comm}, we expand the commutator within this Duhamel formula.

\subsubsection{Duhamel formula}\label{top4duh}
We begin  by introducing the Weyl Heisenberg operator, which acts on the entire interacting Hilbert space $\cH=L^2(\R^{dn},\C) \otimes \Gamma_s(\cG^0)$,  as the following map
\begin{equation}\label{chap4.eq.wcal}
\xi=(z,\alpha) \in X^0\equiv \C^{dn} \oplus \cG^0 \longmapsto \cW(\xi)\equiv \cW(z,\alpha):= W_1(z) \otimes W_2(\alpha)
\end{equation}
where we have introduced with $\Im m  \langle z,z^\prime \rangle =q\cdot p^\prime - p \cdot q^\prime, \ \forall (p,q) , \ (p^\prime,q^\prime) \in \R^{dn} \times \R^{dn}$:
\begin{itemize}
\item [-] the Weyl operator  on the particle variable which is defined,  for all $(p,q)  \in \R^{dn} \times \R^{dn}$ and for $z=q+ip \in \C^{dn}$, as follows:
\be \label{top4weylheseinbergtranslation}
 W_1(z)=e^{i \ \Im m \langle \hat q+i \hat p,z \rangle}= e^{i( p\cdot \hat q-q\cdot \hat p)};
\ee
\item [-] the Weyl operator   on the Fock space $\Gamma_s(L^2(\R^d,\C))$ which is defined for any $\alpha \in L^2(\R^d,\C)$ as follows:
\begin{equation}\label{chap4.eq.weyl.fock}W_2(\alpha)=e^{\frac{i}{\sqrt{2}}(\hat a_\hbar(\alpha)+\hat{a}^*_\hbar(\alpha))}.
\end{equation}
\end{itemize}
The above operators satisfy the following commutation relations
\begin{align}
& W_1(z) W_1(z^\prime) = e^{-i \frac{\hbar}{2} \Im  m \langle z,z^\prime \rangle } \ W_1(z+z^\prime), \quad \forall z,z^\prime \in \C^{dn},  \label{top4weylheseinbergtranslation}
\\& W_2(\alpha) W_2(\beta)= e^{-i \frac{\hbar}{2} \Im m  \langle \alpha,\beta \rangle_{L^2} } \ W_2(\alpha+\beta),\quad \forall \alpha,\beta \in \cG^0.
\end{align}

\noindent Below, we mention several crucial estimates that are necessary to establish a Duhamel formula for the evolved states of a quantum system. The prove of the following identities  requires the estimates derived in   Lemma \ref{top4lemma1}, we refer the reader to \cite{Z} for more details on the proof.
\begin{lemma}[Weyl Heisenberg estimates]\label{top4weylheisenberg} There exists a constant $C>0$ such that for any $\hbar\in(0,1)$ 
\begin{itemize}
\item [(i)] for any $\alpha\in L^2(\R^d,\C)$ and  any $\psi \in D(\hat{N}_\hbar)$
\[ \Vert(\hat N_\hbar)^{1/2} \ W_2(\alpha) \ \psi \Vert_{\Gamma_s}\leq C \ \Vert (\hat{N}_\hbar+1)^{1/2}  \ \psi \Vert_{\Gamma_{s}}; \]
\item [(ii)] for any $\alpha\in \cG^{1/2}$ and  any $\psi \in D(\hat{H}_0)$
\[ \Vert(\hat H_0)^{1/2} \ W_2(\alpha) \ \psi \Vert_{\Gamma_s}\leq C \ \Vert (\hat H_0+1)^{1/2}  \ \psi \Vert_{\Gamma_{s}}; \]
\item  [(iii)] for any $z\in \C^{dn}$  and any $\psi \in D((\hat{p}^2+\hat{q}^2)^{1/2})$
\[ \Vert(\hat{p}^2+\hat{q}^2)^{1/2} \ W_1(z) \ \psi \Vert_{L^2(\R^{dn})}\leq C \ \Vert (\hat{p}^2+\hat{q}^2+1)^{1/2}  \ \psi \Vert_{L^2(\R^{dn})}. \]
\end{itemize}
\end{lemma}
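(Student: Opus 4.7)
The plan is to compute explicitly the Weyl conjugation $W_j(\cdot)^{*}\,\cdot\,W_j(\cdot)$ of the operator on the left-hand side via the canonical commutation relations, and then bound the resulting quadratic form using the creation-annihilation estimates of Lemma~\ref{top4lemma1}. All three parts follow this template; what makes the constant uniform in $\hbar\in(0,1)$ is that every new contribution produced by the conjugation carries an explicit factor of $\hbar$ or $\hbar^{2}$.

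For (i), the identity $[\hat a_{\hbar}(k),\hat a^{*}_{\hbar}(\alpha)]=\hbar\,\alpha(k)$ gives, by differentiation of $s\mapsto W_{2}(s\alpha)^{*}\hat a_{\hbar}(k)W_{2}(s\alpha)$, the shift $W_{2}(\alpha)^{*}\hat a_{\hbar}(k)W_{2}(\alpha)=\hat a_{\hbar}(k)+\tfrac{i\hbar}{\sqrt{2}}\alpha(k)$ and its adjoint. Substituting into $\hat N_{\hbar}=\int\hat a^{*}_{\hbar}(k)\hat a_{\hbar}(k)\,dk$ yields
\[
W_{2}(\alpha)^{*}\hat N_{\hbar}W_{2}(\alpha)\;=\;\hat N_{\hbar}\;+\;\tfrac{i\hbar}{\sqrt{2}}\bigl(\hat a^{*}_{\hbar}(\alpha)-\hat a_{\hbar}(\alpha)\bigr)\;+\;\tfrac{\hbar^{2}}{2}\Vert\alpha\Vert_{L^{2}}^{2}.
\]
Evaluating on $\psi$, I bound the linear term via Lemma~\ref{top4lemma1}-(i) with $F=\alpha$ and complete the square, which delivers (i).

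For (ii), $W_{2}(\alpha)$ is unitary and acts only on the Fock factor of $\cH$, so it commutes with $\hat H_{01}$ and one is reduced to $\hat H_{02}=d\Gamma(\omega)$. The same conjugation argument produces
\[
W_{2}(\alpha)^{*}\hat H_{02}W_{2}(\alpha)\;=\;\hat H_{02}\;+\;\tfrac{i\hbar}{\sqrt{2}}\bigl(\hat a^{*}_{\hbar}(\omega\alpha)-\hat a_{\hbar}(\omega\alpha)\bigr)\;+\;\tfrac{\hbar^{2}}{2}\Vert\alpha\Vert_{\cG^{1/2}}^{2}.
\]
The delicate point is that only $\alpha\in\cG^{1/2}$ is available, so $\omega\alpha$ need not lie in $L^{2}$ and one cannot bound $\Vert\hat a^{*}_{\hbar}(\omega\alpha)\psi\Vert$ via the second half of Lemma~\ref{top4lemma1}-(ii) directly. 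I circumvent this by working at the quadratic-form level, using $\langle\psi,\hat a^{*}_{\hbar}(\omega\alpha)\psi\rangle=\overline{\langle\psi,\hat a_{\hbar}(\omega\alpha)\psi\rangle}$ together with the first estimate of Lemma~\ref{top4lemma1}-(ii) applied to $F=\omega\alpha$: this only requires $\Vert F/\sqrt{\omega}\Vert_{L^{2}}=\Vert\sqrt{\omega}\,\alpha\Vert_{L^{2}}=\Vert\alpha\Vert_{\cG^{1/2}}$, which is finite by hypothesis.

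For (iii), the canonical commutation relations $[\hat q_{j},\hat p_{k}]=i\hbar\delta_{jk}$ yield, with $z=q_{0}+ip_{0}$, the phase-space shifts $W_{1}(z)^{*}\hat q\,W_{1}(z)=\hat q+\hbar q_{0}$ and $W_{1}(z)^{*}\hat p\,W_{1}(z)=\hat p+\hbar p_{0}$, hence
\[
W_{1}(z)^{*}(\hat p^{2}+\hat q^{2})W_{1}(z)\;=\;(\hat p^{2}+\hat q^{2})+2\hbar(p_{0}\cdot\hat p+q_{0}\cdot\hat q)+\hbar^{2}|z|^{2};
\]
the cross term is absorbed by Cauchy-Schwarz together with $\Vert\hat p\psi\Vert,\Vert\hat q\psi\Vert\le\Vert(\hat p^{2}+\hat q^{2})^{1/2}\psi\Vert$. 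The principal obstacle across the three parts is the uniform bookkeeping of the $\hbar$-powers, which is resolved by the observation above: each extra contribution is $O(\hbar)$ or $O(\hbar^{2})$, and since $\hbar\in(0,1)$ these can all be absorbed into constants depending only on $\alpha$, respectively $z$.
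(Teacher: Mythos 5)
Your proof is correct and matches the approach the paper indicates: the paper does not reproduce the argument (it defers to [Z]) but explicitly says the identities rest on the creation-annihilation estimates of Lemma~\ref{top4lemma1}, which is exactly what you use after computing the Weyl conjugation. Your care in part (ii) — handling $\hat a^\sharp_\hbar(\omega\alpha)$ at the quadratic-form level via the annihilation bound \eqref{top4fieldestimate1}, which only needs $\Vert\sqrt{\omega}\,\alpha\Vert_{L^2}$ and so works for $\alpha\in\cG^{1/2}$ even though $\omega\alpha$ may fail to be in $L^2$ — is the right resolution of the one subtle point; note only that the constant $C$ must then be allowed to depend on $\alpha$ (respectively $z$), which is the reading consistent with how the lemma is applied in the paper.
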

\noindent 
The matter here is to understand the propagation of the density matrices $\varrho_\hbar$ on the Hilbert space $\cH$. To this end, we define 
\be \label{top4denistymetrices}
\varrho_\hbar(t)=e^{-i\frac{t}{\hbar} \hat H}  \varrho_\hbar \  e^{i\frac{t}{\hbar} \hat H} \quad \text{and } \quad \tilde \varrho_\hbar(t)=e^{i\frac{t}{\hbar} \hat H_{02}}   \varrho_\hbar(t)  \ e^{-i\frac{t}{\hbar} \hat H_{02}}.
\ee
In order to prove the main results Theorems \ref{top4theorem1} and \ref{top4theorem2}, it is necessary to identify the Wigner measures of the evolved state $\varrho_\hbar(t)$. However, the complexity inherited from the interaction between particles and field makes direct identification unfeasible. Instead, we use the interaction representation $\tilde{\varrho}_\hbar(t)$, which helps us overcome several nonlinearities that could lead to imprecise formulas. Furthermore, recovering the Wigner measures of $\varrho_\hbar(t)$ from those of $\tilde{\varrho}_\hbar(t)$ is not difficult. To this end, we start below derivation of the quantum  dynamical system.
\begin{proposition}Assume that \eqref{top4A0} and $\omega^{{1}/{2}} \chi \in L^2(\R^d,dk)$. Let $(\varrho_\hbar)_{\hbar \in (0,1)}$ be a family of density matrices satisfying \eqref{top4S0} and \eqref{top4S1}. Then for all $\xi \in X^{1/2}$,  for all $\hbar \in (0,1)$  and for all $t,t_0\in \R,$ 
we have 
\be \label{top4duhamelformula}
{\rm Tr}\Big[ \cW(\xi) \tilde \varrho_\hbar(t) \Big]={\rm Tr}\Big[ \cW(\xi) \tilde \varrho_\hbar(t_0) \Big] -\frac{i}{\hbar} \int_{t_0}^{t} {\rm Tr}\Big( \big[ \cW(\xi),\hat H_I(s) \big] \, \tilde \varrho_\hbar(s) \Big) \, ds,
\ee
where 
\be \label{top4H-k}
\hat H_I(s):=e^{i\frac{s}{\hbar} \hat H_{02}} (\hat H-\hat H_{02}) \ e^{-i\frac{s}{\hbar} \hat H_{02}}.
\ee
\end{proposition}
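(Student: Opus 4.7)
The strategy is to obtain \eqref{top4duhamelformula} by differentiating the scalar function $t \mapsto {\rm Tr}[\cW(\xi)\tilde\varrho_\hbar(t)]$, recognising the derivative as the required integrand, and integrating over $[t_0,t]$. First I would pass to the Schr\"odinger picture by writing $\tilde\varrho_\hbar(t) = e^{i\frac{t}{\hbar}\hat H_{02}}\varrho_\hbar(t)e^{-i\frac{t}{\hbar}\hat H_{02}}$ with $\varrho_\hbar(t) = e^{-i\frac{t}{\hbar}\hat H}\varrho_\hbar e^{i\frac{t}{\hbar}\hat H}$, and use cyclicity of the trace to rewrite ${\rm Tr}[\cW(\xi)\tilde\varrho_\hbar(t)] = {\rm Tr}[\tilde{\cW}(\xi,t)\varrho_\hbar(t)]$, where $\tilde{\cW}(\xi,t):= e^{-i\frac{t}{\hbar}\hat H_{02}}\cW(\xi)e^{i\frac{t}{\hbar}\hat H_{02}}$. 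Formally differentiating via the two Heisenberg equations
\[
\frac{d}{dt}\tilde{\cW}(\xi,t) = -\tfrac{i}{\hbar}[\hat H_{02},\tilde{\cW}(\xi,t)], \qquad \frac{d}{dt}\varrho_\hbar(t) = -\tfrac{i}{\hbar}[\hat H,\varrho_\hbar(t)],
\]
grouping the two commutators into $[\tilde{\cW}(\xi,t),\hat H - \hat H_{02}]$, re-conjugating by $e^{\pm i\frac{t}{\hbar}\hat H_{02}}$ and using cyclicity once more, one lands on
\[
\frac{d}{dt}{\rm Tr}[\cW(\xi)\tilde\varrho_\hbar(t)]\;=\;-\tfrac{i}{\hbar}\,{\rm Tr}\bigl([\cW(\xi),\hat H_I(t)]\,\tilde\varrho_\hbar(t)\bigr),
\]
after which integration between $t_0$ and $t$ is the last step.

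The real work is justifying the above manipulations rigorously, and this decomposes into two points. First, one needs the integrand to be well-defined and continuous in $s$: $\cW(\xi)$ is unitary, while $\hat H_I(s)$ shares the domain of $\hat H_0$ and is relatively form-bounded by $\hat H_0+1$; decomposing $\hat H-\hat H_{02} = \hat H_{01}+V(\hat q)+\hat a_\hbar(G)+\hat a^*_\hbar(G)$, each component is controlled using \eqref{top4A0}, the hypothesis $\omega^{1/2}\chi\in L^2$ (which bounds $\|G\|_{\cL}$ and $\|G/\sqrt{\omega}\|_{\cL}$), together with the creation/annihilation and number estimates of Lemma \ref{top4lemma1} and the Weyl--Heisenberg invariance estimates of Lemma \ref{top4weylheisenberg}. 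Combined with a propagation-in-time bound on $(\hat H_0+1)^{1/2}\tilde\varrho_\hbar(s)^{1/2}$ in Hilbert--Schmidt norm, this yields a local-in-time trace-class bound for $[\cW(\xi),\hat H_I(s)]\,\tilde\varrho_\hbar(s)$, and in particular the map $s\mapsto {\rm Tr}([\cW(\xi),\hat H_I(s)]\tilde\varrho_\hbar(s))$ is continuous. Second, one needs to exchange $\tfrac{d}{dt}$ with the trace; the cleanest route is to regularize the state by a spectral cut-off $\chi_\eta(\hat H_0)\,\varrho_\hbar\,\chi_\eta(\hat H_0)$, which reduces the formal calculation to bounded operators and renders it unambiguous, and then to pass to the limit $\eta\to 0$ by dominated convergence in the trace-class topology, using the uniform bounds just described.

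The main obstacle, accordingly, is the uniform control $\sup_{s\in[0,T]}{\rm Tr}\bigl[\tilde\varrho_\hbar(s)(\hat H_0+1)\bigr]<+\infty$ for compact time intervals. Assumption \eqref{top4S1} provides this bound at $s=0$ (after converting $\hat p^2+\hat q^2$ into $\hat H_{01}+1$); Lemma \ref{top4inequality} then transfers it to the equivalent quantity ${\rm Tr}[\varrho_\hbar(\hat H+a)]$, which is exactly conserved by the Schr\"odinger flow $e^{-is\hat H/\hbar}$; conjugation by the unitary $e^{\pm is\hat H_{02}/\hbar}$ leaves trace functionals invariant; and a final application of Lemma \ref{top4inequality} returns the bound in the $\hat H_0$ form for $\tilde\varrho_\hbar(s)$. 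Once this energy-conservation step is in place, the continuity and trace-class properties required for differentiating under the trace all follow, and the Duhamel identity \eqref{top4duhamelformula} is obtained exactly as outlined above.
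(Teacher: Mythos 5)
Your proposal is correct and follows the same overall skeleton as the paper: differentiate the scalar $t\mapsto {\rm Tr}[\cW(\xi)\tilde\varrho_\hbar(t)]$, identify the derivative as $-\tfrac{i}{\hbar}{\rm Tr}\big([\cW(\xi),\hat H_I(t)]\,\tilde\varrho_\hbar(t)\big)$, and integrate between $t_0$ and $t$. The device used to justify the formal differentiation, however, is genuinely different. The paper remains entirely in the interaction picture, forms the difference quotient of $\tilde\varrho_\hbar(t)-\tilde\varrho_\hbar(s)$ as a telescoping sum with the \emph{fixed} initial state $\varrho_\hbar$ kept in the middle, and inserts $S=(\hat H_0+1)^{1/2}$ and $S^{-1}$ on either side of each singular factor; with this arrangement only $S\varrho_\hbar S\in\cL^1(\cH)$ at time zero and the boundedness of $S^{-1}e^{\pm it\hat H/\hbar}S$ (which rests on Lemma \ref{top4inequality}) are invoked, so no propagation-in-time bound is needed at this stage. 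You instead pass to the Schr\"odinger picture by cyclicity, regularize the state by a spectral cut-off $\chi_\eta(\hat H_0)\varrho_\hbar\chi_\eta(\hat H_0)$, and remove the cut-off using a uniform-in-time bound $\sup_{s\in[0,T]}{\rm Tr}\big[\tilde\varrho_\hbar(s)(\hat H_0+1)\big]<\infty$, obtained from conservation of $\hat H$ together with the equivalence $\hat H\sim\hat H_0$ of Lemma \ref{top4inequality}. Both routes work; the paper's $S$-insertion bookkeeping is designed to sidestep the energy-propagation step, while your version makes that estimate explicit, which is conceptually cleaner and anyway anticipates the uniform propagation of \eqref{top4S0}--\eqref{top4S1} that the paper establishes separately in Lemma \ref{top4uniformestimate}.
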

\begin{proof}
By Duhamel's formula, we have 
\[{\rm Tr}\Big[ \cW(\xi) \tilde \varrho_\hbar(t) \Big]={\rm Tr}\Big[ \cW(\xi) \tilde \varrho_\hbar(t_0) \Big] + \int_{t_0}^{t} \frac{d}{ds} {\rm Tr}\Big[ \cW(\xi) \  \tilde \varrho_\hbar(s) \Big] \, ds. \]
We have also 
\[\frac{d}{dt} {\rm Tr}\Big[  \cW(\xi) \  \tilde \varrho_\hbar(t) \Big]=\lim_{s \rightarrow t} \frac{ {\rm Tr}\Big[  \cW(\xi)  \ \big(   \tilde \varrho_\hbar(t)-\tilde \varrho_\hbar(s)  \big)\Big] }{t-s}.\]
Let  $S= (\hat{H}_0+1)^{1/2} $. We start by 
\[\begin{aligned}
&  {\rm Tr}\Big[  \cW(\xi)  \ \big(   \tilde \varrho_\hbar(t)-\tilde \varrho_\hbar(s)  \big)\Big] 
\\&= {\rm Tr}\Big[  \cW(\xi) \Big(e^{i\frac{t}{\hbar} \hat H_{02}}  \ e^{-i\frac{t}{\hbar} \hat H}  \varrho_\hbar \ e^{i\frac{t}{\hbar} \hat H} \ e^{-i\frac{t}{\hbar} \hat H_{02}} -e^{i\frac{s}{\hbar} \hat H_{02}}  \ e^{-i\frac{s}{\hbar} \hat H}  \varrho_\hbar \ e^{i\frac{s}{\hbar} \hat H} \ e^{-i\frac{s}{\hbar} \hat H_{02}}\Big)  \Big]
\\ & = {\rm Tr}\Big[  \cW(\xi) \Big(e^{i\frac{t}{\hbar} \hat H_{02}}  \ e^{-i\frac{t}{\hbar} \hat H} - e^{i\frac{s}{\hbar} \hat H_{02}}  \ e^{-i\frac{s}{\hbar} \hat H}\Big) \varrho_\hbar \ e^{i\frac{t}{\hbar} \hat H} \ e^{-i\frac{t}{\hbar} \hat H_{02}}\Big]
\\& \  +{\rm Tr}\Big[  \cW(\xi) \  e^{i\frac{s}{\hbar} \hat H_{02}}  \ e^{-i\frac{s}{\hbar} \hat H}  \  \tilde \varrho_\hbar \  \Big( e^{i\frac{t}{\hbar} \hat H} \ e^{-i\frac{t}{\hbar} \hat H_{02}}-e^{i\frac{s}{\hbar} \hat H} \ e^{-i\frac{s}{\hbar} \hat H_{02}}  \Big) \Big]
\\ & = {\rm Tr}\Big[ S^{-1} \ \cW(\xi) \  S \ S^{-1} \Big(e^{i\frac{t}{\hbar} \hat H_{02}}  \ e^{-i\frac{t}{\hbar} \hat H} - e^{i\frac{s}{\hbar} \hat H_{02}}  \ e^{-i\frac{s}{\hbar} \hat H}\Big) \varrho_\hbar \ S \ S^{-1} \ e^{i\frac{t}{\hbar} \hat H} \ S \ S^{-1} \ e^{-i\frac{t}{\hbar} \hat H_{02}} \ S \Big]
\\& \  +{\rm Tr}\Big[  \cW(\xi) \  e^{i\frac{s}{\hbar} \hat H_{02}}  \ e^{-i\frac{s}{\hbar} \hat H}  \  \tilde \varrho_\hbar  \  S \ S^{-1}  \Big( e^{i\frac{t}{\hbar} \hat H} \ e^{-i\frac{t}{\hbar} \hat H_{02}}-e^{i\frac{s}{\hbar} \hat H} \ e^{-i\frac{s}{\hbar} \hat H_{02}}  \Big) \Big]
\end{aligned}\]
Remark that each step makes sense. Indeed, we have that 
\[ \cW(\xi), \ e^{\frac{it}{\hbar}\hat{H}_{02}}, \ e^{\frac{it}{\hbar}\hat{H}}, \  S^{-1} \ e^{i\frac{t}{\hbar} \hat H} \ S, \  \ S^{-1} \ e^{-i\frac{t}{\hbar} \hat H_{02}} \ S  \in \cL(\cH), \quad \varrho_\hbar,\  \varrho_\hbar  (\hat{H}_0+1) \in \cL^1(\cH).  \]
We have also 
\[ \begin{aligned}
& \lim_{ s \rightarrow t } S^{-1} \frac{\Big( e^{i\frac{t}{\hbar} \hat H} \ e^{-i\frac{t}{\hbar} \hat H_{02}}-e^{i\frac{s}{\hbar} \hat H} \ e^{-i\frac{s}{\hbar} \hat H_{02}}  \Big)  }{t-s}=\frac{i}{\hbar}  \ S^{-1} \ e^{i\frac{t}{\hbar} \hat H}  \ (\hat{H}-\hat{H}_{02}) \  e^{-i\frac{t}{\hbar} \hat H_{02}}; 
\\ &\lim_{ s \rightarrow t } S^{-1} \frac{ \Big(e^{i\frac{t}{\hbar} \hat H_{02}}  \ e^{-i\frac{t}{\hbar} \hat H} - e^{i\frac{s}{\hbar} \hat H_{02}}  \ e^{-i\frac{s}{\hbar} \hat H}\Big)}{t-s}=-\frac{i}{\hbar}  \ S^{-1} \ e^{i\frac{t}{\hbar} \hat H_{02}}  \ (\hat{H}-\hat{H}_{02}) \  e^{-i\frac{t}{\hbar} \hat H}.
\end{aligned}\]
Plugging these limits in the Duhamel's formula, we get the desired result.
\end{proof}
\subsubsection{The commutator expansion}\label{top4comm}
\noindent The aim of this subsection is to  expand the commutator $\big[ \cW(\xi),\hat H_I(s) \big] $  in the above Duhamel formula \eqref{top4duhamelformula} in terms of the parameter $\hbar \in (0,1)$. 
\begin{lemma}[Time evolved equation of $\hat H_I(s)$]\label{top4theinteractionterm}
For any $s \in \R$, the time evolved interaction term $\hat H_I(s)$ takes the following form 
\be \label{top4timeevolvedH-k}
\hat H_I(s)= \sum_{j=1}^n f_j(\hat p_j)+ V(\hat q)+\sum_{j=1}^{n} \hat a_\hbar(g_j(s))+ \hat a^*_\hbar(g_j(s)),
\ee
where we have introduced 
\be \label{top4gjs}
g_j(s)\equiv g_j(s)(\hat{q}):=\frac{\chi(k)}{\sqrt{\omega(k)}}  \ e^{-2\pi ik\cdot \hat q_j +is\omega(k) }.
\ee
\end{lemma}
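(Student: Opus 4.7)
The plan is a direct computation. The first step is to unpack the definition: since $\hat H = \hat H_{01} + \hat H_{02} + V(\hat q) + \hat a_\hbar(G) + \hat a_\hbar^*(G)$, we have
\[
\hat H - \hat H_{02} = \sum_{j=1}^n f_j(\hat p_j) + V(\hat q) + \hat a_\hbar(G) + \hat a_\hbar^*(G),
\]
so that $\hat H_I(s) = e^{i\frac{s}{\hbar}\hat H_{02}} \bigl(\hat H-\hat H_{02}\bigr) e^{-i\frac{s}{\hbar}\hat H_{02}}$ splits into four conjugated pieces.

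The second step is to observe that $\hat H_{02} = d\Gamma(\omega)$ acts trivially on the particle Hilbert space $L^2(\R^{dn}_x,\C)$, whereas $\sum_j f_j(\hat p_j)$ and $V(\hat q)$ act trivially on the Fock factor $\Gamma_s(L^2(\R^d_k,\C))$. Hence $[\hat H_{02}, \hat H_{01}] = 0$ and $[\hat H_{02}, V(\hat q)] = 0$, and the conjugation leaves these two terms untouched, contributing exactly $\sum_{j=1}^n f_j(\hat p_j) + V(\hat q)$ to \eqref{top4timeevolvedH-k}.

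The third step, which is the only genuine computation, is to propagate the creation and annihilation operators. Writing $e^{-i\frac{s}{\hbar}\hat H_{02}}$ on the $m$-sector as multiplication by $\prod_{l=1}^m e^{-is\omega(k_l)}$, I would apply definition \eqref{top4anni} directly: for $\psi\in\cH$,
\[
\bigl[\hat a_\hbar(G)\,e^{-i\frac{s}{\hbar}\hat H_{02}}\psi(X_n)\bigr]^m(K_m)
= \sqrt{\hbar(m+1)} \sum_{j=1}^{n}\int_{\R^d} \tfrac{\chi(k)}{\sqrt{\omega(k)}}\, e^{2\pi ik\cdot\hat q_j}\, e^{-is\sum_{l=1}^m\omega(k_l)}\, e^{-is\omega(k)}\,\psi^{m+1}(X_n;K_m,k)\,dk,
\]
and then pre-composing with $e^{i\frac{s}{\hbar}\hat H_{02}}$ cancels the product $\prod_l e^{-is\omega(k_l)}$ leaving only the factor $e^{-is\omega(k)}$ inside the integral. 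By comparison with \eqref{top4anni} this is precisely $\sum_j \hat a_\hbar(g_j(s))$ with $g_j(s)$ as in \eqref{top4gjs}, because the adjoint in $k$ of $g_j(s)$ carries the kernel $\chi(k)/\sqrt{\omega(k)}\,e^{2\pi ik\cdot\hat q_j}e^{-is\omega(k)}$. The identical bookkeeping applied to \eqref{top4creat} yields
\[
e^{i\frac{s}{\hbar}\hat H_{02}}\,\hat a_\hbar^*(G)\, e^{-i\frac{s}{\hbar}\hat H_{02}} = \sum_{j=1}^n \hat a_\hbar^*(g_j(s)).
\]
Summing the four contributions gives \eqref{top4timeevolvedH-k}.

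The only subtlety I anticipate is keeping the signs consistent between the antilinear dependence of $\hat a_\hbar(\cdot)$ and the linear dependence of $\hat a_\hbar^*(\cdot)$ on the argument $g_j(s)$; this is why the definition \eqref{top4gjs} carries the phase $e^{+is\omega(k)}$ even though the dynamical evolution inserts $e^{-is\omega(k)}$ into the kernel on the Fock side. All domain issues are handled implicitly by the self-adjointness of $\hat H$ and $\hat H_{02}$ on $D(\hat H_0)$, and by the creation–annihilation bounds \eqref{top4numberestimate1}–\eqref{top4fieldestimate2} of Lemma \ref{top4lemma1}, which ensure that each manipulation above makes sense on the common dense domain of $(\hat N_\hbar + 1)^{1/2}$.
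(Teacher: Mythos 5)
Your proof is correct and follows the same route as the paper: decompose $\hat H-\hat H_{02}$, note that the particle terms commute with $d\Gamma(\omega)$, and reduce everything to the conjugation identity $e^{i\frac{s}{\hbar}\hat H_{02}}\,\hat a_\hbar^{\sharp}(g_j)\,e^{-i\frac{s}{\hbar}\hat H_{02}}=\hat a_\hbar^{\sharp}(g_j(s))$. The paper simply states that identity and moves on, whereas you supply the $m$-sector computation that proves it; your phase bookkeeping (the $e^{+is\omega(k)}$ in $g_j(s)$ arising from the antilinearity of $\hat a_\hbar(\cdot)$ against the $e^{-is\omega(k)}$ inserted by the dynamics) is exactly right.
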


\begin{proof} We have 
 \[ \hat H-\hat H_{02}=\sum_{j=1}^n f_j(\hat p_j)+ V(\hat q)+ \sum_{j=1}^{n} \hat a_\hbar(g_j)+ \hat a^*_\hbar(g_j),\]
 where the function $g_j$ is given by
 \be \label{top4g} g_j\equiv g_j(\hat{q}):=\frac{\chi(k)}{\sqrt{\omega(k)}}  \ e^{-2\pi ik\cdot \hat q_j }.
 \ee
Then, we have with $\hat q=(\hat q_1,\cdots,\hat q_n) $ 
\[\hat H_I(s)=e^{i\frac{s}{\hbar} \hat H_{02}} \Big(\sum_{j=1}^n f_j(\hat p_j)+V(\hat q)+ \sum_{j=1}^{n} \hat a_\hbar(g_j)+ \hat a^*_\hbar(g_j)\Big)e^{-i\frac{s}{\hbar} \hat H_{02}}. \]
It is sufficient then to look at the following identity
\[ e^{i\frac{s}{\hbar} \hat{H}_{02}} \ \hat a^{\sharp}(g_j) \ e^{-i\frac{s}{\hbar} \hat{H}_{02}}=\hat a^{\sharp} ( g_j(s)).
\]
\end{proof}
\noindent Now, since the Weyl operator $\cW(\xi)$ is a unitary operator, we have 
\be \label{top4expcom} \frac{1}{\hbar}\big[ \cW(\xi),\hat H_I(s) \big]= \frac{1}{\hbar}\Big( \cW(\xi)\hat H_I(s)\cW(\xi)^* -\hat H_I(s)  \Big) \cW(\xi). \ee

\begin{lemma}[Expression for the commutators] \label{top4lemma3.3} For any $s\in \R$ and $\xi=( p_0,q_0,\alpha_0) \in X^{1/2}$, the following holds true with $q_0=(q_{01},\cdots,q_{0n})$ and $p_0=(p_{01},\cdots,p_{0n})$
\[
\begin{aligned}
\cW(\xi)\hat H_I(s)\cW(\xi)^*&=\sum_{j=1}^n f_j(\hat p_j-\hbar p_{0j})+V\big(\hat q-\hbar  q_{0} \big)
\\ &
+\sum_{j=1}^{n} \hat a_\hbar(\tilde g_j(s))+ \hat a^*_\hbar(\tilde g_j(s))+\frac{i \hbar }{\sqrt{2}} \Big(\langle \alpha_0,\tilde g_j(s) \rangle_{L^2(\R^d,\C)} -\langle\tilde g_j(s),\alpha_0 \rangle_{L^2(\R^d,\C)}\Big),
\end{aligned}
\]
where we have introduced 
\be \label{top4g(s)} \tilde g_j(s): =\frac{\chi(k)}{\sqrt{\omega(k)}}  \ e^{-2\pi ik\cdot (\hat q_j-\hbar q_{0j})+is\omega(k) }=e^{2\pi ik\cdot  q_{0j}\hbar } \ g_j(s) .\ee
\end{lemma}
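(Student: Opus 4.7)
The plan is to exploit the tensor product structure $\cW(\xi)=W_1(z)\otimes W_2(\alpha_0)$ with $z=q_0+ip_0$ and to conjugate each summand of the interaction Hamiltonian $\hat H_I(s)$ from Lemma \ref{top4theinteractionterm} separately. The key observation is that the conjugation decouples into a particle-side conjugation by $W_1(z)$ and a field-side conjugation by $W_2(\alpha_0)$, and in both cases the canonical commutation relations force the Baker--Campbell--Hausdorff expansion to terminate after a single commutator, producing only scalar (c-number) corrections.

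On the particle side, I would use $[\hat q_k,\hat p_\ell]=i\hbar\delta_{k\ell}$ to compute
\[
\bigl[i(p_0\cdot\hat q-q_0\cdot\hat p),\hat q_k\bigr]=-\hbar q_{0k},\qquad \bigl[i(p_0\cdot\hat q-q_0\cdot\hat p),\hat p_k\bigr]=-\hbar p_{0k}.
\]
Since these are scalars the BCH series truncates and yields $W_1(z)\hat q_j W_1(z)^*=\hat q_j-\hbar q_{0j}$ and $W_1(z)\hat p_j W_1(z)^*=\hat p_j-\hbar p_{0j}$. By functional calculus for self-adjoint operators, this upgrades to $W_1(z)f_j(\hat p_j)W_1(z)^*=f_j(\hat p_j-\hbar p_{0j})$ and $W_1(z)V(\hat q)W_1(z)^*=V(\hat q-\hbar q_0)$; both resulting terms commute with $W_2(\alpha_0)$, which acts trivially on the particle variables, and hence contribute precisely as stated.

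For the interaction kernel I note that $g_j(s)(k)=\chi(k)\omega(k)^{-1/2}e^{-2\pi ik\cdot\hat q_j+is\omega(k)}$ depends on the particle variables only through the multiplication operator $e^{-2\pi ik\cdot\hat q_j}$. Conjugation by $W_1(z)\otimes I$ therefore transforms this exponential into $e^{-2\pi ik\cdot(\hat q_j-\hbar q_{0j})}$, so that the kernel becomes $\tilde g_j(s)$ and $(W_1(z)\otimes I)\,\hat a^\sharp_\hbar(g_j(s))\,(W_1(z)^*\otimes I)=\hat a^\sharp_\hbar(\tilde g_j(s))$. I then conjugate by $I\otimes W_2(\alpha_0)$ using $[\hat a_\hbar(f),\hat a^*_\hbar(g)]=\hbar\langle f,g\rangle_{L^2}$ and the vanishing of $[\hat a_\hbar(f),\hat a_\hbar(g)]$; the first commutators
\[
\bigl[\tfrac{i}{\sqrt{2}}(\hat a_\hbar(\alpha_0)+\hat a^*_\hbar(\alpha_0)),\hat a_\hbar(\tilde g_j(s))\bigr]=-\tfrac{i\hbar}{\sqrt{2}}\langle \tilde g_j(s),\alpha_0\rangle_{L^2},
\]
\[
\bigl[\tfrac{i}{\sqrt{2}}(\hat a_\hbar(\alpha_0)+\hat a^*_\hbar(\alpha_0)),\hat a^*_\hbar(\tilde g_j(s))\bigr]=\tfrac{i\hbar}{\sqrt{2}}\langle \alpha_0,\tilde g_j(s)\rangle_{L^2}
\]
are scalars, so BCH terminates again and produces the announced correction; summing over $j$ assembles the stated formula.

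The main technical subtlety is that every operator involved is unbounded, so the formal BCH manipulations must be justified on a common invariant dense core. The Weyl--Heisenberg estimates of Lemma \ref{top4weylheisenberg} provide exactly what is needed: they guarantee that $W_1(z)$ and $W_2(\alpha_0)$ preserve the domains of $\hat H_0$ and of the number operator, while the assumption $\alpha_0\in\cG^{1/2}$ ensures that $\hat a^\sharp_\hbar(\alpha_0)$ is controlled by $\hat H_{02}^{1/2}$. All intermediate expressions are therefore well defined on $D(\hat H_0)\cap D(\hat N_\hbar)$, and the identity extends from this dense core to the full domain by closure.
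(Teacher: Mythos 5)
Your proposal is correct and follows essentially the same route as the paper: factor $\cW(\xi)=W_1(z)\otimes W_2(\alpha_0)$, conjugate each summand of $\hat H_I(s)$ separately, and observe that the canonical commutation relations make the Baker--Campbell--Hausdorff expansion terminate after one commutator (the paper phrases this as a Taylor expansion $K(t)=K(0)+tK'(0)$). The only slight embellishment you add beyond the paper's proof is an explicit appeal to the spectral functional calculus to pass from $W_1(z)\hat q_jW_1(z)^*=\hat q_j-\hbar q_{0j}$ and $W_1(z)\hat p_jW_1(z)^*=\hat p_j-\hbar p_{0j}$ to the conjugates of $V(\hat q)$ and $f_j(\hat p_j)$, together with a remark on domain issues; neither changes the underlying argument. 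One minor imprecision: the commutators $\bigl[\hat a_\hbar(\alpha_0)+\hat a^*_\hbar(\alpha_0),\hat a^\sharp_\hbar(\tilde g_j(s))\bigr]$ are not literal scalars but particle-space operators (they involve $e^{-2\pi ik\cdot\hat q_j}$); the truncation of BCH still holds because they commute with the Fock-side generator $\hat a_\hbar(\alpha_0)+\hat a^*_\hbar(\alpha_0)$, so your conclusion stands.
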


\begin{proof}
Let $\hat q_j=(\hat q_j^\nu)_{\nu=1,\cdots d}$ and $\hat p_j=(\hat p_j^\nu)_{\nu=1,\cdots d}$. The results follow from the following identities 
\begin{align}
&W_1(z_0)  \ \hat{q}_j^\nu  \ W_1(z_0)^*= \hat{q}_j^\nu -\hbar q_{0j}^\nu,\label{top4comm1}
\\&
 W_1(z_0)  \ \hat{p}_j^\nu  \ W_1(z_0)^*= \hat{p}^\nu_j -\hbar p_{0j}^\nu,\label{top4comm2}
\\& W_2(\alpha_0) \  \hat a_\hbar^*(f) \  W_2(\alpha)^*= \hat a_\hbar^*(f) +\frac{i \hbar}{\sqrt{2}} \langle \alpha_0, f \rangle_{L^2},\label{top4comm3}
\\&W_2(\alpha_0) \  \hat a_\hbar(f) \  W_2(\alpha)^*= \hat a_\hbar(f) -\frac{i \hbar}{\sqrt{2}} \langle  f,\alpha_0 \rangle_{L^2}.\label{top4comm4}
\end{align}
We start proving the first identity \eqref{top4comm1}. Recall  from \eqref{top4weylheseinbergtranslation} that we have  
\[
 W_1(z_0)=e^{i( p_0\cdot \hat q-q_0\cdot \hat p)},\quad W_1(z_0)^*=e^{-i( p_0\cdot \hat q-q_0\cdot \hat p)}  \]
Define \[K(t):=  e^{i t ( p_0\cdot \hat q-q_0\cdot \hat p)} \ \hat{q}_j^\nu  \  e^{-i t ( p_0\cdot \hat q-q_0\cdot \hat p)}\]
Since $\hat q$ and $\hat p$ are self adjoint operators, we claim using Taylor expansions that 
\be \label{top4taylorcomm}K(t) =K(0)+t K^\prime(0).\ee
Indeed, we have,  using the commutation relation $[\hat{q}_j^\nu,\hat p_j^\nu]=i\hbar$, that  
\[ K^\prime(0)= \frac{d}{dt} K(t)|_{t=0}= e^{i t ( p_0\cdot \hat q-q_0\cdot \hat p)} \ i[ (p_0\cdot \hat q-q_0\cdot \hat p), \hat{q}_j^\nu ] \  e^{-i t ( p_0\cdot \hat q-q_0\cdot \hat p)}|_{t=0}=-\hbar \ q_{0j}^\nu.\]
This implies that $K^r(0)=0$, for all $r\geq 2$. Take $t=1 $ in \eqref{top4taylorcomm} and since $K(0)=\hat q_j^\nu$, we  get  \eqref{top4comm1}. Similarly, we can prove the  identity \eqref{top4comm2}. Also the two identities \eqref{top4comm3} and \eqref{top4comm4} can be proved by simiar way using the commutation relations on the Fock space.
\vskip 1mm
\noindent In particular, the identity \eqref{top4comm1} gives 
\[W_1(z_0)  \ g_j(s)  \ W_1(z_0)^*=\tilde{g}_j(s).\]
\end{proof}

\begin{lemma}[The expansion of the commutator]\label{top4expcommutator}
For any $s\in \R$ and $\xi=( p_0,q_0,\alpha_0) \in X^{1/2}$,  we have the following expansion of the commutator in terms of the  semiclassical parameter $\hbar \in (0,1)$
\be\label{top4comexp}
\begin{aligned}
\frac{1}{\hbar}\big[ \cW(\xi),\hat H_I(s) \big]= \Big( {\rm B}_0(s,\hbar,\xi)+\hbar \, {\rm B}_1(s,\hbar,\xi)\Big) \cW(\xi).
\end{aligned}
\ee
The two terms $ {\rm B}_0$ and $ {\rm B}_1$ are identified as follows
\begin{align}
&\begin{aligned}
 {\rm B}_0(s,\hbar,\xi):=&- \sum_{j=1}^n \nabla f_j(\hat p_j)\cdot p_{0j}-\nabla V(\hat q)\cdot q_{0}
\\ &
+\sum_{j=1}^{n} \hat a_\hbar\Big(\frac{\tilde{g}_j(s)-g_j(s)}{\hbar}\Big)+\hat  a^*_\hbar\Big(\frac{\tilde{g}_j(s)-g_j(s)}{\hbar}\Big)
\\ & +\sum_{j=1}^{n}\frac{i  }{\sqrt{2}} \Big(\langle \alpha_0,\tilde g_j(s)\rangle_{L^2(\R^d,\C)} -\langle  \tilde g_j(s),\alpha_0 \rangle_{L^2(\R^d,\C)}\Big),
\end{aligned}\label{top4mainterm}
\\&
\begin{aligned}{\rm B}_1(s,\hbar,\xi):= \Theta_1(\hbar,\xi)+\Theta_2(\hbar,\xi)
,\label{top4remainderterm}
\end{aligned}
\end{align}
where $\Theta_1$ and $\Theta_2$ are identified below in the proof.  Moreover, we have also the following estimates
\begin{align}
& \Vert  (\hat{H}_0+1)^{-1/2} {\rm B}_0(s,\hbar,\xi) (\hat{H}_0+1)^{-1/2} \Vert_{\cL(\cH)}\lesssim \ \Big(\|\chi\|_{L^2}+\Big\Vert\sqrt{\omega} \ {\chi}{}\Big\Vert_{L^2} \Big) \  \|\xi\|_{X^0},\label{top4estimate1}
\\& \Vert (\hat{H}_0+1)^{-1/2}{\rm B}_1(s,\hbar,\xi) (\hat{H}_0+1)^{-1/2}\Vert_{\cL(\cH)}\lesssim  \|\xi\|_{X^0}^2.\label{top4estimate2}
\end{align}

\end{lemma}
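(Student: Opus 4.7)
The plan is to obtain \eqref{top4comexp} by a term-by-term expansion of $\cW(\xi)\hat H_I(s)\cW(\xi)^{*} - \hat H_I(s)$, using the explicit expressions from Lemmas \ref{top4theinteractionterm} and \ref{top4lemma3.3}, and then dividing by $\hbar$. The single-particle kinetic and potential pieces $f_j(\hat p_j-\hbar p_{0j}) - f_j(\hat p_j)$ and $V(\hat q -\hbar q_0) - V(\hat q)$ are developed via the second-order Taylor formula with integral remainder (justified by smooth functional calculus for the self-adjoint $\hat p_j$ and by pointwise multiplication for $\hat q$). The first-order terms give, after division by $\hbar$, the contributions $-\sum_j\nabla f_j(\hat p_j)\cdot p_{0j}$ and $-\nabla V(\hat q)\cdot q_0$ listed in \eqref{top4mainterm}, while the second-order Taylor remainders define $\Theta_1$ (from the kinetic part) and $\Theta_2$ (from the potential part) in \eqref{top4remainderterm}, each carrying an explicit factor of $\hbar$. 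The field difference rewrites exactly as $\hat a^{\sharp}_\hbar(\tilde g_j(s)-g_j(s))$ and becomes, after division by $\hbar$, the middle line of $B_0$; the $i\hbar$-multiplied c-number piece from Lemma \ref{top4lemma3.3} becomes the last line.

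For the bound \eqref{top4estimate1} on $B_0$, I would treat the four summands separately. Via smooth functional calculus, $\nabla f_j(\hat p_j)\cdot p_{0j}$ is controlled by $|p_{0j}|$ (semi-relativistic case, where $|\nabla f_j|\leq 1$ uniformly) or by $|p_{0j}|(\hat H_0+1)^{1/2}$ (non-relativistic case, since $\hat p_j/M_j$ is dominated by $\hat H_{01}^{1/2}$). The potential piece is dominated by $\|\nabla V\|_{L^\infty}|q_0|$ via \eqref{top4A0}. The heart of the estimate is the field term: from the pointwise bound $|e^{2\pi i\hbar k\cdot q_{0j}}-1|\leq 2\pi\hbar |k||q_{0j}|$ one gets
\[ \Bigl\|\frac{\tilde g_j(s)-g_j(s)}{\hbar\sqrt{\omega}}\Bigr\|_{L^2}\lesssim |q_{0j}|\,\|\chi\|_{L^2}, \qquad \Bigl\|\frac{\tilde g_j(s)-g_j(s)}{\hbar}\Bigr\|_{L^2}\lesssim |q_{0j}|\,\|\sqrt{\omega}\,\chi\|_{L^2}, \]
which, plugged into Lemma \ref{top4lemma1}-(ii), yield an $(\hat H_{02}+1)^{1/2}$-relative bound; this transfers to an $(\hat H_0+1)^{1/2}$-relative bound since $\hat H_{02}\leq \hat H_0$. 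The scalar c-number summand is dominated by $\|\alpha_0\|_{L^2}\|\chi/\sqrt{\omega}\|_{L^2}\lesssim \|\xi\|_{X^0}\|\chi\|_{L^2}/\sqrt{m_f}$ using $\omega\geq m_f$. Assembling these estimates produces \eqref{top4estimate1}.

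For \eqref{top4estimate2}, the Taylor remainders $\Theta_1,\Theta_2$ are bounded operators on $\cH$ thanks to $\|\nabla^2 f_j\|_{L^\infty}<\infty$ (constant $1/M_j$ in the non-relativistic case, uniformly bounded in the semi-relativistic case) and $\|\nabla^2 V\|_{L^\infty}<\infty$ from \eqref{top4A0}, with operator norms controlled respectively by $\sum_j|p_{0j}|^2$ and $|q_0|^2$, both $\lesssim \|\xi\|_{X^0}^2$; the sandwiching by $(\hat H_0+1)^{-1/2}$ is then harmless. The main technical care needed is the bookkeeping of the resolvent dressings on either side of each summand of $B_0$, in particular when commuting $(\hat H_0+1)^{-1/2}$ past $\hat a^{\sharp}_\hbar$ through Lemma \ref{top4lemma1}-(ii) so as to land on the right power of $\hat H_{02}$; beyond this bookkeeping, the argument is a mechanical reorganization of the pieces already available.
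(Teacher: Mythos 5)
Your proposal is correct and follows the same route as the paper: combine Lemmas \ref{top4theinteractionterm} and \ref{top4lemma3.3} inside \eqref{top4expcom}, Taylor-expand $f_j$ and $V$ to second order with integral remainders (giving $\Theta_1,\Theta_2$), rewrite the field difference as $\hat a^\sharp_\hbar((\tilde g_j(s)-g_j(s))/\hbar)$, estimate that symbol via $|e^{iy}-1|\lesssim|y|$ and plug into Lemma \ref{top4lemma1}, and bound the remainder using the uniform bounds on the Hessians of $f_j$ and $V$. You supply slightly more explicit bookkeeping on the four summands of ${\rm B}_0$ (especially the non-relativistic kinetic term and the c-number summand), where the paper asserts the final steps tersely; the substance is the same.
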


\begin{proof}
\noindent Exploiting Lemma \ref{top4theinteractionterm} and Lemma \ref{top4lemma3.3} inside \eqref{top4expcom}, the commutator expansion becomes
\be \label{top4exp1com}
 \begin{aligned}
\frac{1}{\hbar} \big[ \cW(\xi),\hat H_I(s) \big] 
  &=\frac{1}{\hbar} \Big[\sum_{j=1}^n \big(f_j(\hat{p}_j-\hbar p_{0j})-f_j(\hat{p}_j) \big)+ V\big(\hat q-\hbar q_{0}\big)
  -V\big(\hat q\big)
\\& 
+\sum_{j=1}^{n} \Big( \hat a_\hbar(\tilde{g}_j(s)-g_j(s))+ \hat a^*_\hbar(\tilde{g}_j(s)-g_j(s)) \Big)
\\&+\frac{i \hbar }{\sqrt{2}} \sum_{j=1}^{n}\Big(\langle \alpha_0,\tilde g_j(s)\rangle_{L^2(\R^d,\C)} -\langle  \tilde g_j(s),\alpha_0 \rangle_{L^2(\R^d,\C)}\Big)\Big] \cW(\xi).
 \end{aligned}
 \ee
We start first by expanding the first line and then proving some estimates for the remaining terms. Let $X\in \R^d$ and $Y=(Y_1,\cdots,Y_n)\in \R^{dn}$. We apply  Taylor series to the two functions
 \begin{align*}
 &t\longrightarrow A(t):=f_j(X-t \ \hbar \ p_{0j}),
 \\ & t\longrightarrow B(t):=V(Y-t \ \hbar \ q_{0}).
 \end{align*}
 We get 
 \[ A(t)=A(0)+tA^\prime(0)+\int_{0}^{t} A^{\prime \prime}(s)(t-s) \ ds, \]
 and 
 \[ B(t)=B(0)+tB^\prime(0)+\int_{0}^{t} B^{\prime \prime}(s)(t-s) \ ds.\]
 Let $t=1$ in the above formulas and since $\hat{p}$ and $\hat{q}$
 are self adjoint operators, we get 
 \begin{align}
&f_j(\hat{p}_j-\hbar p_{0j})=f_j(\hat p_j)-\hbar \nabla f_j(\hat{p}_j)\cdot p_{0j}+\hbar^2 \underbrace{{ \int_0^1 p_{0j}^T \  H_{f_j}(\hat{p}_j-\hbar p_{0j} s) \  p_{0j} \ (1-s) \ ds}}_{:=\Theta_1(\hbar,\xi)};
\\& V(\hat q-\hbar q_0)=V(\hat{q})-\hbar \nabla V(\hat{q})  \cdot q_0+\hbar^2\underbrace{\int_0^1 q_{0}^T \  H_{V}(\hat{q}-\hbar q_{0} s) \  q_{0} \ (1-s) \ ds}_{:=\Theta_2(\hbar,\xi)},
 \end{align}
 where the notation $\cdot^T$ represents the transpose. Moreover, the two terms $H_{f_j}$ and $H_V$ are respectively the Hessian matrices related to $f_j$ and $V$.
 This implies 
 \begin{align}
 &f_j(\hat{p}_j-\hbar p_{0j})-f_j(\hat p_j)=-\hbar \nabla f_j(\hat{p}_j)\cdot p_{0j}+\hbar^2 \  \Theta_1(\hbar,\xi),\label{top4expf}
 \\ &  V(\hat q-\hbar q_0)-V(\hat{q})=-\hbar \nabla V(\hat{q})  \cdot q_0+\hbar^2 \ \Theta_2(\hbar,\xi).\label{top4expv}
 \end{align}
 And thus, using \eqref{top4expf}-\eqref{top4expv}, the commutator is expanded as indicated in \eqref{top4comexp}. 
 Now, to obtain the two estimates \eqref{top4estimate1} and \eqref{top4estimate2}, we need first to prove that  the function \[F_j(\hbar,s):=(\tilde{g}_j(s)-g_j(s))/\hbar:L^2(\R^d,dx_j) \longrightarrow L^2(\R^d,dx_j) \otimes L^2(\R^d,dk)\]  is bounded uniformly in $\hbar \in (0,1)$. Indeed, we have for all  $\psi \in L^2(\R^d,dx_j)$
\[\begin{aligned}
\Vert F_j(\hbar,s) \ \psi \Vert_{L^2_{x_j} \otimes L^2_k}^2
&=\int_{\R^d} \int_{\R^d}\Big \vert(F_j(\hbar,s) \ \psi)(x_j,k) \Big \vert^2  dx_j \ dk 
\\ &=\int_{\R^d} \int_{\R^d}\Big \vert\Big[g_j(s)\big(\frac{e^{2\pi i k \cdot q_{0j} \hbar}-1}{\hbar}\big)\psi \Big](x_j,k) \Big \vert^2  dx_j \ dk  
\\&=\int_{\R^d} \int_{\R^d}\Big \vert \frac{\chi(k)}{\sqrt{\omega(k)}} \  e^{-2\pi ik \cdot \hat{q}_j+is\omega(k)}   \ \big(\frac{e^{2\pi i k \cdot q_{0j}\hbar}-1}{\hbar}\big) \    \psi(x_j) \Big \vert^2  dx_j \ dk . 
\end{aligned}\] 
Now, with the aid of Fubini and  the estimate $\vert e^{iy}-1 \vert\leq \sqrt{2} \vert y \vert$, we find that 
\[\begin{aligned}
\Vert F_j(\hbar,s) \ \psi \Vert_{L^2_{x_j} \otimes L^2_k}^2
&\leq 8 \pi^2 \ \Vert \xi \Vert_{X^0}^2 \ \int_{\R^d} \int_{\R^d}\Big \vert {\sqrt{\omega(k)} {\chi(k)}} \Big \vert^2 \    \Big \vert  \psi(x_j)\Big \vert^2 \ dx_j \ dk
\\ & = 8 \pi^2 \ \Vert \chi \Vert_{\cG^{1/2}}^2   \  \Vert \xi \Vert_{X^0}^2 \  \Vert \psi \Vert_{L^2_{x_j}}^2.
\end{aligned}\] 
 We get finally, with some $C>0$,  that 
\be \label{top4bound1} 
\Vert F_j(\hbar,s) \Vert_{\cL(L^2_{x_j},L^2_{x_j} \otimes L^2_k)}\leq C \  \Vert \chi \Vert_{\cG^{1/2}}    \ \Vert \xi \Vert_{X^0}.
\ee 
 Now, using the estimates in Lemma \ref{top4lemma1} on the creation-annihilation operators together with the above estimate for $F_j(\hbar,s)$, we can easily prove  \eqref{top4estimate1}. It is also  not hard to see that  \eqref{top4estimate2} hold true as a consequence of the fact that the Hessian matrices of $f_j$ and $V$ are bounded.
\end{proof}
\noindent Our focus is on taking the classical limit $\hbar \rightarrow 0$. To accomplish this, it is crucial to establish a uniform bound on the expansion derived in Lemma \ref{top4expcommutator}, particularly for the remainder term. Let $S=(\hat{H}_0+1)^{1/2}$, we have 
 \be \label{top4boundforconv}\begin{aligned} {\rm Tr}\Big(\frac{1}{\hbar} \big[ \cW(\xi),\hat H_I(s) \big] \, \tilde \varrho_\hbar(s) \Big) &={\rm Tr}\Big[\underbrace{S^{-1} \  {\rm B}_0(s,\hbar,\xi) \ S^{-1}}_{\in \cL(\cH)} \underbrace{S  \  \cW(\xi) \ S^{-1}}_{\in \cL(\cH)} \underbrace{S \  \tilde{\varrho}_\hbar(s) \ S}_{\in \cL^1(\cH)} \Big]
\\& +\hbar \ {\rm Tr}\Big[\underbrace{ S^{-1} {\rm B}_1(\hbar,s,\xi) \ S^{-1}}_{\in \cL(\cH)} \underbrace{S  \  \cW(\xi) \ S^{-1}}_{\in \cL(\cH)} \underbrace{S \  \tilde{\varrho}_\hbar(s) \ S }_{\in \cL^1(\cH)} \Big]
\end{aligned}\ee

 \begin{itemize}
\item [$\triangleleft$] Lemma \ref{top4expcommutator} assures that the first term in each of the above two lines in \eqref{top4boundforconv} is bounded.
\item [$\triangleleft$] The Weyl-Heisenberg operator estimates presented in Lemma \ref{top4weylheisenberg} guarantee that the bound of the second term in the above two lines in \eqref{top4boundforconv}  holds.
\item [$\triangleleft$] The bound of the  last term in each of the above two lines in \eqref{top4boundforconv} follows from Assumption \eqref{top4S0} and \eqref{top4S1} in conjunction with the equivalent relation between $\hat{H}$ and $\hat{H}_0$ outlined in Lemma \ref{top4inequality}.
\end{itemize}
Our next step is to take the limit in the Duhamel formula \eqref{top4duhamelformula} as $\hbar$ approaches zero. Using the above arguments, we can disregard the remainder term when passing to the limit $\hbar \rightarrow 0$ in the Duhamel formula \eqref{top4duhamelformula}. 
 We achieve this in the next section by extracting a subsequence.

\section{Existence of Wigner measure}
According to Definition \ref{top4definitionwigner}, the   Wigner measures of $\tilde{\varrho}_\hbar(t)$ is  obtained by taking limits of the following map:
\begin{equation}\label{top4dynm}
 \xi \rightarrow {\rm Tr} \big[ \cW(\xi) \ \tilde{\varrho}_\hbar(t) \big].
\end{equation}
Thus, the first task is to verify that the Wigner measure associated to the above map is unique for all times.
It is worth noting that, given our assumptions on the initial  states $(\varrho_\hbar)_{\hbar\in (0,1)}$, the  associated set of Wigner measures
\[\cM(\varrho_\hbar,\  \hbar \in(0,1) ) \]
is non-empty. To ensure that the sets of Wigner measures
\[\cM(\varrho_\hbar(t),\  \hbar \in(0,1) ) \quad \text{and} \quad \cM(\tilde \varrho_\hbar(t),\  \hbar \in(0,1) ) \]
 are also non-empty, it is crucial to demonstrate that assumptions \eqref{top4S0} and \eqref{top4S1} can be uniformly propagated in time by both families  of states  $(\varrho_\hbar(t))_{\hbar \in (0,1)} $ and $(\tilde \varrho_\hbar(t))_{\hbar\in (0,1)} $. This is established in Subsection \ref{top4propagation}. Subsequently, in Subsection \ref{top4extraction}, we prove that the map \eqref{top4dynm} has a unique limit that holds for all times in compact interval.

\subsection{Propagation of assumptions}\label{top4propagation}
\noindent In order to establish the existence of a unique Wigner measure that holds for all times, we demonstrate that if an initial state $\varrho_\hbar$ is localized uniformly in $\hbar$, then it will remain localized uniformly with respect to the semiclassical parameter $\hbar \in (0,1)$ for all times in compact interval. We prove this result separately for particle operators in Paragraph \ref{top4particleoperators} and for field operators in Paragraph \ref{top4momentestimate}. Finally, in Paragraph \ref{top4propaaa}, we establish that both families of states $(\varrho_\hbar(t))_{\hbar \in (0,1)}$ and $(\tilde \varrho_\hbar(t))_{\hbar \in (0,1)}$ uniformly satisfy \eqref{top4S0} and \eqref{top4S1} for all times.
\subsubsection{Position and Momentum operator estimates}\label{top4particleoperators}
In this part, we prove some uniform estimates (in $\hbar$)  related  to the two operators $\hat{p}^2$ and $\hat{q}^2$.
\begin{lemma}[Position operator's estimate]\label{top4local}Assume that \eqref{top4A0} and   $\omega^{-{1}/{2}}\chi\in L^2(\R^d,dk)$. Then, there exists constants $C_1,C_2>0$ such that for all $\psi \in D(\hat{H}_0^{{1}/{2}})\cap D(\hat{q})$, all $t\in \R$ and all $\hbar\in (0,1)$:
\begin{equation}\label{top4second}
 \langle e^{-i \frac{t}{\hbar}  \hat{H}}\psi, \ \hat{q}^2  \ e^{-i \frac{t}{\hbar}  \hat{H}}\psi\rangle \leq C_1\langle \psi,  \ (\hat{H}_0+\hat{q}^2+1)  \ \psi \rangle \  e^{C_2|t|}.
\end{equation}
\end{lemma}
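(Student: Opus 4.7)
My plan is to differentiate $F(t):=\langle \psi_t,\hat q^2\psi_t\rangle$ along the Heisenberg evolution $\psi_t:=e^{-i(t/\hbar)\hat H}\psi$, dominate the resulting commutator by a multiple of $\hat H_0+\hat q^2+1$, and close the estimate via Gr\"onwall's lemma. The first step is to reduce $[\hat H,\hat q^2]$: since $\hat H_{02}=d\Gamma(\omega)$ acts only on the Fock variables, $V(\hat q)$ is a multiplication operator in position, and the interaction $\hat H_1=\hat a_\hbar(G)+\hat a^*_\hbar(G)$ has kernel $G$ that is multiplicative in $\hat q$, all three commute with $\hat q^2$. Hence $[\hat H,\hat q^2]=[\hat H_{01},\hat q^2]=\sum_{j=1}^{n}[f_j(\hat p_j),\hat q_j^2]$, and the functional-calculus identity $[f_j(\hat p_j),\hat q_j]=-i\hbar\nabla f_j(\hat p_j)$ yields
\[
\frac{i}{\hbar}[\hat H,\hat q^2]=\sum_{j=1}^{n}\bigl(\nabla f_j(\hat p_j)\cdot\hat q_j+\hat q_j\cdot\nabla f_j(\hat p_j)\bigr).
\]

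Next I would estimate this commutator. Cauchy--Schwarz gives $|F'(t)|\le\sum_j\bigl(\|\nabla f_j(\hat p_j)\psi_t\|^2+\|\hat q_j\psi_t\|^2\bigr)$; in the semi-relativistic case $\|\nabla f_j(\hat p_j)\|_{\mathrm{op}}\le1$, while in the non-relativistic case $\nabla f_j(\hat p_j)=\hat p_j/M_j$ and so $\|\nabla f_j(\hat p_j)\psi_t\|^2\le(2/M_j)\langle\psi_t,\hat H_{01}\psi_t\rangle$. In either case $\sum_j\|\nabla f_j(\hat p_j)\psi_t\|^2\lesssim\langle\psi_t,\hat H_0\psi_t\rangle+\|\psi\|^2$. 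Unitarity of $e^{-i(t/\hbar)\hat H}$ preserves the quadratic form of $\hat H$, so $\langle\psi_t,(\hat H+a)\psi_t\rangle=\langle\psi,(\hat H+a)\psi\rangle$; combined with Lemma~\ref{top4inequality} this gives a uniform (in $\hbar$ and $t$) bound $\langle\psi_t,(\hat H_0+1)\psi_t\rangle\le C_3\langle\psi,(\hat H_0+1)\psi\rangle$. Substituting back yields a differential inequality of the form $|F'(t)|\le C_2\,F(t)+C_4\langle\psi,(\hat H_0+1)\psi\rangle$, and Gr\"onwall's lemma (applied separately on $t\ge0$ and $t\le0$) delivers
\[
F(t)\le\bigl(F(0)+C_4\langle\psi,(\hat H_0+1)\psi\rangle\bigr)e^{C_2|t|}\le C_1\,\langle\psi,(\hat H_0+\hat q^2+1)\psi\rangle\,e^{C_2|t|},
\]
which is the claimed bound~\eqref{top4second}.

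The main obstacle will be a domain-regularity issue: $\hat q^2$ is unbounded and $\psi_t$ need not lie in $D(\hat q^2)$ a priori (in fact, this is essentially what the lemma asserts). To make the formal calculation rigorous I would regularize $\hat q^2$ by bounded approximants $\hat q^2(1+\varepsilon\hat q^2)^{-1}$, carry out the entire Heisenberg/Gr\"onwall argument on $F_\varepsilon(t):=\langle\psi_t,\hat q^2(1+\varepsilon\hat q^2)^{-1}\psi_t\rangle$ to obtain a bound uniform in $\varepsilon\in(0,1)$, and then pass to the limit $\varepsilon\to 0$ by monotone convergence, which simultaneously shows $\psi_t\in D(\hat q)$ and yields the stated estimate. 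A parallel care is needed in the semi-relativistic case to compute $[\sqrt{\hat p_j^2+M_j^2},\hat q_j^2]$ rigorously; this can be handled via the Balakrishnan integral representation of the square root or by direct spectral calculus, producing an identity that coincides with the formal one modulo bounded operators and thus does not affect the estimate.
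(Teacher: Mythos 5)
Your argument follows the paper's proof essentially verbatim: differentiate $\Theta_1(t)=\langle\psi_t,\hat q^2\psi_t\rangle$, compute $\frac{i}{\hbar}[\hat H,\hat q^2]=\sum_j(\nabla f_j(\hat p_j)\cdot\hat q_j+\hat q_j\cdot\nabla f_j(\hat p_j))$, bound by Cauchy--Schwarz, split the semi- and non-relativistic cases, and in the latter invoke unitarity together with Lemma~\ref{top4inequality} before applying Gr\"onwall. The additional remarks you make about regularizing $\hat q^2$ and computing the square-root commutator via Balakrishnan are sensible technical refinements that the paper leaves implicit, but they do not change the route.
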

\begin{proof}
Let $\Theta_1(t):= \langle  e^{-i\frac{t}{\hbar} \hat{H}}\psi,  \ \hat{q}^2  e^{-i\frac{t}{\hbar} \hat{H}}\psi \rangle$. We have 
\[\Theta_1(t) =\Theta_1(0)+ \int_{0}^{t} \dot \Theta_1(s) \ ds. \]
Then Stone's Theorem implies that
\[ \dot \Theta_1(t)= \frac{1}{\hbar} \big\langle  e^{-i\frac{t}{\hbar} \hat{H}}\psi,  \ i [\hat{H},\hat{q}^2  ] \ e^{-i\frac{t}{\hbar} \hat{H}}\psi \big \rangle. \]
 Now, using some commutation relations, we get 
\[i [\hat{H},\hat{q}^2  ] = i\sum_{j=1}^n [f_j(\hat{p}_j),\hat{q}_j^2]=\hbar \sum_{j=1}^n \big[  \nabla f_j (\hat{p}_j)\cdot \hat q_j+\hat q_j\cdot \nabla f_j (\hat{p}_j) \big] . \]
Define $\psi(t):= e^{-i\frac{t}{\hbar} \hat{H}} \psi$. Since $\hat{q}_j$ and  $\nabla f_j(\hat{p}_j) $ are self-adjoint operators, we have the following estimates
\begin{align}
& \begin{aligned}\langle \psi(t), \ \nabla f_j (\hat{p}_j)\cdot \hat q_j  \ \psi(t)\rangle& \leq  \Vert \nabla f_j (\hat{p}_j) \  \psi(t) \Vert \ \Vert  \hat{q}_j  \ \psi(t) \Vert  
\\ & \leq \frac{1}{2} \big [ \ \Vert \nabla f_j(\hat{p}_j) \  \psi(t) \Vert^2 + \ \Vert  \hat{q}_j  \ \psi(t) \Vert^2 \big]
\\ &=\frac{1}{2} \big [ \ \Vert \nabla f_j (\hat{p}_j) \  \psi(t) \Vert^2+ \ \Theta_1(t) \big] ,
 \end{aligned}  \label{top4est1}
 \\ &  \begin{aligned}\langle \psi(t), \ \hat{q}_j \cdot  \nabla f_j (\hat{p}_j)  \ \psi(t)\rangle& \leq  \Vert \hat{q}_j \  \psi(t) \Vert \ \Vert \nabla f_j (\hat{p}_j)   \ \psi(t) \Vert 
\\ & \leq \frac{1}{2} \big [ \ \Vert \nabla f_j (\hat{p}_j) \  \psi(t) \Vert^2 + \ \Vert  \hat{q}_j  \ \psi(t) \Vert^2 \big]\\ &=\frac{1}{2} \big [ \ \Vert \nabla f_j(\hat{p}_j) \  \psi(t) \Vert^2+ \ \Theta_1(t)\big] ,
 \end{aligned} \label{top4est2} 
 \end{align}
where we have used the identity $2a\cdot b \leq a^2+b^2$.
 At this stage, we have to consider separately the two cases: the semi-relativistic and the non-relativistic case since  the function $\nabla f_j$ is bounded in the first case and not in the second one.
\vskip 1mm
\noindent 
{\it For semi-relativistic case:}
\vskip 2mm
\noindent 
Note that   $\nabla f_j(\hat{p}_j)$ is a bounded operator.
This implies that for some $c_1>0$, we have 
\[\Vert \nabla f_j (\hat{p}_j) \  \psi(t) \Vert^2 \leq c_1 \Vert   \psi \Vert^2. \] 
This gives 
\[ \Theta_1(t)\leq \Theta_1(0)+ {c_1} \langle \psi,\psi \rangle  \ t + \int_{0}^t \Theta_1(s) \  ds.  \]
Now using Gronwall's lemma and the estimate $te^{t}\leq e^{c t }$ for some $c>0$, we find with some $C_1, \ C_2>0$
\[\begin{aligned}
 \Theta_1(t)\leq \Big[  \Theta_1(0)+ {c_1}  \langle \psi,\psi \rangle \  t\Big]  \ e^{t}  \leq 	C_1 \langle \psi, \ (\hat{q}^2+1)  \ \psi  \rangle \ e^{C_2 |t|}
  \leq C_1 \langle \psi, \ (\hat{H}_0+\hat{q}^2+1) \  \psi  \rangle \ e^{C_2 |t|}.
 \end{aligned}\]
\vskip 2mm
 \noindent 
 {\it For non-relativistic case:}
\vskip 2mm 
\noindent We have $\nabla f_j(\hat{p}_j)={\hat{p}_j}/{M_j}$. This implies \eqref{top4est1} and \eqref{top4est2} become 
\begin{align*}
& \begin{aligned}\langle \psi(t), \ \nabla f_j (\hat{p}_j)\cdot \hat q_j  \ \psi(t)\rangle & \leq \frac{1}{2}  \big [ \langle \psi(t), \ \frac{\hat{p}_j^2}{M_j^2} \  \psi(t)\rangle + \ \Theta_1(t)\big]
\\ & \leq \frac{1}{M_j} \ \langle \psi(t), \ \hat{H}_0 \  \psi(t)\rangle + \ \frac{1}{2}\Theta_1(t)
\end{aligned} 
\\ & \begin{aligned}\langle \psi(t), \ \hat{q}_j \cdot  \nabla f_j (\hat{p}_j)  \ \psi(t)\rangle& \leq  \frac{1}{2} \big [ \langle \psi(t), \ \frac{\hat{p}_j^2}{M_j^2} \  \psi(t)\rangle + \ \Theta_1(t)\big]
\\ & \leq \frac{1}{M_j} \ \langle \psi(t), \ \hat{H}_0 \  \psi(t)\rangle + \ \frac{1}{2}\Theta_1(t).
\end{aligned} 
\end{align*}
By Lemma \ref{top4inequality}, we have
 \[\langle  e^{-i \frac{t}{\hbar}\hat{H}}\psi,  \hat{H}_0 \ e^{-i \frac{t}{\hbar}\hat{H}}\psi \rangle \lesssim \langle  e^{-i \frac{t}{\hbar}\hat{H}}\psi,  (\hat{H}+1) \ e^{-i \frac{t}{\hbar}\hat{H}}\psi \rangle =\langle  \psi,  (\hat{H}+1) \ \psi \rangle \lesssim \langle  \psi,  (\hat{H}_0+1) \ \psi \rangle.\]
 This leads with    some $c_2 \in \R^*_+$ to the following inequality
\[ \Theta_1(t)\leq \Theta_1(0)+  c_2 \langle \psi,  (\hat{H}_0+1) \ \psi \rangle \  t + \int_{0}^t \Theta_1(s) \  ds.  \]
Now using Gronwall's Lemma  and the estimates $te^{ct}\leq e^{c^\prime t }$, we find with some $C_1, \ C_2>0$
\[\begin{aligned}
 \Theta_1(t)\leq \Big[  \Theta_1(0)+  c_2 \langle \psi, ( \hat{H}_0+1) \ \psi \rangle \  t\Big]  \ e^{\int_{0}^t 1 \ ds}&  \leq C_1 \langle \psi, \ (\hat{H}_0+\hat{q}^2+1) \psi  \rangle \ e^{C_2 |t|}
 \\ & \leq C_1 \langle \psi, \ (\hat{H}_0+\hat{q}^2+1) \psi  \rangle \ e^{C_2 |t|}.
 \end{aligned}\]
 \end{proof}
 Now, we give some uniform estimates for the momentum operator just in the semi-relativistic case: $f_j(\hat{p}_j)=\sqrt{\hat p_j^2+M_j^2}$.
 \begin{lemma}[Momentum operator's estimate]\label{top4local1}Assume that \eqref{top4A0} and  $\omega^{{1}/{2}}\chi\in L^2(\R^d,dk)$. Then, there exists constants $C_1,C_2>0$ such that for all $\psi \in D(\hat{H}_0^{1/2})\cap D(\hat{p})$, all $t\in \R$ and all $\hbar\in (0,1)$:
\begin{align}
  \langle e^{-i \frac{t}{\hbar}  \hat{H}}\psi, \ \hat{p}^2  \ e^{-i \frac{t}{\hbar}  \hat{H}}\psi\rangle \leq C_1\langle \psi,  \ (\hat{H}_0+\hat{p}^2+1)  \ \psi \rangle \  e^{C_2|t|}.\label{top4first1}
\end{align}
\end{lemma}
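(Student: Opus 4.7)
The plan is to adapt the strategy of Lemma \ref{top4local} to the momentum operator, using Stone's theorem, computing $[\hat H,\hat p^2]$, and closing with Gronwall. Set $\psi(t):=e^{-i\frac{t}{\hbar}\hat H}\psi$ and $\Theta_2(t):=\langle\psi(t),\hat p^2\psi(t)\rangle$; then Stone gives $\dot\Theta_2(t)=\frac{i}{\hbar}\langle\psi(t),[\hat H,\hat p^2]\psi(t)\rangle$. Among the four pieces of $\hat H$, the free particle Hamiltonian $\hat H_{01}=\sum_j f_j(\hat p_j)$ commutes with every $\hat p_j^2$, and $\hat H_{02}=d\Gamma(\omega)$ acts only on the Fock variables, so only $V(\hat q)$ and $\hat H_1$ contribute. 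In particular we never need to differentiate $f_j$, which is why the argument is insensitive to the semi-relativistic vs.\ non-relativistic distinction at this step.

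Standard Heisenberg identities yield $\tfrac{i}{\hbar}[V(\hat q),\hat p_j^2]=-(\hat p_j\cdot\nabla_{q_j}V(\hat q)+\nabla_{q_j}V(\hat q)\cdot\hat p_j)$, which is controlled as a quadratic form by $\|\hat p_j\psi(t)\|\,\|\nabla V\|_{L^\infty}\|\psi\|$ thanks to \eqref{top4A0}. For the interaction, since $g_j(\hat q)(k)=\frac{\chi(k)}{\sqrt{\omega(k)}}e^{-2\pi ik\cdot\hat q_j}$ is a multiplication operator in $\hat q_j$, a direct differentiation gives $[\hat p_j^\nu,\hat a^\sharp_\hbar(g_j)]=\mp 2\pi\hbar\,\hat a^\sharp_\hbar(k^\nu g_j)$, the explicit factor of $\hbar$ canceling the $1/\hbar$ in Stone's formula. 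Expanding $[\hat H_1,\hat p_j^2]=\hat p_j\cdot[\hat H_1,\hat p_j]+[\hat H_1,\hat p_j]\cdot\hat p_j$ and applying Cauchy--Schwarz together with $2ab\leq a^2+b^2$ leads to
\[
|\dot\Theta_2(t)|\;\lesssim\;\Theta_2(t)\;+\;\|\nabla V\|_{L^\infty}^2\|\psi\|^2\;+\;\sum_{j,\nu}\|\hat a^\sharp_\hbar(k^\nu g_j)\psi(t)\|^2.
\]

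The crucial point is that $|k^\nu|\le\omega(k)$, so $k^\nu g_j$ has symbol norm bounded by $\|\omega^{1/2}\chi\|_{L^2}<\infty$ under the standing hypothesis; Lemma \ref{top4lemma1} then gives $\|\hat a^\sharp_\hbar(k^\nu g_j)\psi(t)\|^2\lesssim\|\omega^{1/2}\chi\|_{L^2}^2\,\|(\hat H_0+1)^{1/2}\psi(t)\|^2$. To make this last bound time-independent I would use conservation of $\hat H$ along the flow together with the equivalence $\hat H_0+b\lesssim \hat H+a$ from Lemma \ref{top4inequality}, giving $\|(\hat H_0+1)^{1/2}\psi(t)\|^2\lesssim\langle\psi,(\hat H_0+1)\psi\rangle$. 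Combining everything, $|\dot\Theta_2(t)|\leq C_2\Theta_2(t)+C_2\langle\psi,(\hat H_0+1)\psi\rangle$, and an application of Gronwall's inequality (together with the estimate $|t|e^{C|t|}\leq e^{C'|t|}$ used in the proof of Lemma \ref{top4local}) produces the desired conclusion.

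The main conceptual obstacle is that in the semi-relativistic regime $\hat H_{01}=\sum_j\sqrt{\hat p_j^2+M_j^2}$ controls only $|\hat p_j|$ and not $\hat p_j^2$, so one genuinely has to propagate $\hat p^2$ separately rather than reading it off from energy conservation; this is exactly why the right-hand side of \eqref{top4first1} contains $\hat p^2$ in addition to $\hat H_0$. Apart from this, the argument is a routine commutator computation coupled with the creation/annihilation estimates already established in Lemmas \ref{top4lemma1}--\ref{top4lemma2}.
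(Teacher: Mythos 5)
Your proposal is correct and follows essentially the same approach as the paper: define $\Theta_2(t)=\langle\psi(t),\hat p^2\psi(t)\rangle$, compute $\frac{i}{\hbar}[\hat H,\hat p^2]$ (with only $V$ and $\hat H_1$ contributing since $\hat H_{01}$ and $\hat H_{02}$ commute with $\hat p^2$), bound the resulting terms via the $\hbar$-cancellation, the key inequality $|k|\le\omega(k)$, Lemma \ref{top4lemma1}, the equivalence of Lemma \ref{top4inequality} plus conservation of $\hat H$, and close with Gronwall. Your closing remark that the commutator computation is identical in both the non-relativistic and semi-relativistic cases, but that only the latter genuinely requires a separate $\hat p^2$ propagation estimate, correctly explains why the paper states the lemma in the semi-relativistic setting.
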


\begin{proof}
Define 
\[\Theta_2(t):= \langle \  \psi(t), \  \hat{p}^2 \  \psi(t) \  \rangle,\qquad \psi(t):=e^{-i\frac{t}{\hbar} \hat{H}} \ \psi . \]
We have that the map $t\rightarrow \Theta_2(t)$ is differentiable  with 
\[ \dot{\Theta}_2(t)= \frac{i}{\hbar}  \ \langle   \psi(t), \ [\hat{H}, \  \hat{p}^2] \  \psi(t)  \rangle.\]
Then, Duhamel formula implies that 
\[ {\Theta_2(t)}=\Theta_2(0)+\int_0^{t} \dot{\Theta}_2(s) \ ds.\]
Let us compute first the explicit expression for the function $\dot{\Theta}_2(t) $. To do that, we need first to deal with  the commutator $ [\hat{H}, \  \hat{p}^2]$. Indeed, we have 
\[
\begin{aligned}
 \big[\hat{H},   \hat{p}^2 \big] &= [d\Gamma(\omega)+ \sum_{j=1}^n \sqrt{\hat{p}_j^2+M_j^2}+V(\hat{q})+ \hat{H}_1,\hat{p}^2]
\\ & =[V(\hat{q}),\hat{p}] \ \hat{p}+\hat{p} \ [V(\hat{q}),\hat{p}]+ [\hat{H}_1,\hat{p}] \ \hat{p}+ \hat{p} \ [\hat{H}_1,\hat{p}].
\end{aligned}\]
Recall that 
\[  \hat a^\sharp_\hbar(G)= \sum_{j=1}^n \hat{a}^\sharp_\hbar(g_j(\hat{q})),\qquad g_j(\hat{q})(k ):= \frac{\chi(k)}{\sqrt{\omega(k)}} \ e^{-2\pi i k\cdot \hat{q}_j}.\]
We can then assert that 
\begin{align*}
&  \hat a_\hbar(G)= \sum_{j=1}^n \int_{\R^d} \frac{\chi(k)}{\sqrt{\omega(k)}} \ e^{2\pi i k\cdot \hat{q}_j} \ \hat{a}_\hbar(k) \ dk =: \sum_{j=1}^n B_j(\hat{q}_j),
\\& \hat{a}^*_\hbar (G)=  \sum_{j=1}^n \int_{\R^d} \frac{\chi(k)}{\sqrt{\omega(k)}} \ e^{-2\pi i k\cdot \hat{q}_j} \ \hat{a}_\hbar^*(k) \ dk =: \sum_{j=1}^n B^*_j(\hat{q}_j),
\end{align*}
where ${q}_j \rightarrow B^\sharp_j({q}_j)$ is analytic function.
We know that for any analytic function $F$
\[ [F(\hat{q}_j), \hat{p}_j]= i\hbar \ \frac{\partial F(\hat{q}_j)}{\partial {q}_j}, \]
This gives 
\begin{align*}
&[V(\hat{q}),\hat{p}] \ \hat{p} =\sum_{j=1}^n i \hbar \  \nabla_{{q}_j} V(\hat{q})\cdot \hat{p}_j,
\\ & \hat{p} \ [V(\hat{q}),\hat{p}] =\sum_{j=1}^n i \hbar \  \hat{p}_j \cdot  \nabla_{{q}_j} V(\hat{q}),
\\ & [\hat{a}_\hbar(G),\hat{p}] \ \hat{p} = \big[ \sum_{j=1}^n B_j(\hat{q}_j), \hat{p}\big] \ \hat{p}=\sum_{j=1}^n i \hbar \ \frac{\partial B_j(\hat{q}_j)}{\partial{q}_j} \cdot \hat{p}_j= \sum_{j=1}^n i \hbar \ \hat{a}_\hbar(\tilde{g}_j) \cdot \hat{p}_j,
\\&   \hat{p} \ [\hat{a}_\hbar(G),\hat{p}] = \sum_{j=1}^n i \hbar \ \hat{p}_j \cdot  \hat{a}_\hbar(\tilde{g}_j),
\\ & [\hat{a}^*_\hbar(G),\hat{p}] \ \hat{p} =\big[ \sum_{j=1}^n B^*_j(\hat{q}_j), \hat{p}\big] \ \hat{p}=\sum_{j=1}^n i \hbar \ \frac{\partial B^*_j(\hat{q}_j)}{\partial{q}_j} \cdot \hat{p}_j= \sum_{j=1}^n i \hbar \ \hat{a}^*_\hbar(\tilde{g}_j) \cdot \hat{p}_j, 
\\ &   \hat{p} \ [\hat{a}^*_\hbar(G),\hat{p}] = \sum_{j=1}^n i \hbar \ \hat{p}_j \cdot  \hat{a}^*_\hbar(\tilde{g}_j),
\end{align*}
where we have introduced the term $\tilde{g}_j$ as follows
\[ \tilde g_j:= -2\pi i k \frac{\chi(k)}{\sqrt{\omega(k)}} \ e^{-2\pi i k \cdot \hat{q}_j}.\]
This implies that 
\[ \begin{aligned}
   \big[\hat{H}, \ \hat{p}^2\big]= i\hbar \ \sum_{j=1}^n \Big[ 
 \hat{p}_j \cdot \big( \nabla_{{q}_j} V(\hat{q})+\hat{a}_\hbar(\tilde{g}_j) +\hat{a}^*_\hbar(\tilde{g}_j)  \big) 
  +\big( \nabla_{{q}_j} V(\hat{q})+\hat{a}_\hbar(\tilde{g}_j) +\hat{a}^*_\hbar(\tilde{g}_j)  \big) \cdot \hat{p}_j    \Big].
\end{aligned}\]
We conclude that 
\[ \begin{aligned}
\dot{\Theta}_2(t)&= -\sum_{j=1}^n\Big[ 
\langle \psi(t), \  \hat{p}_j \cdot \big( \nabla_{{q}_j} V(\hat{q})+\hat{a}_\hbar(\tilde{g}_j) +\hat{a}^*_\hbar(\tilde{g}_j)  \big)  \   \psi(t) \rangle 
\\ &\qquad \quad  + \langle \psi(t), \   \big( \nabla_{{q}_j} V(\hat{q})+\hat{a}_\hbar(\tilde{g}_j) +\hat{a}^*_\hbar(\tilde{g}_j)  \big) \cdot \hat{p}_j \  \psi(t) \rangle \Big].
\end{aligned}
\]
We estimate now each part. Indeed, we have
\[ \begin{aligned}
\langle \psi(t), \  \hat{p}_j \cdot  \nabla_{{q}_j} V(\hat{q})  \   \psi(t) \rangle &= \langle \hat{p}_j  \  \psi(t), \  \nabla_{{q}_j} V(\hat{q})   \   \psi(t) \rangle  
\\ & \leq  \Vert \hat{p}_j  \ \psi(t) \Vert  \ \Vert \nabla_{{q}_j} V(\hat{q})  \   \psi(t) \Vert
\\ & \leq  \Vert \hat{p}_j  \ \psi(t) \Vert \ \Vert \nabla_{{q}_j} V\Vert_{L^\infty}   \   \Vert \psi \Vert
\\ & \leq  \frac{1}{2}  \ \Vert \nabla_{{q}_j} V\Vert_{L^\infty} \Big [   \Vert \hat{p}_j  \ \psi(t) \Vert^2+   \Vert \psi \Vert^2\Big].
\end{aligned}\]
Similarly, we have 
 \[ \begin{aligned}
\langle \psi(t), \    \nabla_{{q}_j} V(\hat{q})\cdot \hat{p}_j \   \psi(t) \rangle \leq  \frac{1}{2}  \ \Vert \nabla_{{q}_j} V\Vert_{L^\infty} \Big [ \Vert \hat{p}_j  \ \psi(t) \Vert^2+   \Vert \psi \Vert^2\Big].
\end{aligned}\]
We have also using Lemmas \ref{top4lemma1} and \ref{top4inequality}
\[ \begin{aligned}
\langle \psi(t), \ \hat{a}_\hbar(\tilde{g}_j) \cdot  \hat{p}_j    \   \psi(t) \rangle &= \langle \hat{a}^*_\hbar(\tilde{g}_j)  \  \psi(t), \  \hat{p}_j   \   \psi(t) \rangle  
\\ & \leq  \Vert \hat{p}_j  \ \psi(t) \Vert  \ \Vert\hat{a}^*_\hbar(\tilde{g}_j)  \   \psi(t) \Vert
\\ & \lesssim  \big[  \Vert \chi \Vert_{L^2}+ \Vert \sqrt{\omega} \ \chi \Vert_{L^2} \big] \ \Vert \hat{p}_j  \ \psi(t) \Vert \ \Vert (\hat{H}_{02}+1)^{1/2} \    \psi(t) \Vert
\\ & \lesssim   \big[  \Vert \chi \Vert_{L^2}+ \Vert \sqrt{\omega} \ \chi \Vert_{L^2} \big] \ \Vert \hat{p}_j  \ \psi(t) \Vert \ \Vert (\hat{H}_0+1)^{1/2} \    \psi(t) \Vert
\\ & \lesssim    \big[  \Vert \chi \Vert_{L^2}+ \Vert \sqrt{\omega} \ \chi \Vert_{L^2} \big] \ \Vert \hat{p}_j  \ \psi(t) \Vert \ \Vert (\hat{H}+a)^{1/2} \    \psi(t) \Vert
\\ & \lesssim   \big[  \Vert \chi \Vert_{L^2}+ \Vert \sqrt{\omega} \ \chi \Vert_{L^2} \big] \ \Vert \hat{p}_j  \ \psi(t) \Vert \ \Vert (\hat{H}+a)^{1/2} \    \psi \Vert
\\ & \lesssim   \big[  \Vert \chi \Vert_{L^2}+ \Vert \sqrt{\omega} \ \chi \Vert_{L^2} \big] \ \Vert \hat{p}_j  \ \psi(t) \Vert \ \Vert (\hat{H}_0+1)^{1/2} \    \psi \Vert
\\ & \lesssim       \big[  \Vert \chi \Vert_{L^2}+ \Vert \sqrt{\omega} \ \chi \Vert_{L^2} \big] \ \Big [   \Vert \hat{p}_j  \ \psi(t) \Vert^2+   \Vert (\hat{H}_0+1)^{1/2} \ \psi \Vert^2\Big].
\end{aligned}\]
Similarly, we have 
\begin{align*}
&  \langle \psi(t), \ \hat{a}^*_\hbar(\tilde{g}_j) \cdot  \hat{p}_j    \   \psi(t) \rangle \lesssim  \ \Vert \chi \Vert_{L^2} \ \Big [   \Vert \hat{p}_j  \ \psi(t) \Vert^2+   \Vert (\hat{H}_0+1)^{1/2} \ \psi \Vert^2\Big],
\\ &\langle \psi(t), \   \hat{p}_j\cdot  \hat{a}_\hbar(\tilde{g}_j)     \   \psi(t) \rangle \lesssim  \ \Vert \chi \Vert_{L^2} \ \Big [   \Vert \hat{p}_j  \ \psi(t) \Vert^2+   \Vert (\hat{H}_0+1)^{1/2} \ \psi \Vert^2\Big],
\\ &\langle \psi(t), \  \hat{p}_j\cdot \hat{a}^*_\hbar(\tilde{g}_j)     \   \psi(t) \rangle \lesssim \big[  \Vert \chi \Vert_{L^2}+ \Vert \sqrt{\omega} \ \chi \Vert_{L^2} \big] \ \Big [   \Vert \hat{p}_j  \ \psi(t) \Vert^2+   \Vert (\hat{H}_0+1)^{1/2} \ \psi \Vert^2\Big].
\end{align*}
We get at the end that there exists some  $C\in \R^*_+$ depending  on the quantities $ \Vert \chi \Vert_{L^2}  $,  $ \Vert \sqrt{\omega} \  \chi \Vert_{L^2}  $ and $  \Vert \nabla_{{q}_j} V\Vert_{L^\infty} $ such that 
\[ \dot{\Theta}_2(t) \leq  C \Big[ \langle \psi, \ (\hat{H}_0+1) \  \psi \rangle +\Theta_2(t) \Big].\]
We find then 
\[ \Theta_2(t)\leq \Theta_2(0)+ C \ \langle \psi, \ (\hat{H}_0+1) \  \psi \rangle  \ t +\int_0^t \Theta_2(s) 
 ds.\]
 This implies using Gronwall's Lemma that there exists $ C_1, C_2>0$ depend on   the quantities $ \Vert \chi \Vert_{L^2}  $, $ \Vert \sqrt{\omega} \  \chi \Vert_{L^2}  $ and $  \Vert \nabla_{{q}_j} V\Vert_{L^\infty} $ such that 
 \[\Theta_2(t)\leq C_1 \  \langle  \psi , \ (\hat{H}_0+\hat{p}^2+1) \ \psi \rangle \ e^{C_2 |t|} .\]
 And thus the result follows.
\end{proof}
\subsubsection{Field operator's  estimates}\label{top4momentestimate}
Below, we give some estimates for the field operator $d\Gamma(\omega^{2\sigma})$.

\noindent Let $\Gamma_{f{in}}$ be a dense subspace in the Fock space. Let $ \psi \in D(\hat{H}_0)$ and  define 
\[\Theta_3(t):= \langle \psi(t), \ d\Gamma (\omega^{2\sigma}) \ e^{-\delta d\Gamma(\omega^{2\sigma})} \ \psi(t)  \rangle, \qquad \psi(t):=e^{-i\frac{t}{\hbar} \hat{H}} \ \psi .\] 
Note that   $d\Gamma (\omega^{2\sigma}) \ e^{-\delta d\Gamma(\omega^{2\sigma})}  $ is a bounded and positive approximation of $d\Gamma (\omega^{2\sigma})  $  that strongly converges monotonically to it. The quantity $\Theta_3(t)$ is well-defined for each $t \in \R$ and $\delta>0$. In addition, the map $ t \rightarrow \Theta_3(t) $ is differentiable with 
\[ \dot\Theta_3(t)= \frac{i}{\hbar } \ \langle \psi(t), \  [\hat{H}, \ d\Gamma(\omega^{2\sigma})\ e^{-\delta d\Gamma(\omega^{2\sigma})}  ]  \ \psi(t)\rangle .\]  
\begin{lemma}\label{top4tocon}
Assume \eqref{top4A0} and  $\omega^{\sigma-\frac{1}{2}}\chi\in L^2(\R^d,dk)$. For $\sigma \in[1/2,1] $, there exists $C>0$ such that for all $\delta>0$, for all $\hbar \in (0,1)$ and for all $\phi, \ \psi \in \cC_0^\infty(\R^{dn}) \otimes \Gamma_{fin}$:
\[ \Big\vert \frac{i}{\hbar} \langle \phi, \ [\hat{H},  d\Gamma(\omega^{2\sigma})  \ e^{-\delta d\Gamma(\omega^{2\sigma})}] \ \psi \rangle \Big \vert \leq C \ \Vert \omega^{\sigma-\frac{1}{2}} \chi \Vert_{L^{2}} \Big[ \Vert \phi \Vert \ \Vert  d\Gamma(\omega^{2\sigma})^{\frac{1}{2}}\psi\Vert+  \Vert \psi \Vert \ \Vert  d\Gamma(\omega^{2\sigma})^{\frac{1}{2}}\phi\Vert\Big] . \]
\end{lemma}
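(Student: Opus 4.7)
The plan is to first reduce the commutator to its genuine contribution from the interaction $\hat{H}_1$, then to compute it via the standard pull-through formula for the $\hbar$-scaled creation/annihilation operators, and finally to close the estimate with Cauchy-Schwarz in $k$-space. Throughout, write $F(x) := x e^{-\delta x}$ and $N^\sigma_\hbar := d\Gamma(\omega^{2\sigma})$.

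\textbf{Step 1: Reduction to $\hat{H}_1$.} I would start by observing that $\hat{H}_{01}$, $V(\hat{q})$ and $d\Gamma(\omega)$ all commute with $N^\sigma_\hbar$: the first two act trivially on the Fock factor, while $d\Gamma(\omega)$ and $d\Gamma(\omega^{2\sigma})$ are mutually commuting multiplication operators in the $k$-variables on each $m$-sector. By the spectral theorem they also commute with $F(N^\sigma_\hbar)$. Hence the claim reduces to controlling $(i/\hbar)\langle\phi,[\hat{H}_1,F(N^\sigma_\hbar)]\psi\rangle$ with $\hat{H}_1 = \sum_\ell[\hat{a}_\hbar(g_\ell)+\hat{a}^*_\hbar(g_\ell)]$ and $g_\ell(k) = \chi(k)\omega(k)^{-1/2} e^{-2\pi i k\cdot\hat{q}_\ell}$ as in \eqref{top4g}.

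\textbf{Step 2: Pull-through formula.} A direct computation on the $m$-sectors yields the commutation relations $[N^\sigma_\hbar,\hat{a}^*_\hbar(k)] = \hbar\omega^{2\sigma}(k)\hat{a}^*_\hbar(k)$ and $[N^\sigma_\hbar,\hat{a}_\hbar(k)] = -\hbar\omega^{2\sigma}(k)\hat{a}_\hbar(k)$, which by iteration and functional calculus lift to $F(N^\sigma_\hbar)\hat{a}^\sharp_\hbar(k) = \hat{a}^\sharp_\hbar(k)F(N^\sigma_\hbar\pm\hbar\omega^{2\sigma}(k))$, with $+$ for creation and $-$ for annihilation. Smearing against $g_\ell(k)$ (which commutes with the Fock-side operators, being a particle-sector operator times a scalar in $k$) and integrating in $k$, we obtain
\[ [F(N^\sigma_\hbar),\hat{a}^*_\hbar(g_\ell)] = \int_{\R^d} g_\ell(k)\,\hat{a}^*_\hbar(k)\,\bigl[F(N^\sigma_\hbar+\hbar\omega^{2\sigma}(k))-F(N^\sigma_\hbar)\bigr]\,dk. \]

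\textbf{Step 3: Operator difference and Cauchy-Schwarz.} Since $F'(x) = (1-\delta x)e^{-\delta x}$ satisfies $|F'(x)|\leq 1$ uniformly for $x\geq 0$ and $\delta>0$, the functional calculus gives $\|[F(N^\sigma_\hbar+h)-F(N^\sigma_\hbar)]\zeta\|\leq h\|\zeta\|$ for every $h\geq 0$. Applied with $h = \hbar\omega^{2\sigma}(k)$ and combined with the unitarity of $e^{2\pi i k\cdot\hat{q}_\ell}$ on the particle factor, this yields
\[ \bigl|\langle\phi,[F(N^\sigma_\hbar),\hat{a}^*_\hbar(g_\ell)]\psi\rangle\bigr| \leq \hbar\|\psi\|\int_{\R^d}\frac{|\chi(k)|}{\sqrt{\omega(k)}}\,\omega^{2\sigma}(k)\,\|\hat{a}_\hbar(k)\phi\|\,dk. \]
Splitting the integrand as $\bigl(|\chi(k)|\omega(k)^{\sigma-1/2}\bigr)\bigl(\omega(k)^{\sigma}\|\hat{a}_\hbar(k)\phi\|\bigr)$ and applying Cauchy-Schwarz in $L^2(\R^d,dk)$, together with $\int\omega^{2\sigma}(k)\|\hat{a}_\hbar(k)\phi\|^2\,dk = \langle\phi,d\Gamma(\omega^{2\sigma})\phi\rangle$, produces $\hbar\|\omega^{\sigma-1/2}\chi\|_{L^2}\,\|\psi\|\,\|d\Gamma(\omega^{2\sigma})^{1/2}\phi\|$. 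For the annihilation part, adjunction gives $\langle\phi,[F(N^\sigma_\hbar),\hat{a}_\hbar(g_\ell)]\psi\rangle = -\overline{\langle\psi,[F(N^\sigma_\hbar),\hat{a}^*_\hbar(g_\ell)]\phi\rangle}$, so the same bound applies with $\phi$ and $\psi$ swapped, yielding $\|\phi\|\,\|d\Gamma(\omega^{2\sigma})^{1/2}\psi\|$. Summing over $\ell = 1,\dots,n$ and dividing by $\hbar$ gives the lemma with $C = 2n$.

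The main delicacy is the operator inequality in Step 3: the uniformity in $\delta$ of the bound $|F'|\leq 1$ is what guarantees that the constant $C$ is independent of $\delta$, which is crucial for the subsequent limit $\delta\to 0$ used in propagating \eqref{top4S0}. All manipulations stay within the invariant core $\cC_0^\infty(\R^{dn})\otimes\Gamma_{fin}$, on which the creation/annihilation operators, $F(N^\sigma_\hbar)$, and their compositions are densely defined, so there are no domain issues to address.
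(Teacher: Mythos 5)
Your proof is correct and follows a genuinely different route from the paper's. The paper attacks $[\hat{H}_1, \,d\Gamma(\omega^{2\sigma})e^{-\delta d\Gamma(\omega^{2\sigma})}]$ by Leibniz, splitting off $[d\Gamma(\omega^{2\sigma}),\hat a^\sharp_\hbar(G)]$ and $[e^{-\delta d\Gamma(\omega^{2\sigma})},\hat a^\sharp_\hbar(G)]$ separately, which forces the introduction of the symbol $\beta=1-e^{-\delta\hbar\omega^{2\sigma}}$, a regrouping into three operator terms $B_1,B_2,B_3$, and two separate $\delta$-uniform bounds, namely $\sup_{\delta>0}\|\delta\, x\,e^{-\delta x}\|_\infty\le 1/e$ and $|\beta|\le 2$. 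You instead treat $F(x)=xe^{-\delta x}$ as a single function of $N^\sigma_\hbar$, use the pull-through identity $F(N^\sigma_\hbar)\hat a^*_\hbar(k)=\hat a^*_\hbar(k)F(N^\sigma_\hbar+\hbar\omega^{2\sigma}(k))$ directly, and then package all the $\delta$-uniformity into the single observation $\sup_{\delta>0,\,x\ge 0}|F'(x)|\le 1$, which via the mean value theorem gives the operator bound $\|[F(N^\sigma_\hbar+h)-F(N^\sigma_\hbar)]\zeta\|\le h\|\zeta\|$. This is shorter and makes the source of the $\delta$-independence transparent in one line, whereas the paper's argument distributes it across several estimates; both routes then close with the same Cauchy--Schwarz step producing the $\|\omega^{\sigma-1/2}\chi\|_{L^2}$ factor. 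One cosmetic remark: your final constant $C=2n$ is an overcount — summing the creation and annihilation contributions over $\ell=1,\dots,n$ already gives the bracketed sum on the right-hand side once per $\ell$, so $C=n$ suffices; this is of course irrelevant since $C$ is only asserted to exist.
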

\begin{proof}
Let us deal first with the term 
\[
\begin{aligned}
  \big[ d\Gamma(\omega^{2\sigma})  \ e^{-\delta d\Gamma(\omega^{2\sigma})},\hat{H}\big]&= [  d\Gamma(\omega^{2\sigma})  \ e^{-\delta d\Gamma(\omega^{2\sigma})},\hat{H}_1]
  \\ &=[  d\Gamma(\omega^{2\sigma})  \ e^{-\delta d\Gamma(\omega^{2\sigma})},\hat{a}_\hbar(G)+\hat{a}^*_\hbar(G)]
\\ & =  d\Gamma(\omega^{2\sigma})  \ [  e^{-\delta d\Gamma(\omega^{2\sigma})},\hat{a}_\hbar(G)] + [d\Gamma(\omega^{2\sigma})  ,\hat{a}_\hbar(G)] \   e^{-\delta d\Gamma(\omega^{2\sigma})}
\\ &  +d\Gamma(\omega^{2\sigma})  \ [  e^{-\delta d\Gamma(\omega^{2\sigma})},\hat{a}^*_\hbar(G)]+ [d\Gamma(\omega^{2\sigma})  ,\hat{a}^*_\hbar(G)] \   e^{-\delta d\Gamma(\omega^{2\sigma})}.
\end{aligned} 
\]
We also have 
\begin{itemize}
\item [(i)]  $ [d\Gamma(\omega^{2\sigma})  ,\hat{a}_\hbar(G)]=-\hbar \ \hat{a}_\hbar(\omega^{2\sigma} G)$,
\vskip 2mm
\item [(ii)] $[d\Gamma(\omega^{2\sigma})  ,\hat{a}^*_\hbar(G)]=\hbar \ \hat{a}^*_\hbar(\omega^{2\sigma} G) $,
\vskip 2mm
\item [(iii)] $[  e^{-\delta d\Gamma(\omega^{2\sigma})},\hat{a}_\hbar(G)]= e^{-\delta d\Gamma(\omega^{2\sigma})} \ \hat{a}_\hbar(\beta  \ G), \quad \beta= 1-e^{-\delta \ \hbar  \ \omega^{2\sigma}},$
\vskip 2mm
\item [(iv)] $[  e^{-\delta d\Gamma(\omega^{2\sigma})},\hat{a}^*_\hbar(G)]=\hat{a}^*_\hbar(-\beta  \ G) \  e^{-\delta d\Gamma(\omega^{2\sigma})} , \quad \beta= 1-e^{-\delta \ \hbar  \ \omega^{2\sigma}}.$
\end{itemize}
Using (i)-(ii)-(iii) and (iv), we get 
\[\big[ d\Gamma(\omega^{2\sigma})  \ e^{-\delta d\Gamma(\omega^{2\sigma})},\hat{H}\big]=\hbar \Big( B_1+B_2+B_3 \Big), \]
where we have introduced the three terms $B_1$, $B_2$ and $ B_3$ as follows
\begin{align*}
& B_1:=   \big[ \hat{a}^*_\hbar(\omega^{2\sigma} G)-\hat{a}_\hbar(\omega^{2\sigma} G) \big] \  e^{-\delta d\Gamma(\omega^{2\sigma})},
 \\ & B_2:=  d\Gamma(\omega^{2\sigma}) \  e^{-\delta d\Gamma(\omega^{2\sigma})} \  \hat{a}_\hbar\big(\frac{\beta \ G}{\hbar}\big),
 \\ & B_3:=  d\Gamma(\omega^{2\sigma}) \ \hat{a}^*_\hbar\big(\frac{- \beta \ G}{\hbar}\big) \   e^{-\delta d\Gamma(\omega^{2\sigma})} .
\end{align*}
We get then 
\[\begin{aligned}
\frac{i}{\hbar} \langle \phi, \ [\hat{H},  d\Gamma(\omega^{2\sigma})  \ e^{-\delta d\Gamma(\omega^{2\sigma})}] \ \psi \rangle &=-i \  \langle \phi,  \ (B_1+B_2+B_3) \  \psi \rangle
\\ & =\underbrace{-i \langle \phi,  \ B_1  \ \psi \rangle}_{(a)}+ \underbrace{-i \langle \phi,  \ B_2 \  \psi \rangle}_{(b)}+ \underbrace{-i \langle \phi,  \ B_3 \  \psi \rangle}_{(c)}.
\end{aligned}\]
For (b), we have 
\[ \begin{aligned}
\vert \langle \phi,  \ B_2 \  \psi \rangle \vert&= \big \vert \langle \phi,  \ d\Gamma(\omega^{2\sigma}) \  e^{-\delta d\Gamma(\omega^{2\sigma})} \  \hat{a}_\hbar\big(\frac{\beta \ G}{\hbar}\big) \ \psi  \rangle \big \vert
\\ & =\big \vert \langle \phi,  \delta \ d\Gamma(\omega^{2\sigma}) \  e^{-\delta d\Gamma(\omega^{2\sigma})} \  \hat{a}_\hbar\big(\frac{\beta \ G}{\delta \ \hbar}\big) \ \psi  \rangle \big \vert
\\ & \leq \Vert \phi \Vert  \ \Vert \delta  d\Gamma(\omega^{2\sigma}) \  e^{-\delta d\Gamma(\omega^{2\sigma})}   \ \hat{a}_\hbar\big(\frac{\beta \ G}{\delta \ \hbar}\big) \  \psi \Vert  
\\ & \leq \frac{1}{e} \  \Vert \phi \Vert  \ \Vert  \hat{a}_\hbar\big(\frac{\beta \ G}{\delta \ \hbar}\big) \  \psi \Vert,
\end{aligned}
\]
where in the last line we have used the fact that $ \sup_{\delta>0}  \Vert \delta  d\Gamma(\omega^{2\sigma}) \  e^{-\delta d\Gamma(\omega^{2\sigma})}\Vert \leq 1/e.$
Remark also that we have with $K_m$ and $X_n$ as in \eqref{top4notation} 
\[ \begin{aligned}
&\Vert \hat{a}_\hbar\big(\frac{\beta \ G}{\delta \ \hbar}\big) \  \psi  \Vert_{\cH}^2 =  \sum_{m\geq 0} \int_{\R^{dn}} \int_{\R^{dm}} \Big\vert \big[ \hat{a}_\hbar\big(\frac{\beta \ G}{\delta \ \hbar}\big) \ \psi \big]^m (X_n,K_m)\Big\vert^2  dK_m \ dX_n
\\ & =\sum_{m\geq 0}\int_{\R^{dn}} \int_{\R^{dm}} \Big\vert \int_{\R^d} \sum_{j=1}^{n}\sqrt{\hbar(m+1)}\ \frac{1-e^{-\delta \ \hbar \ \omega^{2\sigma}}}{\hbar \ \delta}  \ \frac{\chi(k)}{\sqrt{\omega(k)}} \ e^{2\pi  i k\cdot \hat{q}_j} \ \psi^{m+1}(X_n,K_m,k) dk\Big\vert^2 dK_m  \ dX_n
\\ & \leq \sum_{m\geq 0}\int_{\R^{dn}} \int_{\R^{dm}} \Big\vert \int_{\R^d} \sum_{j=1}^{n} \sqrt{\hbar(m+1)}\ \omega^{2\sigma}(k)  \ \frac{\chi(k)}{\sqrt{\omega(k)}} \ e^{2\pi  i k\cdot \hat{q}_j} \ \psi^{m+1}(X_n,K_m,k) dk\Big\vert^2 dK_m  \ dX_n
\end{aligned}
\]

\[\begin{aligned}
 & = \sum_{m\geq 0}\int_{\R^{dn}} \int_{\R^{dm}} \Big\vert \int_{\R^d} \sum_{j=1}^{n} \omega^{\sigma-\frac{1}{2}}(k)  \ {\chi(k)} \ \ \sqrt{\hbar(m+1)}\ \omega^{\sigma}(k) \ e^{2\pi  i k\cdot \hat{q}_j} \ \psi^{m+1}(X_n,K_m,k) dk\Big\vert^2 dK_m  \ dX_n
\\ & \underset{c-s}{\leq} n^2  \ \Vert \omega^{\sigma-\frac{1}{2}} \ \chi \Vert_2^2 \   \sum_{m\geq 0} \int_{\R^{dn}}\int_{\R^{dm}} \Big[ \int_{\R^d}  {\hbar(m+1)}\ \omega^{2 \sigma}(k) \ \Big\vert \psi^{m+1}(X_n,K_m,k)\Big\vert^2 dk \Big]   dK_m \ dX_n 
\\& \leq n^2   \  \Vert \omega^{\sigma-\frac{1}{2}}\ \chi \Vert_2^2 \ \Vert  d\Gamma(\omega^{2\sigma})^{1/2} \psi \Vert_\cH^2.
\end{aligned}
\]
Using the above estimates, we find that 
\[ \begin{aligned}
\vert \langle \phi,  \ B_2 \psi \rangle \vert \leq \frac{n}{e} \ \Vert \omega^{\sigma-\frac{1}{2}}\ \chi \Vert_2 \ \Vert \phi \Vert \ \Vert  d\Gamma(\omega^{2\sigma})^{1/2} \psi \Vert_\cH. 
\end{aligned}
\]
For (c), remark first that  we have 
\[ [ d\Gamma(\omega^{2\sigma}), \ \hat a^*_\hbar\big(\frac{-\beta \ G}{ \hbar}\big) ] =\hbar \ \hat{a}^*_\hbar\big(\frac{- \omega^{2\sigma} \ \beta \ G}{ \hbar}\big),\]
and    
\[[ d\Gamma(\omega^{2\sigma}), \ \hat a^*_\hbar\big(\frac{-\beta \ G}{ \hbar}\big) ] =d\Gamma(\omega^{2\sigma}) \ \hat a^*_\hbar\big(\frac{-\beta \ G}{ \hbar}\big)- \hat a^*_\hbar\big(\frac{-\beta \ G}{ \hbar}\big)  \ d\Gamma(\omega^{2\sigma}) \]
This implies that 
\[d\Gamma(\omega^{2\sigma}) \ \hat a^*_\hbar\big(\frac{-\beta \ G}{ \hbar}\big)= \hat{a}^*_\hbar\big(- \omega^{2\sigma} \ \beta \ G\big)+ \hat a^*_\hbar\big(\frac{-\beta \ G}{ \hbar}\big)  \ d\Gamma(\omega^{2\sigma}). \]
Then, we have 
\[\begin{aligned}
\vert \langle \phi, B_3 \ \psi \rangle \vert &  \leq \vert \langle\phi,  \hat{a}^*_\hbar\big(- \omega^{2\sigma} \ \beta \ G\big) \ e^{-\delta d\Gamma(\omega^{2\sigma})} \ \psi\rangle \vert +\vert \langle\phi, \ \hat a^*_\hbar\big(\frac{-\beta \ G}{ \hbar}\big)  \ d\Gamma(\omega^{2\sigma}) \ e^{-\delta d\Gamma(\omega^{2\sigma})} \ \psi\rangle \vert
\\ & \lesssim   \ \Vert \omega^{\sigma-\frac{1}{2}}\ \chi \Vert_2 \ \Vert \psi \Vert \ \Vert  d\Gamma(\omega^{2\sigma})^{1/2} \phi \Vert_\cH ,
\end{aligned} \]
where in the last line, we have used the same tricks as before as well as the fact that $\vert \beta \vert \leq 2$ and  $ e^{-\delta \  d\Gamma(\omega^{2\sigma })} \in \cL(\cH)$ with $\Vert e^{-\delta   \ d\Gamma(\omega^{2\sigma })}   \Vert\leq 1$.
Similarly for (a), we can have by same techniques that 
\[\begin{aligned}
\vert \langle \phi,  \hat{a}_\hbar^*(\omega^{2\sigma }   G) \ e^{- \delta d\Gamma(\omega^{2\sigma})} \ \psi \rangle \vert & =\vert \langle\hat{a}_\hbar(\omega^{2\sigma }   G)  \ \phi,  \ e^{- \delta d\Gamma(\omega^{2\sigma})} \ \psi \rangle \vert 
\\ & \lesssim   \ \Vert \omega^{\sigma-\frac{1}{2}}\ \chi \Vert_2 \ \Vert \psi \Vert \ \Vert  d\Gamma(\omega^{2\sigma})^{1/2} \phi \Vert_\cH .
\end{aligned} \]
For the other term in (a), note that 
 \[[ e^{- \delta d\Gamma(\omega^{2\sigma})},  \  \hat{a}_\hbar(\omega^{2\sigma }   G)]=e^{- \delta d\Gamma(\omega^{2\sigma})} \  \hat{a}_\hbar(\beta  \ \omega^{2\sigma }   G) .\]
This implies that 
\[\  \hat{a}_\hbar( \omega^{2\sigma }   G) \ e^{- \delta d\Gamma(\omega^{2\sigma})} =e^{- \delta d\Gamma(\omega^{2\sigma})} \  \hat{a}_\hbar( (1-\beta) \ \omega^{2\sigma }   G) .\]
Then, using the above equality, we get 
\[\begin{aligned}
\vert \langle \phi,  \hat{a}_\hbar(\omega^{2\sigma }   G) \ e^{- \delta d\Gamma(\omega^{2\sigma})} \ \psi \rangle \vert
 \lesssim   \ \Vert \omega^{\sigma-\frac{1}{2}}\ \chi \Vert_2 \ \Vert \phi \Vert \ \Vert  d\Gamma(\omega^{2\sigma})^{1/2} \psi \Vert_\cH .
\end{aligned} \]
We conclude that 
\[ \vert \langle \phi, B_1 \ \psi \rangle \vert \lesssim  \ \Vert \omega^{\sigma-\frac{1}{2}}\ \chi \Vert_2 \ \Big[  \Vert \psi \Vert \ \Vert  d\Gamma(\omega^{2\sigma})^{1/2} \phi \Vert_\cH+ \Vert \phi \Vert \ \Vert  d\Gamma(\omega^{2\sigma})^{1/2} \psi \Vert_\cH  \Big]. \] 
And thus, the final result follows.
\end{proof}
\begin{lemma}\label{top4gron}
 There exists $C_1, \ C_2 \in \R^*_+$ such that  
 \[\Theta_3(t)\leq C_1  \  \langle \psi , \ (d\Gamma(\omega^{2\sigma}) \ e^{-\delta d\Gamma(\omega^{2\sigma})}+1) \  \psi  \rangle \ e^{C_2|t|} . \]
\end{lemma}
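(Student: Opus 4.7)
The plan is a Gr\"onwall argument applied to the regularised moment $\Theta_3(t)$. For each fixed $\delta>0$, the operator $A_\delta:=d\Gamma(\omega^{2\sigma})\,e^{-\delta d\Gamma(\omega^{2\sigma})}$ is bounded and self-adjoint on $\cH$ (with $\|A_\delta\|\leq 1/(e\delta)$), so Stone's theorem guarantees that $t\mapsto\Theta_3(t)$ belongs to $\cC^1(\R)$ with
\[
\dot\Theta_3(t)=\tfrac{i}{\hbar}\,\langle\psi(t),[\hat H,A_\delta]\psi(t)\rangle,\qquad \psi(t):=e^{-i\frac{t}{\hbar}\hat H}\psi.
\]
The whole task therefore reduces to establishing a differential inequality of the form $|\dot\Theta_3(t)|\leq C_2\,\Theta_3(t)+C_2\,\|\psi\|^2$ with constants independent of $\delta$ and $\hbar$, and then integrating.

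The key input is Lemma \ref{top4tocon}, which I apply with $\phi=\psi=\psi(t)$; the smoothness requirement of that lemma is dealt with by approximating $\psi\in D(\hat H_0)$ by vectors in $\cC_0^\infty(\R^{dn})\otimes\Gamma_{fin}$ and then passing to the limit. Noting that Assumption \eqref{top4A1} together with $\sigma\in[\tfrac12,1]$ ensures $\omega^{\sigma-1/2}\chi\in L^2(\R^d)$, this produces
\[
|\dot\Theta_3(t)|\;\leq\;2C\|\omega^{\sigma-\frac12}\chi\|_{L^2}\,\|\psi\|\,\|d\Gamma(\omega^{2\sigma})^{1/2}\psi(t)\|.
\]
A Young-type splitting $2ab\leq\eta\,a^2+\eta^{-1}\,b^2$ then produces a bound involving $\|\psi\|^2$ and $\langle\psi(t),d\Gamma(\omega^{2\sigma})\psi(t)\rangle$. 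Integrating on $[0,|t|]$ and invoking Gr\"onwall's lemma in the standard way yields an exponential estimate of the form $\Theta_3(t)\leq C_1\langle\psi,(A_\delta+1)\psi\rangle\,e^{C_2|t|}$, which is the desired conclusion since $\Theta_3(0)=\langle\psi,A_\delta\psi\rangle$.

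The principal technical obstacle is to convert this a priori bound---whose right-hand side features the \emph{unregularised} moment $\langle\psi(t),d\Gamma(\omega^{2\sigma})\psi(t)\rangle$---into a closed inequality for $\Theta_3(t)=\langle\psi(t),A_\delta\psi(t)\rangle$, with constants uniform in $\delta$. I plan to overcome this by running the argument simultaneously over the entire monotone family $\{\Theta_3^{(\delta')}\}_{\delta'>0}$: since $A_{\delta'}\nearrow d\Gamma(\omega^{2\sigma})$ pointwise as $\delta'\to 0$, combining the integrated form of the differential inequality with the monotone convergence theorem produces the required $\delta$-uniform control on the right-hand side. Once this is in place, a final application of Gr\"onwall's lemma concludes the proof.
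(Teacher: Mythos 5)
Your initial steps (regularity of $\Theta_3$, application of Lemma \ref{top4tocon} with $\phi=\psi=\psi(t)$ after a density argument in $\cC_0^\infty\otimes\Gamma_{fin}$, and a Young splitting) agree with the paper's approach. The problem is the last step. Writing $A_\delta:=d\Gamma(\omega^{2\sigma})e^{-\delta d\Gamma(\omega^{2\sigma})}$, so that $\Theta_3(t)=\langle\psi(t),A_\delta\psi(t)\rangle$, your differential inequality reads
\[
\big|\dot\Theta_3(t)\big|\ \lesssim\ \|\psi\|^2\ +\ \big\langle\psi(t),\, d\Gamma(\omega^{2\sigma})\,\psi(t)\big\rangle.
\]
Since $A_\delta\le d\Gamma(\omega^{2\sigma})$ as positive operators, the second term on the right \emph{dominates} $\Theta_3(t)$, so this is not a Gronwall inequality in $\Theta_3$: it is a bound by a quantity whose finiteness at times $t\neq 0$ is unknown --- indeed, showing that $\psi(t)\in D\big(d\Gamma(\omega^{2\sigma})^{1/2}\big)$ is exactly the content of the lemma. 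Running the estimate over the whole family $\{\Theta_3^{(\delta')}\}_{\delta'>0}$ and invoking monotone convergence only records that $\Theta_3^{(\delta')}(t)\nearrow\langle\psi(t),d\Gamma(\omega^{2\sigma})\psi(t)\rangle$ as $\delta'\to 0$; it produces no uniform bound, because that limit may be $+\infty$. The argument is therefore circular: the regularisation was introduced precisely to avoid assuming a priori finiteness of the unregularised moment along the evolution.

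What is actually needed to close the loop is a commutator estimate whose right-hand side features $\|A_\delta^{1/2}\psi(t)\|$ (i.e. $\Theta_3(t)^{1/2}$), uniformly in $\delta$ and $\hbar$, rather than $\|d\Gamma(\omega^{2\sigma})^{1/2}\psi(t)\|$. Only then does one get $|\dot\Theta_3(t)|\lesssim\|\psi\|^2+\Theta_3(t)$, and Gronwall closes in the regularised quantity; letting $\delta\to0$ afterwards (as in Lemma \ref{top4local2}) is then legitimate. The paper's own proof replaces $\|d\Gamma(\omega^{2\sigma})^{1/2}\psi(t)\|$ by $\|A_\delta^{1/2}\psi(t)\|$ at exactly this point, but as written that inequality runs the wrong way ($A_\delta\le d\Gamma(\omega^{2\sigma})$ gives the opposite comparison), so the real ingredient --- which is missing both from your proposal and from the text as literally written --- is a sharpened version of Lemma \ref{top4tocon} in which the bound is directly expressed through the regularised norm $\|A_\delta^{1/2}\cdot\|$. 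This can be obtained by keeping the factor $e^{-\delta d\Gamma(\omega^{2\sigma})}$ throughout the estimate of the terms $B_1,B_2,B_3$ rather than discarding it, and that extra care is the step your proposal does not supply.
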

\begin{proof}
Use the previous Lemma \ref{top4tocon} with $\phi=\psi=\psi(t)$, we get that 
\[
\begin{aligned}
\dot \Theta_3(t)&=\frac{i}{\hbar} \ \langle \psi(t), [ \hat{H}, \ d\Gamma(\omega^{2\sigma}) \ e^{-\delta d\Gamma(\omega^{2\sigma})}] \ \psi(t) \rangle
\\ & \lesssim  \Vert \omega^{\sigma-\frac{1}{2}}\ \chi \Vert_2 \ \Big[  \Vert \psi \Vert \ \Vert  d\Gamma(\omega^{2\sigma})^{1/2} \psi(t) \Vert_\cH  \Big]
\\ & \lesssim  \Vert \omega^{\sigma-\frac{1}{2}}\ \chi \Vert_2 \ \Big[  \Vert \psi \Vert \ \Vert  \big[ d\Gamma(\omega^{2\sigma})e^{-\delta d\Gamma(\omega^{2\sigma})}\big]^{1/2} \psi(t) \Vert_\cH  \Big]
\\ & \lesssim  \Vert \omega^{\sigma-\frac{1}{2}}\ \chi \Vert_2 \ \Big[  \Vert \psi \Vert^2+ \Vert  \big[ d\Gamma(\omega^{2\sigma})e^{-\delta d\Gamma(\omega^{2\sigma})}\big]^{1/2} \psi(t) \Vert_\cH^2  \Big]
\\ & \lesssim  c\langle \psi, \ \psi\rangle + c \ \Theta_3(t).
\end{aligned}\]
And thus the result follows by applying the Gronwall's Lemma.
\end{proof}
\begin{lemma}[Field estimate]\label{top4local2}Assume that \eqref{top4A0} and  $\omega^{\sigma-\frac{1}{2}}\chi\in L^2(\R^d,dk)$. Then, there exists constants $C_1,C_2>0$ such that for all $\psi \in D(\hat{H}_0^{1/2})\cap D(d\Gamma(\omega^{2\sigma})^{1/2})$, all $t\in \R$ and all $\hbar\in (0,1)$:
\begin{align}
  \langle e^{-i \frac{t}{\hbar}  \hat{H}}\psi, \ d\Gamma(\omega^{2\sigma})  \ e^{-i \frac{t}{\hbar}  \hat{H}}\psi\rangle \leq C_1\langle \psi,  \ (d\Gamma(\omega^{2\sigma})+1)  \ \psi \rangle \  e^{C_2|t|}.\label{top4first1bar}
\end{align}
\end{lemma}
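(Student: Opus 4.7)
The plan is to pass to the limit $\delta\searrow 0$ in the regularized inequality of Lemma \ref{top4gron}, thereby removing the cutoff $e^{-\delta d\Gamma(\omega^{2\sigma})}$ and recovering the genuine second moment of $d\Gamma(\omega^{2\sigma})$. The crucial preparatory observation is that the constants $C_1, C_2$ produced by Lemma \ref{top4gron} depend only on $\|\omega^{\sigma-\frac{1}{2}}\chi\|_{L^2}$ and are in particular independent of both $\delta>0$ and $\hbar\in(0,1)$; this uniformity is already implicit in the proof of Lemma \ref{top4tocon}, which relies only on the elementary bounds $\sup_{\delta>0}\|\delta A_\delta\|\leq 1/e$ and $(1-e^{-\delta\hbar\omega^{2\sigma}})/(\delta\hbar)\leq \omega^{2\sigma}$, where $A_\delta:=d\Gamma(\omega^{2\sigma})e^{-\delta d\Gamma(\omega^{2\sigma})}$.

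Next, I would fix $\psi\in D(\hat H_0^{1/2})\cap D(d\Gamma(\omega^{2\sigma})^{1/2})$, write $\psi(t):=e^{-i\frac{t}{\hbar}\hat H}\psi$, and invoke the spectral theorem for the self-adjoint operator $d\Gamma(\omega^{2\sigma})$. Since the scalar function $\lambda\mapsto \lambda e^{-\delta\lambda}$ is nonnegative on $[0,+\infty)$ and increases monotonically to $\lambda$ as $\delta\searrow 0$, the quantity $\langle\varphi, A_\delta\varphi\rangle$ is monotone nondecreasing in $\delta\searrow 0$ for every $\varphi\in\cH$, with limit equal to $\|d\Gamma(\omega^{2\sigma})^{1/2}\varphi\|_{\cH}^2\in[0,+\infty]$. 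In particular, choosing $\varphi=\psi$ yields the uniform bound $\langle\psi, A_\delta\psi\rangle\leq \|d\Gamma(\omega^{2\sigma})^{1/2}\psi\|_{\cH}^2<+\infty$.

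Finally, after extending Lemma \ref{top4gron} by a routine density argument from $\cC^\infty_0(\R^{dn})\otimes\Gamma_{fin}$ to the full form domain (the extension being controlled precisely by the $\delta$-uniform quadratic-form estimate of Lemma \ref{top4tocon}), I would apply it to $\psi$ to obtain the $\delta$-independent upper bound
\[
\langle\psi(t), A_\delta\psi(t)\rangle\leq C_1\bigl(\|d\Gamma(\omega^{2\sigma})^{1/2}\psi\|_{\cH}^2+\|\psi\|_{\cH}^2\bigr)e^{C_2|t|}.
\]
Since the left-hand side is monotone nondecreasing as $\delta\searrow 0$, its limit, which equals $\|d\Gamma(\omega^{2\sigma})^{1/2}\psi(t)\|_{\cH}^2$ by the spectral characterization, must be finite; hence $\psi(t)\in D(d\Gamma(\omega^{2\sigma})^{1/2})$ and passing to the limit yields \eqref{top4first1bar}. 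The main (mild) obstacle is the bookkeeping: one must carefully track the $\delta$-uniformity of all constants back through Lemma \ref{top4tocon} and justify the density extension; once this is in place, the limit is purely monotone convergence and requires no further work.
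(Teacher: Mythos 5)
Your proposal is correct and follows essentially the same route as the paper's proof: both start from Lemma~\ref{top4gron} and remove the cutoff $e^{-\delta d\Gamma(\omega^{2\sigma})}$ in the limit $\delta\searrow 0$. The only difference is that you spell out via the spectral theorem the monotone-convergence argument that the paper compresses into the phrase ``converges strongly,'' and you explicitly flag the density extension of Lemmas~\ref{top4tocon}--\ref{top4gron} needed to apply them to $\psi(t)$ and to $\psi\in D(\hat H_0^{1/2})\cap D(d\Gamma(\omega^{2\sigma})^{1/2})$; these are welcome clarifications but do not change the underlying argument.
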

\begin{proof}
It is a consequence of the previous Lemma \ref{top4gron}. Indeed, the approximation map \\  $e^{-\delta d\Gamma(\omega^{2\sigma})}  \ d\Gamma(\omega^{2\sigma})$ converges strongly to $ d\Gamma(\omega^{2\sigma})$. This leads to 
\[ \begin{aligned}
  \Vert  \big( d\Gamma(\omega^{2\sigma})\big)^{1/2} \  \psi(t) \Vert^2 & = \lim_{\delta\rightarrow 0} \Vert \big( e^{-\delta d\Gamma(\omega^{2\sigma})}  \ d\Gamma(\omega^{2\sigma})\big)^{1/2} \  \psi(t) \Vert^2
\\ &  \lesssim  \lim_{\delta \rightarrow 0} C_1 \ \Vert (e^{-\delta d\Gamma(\omega^{2\sigma})}  \ d\Gamma(\omega^{2\sigma})+1)^{1/2} \  \psi \Vert^2 \ e^{C_2|t|}\\ &= C_1 \ \Vert ( d\Gamma(\omega^{2\sigma})+1)^{1/2} \  \psi \Vert^2 \ e^{C_2|t|} .
\end{aligned} \]
And thus, we achieve the desired result.
\end{proof}
\subsubsection{Propagation of estimates uniformly for all times}\label{top4propaaa}
As a consequence of the previous estimates, the uniform bound on the initial states $(\varrho_\hbar)_{\hbar\in(0,1)}$  propagates in time.
\begin{lemma}[Propagation of  the assumptions \eqref{top4S0} and \eqref{top4S1} in time] \label{top4uniformestimate} Assume  \eqref{top4A0} and  $\omega^{{1}/{2}}\chi\in L^2(\R^d,dk)$. Let $(\varrho_\hbar)_{\hbar\in(0,1)}$  be a family of density matrices satisfying \eqref{top4S0} and \eqref{top4S1}. Then, the family of states $(\varrho_h(t))_{\hbar \in (0,1)}$ and $(\tilde{\varrho}_h(t))_{\hbar \in (0,1)}$ satisfy the same assumptions
 \eqref{top4S0} and \eqref{top4S1} uniformly for any $t\in \R$ in arbitrary compact interval.  
\end{lemma}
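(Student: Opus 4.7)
The plan is to reduce the density-matrix statement to the pure-state estimates already established in Lemmas \ref{top4local}, \ref{top4local1}, and \ref{top4local2}, via a spectral decomposition of $\varrho_\hbar$, and then to deduce the corresponding bounds for $\tilde\varrho_\hbar(t)$ from those for $\varrho_\hbar(t)$ by a short commutation argument.

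First, I would write $\varrho_\hbar=\sum_j \lambda_j^\hbar\,|\psi_j^\hbar\rangle\langle\psi_j^\hbar|$ with $\lambda_j^\hbar\ge 0$ and $\sum_j\lambda_j^\hbar=1$. For each positive self-adjoint observable $\hat A\in\{\hat q^2,\,\hat p^2,\,d\Gamma(\omega^{2\sigma})\}$, this yields
\[
{\rm Tr}[\varrho_\hbar(t)\hat A] \;=\; \sum_j\lambda_j^\hbar\,\big\langle e^{-i\frac{t}{\hbar}\hat H}\psi_j^\hbar,\ \hat A\,e^{-i\frac{t}{\hbar}\hat H}\psi_j^\hbar\big\rangle.
\]
Applying the corresponding pure-state estimate term by term (Lemma \ref{top4local} for $\hat q^2$; Lemma \ref{top4local1} for $\hat p^2$ in the semi-relativistic case, energy conservation combined with Lemma \ref{top4inequality} in the non-relativistic case where $\hat p^2$ is directly controlled by $\hat H_{01}$; Lemma \ref{top4local2} for $d\Gamma(\omega^{2\sigma})$) and summing with the weights $\lambda_j^\hbar$, I arrive at
\[
{\rm Tr}[\varrho_\hbar(t)\hat A]\;\le\; C_1\,e^{C_2|t|}\,{\rm Tr}[\varrho_\hbar\,(\hat H_0+\hat A+1)].
\]
To conclude that the right-hand side is bounded uniformly in $\hbar\in(0,1)$, I would invoke the remarks following \eqref{top4S0}--\eqref{top4S1}: \eqref{top4S1} controls ${\rm Tr}[\varrho_\hbar\hat H_{01}]$ through the elementary pointwise bound $f_j(p)\lesssim 1+|p|^2$, while \eqref{top4S0} controls ${\rm Tr}[\varrho_\hbar\hat H_{02}]$ thanks to the inequality $\omega\le m_f^{\,1-2\sigma}\,\omega^{2\sigma}$, which holds for $\sigma\in[\tfrac{1}{2},1]$ since $\omega\ge m_f$.

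For the interaction-picture density matrix $\tilde\varrho_\hbar(t)=e^{i\frac{t}{\hbar}\hat H_{02}}\varrho_\hbar(t)e^{-i\frac{t}{\hbar}\hat H_{02}}$, I would simply observe that $\hat H_{02}=d\Gamma(\omega)$ acts trivially on the particle factor $L^2(\R^{dn})$ of $\cH$, and hence commutes with both $\hat q^2$ and $\hat p^2$; as a function of $\omega$ it also commutes with $d\Gamma(\omega^{2\sigma})$. Cyclicity of the trace then yields
\[
{\rm Tr}[\tilde\varrho_\hbar(t)\hat A]\;=\;{\rm Tr}[\varrho_\hbar(t)\hat A]\qquad\text{for all }\hat A\in\{\hat q^2,\,\hat p^2,\,d\Gamma(\omega^{2\sigma})\},
\]
so the propagation for $\tilde\varrho_\hbar(t)$ is an immediate consequence of the previous step, with the same constants.

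The main technical point that will require care is the justification of the vector-by-vector application of Lemmas \ref{top4local}--\ref{top4local2}: a priori we only know that the traces ${\rm Tr}[\varrho_\hbar\hat A]$ are finite, whereas the pure-state estimates presuppose that each $\psi_j^\hbar$ lies in the domain of the unbounded operator $\hat A^{1/2}$. The cleanest way to handle this will be a standard bounded approximation, namely replacing $\hat A$ by the spectral cut-offs $\hat A_n:=\hat A\,\mathbf{1}_{[\hat A\le n]}$ (or equivalently Yosida regularizations), applying the pure-state estimates to $\hat A_n$ on arbitrary vectors, summing over $j$, and then letting $n\to\infty$ by monotone convergence, using the uniform finiteness of ${\rm Tr}[\varrho_\hbar(\hat H_0+\hat A+1)]$ established above. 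Beyond this approximation step the argument is entirely routine bookkeeping.
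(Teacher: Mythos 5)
Your proof follows essentially the same route as the paper: spectral decomposition of $\varrho_\hbar$, term-by-term application of the pure-state propagation estimates (Lemmas \ref{top4local}, \ref{top4local1}, \ref{top4local2}), re-summation with the eigenvalue weights, and then cyclicity of the trace plus the observation that $e^{i\frac{t}{\hbar}\hat H_{02}}$ commutes with $\hat q^2$, $\hat p^2$ and $d\Gamma(\omega^{2\sigma})$ to transfer the bounds from $\varrho_\hbar(t)$ to $\tilde\varrho_\hbar(t)$. You are slightly more careful than the paper on two points: you flag the domain issue behind the vector-by-vector application (which the paper silently assumes) and propose the standard spectral-cutoff/monotone-convergence fix, and you correctly note that Lemma \ref{top4local1} as stated covers only the semi-relativistic case, supplying the alternative argument ($\hat p^2\lesssim \hat H_{01}$, then energy conservation and Lemma \ref{top4inequality}) for the non-relativistic case, whereas the paper cites Lemma \ref{top4local1} uniformly. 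These are refinements of the same argument, not a different approach.
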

\begin{proof}
Before we begin the proof, remark that by spectral decomposition, we have 
\[\varrho_\hbar=\sum_{m\in \N}  \lambda_{\hbar}(m) \  \vert e^{-i \frac{t}{\hbar} \hat H} \psi_\hbar(m)\rangle \langle  e^{-i \frac{t}{\hbar} \hat H} \psi_\hbar(m) \vert ,\]
where $\lambda_\hbar (m)$ are the eigenvalues and $ \psi_{\hbar}(m)$ are their related eigenfunctions.
 Let $J$ be a compact interval. Then for all $t\in J$:
\begin{itemize}
\item We have with some $c \in \R^*_+$ the following uniform estimate\[ \begin{aligned}
{\rm Tr}\big[ \tilde \varrho_\hbar(t) \ \hat{p}^2\big]&= {\rm Tr}\Big[ e^{i \frac{t}{\hbar} \hat H_{02}} \  \varrho_\hbar(t)  \ e^{-i \frac{t}{\hbar} \hat H_{02}}\ \hat{p}^2\Big]
\\ & ={\rm Tr}\big[ \varrho_\hbar(t) \ \hat{p}^2\big]
\\ & = \sum_{m\in \N}  \lambda_{\hbar}(m) \  \Vert \hat{p} \ e^{-i \frac{t}{\hbar} \hat H} \psi_\hbar(m) \Vert^2
\\ &=  \sum_{m\in \N}  \lambda_{\hbar}(m) \ \langle e^{-i \frac{t}{\hbar} \hat H} \psi_\hbar(m), \ \hat{p}^2 \ e^{-i \frac{t}{\hbar} \hat H} \psi_\hbar(m)  \rangle
\\  & \leq C_1 \  \sum_{m\in \N}  \lambda_{\hbar}(m) \ \langle  \psi_\hbar(m), \ (\hat{H}_0+\hat{p}^2+1) \  \psi_\hbar(m)  \rangle \ e^{C_2 |t|}
\leq c,
\end{aligned}\]
 where we have used Lemma \ref{top4local1} as well as assumptions \eqref{top4S0} and \eqref{top4S1}. 
\item We have with some $c^\prime\in \R^*_+$ the following estimate
\[ \begin{aligned}
{\rm Tr}\big[ \tilde \varrho_\hbar(t) \ \hat{q}^2\big]&= {\rm Tr}\Big[ e^{i \frac{t}{\hbar} \hat H_{02}} \  \varrho_\hbar(t)  \ e^{-i \frac{t}{\hbar} \hat H_{02}}\ \hat{q}^2\Big]
\\ &= {\rm Tr}\big[ \varrho_\hbar(t) \ \hat{q}^2\big]
\\ & = \sum_{m\in \N}  \lambda_{\hbar}(m) \  \Vert \hat{q} \ e^{-i \frac{t}{\hbar} \hat H} \psi_\hbar(m) \Vert^2
\\ &=  \sum_{m\in \N}  \lambda_{\hbar}(m) \ \langle e^{-i \frac{t}{\hbar} \hat H} \psi_\hbar(m), \ \hat{q}^2 \ e^{-i \frac{t}{\hbar} \hat H} \psi_\hbar(m)  \rangle
\\  & \leq C_1 \  \sum_{m\in \N}  \lambda_{\hbar}(m) \ \langle  \psi_\hbar(m), \ (\hat{H}_0+\hat{q}^2+1) \  \psi_\hbar(m)  \rangle \ e^{C_2 |t|}
 \leq c^\prime,
\end{aligned}\]
where we have used Lemma \ref{top4local} as well as assumptions \eqref{top4S0} and \eqref{top4S1}. 
\item We have for some $c^{\prime\prime}$ the following uniform estimate
\[ \begin{aligned}
{\rm Tr}\big[\tilde \varrho_\hbar(t) \ d\Gamma(\omega^{2\sigma})\big]&= {\rm Tr}\Big[e^{i \frac{t}{\hbar} \hat H_{02}} \  \varrho_\hbar(t)  \ e^{-i \frac{t}{\hbar} \hat H_{02}}\ d\Gamma(\omega^{2\sigma})\Big]
\\ & ={\rm Tr}\big[ \varrho_\hbar(t) \ d\Gamma(\omega^{2\sigma})\big]
\\ & = \sum_{m\in \N}  \lambda_{\hbar}(m) \  \Vert d\Gamma(\omega^{2\sigma})^{1/2} \ e^{-i \frac{t}{\hbar} \hat H} \psi_\hbar(m) \Vert^2
\\ &=  \sum_{m\in \N}  \lambda_{\hbar}(m) \ \langle e^{-i \frac{t}{\hbar} \hat H} \psi_\hbar(m), \ d\Gamma(\omega^{2\sigma}) \ e^{-i \frac{t}{\hbar} H} \psi_\hbar(m)  \rangle
\\  & \leq C_1 \  \sum_{m\in \N}  \lambda_{\hbar}(m) \ \langle  \psi_\hbar(m), \ (d\Gamma(\omega^{2\sigma})+1) \  \psi_\hbar(m)  \rangle \ e^{C_2 |t|}
\leq c^{\prime \prime}.
\end{aligned}\]
where we have used Lemma \ref{top4local2} as well as assumptions \eqref{top4S0}. 
\end{itemize}
\end{proof}

\subsection{Existence of unique Wigner measure}\label{top4extraction}

In this section, we prove that for any family of states $(\varrho_h)_{\hbar \in (0,1)}$ which satisfies \eqref{top4S0} and \eqref{top4S1} and for any sequence $\hbar_n\rightarrow 0$, we can extract a subsequence $\hbar_{n_\ell}\rightarrow 0$ such that the set of Wigner measure 
\[ \cM(\tilde \varrho_{\hbar_{n_\ell}}(t), \ell \in \N)\]
is singleton. The main results are stated below: 

\begin{proposition}[Existence of unique  Wigner measure $\tilde{\mu}_t$ for all times] \label{top4uniquewignermeasure} Assume that \eqref{top4A0} and  \eqref{top4A1} hold true. Let $(\varrho_\hbar)_{\hbar\in (0,1)}$  be a family of density matrices satisfying \eqref{top4S0} and \eqref{top4S1}. For any sequence $(\hbar_n)_{n\in \N} $ in $(0,1) $ such that $\hbar_n \rightarrow 0 $, there exists a subsequence $(\hbar_{n_\ell})_{\ell \in \N}$  and a family of probability measures $(\tilde{\mu}_t)_{t\in \R}$ such that for all $t\in \R$,
\[\cM(\tilde{\varrho}_{\hbar_{n_\ell}}(t), \ell \in \N)=\lbrace \tilde\mu_t \rbrace. \]
Moreover, for every compact time interval $J$ there exists a constant $C>0$ such that for all times $t\in J$,
\be \label{top4boundd} \int_{X^0} \Vert u\Vert_{X^{\sigma}}^2 \ d\tilde{\mu}_t(u)<C.\ee
\end{proposition}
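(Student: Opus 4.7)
The plan is to combine a diagonal extraction on a dense set of times with an equicontinuity argument driven by the Duhamel formula \eqref{top4duhamelformula}. First, I would note that Lemma \ref{top4uniformestimate} propagates the bounds \eqref{top4S0} and \eqref{top4S1} uniformly in $\hbar$ on every compact interval $J \subset \R$, so for each fixed $t \in \R$ the family $(\tilde{\varrho}_{\hbar_n}(t))_{n\in\N}$ admits a non-empty set of Wigner measures by the infinite-dimensional Wigner-measure theory used in \cite{ammari2008ahp,ammari2017sima}. Enumerating a countable dense subset $\{t_k\}_{k\in\N} \subset \R$ (e.g.\ the rationals) and applying a standard diagonal argument, I extract a subsequence, still denoted $(\hbar_{n_\ell})_{\ell\in\N}$, along which $\cM(\tilde{\varrho}_{\hbar_{n_\ell}}(t_k),\,\ell \in \N) = \{\tilde{\mu}_{t_k}\}$ is a singleton for every $k$.

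The key step is then to establish equicontinuity in $t$ of the maps
\[
F_\ell(t;\xi) := \mathrm{Tr}\big[\cW(\xi)\,\tilde{\varrho}_{\hbar_{n_\ell}}(t)\big],
\]
uniformly in $\ell$, for every fixed $\xi \in X^{1/2}$. For this I would plug the commutator expansion of Lemma \ref{top4expcommutator} into the Duhamel identity \eqref{top4duhamelformula}, then bound the integrand as in \eqref{top4boundforconv} by using the operator estimates \eqref{top4estimate1}-\eqref{top4estimate2}, the Weyl-Heisenberg estimates of Lemma \ref{top4weylheisenberg}, the equivalence between $\hat{H}$ and $\hat{H}_0$ from Lemma \ref{top4inequality}, and the uniformly propagated trace bound on $\mathrm{Tr}[\tilde{\varrho}_{\hbar}(t)(\hat{H}_0+\hat{q}^2+1)]$ from Lemma \ref{top4uniformestimate}. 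The $\hbar$-prefactor in front of the remainder $\mathrm{B}_1$ makes its contribution negligible in the limit. This yields a Lipschitz-type bound
\[
|F_\ell(t;\xi) - F_\ell(s;\xi)| \leq C(\xi,J)\,|t-s|, \qquad \forall t,s \in J,
\]
with $C(\xi,J)$ independent of $\ell$. Combined with the uniform bound $|F_\ell(t;\xi)| \leq 1$ and the pointwise convergence on $\{t_k\} \cap J$ from Step~1, a standard three-$\varepsilon$ argument extends the convergence to every $t \in \R$, producing a pointwise limit $\Phi_t(\xi) := \lim_\ell F_\ell(t;\xi)$.

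To identify $\Phi_t$ as the inverse Fourier transform of a Borel probability measure $\tilde{\mu}_t \in \cP(X^0)$, I would invoke the Ammari-Nier characterization: the bound $\mathrm{Tr}[\tilde{\varrho}_{\hbar_{n_\ell}}(t)\,\hat{N}_\hbar] \leq C$, consequence of \eqref{top4S0} propagated in time, guarantees tightness in the Fock direction so that no mass is lost at infinity, and hence the limiting characteristic function comes from a probability measure. A density argument, together with continuity in $\xi$ on bounded sets, extends $\Phi_t$ from $X^{1/2}$ to $X^0$, giving $\cM(\tilde{\varrho}_{\hbar_{n_\ell}}(t),\,\ell\in\N) = \{\tilde{\mu}_t\}$. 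Finally, the moment bound \eqref{top4boundd} is obtained by a Fatou-type weak-limit inequality: passing to the limit in $\mathrm{Tr}[\tilde{\varrho}_{\hbar_{n_\ell}}(t)(\hat{q}^2+\hat{p}^2+d\Gamma(\omega^{2\sigma}))] \leq C(J)$ (via spectral truncation of the positive observable and monotone convergence, as in the argument of Lemma \ref{top4local2}) yields $\int_{X^0}\|u\|_{X^\sigma}^2\,d\tilde{\mu}_t(u) \leq C(J)$.

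The principal obstacle is the uniform-in-$\ell$ equicontinuity in Step~2. The delicate point is to ensure that, after inserting $S = (\hat{H}_0+1)^{1/2}$ and its inverse in the three slots of \eqref{top4boundforconv}, each factor lies in $\cL(\cH)$ or $\cL^1(\cH)$ with norms bounded uniformly in $\hbar \in (0,1)$ and $s \in J$; this is exactly where the propagation lemmas of Section~\ref{top4propagation} and the equivalence of Lemma \ref{top4inequality} are indispensable. Once this is in place, the diagonal extraction and the identification of the limit are largely standard.
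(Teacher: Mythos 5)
Your proposal is correct and follows essentially the same route as the paper: propagation of the trace bounds (Lemma \ref{top4uniformestimate}), diagonal extraction on a countable dense set of times, an equicontinuity estimate in $t$ derived from the Duhamel formula \eqref{top4duhamelformula} and the commutator bounds \eqref{top4estimate1}--\eqref{top4estimate2}, and finally the Ammari--Nier tightness theory to pass to the moment bound. The only organizational difference is that you run a three-$\varepsilon$ argument directly on the characteristic functionals $F_\ell(t;\xi)$ to get pointwise convergence at every $t$ and then invoke tightness to produce the limiting measure, whereas the paper first applies Prokhorov's theorem to the family $\{\tilde\mu_{t_j}\}$ to extract a weak narrow limit $\tilde\mu_t$ and then matches it with the trace limit via a two-term decomposition; both rest on exactly the same Lipschitz estimate $|\mathrm{Tr}[\cW(\xi)(\tilde\varrho_\hbar(t)-\tilde\varrho_\hbar(t_0))]|\lesssim|t-t_0|$, so the two arguments are equivalent in substance.
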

\begin{proof}
We prove the above proposition in two steps. Step 1 is dedicated to the extraction of a unique Wigner measure at fixed times. Step 2 generalizes for all times. 
To establish Step 1, it is necessary to recall the following result  from  \cite[Theorem 6.2]{ammari2008ahp}.
\begin{proposition}[The set of Wigner measure is not empty] \label{top4existwigner}
Let $({\varrho}_h)_{\hbar\in (0,1)}$ be a family of density matrices satisfying \eqref{top4S0} and \eqref{top4S1}. Then for all sequences $(\hbar_n)_{n\in \N}$ with $\displaystyle \lim_{n\rightarrow \infty} \hbar_n=0$, there exists a subsequence $(\hbar_{n_\ell})_{\ell \in \N} $ with $\displaystyle \lim_{\ell \rightarrow \infty} \hbar_{n_\ell}=0$ such that 
\[ \cM(\varrho_{\hbar_{n_\ell}}, \ell \in \N)=\lbrace \mu \rbrace.\]
Moreover, we have 
\be \label{top4integrabilityformulasfixedtimes} \int_{X^0} \Vert u \Vert_{X^0}^2 \ d\mu(u)<+\infty, \qquad \int_{X^0} \Vert u \Vert_{X^\sigma}^2 \ d\mu(u)<+\infty.\ee
\end{proposition} 
\noindent {\bf Step 1}: {\it Extraction of a unique Wigner measure at fixed times.}
\vskip 1mm
\noindent Let $s \in \R$ be a fixed time. Let $\underset{n\rightarrow \infty}{\hbar_n\rightarrow 0}$. Then, by Proposition \ref{top4existwigner}, there exists a subsequence $(\hbar_\ell)_{\ell \in \N}\equiv(\hbar_{n_\ell})_{\ell \in \N}  $ such that  $\underset{\ell \rightarrow \infty}{\hbar_{\ell}\rightarrow 0}$ and a probability measure $\tilde \mu_s\in \cP(X^0)$ such that 
\[\cM(\tilde \varrho_{\hbar_{n_\ell}}(s), \ell \in \N)=\lbrace \tilde \mu_s \rbrace. \]
Moreover,  we have the following integrability formula  
\be \label{est2} \int_{X^0} \Vert u \Vert_{X^\sigma}^2 \ d\tilde \mu_s(u)= \int_{X^0} (p^2+q^2+\Vert \alpha \Vert_{\cG^\sigma}^2) \ d\tilde{\mu}_s(p,q,\alpha)<+\infty.\ee
The above integrability formula is a consequence of the following implications proved in  \cite[Lemma 3.12]{ammari:hal-00644656} for some $C>0$:
\begin{itemize}
\item [(i)]  If ${\rm Tr} [\varrho_{\hbar} \  \hat{N}_\hbar]\leq C \Longrightarrow \forall \mu \in \cM(\varrho_\hbar;\hbar\in(0,1)), \ \int_{X^0} \Vert \alpha\Vert_{\cG^0}^2 \ d\mu \leq C;$
\item [(ii)]  If ${\rm Tr} [\varrho_{\hbar} \  d\Gamma(\omega^{2\sigma})]\leq C \Longrightarrow \forall \mu \in \cM(\varrho_\hbar;\hbar\in(0,1)), \ \int_{X^0} \Vert \alpha\Vert_{\cG^\sigma}^2 \ d\mu \leq C;$
\item [(iii)]  If ${\rm Tr} [\varrho_{\hbar} \ ( \hat{q}^2+\hat{p}^2)]\leq C \Longrightarrow \forall \mu \in \cM(\varrho_\hbar;\hbar\in(0,1)), \ \int_{X^0} (q^2 +p^2)\ d\mu \leq C.$
\end{itemize}
Now,  by the help of uniform estimate in Lemma \ref{top4uniformestimate}, we have the two  family of states $ ({\varrho}_h(t))_{\hbar \in (0,1)}$ and $ (\tilde{\varrho}_h(t))_{\hbar \in (0,1)}$ satisfy uniformly the  bounds of (i)-(ii)-(iii), one obtains then  that \eqref{est2} holds true  as a consequence of  \eqref{top4boundd} in  Proposition \ref{top4existwigner}.
\vskip 2mm

\noindent {\bf Step 2:} {\it Generalization for all times.}
\vskip 1mm
\noindent Claim first that we have for all times $t\in \R$
\be \label{top4uniquesetwignermeasure} \cM(\tilde \varrho_{\hbar_{\ell}}(t), \ell \in \N)=\lbrace \tilde \mu_t \rbrace. \ee
Let us now prove the integrability formula in Proposition \ref{top4uniquewignermeasure}. Recall that our density metrices $(\varrho_\hbar)_{\hbar \in (0,1)}$ satisfies the assumptions \eqref{top4S0} and \eqref{top4S1}. And, by Lemma \ref{top4uniformestimate}, the  family of states $(\tilde \varrho_\hbar(t))_{\hbar \in (0,1)}$ satisfies the same assumptions uniformly in any compact time interval $J$. Then, using \eqref{top4integrabilityformulasfixedtimes},  for all $t\in J$, we have \eqref{top4boundd}. we come back now to prove the claim \eqref{top4uniquesetwignermeasure}. Let $(t_j)_{j\in \N}$ be a countable dense set in $\R$.  We have by Step 1 that 
\begin{itemize}
\item for $t_1$, for $ \underset{n\rightarrow \infty}{\hbar_n\longrightarrow 0} $, there exists a subsequence $ \underset{\ell\rightarrow \infty}{\hbar_\ell \longrightarrow 0}  $ such that 
\[\cM(\tilde \varrho_{\hbar_{\ell}}(t_1), \ell \in \N)=\lbrace \tilde \mu_{t_1} \rbrace. \]
\item for $t_2$, for $(\hbar_\ell)_{\ell \in \N}$, there exists a subsequence $ (\hbar_{\phi_2(\ell)})_{{\ell \in \N}}\subset (\hbar_\ell)_{\ell \in \N}$ such that 
\[\cM(\tilde \varrho_{\hbar_{\phi_2(\ell)}}(t_2), \ell \in \N)=\lbrace \tilde \mu_{t_2} \rbrace. \]
\item for $t_3$, for $ (\hbar_{\phi_2(\ell)})_{\ell \in \N}  $, there exists a subsequence $ (\hbar_{\phi_3(\ell)})_{\ell \in \N} \subset (\hbar_{\phi_2(\ell)})_{\ell \in \N}$ such that 
\[\cM(\tilde \varrho_{\hbar_{\phi_3(\ell)}}(t_3), \ell \in \N)=\lbrace \tilde \mu_{t_3} \rbrace. \]
\item And so on,    for $t_j$, for $ (\hbar_{\phi_{j-1}(\ell)})_{\ell \in \N}  $, there exists a subsequence $ (\hbar_{\phi_j(\ell)})_{\ell \in \N} \subset (\hbar_{\phi_{j-1}(\ell)})_{\ell \in \N}$ such that 
\[\cM(\tilde \varrho_{\hbar_{\phi_j(\ell)}}(t_j), \ell \in \N)=\lbrace \tilde \mu_{t_j} \rbrace. \]
\end{itemize}
By diagonal arguments, we extract the subsequence  $ (\hbar_{\phi_\ell(\ell)})_\ell$ denoted by $(\hbar_\ell)_\ell$  for simplicity  such that for all $j\in \N$, we have 
\[\cM(\tilde \varrho_{\hbar_{\ell}}(t_j), \ell \in \N)=\lbrace \tilde \mu_{t_j} \rbrace. \]
The above formula implies that for all $\xi=(p_0,q_0,\alpha_0) \in X^0$ and $\tilde \xi=(-2\pi q_0,2\pi p_0, \sqrt{2} \pi \alpha_0)$
\be \label{top4fixedtimelimit} \lim_{\ell \rightarrow \infty} {\rm Tr} \big[\cW(\tilde{\xi})   \ \tilde \varrho_{\hbar_{\ell}}(t_j) \big]=\int_{X^0} e^{2\pi i \Re e \langle  \xi, u\rangle_{X^0}  } \ d\tilde\mu_{t_j}(u)\ee
We have 
\[ \int_{X^0} \Vert u \Vert_{X^\sigma}^2 \ d\tilde \mu_{t_j}(u) <+\infty.\]
The above formula implies that the set of Wigner measure $\{ \tilde{\mu}_{t_j}\}_{j\in \N}$ is  tight in $\cP(X^0)$. This implies that according to   Prokhorov's theorem in Lemma \ref{prokhorov}, for all $\xi \in X^0$, there exists a subsequence still denoted by $t_j$ and a  probability measure $\tilde \mu_t\in \cP(X^0)$ such that $\tilde \mu_{t_j}$ converges weakly narrowly to $\tilde \mu_t$. This gives since the function $ e^{2\pi i \Re e \langle  \xi, u\rangle_{X^0} }$ is bounded that  
\[\int_{X^0} e^{2\pi i \Re e \langle  \xi, u\rangle_{X^0}  } \ d\tilde\mu_{t_j}(u) \underset{t_j \rightarrow t}{\longrightarrow} \int_{X^0} e^{2\pi i \Re e \langle  \xi, u\rangle_{X^0}  } \ d\tilde\mu_{t}(u) \]
Now, we need to prove that 
\be \label{top4needtoprove} 
 \int_{X^0} e^{2\pi i \Re e \langle  \xi, u\rangle_{X^0}  } \ d\tilde\mu_{t_j} (u) \underset{t_j \rightarrow t}{\longrightarrow} \lim_{\ell \rightarrow \infty} {\rm Tr} \big[  \cW(\tilde{\xi}) \ \tilde \varrho_{\hbar_{\ell}}(t)  \big]   
\ee
We start by 
\be \label{firstdecom}
\begin{aligned}
 &\Big \vert \int_{X^0} e^{2\pi i \Re e \langle  \xi, u\rangle_{X^0}  } \ d\tilde\mu_{t_j}(u)- \lim_{\ell \rightarrow \infty} {\rm Tr} \big[ \cW(\tilde{\xi})\ \tilde \varrho_{\hbar_{\ell}}(t) \big]  \Big \vert 
\\ & \leq \Big \vert \int_{X^0} e^{2\pi i \Re e \langle  \xi, u\rangle_{X^0}  } \ d\tilde\mu_{t_j}(u)- \lim_{\ell \rightarrow \infty} {\rm Tr} \big[  \cW(\tilde{\xi})\ \tilde \varrho_{\hbar_{\ell}}(t_j) \big]  \Big \vert \qquad \cdots  \quad (1) 
\\ & \qquad +\Big \vert \lim_{\ell \rightarrow \infty} {\rm Tr} \big[ \cW(\tilde{\xi}) \  \tilde \varrho_{\hbar_{\ell}}(t_j) \big]- \lim_{\ell \rightarrow \infty} {\rm Tr} \big[\cW(\tilde{\xi}) \ \tilde \varrho_{\hbar_{\ell}}(t) \big]  \Big \vert \qquad \cdots  \quad (2) 
\end{aligned}
\ee
The quantity (1)  is zero by \eqref{top4fixedtimelimit}. The quantity (2) is zero by using the following estimates:
\begin{itemize}
\item [(i)] For $\xi \in X^0$, for all $t,t_0\in J$ where $J$ is compact interval, we have 
\[ \Big \vert {\rm Tr}\Big[ \cW(\xi) \ \big(\tilde \varrho_\hbar(t)-\tilde \varrho_\hbar(t_0) \big) \Big]  \Big \vert \lesssim \vert t-t_0 \vert \ \Vert \xi \Vert_{X^0} \Big[ \Vert \xi \Vert_{X^0}+\Vert \chi \Vert_{L^2}+\Vert \chi \Vert_{\cG^0} \Big]; \]
\vskip 2mm
\item [(ii)] For all $\xi_1,\xi_2\in X^0$, for all $t\in J$ where $J$ is compact interval, we have 
\[\Big \vert {\rm Tr}\Big[ \big(\cW(\xi_1)-\cW(\xi_2)\big) \ \tilde \varrho_\hbar(t) \Big]  \Big \vert \lesssim  \ \Vert \xi_1-\xi_2 \Vert_{X^0} \Big[ \Vert \xi_1 \Vert_{X^0}+\Vert \xi_2 \Vert_{X^0} +1\Big].\]
\end{itemize}
For (i), we exploit \eqref{top4duhamelformula}, we have with $S=(\hat{H}_{0}+1)^{1/2}$ 
\[\Big \vert {\rm Tr}\Big[ \cW(\xi) \ \big(\tilde \varrho_\hbar(t)-\tilde \varrho_\hbar(t_0) \big) \Big]  \Big \vert \leq |t-t_0| \  \Big \Vert ({\rm B}_0+\hbar{\rm B}_1) \ S^{-1} \Big \Vert_{\cL(\cH)} \Big \Vert S \cW(\xi) S^{-1} \Big \Vert_{\cL(\cH)}   \Big \Vert S \tilde \varrho_\hbar (t)\Big \Vert_{\cL^1(\cH)}. \]
Now using Lemma \ref{top4weylheisenberg}, the two estimates \eqref{top4estimate1} and \eqref{top4estimate2} and the two assumptions \eqref{top4S0} and \eqref{top4S1}, we get the desired result.
\vskip 1mm
\noindent For (ii), we have 
\[\Big \vert {\rm Tr}\Big[ \big(\cW(\xi_1)-\cW(\xi_2)\big) \ \tilde \varrho_\hbar(t) \Big]  \Big \vert \leq \Big \Vert ( \cW(\xi_1)-\cW(\xi_2)) (\hat N_\hbar+1)^{-1/2} \Big \Vert_{\cL(\cH)}  \underbrace{ \Big \Vert (\hat N_\hbar+1)^{1/2} \tilde \varrho_\hbar (t)\Big \Vert_{\cL^1(\cH)}}_{<\infty \ \text{by } \eqref{top4S0}}.\]
And thus, using following the same computations as in \cite[Lemma 3.1]{ammari2008ahp}, the result follows.
\end{proof}

\section{Derivation of the characteristic equations}
 Subsection \ref{top4convv} focuses on investigating the convergence of the quantum dynamics towards the evolution of the particle-field equation. In Subsection \ref{top4wignerchara}, we derive the characteristic equation that the Wigner measure satisfies. Finally, in Subsection \ref{top4liouville}, we demonstrate that this characteristic equation is equivalent to a Liouville equation.
\subsection{Convergence}\label{top4convv}
In this section, we take the classical limit $\hbar_{n_\ell} \rightarrow 0$ as $ \ell \rightarrow \infty$ in the Duhamel formula \eqref{top4duhamelformula} to derive the characteristics equation satisfied by the Wigner measure $(\tilde{\mu}_t)_{t\in \R}$.
\begin{lemma}[Convergence]\label{top4conver}
Assume \eqref{top4A0} and $ \omega^{{1}/{2}} \chi \in L^2(\R^d,dk)$. Let $(\varrho_\hbar)_{\hbar \in (0,1)}$ be a family of density matrices satisfying \eqref{top4S0} and  \eqref{top4S1}. Then for all $\xi=(z_0,\alpha_0)  \in X^{0}$ and all $t,t_0\in \R,$, the Duhamel formula \eqref{top4duhamelformula} converges to the following characteristics equation 
\be \label{characfirstform}\int_{X^0}e^{Q(\xi,u)}  \, d\tilde{\mu}_t(u)=\int_{X^0}e^{Q(\xi,u)}  \, d\tilde{\mu}_{t_0}(u)-i\int_{t_0}^{t}  \int_{X^0}b(s,\xi) \, e^{Q(\xi,u)}   \, d\tilde{\mu}_s(u)\, ds \ee
with 
\begin{equation}\label{chap4.eq.Q.phase}
Q(\xi,u)=i \Im m \langle z,z_0 \rangle+ \sqrt{2} i \Re e\langle \alpha_0,\alpha \rangle,\qquad \xi=(z_0,\alpha_0),\quad u=(z,\alpha),
\end{equation}
and where we have introduced 
\[ 
\begin{aligned}
b(s,\xi)&:=-\sum_{j=1}^n \nabla f_j({p}_j)\cdot p_{0j}
 -\sum_{j=1}^n \nabla_{q_j} V( q)\cdot q_{0j}
\\ & \quad +\sum_{j=1}^{n} \big( \langle \alpha , b_j^0(s)\rangle_{L^2} + \langle  b_j^0(s),\alpha \rangle_{L^2} \big) 
+\frac{i  }{\sqrt{2}} \big(\langle \alpha_0,g_j(s)\rangle_{L^2} -\langle  g_j(s),\alpha_0 \rangle_{L^2}\big).
\end{aligned}
\]
The function $b_j^0(s)$ is such that $b_j^0(s)\equiv b_j^0(s)({p}_j,{q}_j)$ is defined as follows
\be \label{top4bj0} b_j^0(s):=2\pi i k \cdot q_{0j} \ \frac{\chi(k)}{\sqrt{\omega(k)}} \ e^{-2\pi i k \cdot q_j+is\omega(k)} \ee
\end{lemma}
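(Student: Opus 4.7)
The plan is to pass to the limit $\ell\to\infty$ with $\hbar=\hbar_{n_\ell}\to 0$ in the Duhamel identity \eqref{top4duhamelformula}. The two boundary traces ${\rm Tr}[\cW(\xi)\tilde\varrho_\hbar(\tau)]$ for $\tau\in\{t,t_0\}$ will converge, by Proposition \ref{top4uniquewignermeasure} and Definition \ref{top4definitionwigner}, to $\int_{X^0} e^{Q(\xi,u)}\,d\tilde\mu_\tau(u)$; matching the argument $\xi=(z_0,\alpha_0)$ with the rescaled test vector used in the definition of Wigner measure is a routine computation from the Weyl relations.

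For the time integrand, I will insert the commutator expansion of Lemma \ref{top4expcommutator}:
\[
\frac{1}{\hbar}[\cW(\xi),\hat H_I(s)] \;=\; \bigl(B_0(s,\hbar,\xi)+\hbar\,B_1(s,\hbar,\xi)\bigr)\,\cW(\xi).
\]
The remainder ${\rm Tr}[\hbar B_1\,\cW(\xi)\,\tilde\varrho_\hbar(s)]$ will be dispatched, uniformly for $s\in[t_0,t]$, by the factorisation \eqref{top4boundforconv} with $S=(\hat H_0+1)^{1/2}$, combined with the operator estimate \eqref{top4estimate2}, the Weyl--Heisenberg bound of Lemma \ref{top4weylheisenberg}, the equivalence $\hat H_0\sim \hat H+a$ of Lemma \ref{top4inequality}, and the propagated trace bounds of Lemma \ref{top4uniformestimate}. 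This yields an $O(\hbar)$ contribution that drops out in the limit.

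For the main piece ${\rm Tr}[B_0(s,\hbar,\xi)\,\cW(\xi)\,\tilde\varrho_\hbar(s)]$, I will analyse the four constituents of $B_0$ in \eqref{top4mainterm} separately. At the level of symbols $\tilde g_j(s)\to g_j(s)$ strongly, and the finite-difference quotient $(\tilde g_j(s)-g_j(s))/\hbar$ converges strongly to $2\pi i k\cdot q_{0j}\,g_j(s)=b_j^0(s)$, exactly as computed in the derivation of \eqref{top4bound1}. The infinite-dimensional Wigner-measure correspondence of \cite{ammari2008ahp}, applied to the operators $\nabla f_j(\hat p_j)\cdot p_{0j}$, $\nabla V(\hat q)\cdot q_0$, $\hat a_\hbar^\sharp((\tilde g_j-g_j)/\hbar)$ and the explicit scalar pairing, will then give, at each fixed $s$,
\[
\lim_{\ell\to\infty}{\rm Tr}\bigl[B_0(s,\hbar_\ell,\xi)\,\cW(\xi)\,\tilde\varrho_{\hbar_\ell}(s)\bigr] \;=\; \int_{X^0} b(s,\xi)(u)\,e^{Q(\xi,u)}\,d\tilde\mu_s(u),
\]
with $b(s,\xi)$ read off by substituting $\hat p_j\rightsquigarrow p_j$, $\hat q_j\rightsquigarrow q_j$, $\hat a_\hbar^\sharp(\cdot)$ by the appropriate $L^2$-pairing with $\alpha$, and the finite-difference symbol by $b_j^0(s)$.

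The principal obstacle is that three of the four pieces of $B_0$ involve unbounded observables: the non-relativistic $\nabla f_j(\hat p_j)=\hat p_j/M_j$ and the two smeared creation/annihilation operators. I will handle them by truncating the classical symbols at a scale $R$ (cutting $|p|$ and $\|\alpha\|$): the truncated parts fall into the bounded-symbol Wigner setting of \cite{ammari2008ahp,ammari2017sima}, and the tails are controlled uniformly in $\hbar$ and in $s\in[t_0,t]$ by the propagated quadratic estimates of Lemma \ref{top4uniformestimate} on one side and by the integrability $\int_{X^0}\|u\|_{X^\sigma}^2\,d\tilde\mu_s\le C$ from \eqref{top4boundd} of Proposition \ref{top4uniquewignermeasure} on the other. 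Sending $R\to\infty$ after $\ell\to\infty$ yields the pointwise-in-$s$ convergence, and the same uniform-in-$(\hbar,s)$ bounds supply a dominating constant on $[t_0,t]$; Lebesgue dominated convergence then interchanges the $\hbar\to 0$ limit with the time integral and produces the characteristic equation \eqref{characfirstform}.
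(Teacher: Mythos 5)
Your overall roadmap — boundary terms via the Wigner-measure definition, killing the $\hbar B_1$ remainder with the $S^{-1}$-sandwich, symbol-level convergence $(\tilde g_j-g_j)/\hbar\to b_j^0(s)$, and dominated convergence in $s$ — matches the paper's strategy and is fine. The scalar pairing piece and the particle-observable pieces $\nabla f_j(\hat p_j)\cdot p_{0j}$, $\nabla_{q_j}V(\hat q)\cdot q_{0j}$ are also handled adequately (the paper uses \cite[Lemma B.1]{Z} for the polynomial symbol class rather than truncation, but either works under the propagated moment bounds).

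There is, however, a genuine gap in your treatment of the smeared creation/annihilation pieces. You diagnose the difficulty as ``unboundedness'' and propose to cure it by truncating $|p|$ and $\|\alpha\|$, after which the truncated parts ``fall into the bounded-symbol Wigner setting of \cite{ammari2008ahp,ammari2017sima}.'' This misses the real obstruction: the argument of $\hat a^{\sharp}_\hbar$ here is $\tilde b_j^0(s)=e^{-2\pi i k\cdot\hat q_j}\varphi_j(k)$, a symbol that depends on the \emph{particle position operator} $\hat q_j$. This is a mixed particle–field observable; it is not of the form $B\otimes\one$ or $\one\otimes\hat a^{\sharp}_\hbar(\varphi)$, and the standard cylindrical/polynomial Wigner calculus of \cite{ammari2008ahp} does not cover it directly. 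Truncating $|p|$ and $\|\alpha\|$ does nothing to decouple $\hat q_j$ from the $k$-integration inside $\hat a^{\sharp}_\hbar$, so the truncated observable is still outside the framework you invoke. The paper resolves exactly this point with the dedicated technical Lemmas \ref{chap4.corr.convergence.ca} and \ref{chap4.corr.convergence.eixk.ca} in Appendix C, which perform a finite-rank ONB expansion in $k$, $\tilde b_j^0(s)=\sum_m e_m\, \mathcal{F}[\varphi\bar e_m](\hat q_j)$, so as to separate the field-space annihilation $\hat a^{\sharp}_\hbar(e_m)$ from the particle-space multiplier $\mathcal{F}[\varphi\bar e_m](\hat q_j)$, and then controls the tail uniformly in $\hbar$ via Dini's theorem plus the propagated moment estimates. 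Without that (or an equivalent decoupling argument) your step ``the infinite-dimensional Wigner-measure correspondence of \cite{ammari2008ahp} applied to $\hat a^{\sharp}_\hbar((\tilde g_j-g_j)/\hbar)$ gives...'' is unsubstantiated, and this is precisely where the substantive work of the lemma lies.
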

\begin{proof}
From the Definition \ref{top4definitionwigner} of Wigner measure, we have 
\[\lim_{\ell \rightarrow+\infty} {\rm Tr}[\cW(\xi) \ \tilde{\varrho}_{\hbar_{n_{\ell}}}(s)]=\int_{X^0}e^{Q(\xi,u)}  \, d\tilde{\mu}_s(u). \]
We plug \eqref{top4comexp} in the Duhamel's formula \eqref{top4duhamelformula}, we get
\[\int_{X^0}e^{Q(\xi,u)}  \, d\tilde{\mu}_t(u)=\int_{X^0}e^{Q(\xi,u)}  \, d\tilde{\mu}_{t_0}(u)-i\int_{t_0}^{t} \lim_{\ell \rightarrow\infty} {\rm Tr}[{\rm B}_0(s,\hbar_{n_\ell},\xi) \ \cW(\xi) \ \tilde{\varrho}_{\hbar_{n_{\ell}}}(s)] \, ds. \]
We have to prove then 
\[\lim_{\ell \rightarrow+\infty} {\rm Tr}[{\rm B}_0(s,\hbar_{n_\ell},\xi) \ \cW(\xi) \ \tilde{\varrho}_{\hbar_{n_{\ell }}}(s)]= \int_{X^0}b(s,\xi) \, e^{Q(\xi,u)}   \, d\tilde{\mu}_s(u).\]
We start with 
\[
\begin{aligned}
&{\rm Tr}[{\rm B}_0(s,\hbar_{n_\ell},\xi) \ \cW(\xi) \ \tilde{\varrho}_{\hbar_{n_{\ell }}}(s)]
\\&=-\sum_{j=1}^n {\rm Tr}[\nabla f_j(\hat{p}_j)\cdot p_{0j} \ \cW(\xi) \ \tilde{\varrho}_{\hbar_{n_{\ell}}}(s)]
  -\sum_{j=1}^{n}{\rm Tr}[\nabla_{q_j} V({\hat q}) \cdot q_{0j}\ \cW(\xi) \ \tilde{\varrho}_{\hbar_{n_{\ell}}}(s)]
\\ & \quad  +\sum_{j=1}^{n} {\rm Tr}[\hat a_{\hbar_{n_\ell}}\big(\frac{\tilde{g}_j(s)-g_j(s)}{\hbar_{n_\ell}}\big)\ \cW(\xi) \ \tilde{\varrho}_{\hbar_{n_\ell}}(s)]+{\rm Tr}[ \hat a^*_{\hbar_{n_\ell}}\big(\frac{\tilde{g}_j(s)-g_j(s)}{{\hbar_{n_\ell}}}\big)\ \cW(\xi) \ \tilde{\varrho}_{\hbar_{n_\ell}}(s)] 
\\ & \quad +\sum_{j=1}^{n}\frac{i  }{\sqrt{2}} \Big({\rm Tr}[\langle \alpha_0,\tilde g_j(s)\rangle_{L^2(\R^d,\C)}\ \cW(\xi) \ \tilde{\varrho}_{\hbar_{n_\ell}}(s)] -{\rm Tr}[\langle  \tilde g_j(s),\alpha_0 \rangle_{L^2(\R^d,\C)}\ \cW(\xi) \ \tilde{\varrho}_{\hbar_{n_\ell}}(s)]\Big).
\end{aligned}
\]
\noindent Let us start with the first two   terms. We have 
\begin{align}
& \lim_{\ell  \rightarrow \infty}{\rm Tr}[\nabla f_j(\hat p_j )\cdot  p_{0j} \ \cW(\xi) \ \tilde{\varrho}_{\hbar_{n_{\ell }}}(s)]=
\int_{X^0} e^{Q(\xi,u)} \  \nabla f_j( p_j )\cdot  p_{0j} \, d\tilde{\mu}_s(u),
\\&\lim_{\ell  \rightarrow \infty}{\rm Tr}[\nabla_{q_j} V(\hat q )\cdot  q_{0j} \ \cW(\xi) \ \tilde{\varrho}_{\hbar_{n_{\ell }}}(s)]=
\int_{X^0} e^{Q(\xi,u)} \  \nabla_{q_j} V( q )\cdot  q_{0j}   \, d\tilde{\mu}_s(u),
\end{align}
where we have used in the above  two lines the convergent results in \cite[Lemma B.1]{Z} since $\langle p_j \rangle ^{-1}\nabla f_j( p_j )\cdot  p_{0j}\in L^\infty  $ and $\langle q \rangle^{-1}  \nabla_{q_j} V( q )\cdot  q_{0j} \in L^\infty$.
Let us deal now with the second line.
 \noindent The goal is  to prove the following limit:
\[\lim_{\ell }{\rm Tr}\Big[ \hat a^*_{\hbar_{n_\ell}}\Big( \frac{\tilde{g}_j(s)-g_j(s)}{\hbar_{n_\ell}}  \Big)\ \cW(\xi) \ \tilde{\varrho}_{\hbar_{n_{\ell}}}(s)\Big]=\int_{X^0}   e^{Q(\xi,u)}  \ \langle\alpha, b_j^0(s) \rangle_{L^2} \, d\tilde{\mu}_s(u).\]
We start  then with 
\[\begin{aligned}
&\Big \vert {\rm Tr}\Big[ \hat a^*_{\hbar_{n_\ell}}\Big( \frac{\tilde{g}_j(s)-g_j(s)}{\hbar_{n_\ell}}  \Big)\ \cW(\xi) \ \tilde{\varrho}_{\hbar_{n_{\ell}}}(s)\Big]- \int_{X^0}   e^{Q(\xi,u)}  \ \langle\alpha, b_j^0(s) \rangle_{L^2} \, d\tilde{\mu}_s(u)\Big \vert
\\&  \leq \underbrace{ \Big \vert {\rm Tr}\Big[ \hat a^*_{\hbar_{n_\ell}}\Big( \frac{\tilde{g}_j(s)-g_j(s)}{\hbar_{n_\ell}}-\tilde b_j^0(s)  \Big)\ \cW(\xi) \ \tilde{\varrho}_{\hbar_{n_{\ell}}}(s)\Big]\Big \vert }_{(1)}
\\ & +\underbrace{\Big \vert {\rm Tr}\Big[ \hat a^*_{\hbar_{n_\ell}}\Big( \tilde b_j^0(s) \Big)\ \cW(\xi) \ \tilde{\varrho}_{\hbar_{n_{\ell}}}(s)\Big]- \int_{X^0}   e^{Q(\xi,u)}  \ \langle\alpha, b_j^0(s) \rangle_{L^2} \, d\tilde{\mu}_s(u)\Big \vert}_{(2)} ,
\end{aligned}\]
where
\begin{equation}\label{chap4.eq.b0js}
\tilde b_j^0(s):=2\pi i  k \cdot q_{0j} \ \frac{\chi(k)}{\sqrt{\omega(k)}} \ e^{-2\pi i k \cdot \hat q_j+is\omega(k)} \,.
\end{equation}
For (1), let $S=(\hat{H}_0+1)^{1/2}$,  we have
\[\begin{aligned}
& \Big \vert {\rm Tr}\Big[ \hat a^*_{\hbar_{n_\ell}}\Big( \frac{\tilde{g}_j(s)-g_j(s)}{\hbar_{n_\ell}}-\tilde b_j^0(s)  \Big)\ \cW(\xi) \ \tilde{\varrho}_{\hbar_{n_{\ell}}}(s)\Big]\Big \vert
\\ & \leq \underbrace{\Big \Vert \frac{1}{\sqrt{\omega}}\Big [ \frac{\tilde{g}_j(s)-g_j(s)}{\hbar_{n_\ell}}-\tilde b_j^0(s)   \Big  ] \Big \Vert_{\cL(L^2,L^2\otimes L^2)}}_{\underset{\ell \rightarrow \infty}{\longrightarrow 0}}  \ \Vert S  \  \cW(\xi) \ S^{-1} \Vert_{\cL(\cH)} \  \Big \Vert S \  \tilde{\varrho}_\hbar(s)\Big \Vert_{\cL^1(\cH) }.
\end{aligned}\]
The above convergence follows from dominated convergence theorem and our assumptions. 
\vskip 1mm
For (2), according to the expression \eqref{chap4.eq.b0js},  we have
$$
\tilde b^0_j(s)= e^{-2\pi i k \cdot \hat q_j} \varphi_j(k),
$$
for some $\varphi_j\in L^2(\R^d)$.  Hence, applying Lemma  \ref{chap4.corr.convergence.eixk.ca} in the appendix, we conclude that $(2)$ converges to zero as $\ell\to \infty$.
\noindent Similar discussions lead  to 
\[ \lim_{\ell\rightarrow \infty}{\rm Tr}[ \hat a_{\hbar_{n_\ell}}\big( \frac{\tilde{g}_j(s)-g_j(s)}{\hbar_{n_\ell}}  \big)\ \cW(\xi) \ \tilde{\varrho}_{\hbar_{n_{\ell}}}(s)]=\int_{X^0}   e^{Q(\xi,u)}  \ \langle b_j^0(s), \alpha \rangle_{L^2} \, d\tilde{\mu}_s(u) \]
We deal now with the last line 
\[
\begin{aligned}
&{\rm Tr}[ \langle \alpha_0,\tilde g_j(s)\rangle_{L^2}\ \cW(\xi) \ \tilde{\varrho}_{\hbar_{n_{k}}}(s)]
\\& = \int_{\R^d} \overline{\alpha_0(k)}  \frac{\chi(k)}{\sqrt{\omega(k)}}  \ e^{is\omega(k) } \ e^{2\pi i k \cdot q_{0j} \hbar_{n_\ell}} \ {\rm Tr}[e^{-2\pi ik\cdot \hat q_j } \ \cW(\xi) \ \tilde{\varrho}_{\hbar_{n_{\ell}}}(s)] \  dk
\\&=\int_{\R^d} \overline{\alpha_0(k)}  \frac{\chi(k)}{\sqrt{\omega(k)}}  \ e^{is\omega(k) } \ e^{2\pi i k \cdot q_{0j} \hbar_{n_\ell}} \ {\rm Tr}[W_1\big(-2\pi k,0\big) \ \cW(\xi) \ \tilde{\varrho}_{\hbar_{n_{\ell}}}(s)] \  dk
\\&=\int_{\R^d} \overline{\alpha_0(k)}  \frac{\chi(k)}{\sqrt{\omega(k)}}  \ e^{is\omega(k) } \ e^{2\pi i k \cdot q_{0j} \hbar_{n_\ell}} \  {\rm Tr}[\cW\big(-2\pi k,0,0 \big) \ \cW(\xi) \ \tilde{\varrho}_{\hbar_{n_{\ell}}}(s)] \  dk
\\&=\int_{\R^d} \overline{\alpha_0(k)}  \frac{\chi(k)}{\sqrt{\omega(k)}}  \ e^{is\omega(k) }  \ e^{2\pi i k \cdot q_{0j} \hbar_{n_\ell}} \  {\rm Tr}[e^{\frac{-i\hbar_{n_\ell}}{2} \Im m \langle i2\pi k,q_0+ip_0 \rangle} \ \cW(p_0-2\pi k e_j,q_0,\alpha_0) \ \tilde{\varrho}_{\hbar_{n_{\ell}}}(s)] \  dk.
\end{aligned}
\]
We conclude 
\[ \begin{aligned}
&\lim_{\ell \rightarrow+\infty}{\rm Tr}[ \langle \alpha_0,\tilde g_j(s)\rangle_{L^2}\ \cW(\xi) \ \tilde{\varrho}_{\hbar_{n_{\ell}}}(s)]
\\&=\int_{\R^d} \overline{\alpha_0(k)}  \frac{\chi(k)}{\sqrt{\omega(k)}}  \ e^{is\omega(k) }\lim_{\ell \rightarrow+\infty}{\rm Tr}[ \cW(p_0-2\pi k,q_0,\alpha_0) \ \tilde{\varrho}_{\hbar_{n_{\ell}}}(s)] \  dk
\\&=\int_{\R^d} \overline{\alpha_0(k)}  \frac{\chi(k)}{\sqrt{\omega(k)}}  \ e^{is\omega(k) } \ \int_{X^0} e^{-2\pi i k\cdot q_j} e^{Q(\xi,u)} \    \, d\tilde{\mu}_s(u) dk
\end{aligned}
\]
By Fubini, we get 
\[ \begin{aligned}
&\lim_{\ell  \rightarrow+\infty}{\rm Tr}[ \langle \alpha_0,\tilde g_j(s)\rangle_{L^2}\ \cW(\xi) \ \tilde{\varrho}_{\hbar_{n_{\ell}}}(s)]
\\&=\int_{X^0}  e^{Q(\xi,u)} \int_{\R^d} \overline{\alpha_0(k)} \frac{\chi(k)}{\sqrt{\omega(k)}}   \ e^{-2\pi i k \cdot q_j+is\omega(k) } \ dk    \, d\tilde{\mu}_s(u) 
\\ &= \int_{X^0}  e^{Q(\xi,u)}  \  \langle \alpha_0,  g_j(s) \rangle_{L^2}  \, d\tilde{\mu}_s(u) .
\end{aligned}
\]
Similar discussions also work to prove 
\[\lim_{\ell  \rightarrow+\infty}{\rm Tr}[ \langle \tilde g_j(s), \alpha_0 \rangle_{L^2}\ \cW(\xi) \ \tilde{\varrho}_{\hbar_{n_{\ell}}}(s)]=\int_{X^0}   \langle g_j(s), \alpha_0 \rangle_{L^2}  \  e^{Q(\xi,u)}   \, d\tilde{\mu}_s(u) . \]

\end{proof}

\subsection{The characteristic equation}\label{top4wignerchara}
Below, we derive the final form of the  time-evolution equation satisfied by the Wigner measure $\tilde{\mu}_t$.
\begin{corollary}[Characteristic  equation] \label{top4characcc}
Assume \eqref{top4A0} and  $ \omega^{{1}/{2}} \chi \in L^2(\R^d,dk)$.
 Then, the charactristic equation \eqref{characfirstform}  can be further reduced to the following form
\be \label{top4chara} 
\begin{aligned}
\int_{X^0} e^{2i \pi \Re e\langle y,u \rangle_{X^{\sigma}}}  \ d\tilde{\mu}_t(u)&=\int_{X^0} e^{2i \pi \Re e\langle y,u \rangle_{X^{\sigma}}}  \ d\tilde{\mu}_{t_0}(u)
\\ &\quad +2\pi i \int_{t_0}^{t}\int_{X^0}  e^{2i \pi \Re e\langle y,u \rangle_{X^{\sigma}}} \Re e \langle v(s,u), y \rangle_{X^{\sigma}}  \ d\tilde{\mu}_s(u) \ ds,
\end{aligned}
\ee
for all $t, t_0 \in \R$ and $y \in X^{\sigma}$.
\end{corollary}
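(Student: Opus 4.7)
The plan is to specialize the characteristic identity~\eqref{characfirstform} of Lemma~\ref{top4conver} via the substitution
\[
(p_0,q_0,\alpha_0)=\big(2\pi q_y,\ -2\pi p_y,\ \sqrt{2}\pi\,\omega^{2\sigma}\alpha_y\big)
\]
for $y=(p_y,q_y,\alpha_y)\in X^\sigma$, then verify by direct computation that both the phase $Q(\xi,u)$ and the integrand $b(s,\xi)$ reproduce the right-hand side of~\eqref{top4chara}, and finally extend from a dense subset to all $y\in X^\sigma$ by continuity. Since the lemma requires $\xi\in X^0$, the substitution is admissible precisely when $\alpha_y\in\cG^{2\sigma}$ (so that $\alpha_0=\sqrt 2\,\pi\,\omega^{2\sigma}\alpha_y\in L^2$), and this class is dense in $\cG^\sigma$ because $\omega\geq m_f>0$.

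Using $\Im m\langle z,z_0\rangle_{\C^{dn}}=q\cdot p_0-p\cdot q_0$ and $\langle\alpha_y,\alpha\rangle_{\cG^\sigma}=\int\overline{\alpha_y}\,\omega^{2\sigma}\alpha\,dk$, the substitution immediately yields $Q(\xi,u)=2\pi i\,\Re e\langle y,u\rangle_{X^\sigma}$. The target identity for the integrand is then $b(s,\xi)=-2\pi\,\Re e\langle v(s,u),y\rangle_{X^\sigma}$, which combined with the $-i$ prefactor in~\eqref{characfirstform} produces the $+2\pi i$ in~\eqref{top4chara}. I would check this componentwise from the explicit form of $v$ in~\eqref{top4vectorfieldv}: $v_{q_j}=\nabla f_j(p_j)$, $v_{p_j}=-\nabla_{q_j}V(q)-\nabla_{q_j}I_j(q,e^{-is\omega}\alpha)$, and $v_\alpha=-i\sum_j g_j(s)$. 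The two $(p_0,q_0)$-terms in $b$ reproduce $-2\pi\,v_{q_j}\cdot q_{y,j}$ and the $V$-part of $-2\pi\,v_{p_j}\cdot p_{y,j}$ under the substitution. With $b_j^0(s)=2\pi i(k\cdot q_{0j})g_j(s)=-4\pi^2 i(k\cdot p_{y,j})g_j(s)$, the pair $\langle\alpha,b_j^0(s)\rangle_{L^2}+\langle b_j^0(s),\alpha\rangle_{L^2}=2\,\Re e\langle\alpha,b_j^0(s)\rangle_{L^2}$ collapses to $8\pi^2\,\Im m\!\int(k\cdot p_{y,j})\,\overline{\alpha}\,g_j(s)\,dk$; using $\overline{g_j(s)}(k)=\frac{\chi(k)}{\sqrt{\omega(k)}}e^{2\pi ik\cdot q_j-is\omega(k)}$ together with~\eqref{top4defnablaI}, this equals $2\pi\,\nabla_{q_j}I_j(q,e^{-is\omega}\alpha)\cdot p_{y,j}$ and supplies the missing $\nabla_{q_j}I_j$ contribution to $-2\pi\,v_{p_j}\cdot p_{y,j}$. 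Finally $\frac{i}{\sqrt 2}\big(\langle\alpha_0,g_j(s)\rangle-\langle g_j(s),\alpha_0\rangle\big)=-\sqrt 2\,\Im m\langle\alpha_0,g_j(s)\rangle_{L^2}$, which after substituting $\alpha_0=\sqrt 2\,\pi\,\omega^{2\sigma}\alpha_y$ becomes $-2\pi\,\Im m\langle\alpha_y,g_j(s)\rangle_{\cG^\sigma}$ and matches $-2\pi\,\Re e\langle v_\alpha,\alpha_y\rangle_{\cG^\sigma}$, since $v_\alpha=-i\sum_j g_j(s)$.

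The extension from $y$ with $\alpha_y\in\cG^{2\sigma}$ to general $y\in X^\sigma$ will rely on continuity in $y\in X^\sigma$ of both sides of~\eqref{top4chara}. For the left-hand side, $|e^{2\pi i\,\Re e\langle y,u\rangle_{X^\sigma}}|=1$, the elementary bound $|e^{i\theta_1}-e^{i\theta_2}|\leq|\theta_1-\theta_2|$, and the concentration estimate~\eqref{top4boundd} from Proposition~\ref{top4uniquewignermeasure} give continuity by dominated convergence. For the right-hand side, the same concentration bound together with $|\Re e\langle v(s,u),y\rangle_{X^\sigma}|\leq\|v(s,u)\|_{X^\sigma}\|y\|_{X^\sigma}$ and the bounded-on-bounded-subsets property of $v(s,\cdot)$ from Lemma~\ref{top4continuityofv}, applied uniformly on the compact interval $[t_0,t]$, furnish continuity and justify passage to the limit under the time and measure integrals.

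The main obstacle is the algebraic matching in the second paragraph: all the $\alpha$-dependent terms of $b$ must combine with the weight $\omega^{2\sigma}$ introduced by the substitution to reconstruct separately the $\nabla_{q_j}I_j$ piece of $v_{p_j}$ and the $v_\alpha$ piece, while keeping track of complex conjugates and of the free-field phase $e^{is\omega}$ hidden in $g_j(s)$ so as to produce signs consistent with~\eqref{top4defnablaI}.
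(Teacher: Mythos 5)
Your proposal is correct and takes essentially the same route as the paper: define $\tilde\xi=(z_0/(2i\pi),\alpha_0/(\sqrt 2\pi))$ (equivalently, your substitution $(p_0,q_0,\alpha_0)=(2\pi q_y,-2\pi p_y,\sqrt 2\pi\omega^{2\sigma}\alpha_y)$), verify that $Q(\xi,u)=2\pi i\,\Re e\langle\tilde\xi,u\rangle_{X^0}$ and $b(s,\xi)=-2\pi\,\Re e\langle v(s,u),\tilde\xi\rangle_{X^0}$ componentwise, identify $\langle\tilde\xi,u\rangle_{X^0}=\langle y,u\rangle_{X^\sigma}$ for $y\in X^{2\sigma}$, and then extend from $y\in X^{2\sigma}$ to $y\in X^\sigma$ by dominated convergence using the integrability bound \eqref{top4integrability}. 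Your write-up simply spells out the componentwise algebra and the domination argument in more detail than the paper does.
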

\begin{proof}
Define
\[ \tilde{\xi}:=(\frac{z_0}{2i\pi},\frac{\alpha_0}{\sqrt{2}\pi})\in X^0, \quad with \quad {\xi}=({z_0},{\alpha_0}) \in X^0.\]
We claim that    
\[b(s,\xi)=-2\pi \Re e\langle v(s,u),\tilde{\xi} \rangle_{X^0}. 
\]
Indeed, we first remark that 
\[
\begin{aligned}
-2\pi \Re e\langle v(s,u),\tilde{\xi} \rangle_{X^0}
 & =\underbrace{ -2\pi \Re e\langle (v(s,u))_z,\frac{z_0}{2i\pi} \rangle}_{(1)} \underbrace{-2\pi \Re e\langle (v(s,u))_\alpha,\frac{\alpha_0}{\sqrt{2}\pi} \rangle_{L^2}}_{(2)}.
\end{aligned}
\]
For (1), we have 
\[\begin{aligned}
&  -2\pi \Re e\langle (v(s,u))_z,\frac{z_0}{2i\pi} \rangle
 =- \Im m \langle (v(s,u))_z,{z_0} \rangle
\\ & =-\sum_{j=1}^{n}\Big( (v(s,u))_{q_j}\cdot p_{0j} - (v(s,u))_{p_j}\cdot q_{0j}  \Big)
\\ & =-\sum_{j=1}^n \nabla f_j({p}_j)\cdot p_{0j}-\sum_{j=1}^{n} \nabla_{q_j}V(q) \cdot q_{0j}
 +  \sum_{j=1}^{n} \big( \langle \alpha , b_j^0(s)\rangle_{L^2} + \langle b_j^0(s),\alpha \rangle_{L^2} \big) ,
\end{aligned}
\]
where recall that $v(s,u)$ is as in \eqref{top4vectorfieldv} and $b_j^0(s)$ is as in \eqref{top4bj0}.
\vskip 1mm
\noindent For (2), we have 
\[\begin{aligned}
&
-2\pi \Re e\langle (v(s,u))_\alpha,\frac{\alpha_0}{\sqrt{2}\pi} \rangle_{L^2}
=-\sqrt{2} \Re e\langle (v(s,u))_\alpha,{\alpha_0} \rangle_{L^2}
\\ &=-\sqrt{2} \Re e\langle -i\sum_{j=1}^{n}  \frac{\chi(k)}{\sqrt{\omega(k)}}e^{-2\pi i k\cdot q_j+is\omega(k)},{\alpha_0} \rangle_{L^2}
\\ & =\sqrt{2}  \sum_{j=1}^{n} \Im m \langle g_j(s),\alpha_0 \rangle_{L^2}.
\end{aligned}
\]
where $g_j(s)$ is as in \eqref{top4gjs}. On the other hand, we have 
\[\begin{aligned}
&\frac{i  }{\sqrt{2}} \Big(\langle \alpha_0,g_j(s)\rangle_{L^2(\R^d,\C)} -\langle  g_j(s),\alpha_0 \rangle_{L^2(\R^d,\C)}\Big) 
\\&=-\sqrt{2} \Im m \langle \alpha_0,g_j(s)\rangle_{L^2(\R^d,\C)} 
 =\sqrt{2} \Im m \langle g_j(s),\alpha_0\rangle_{L^2(\R^d,\C)}.
\end{aligned}
\]
And, thus combining the above arguments, we prove the claimed results.
The Characteristic equation \eqref{characfirstform} becomes then 
\be \label{intemediateform}\int_{X^0}e^{Q(\xi,u)}  \, d\tilde{\mu}_t(u)=\int_{X^0}e^{Q(\xi,u)}  \, d\tilde{\mu}_{t_0}(u)+2\pi i\int_{t_0}^{t}  \int_{X^0} \, e^{Q(\xi,u)} \ \Re e\langle v(s,u),\tilde{\xi} \rangle_{X^0}   \, d\tilde{\mu}_s(u)\, ds. \ee
We have, with $Q(\xi,u)$ as in \eqref{chap4.eq.Q.phase},  that 
\be \label{QandR}Q(\xi,u)= 2\pi i \Re e \langle \tilde{ \xi}, u \rangle_{X^0}. \ee
We have also  for all $y=(p,q,\alpha)\in X^{2\sigma}$ and all $ \tilde{\xi}=(p,q,\omega^{2\sigma} \alpha)\in X^0$ that 
\be \label{helpful2}\begin{aligned}
&  \Re e \langle y,u  \rangle_{X^{\sigma}}=\Re e \langle \tilde{\xi}, u \rangle_{X^0},
\\&\Re e \langle v(s,u), y \rangle_{X^{\sigma}}=\Re e \langle v(s,u),\tilde{\xi} \rangle_{X^0}
\end{aligned}\ee
By this way, plugging \eqref{QandR}-\eqref{helpful2} in \eqref{intemediateform}  gives that \eqref{top4chara} is valid for all $y \in X^{2\sigma}$. The latter could be extended to all $y\in X^{\sigma}$ by dominated convergence theorem and the bound \eqref{top4integrability}.
\end{proof}

\subsection{The Liouville equation}\label{top4liouville}
In this part, we relate the characteristic equation \eqref{top4chara} satisfied by the set of Wigner measures $(\tilde{\mu}_t)_{t\in \R}$ to a special Liouville equation. To do that, we need to have some integrability condition of the vector field $v$ of \eqref{top4ivp} with respect to this Wigner measure and some regularities of the latter measure.

\begin{lemma}[Integrability  of the vector field $v$]\label{top4integr} Assume \eqref{top4A0} and  \eqref{top4A1} hold true.
 Then, there exists a constant $C>0$ such that for all $u= (p,q,\alpha) \in X^{\sigma}$,
\be \label{top4boundv}
\begin{aligned}
\Vert v(t,u) \Vert_{X^{\sigma}} \leq C \big( \Vert u \Vert_{X^0}^2 +1 \big).
\end{aligned}
 \ee
Moreover, for any bounded open interval $I$,
\be\label{top4integrability}
\int_I \int_{X^{\sigma}}\Vert v(t,u) \Vert_{X^{\sigma}}  \ d\tilde{\mu}_t (u) \ dt<+\infty.  
\ee
\end{lemma}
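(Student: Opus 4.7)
The strategy is to reduce the pointwise bound on $v(t,u)$ to a bound on $\cN$ evaluated at the free-flowed point $\Phi^{\it f}_t(u)$, and then to use Lemma \ref{top4lemma2.1} together with the assumptions \eqref{top4A0}–\eqref{top4A1}. Finally, the measure-theoretic integrability will follow from the moment estimate \eqref{top4boundd} of Proposition \ref{top4uniquewignermeasure}.

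The first observation is that the free-field flow $\Phi^{\it f}_t(p,q,\alpha)=(p,q,e^{-it\omega(k)}\alpha)$ is an isometry on every $X^{\sigma}$, $\sigma\geq 0$, because multiplication by $e^{-it\omega(k)}$ preserves the weighted $L^2$-norm $\|\cdot\|_{\cG^{\sigma}}$. By the definition \eqref{top4vectorfield}, one therefore has $\|v(t,u)\|_{X^{\sigma}}=\|\cN(\Phi^{\it f}_t(u))\|_{X^{\sigma}}$, and the $X^{0}$-norm is preserved: $\|\Phi^{\it f}_t(u)\|_{X^{0}}=\|u\|_{X^{0}}$. So it suffices to prove, for all $w=(p,q,\alpha)\in X^{\sigma}$,
\[
\|\cN(w)\|_{X^{\sigma}}\leq C\bigl(1+\|w\|_{X^{0}}^{2}\bigr).
\]

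I would bound each of the three components appearing in \eqref{top4defF} separately. For the $p_j$-component, the first piece $\nabla_{q_j}V(q)$ is bounded by $\|\nabla V\|_{L^{\infty}}$ thanks to \eqref{top4A0}, and the interaction piece $\nabla_{q_j}I_j(q,\alpha)$ is controlled by Lemma \ref{top4lemma2.1}(i), namely $4\pi\|\omega^{1/2}\chi\|_{L^2}\|\alpha\|_{L^2}$; note that $\|\omega^{1/2}\chi\|_{L^2}<\infty$ follows from \eqref{top4A1} since $\tfrac12\leq \tfrac32-\sigma$ when $\sigma\leq 1$. For the $q_j$-component $\nabla f_j(p_j)$: in the semi-relativistic case it is bounded by $1$, whereas in the non-relativistic case it equals $p_j/M_j$, which is absorbed into $|p_j|^2+1$. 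For the $\alpha$-component, observe that it depends only on $q$, and
\[
\|(\cN(w))_{\alpha}\|_{\cG^{\sigma}}\leq n\,\bigl\|\omega^{\sigma-\frac12}\chi\bigr\|_{L^2},
\]
which is finite for $\sigma\in[\tfrac12,1]$ by assumption \eqref{top4A1} (here lies the crucial gain: the $\alpha$-component of $\cN$ does not depend on $\alpha$, so its $\cG^{\sigma}$-norm is controlled purely by the form factor). Combining these three estimates with the elementary inequality $a\leq \tfrac12(1+a^{2})$ yields the claimed pointwise bound.

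For the integrability statement, I would just integrate against $\tilde\mu_{t}$ and apply Fubini–Tonelli: for any bounded open interval $I$,
\[
\int_{I}\!\!\int_{X^{\sigma}}\|v(t,u)\|_{X^{\sigma}}\,d\tilde\mu_{t}(u)\,dt
\leq C\int_{I}\Bigl(1+\int_{X^{0}}\|u\|_{X^{0}}^{2}\,d\tilde\mu_{t}(u)\Bigr)\,dt.
\]
Because $\|u\|_{X^{0}}\leq C\|u\|_{X^{\sigma}}$ (as $\|\alpha\|_{L^{2}}\leq m_{f}^{-\sigma}\|\alpha\|_{\cG^{\sigma}}$), Proposition \ref{top4uniquewignermeasure}—specifically the uniform moment bound \eqref{top4boundd} on any compact time interval—furnishes a constant depending only on $I$ that dominates $\int_{X^{0}}\|u\|_{X^{0}}^{2}\,d\tilde\mu_{t}(u)$ for every $t\in\overline{I}$. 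The remaining time integral over the bounded interval $I$ is then finite, finishing the proof.

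This argument is essentially a routine bookkeeping. The only technical point to watch is the non-relativistic case, where the unboundedness of $\nabla f_j$ forces the quadratic (rather than linear) growth on the right-hand side, and the observation that the $\alpha$-component of $\cN(w)$ automatically lives in $\cG^{\sigma}$ under \eqref{top4A1}—no regularity of $\alpha$ is needed beyond $L^{2}$.
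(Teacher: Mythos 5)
Your proof is correct and takes essentially the same route as the paper: the paper's proof simply refers back to the component-wise bounds established in Proposition \ref{top4contbddF}, distinguishes the semi-relativistic case (where $\nabla f_j$ is bounded) from the non-relativistic case (where it grows like $|p_j|$), and then invokes the moment estimate \eqref{top4boundd} for integrability. You spell out the intermediate step that $\Phi^{\it f}_t$ is an isometry on $X^{\sigma}$ and $X^{0}$—which the paper leaves implicit—but the decomposition of $\cN$, the use of Lemma \ref{top4lemma2.1}(i) and \eqref{top4A1}, and the final Fubini/moment-bound argument are the same.
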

\begin{proof}
The non-autonomous vector field $v$ is defined in terms of the nonlinearity $\cN$ as indicated in \eqref{top4vectorfieldv}.  Then it is not hard to  see by looking at the proof of Proposition \ref{top4contbddF}  that 
\begin{itemize}
\item in the semi-relativistic case, since the function $\nabla f_j(\hat p _j)$ is bounded, we get 
\be \label{top4boundv1}
\begin{aligned}
\Vert v(t,u) \Vert_{X^{\sigma}}\leq C \big( \Vert \alpha \Vert_{L^2}^2 +1 \big).
\end{aligned}
 \ee
\item in the non-relativistic case, we get 
\be \label{top4boundv2}
\begin{aligned}
\Vert v(t,u) \Vert_{X^{\sigma}}\leq C \big( \Vert \alpha \Vert_{L^2}^2+|p|^2 +1 \big).
\end{aligned}
 \ee
\end{itemize}
Thus, both inequalities \eqref{top4boundv1} and \eqref{top4boundv2} lead to \eqref{top4boundv}.
Now,  the integrability condition \eqref{top4integrability} is a consequence of \eqref{top4boundd} in Proposition \ref{top4uniquewignermeasure}.
\end{proof}
\noindent We establish now some regularity of the Wigner measures ${(\tilde{\mu}_t)}_{t\in \R}$ with respect to time.
\begin{lemma}[Regular properties of the Wigner Measure $\tilde{\mu}_t$]\label{top4regularprp}
The Wigner measures ${(\tilde{\mu}_t)}_{t\in \R}$ extracted in Proposition \ref{top4uniquewignermeasure}  satisfy
\begin{itemize}
\item [(i)]  ${\tilde{\mu}_t}$ concentrates on $X^\sigma$ i.e. $ {\tilde{\mu}_t}(X^\sigma)=1$;
\item [(ii)] $\R \ni t \longmapsto \tilde{\mu}_t \in \cP(X^\sigma)$ is weakly narrowly continuous.
\end{itemize}
\end{lemma}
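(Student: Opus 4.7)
For assertion (i), I would simply invoke the second-moment bound established in Proposition~\ref{top4uniquewignermeasure}: for every $t$ in a compact time interval one has
\[
\int_{X^0} \Vert u\Vert_{X^\sigma}^2\,d\tilde{\mu}_t(u)\le C<+\infty.
\]
Since $X^\sigma=\{u\in X^0 : \Vert u\Vert_{X^\sigma}<+\infty\}$ is a Borel subset of $X^0$, the finiteness of the above integral forces $\tilde{\mu}_t\bigl(X^0\setminus X^\sigma\bigr)=0$, whence $\tilde{\mu}_t(X^\sigma)=1$.

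For assertion (ii), my strategy is to combine the characteristic equation from Corollary~\ref{top4characcc} with the integrability bound from Lemma~\ref{top4integr}, and then upgrade pointwise continuity of characteristic functionals to narrow continuity via the moment bound. Define the characteristic functional
\[
\widehat{\tilde{\mu}_t}(y):=\int_{X^0} e^{2\pi i\,\Re e\langle y,u\rangle_{X^\sigma}}\,d\tilde{\mu}_t(u),\qquad y\in X^\sigma.
\]
Equation \eqref{top4chara} reads $\widehat{\tilde{\mu}_t}(y)-\widehat{\tilde{\mu}_{t_0}}(y)=2\pi i\int_{t_0}^t R(s,y)\,ds$, with
\[
R(s,y)=\int_{X^0} e^{2\pi i\,\Re e\langle y,u\rangle_{X^\sigma}}\,\Re e\langle v(s,u),y\rangle_{X^\sigma}\,d\tilde{\mu}_s(u).
\]
By Cauchy--Schwarz and Lemma~\ref{top4integr}, $|R(s,y)|\le\Vert y\Vert_{X^\sigma}\int_{X^\sigma}\Vert v(s,u)\Vert_{X^\sigma}d\tilde{\mu}_s(u)$ belongs to $L^1_{\mathrm{loc}}(\R)$. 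Hence $t\mapsto\widehat{\tilde{\mu}_t}(y)$ is absolutely continuous, and in particular continuous, for every fixed $y\in X^\sigma$.

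To conclude narrow continuity in $\cP(X^\sigma)$, I would take an arbitrary sequence $t_n\to t_0$ and use the uniform bound from Proposition~\ref{top4uniquewignermeasure} to ensure that $\{\tilde{\mu}_{t_n}\}$ is tight (Prokhorov's theorem applied with the second moment as a tightness functional) and hence narrowly relatively compact. Every narrow limit point $\nu$ must satisfy $\widehat\nu(y)=\widehat{\tilde{\mu}_{t_0}}(y)$ for all $y\in X^\sigma$ by the pointwise continuity established above, and, since characteristic functionals on a separable Hilbert space determine the measure uniquely (Bochner--Minlos-type uniqueness), $\nu=\tilde{\mu}_{t_0}$; thus the whole sequence converges narrowly to $\tilde{\mu}_{t_0}$.

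The delicate step is the last one: justifying that the joint data of pointwise characteristic-function continuity and a uniform $X^\sigma$-moment bound actually yields narrow convergence in $\cP(X^\sigma)$ rather than merely in $\cP(X^0)$. In infinite dimension this typically requires passing through cylindrical approximations, exploiting that the family $\{\tilde{\mu}_s : s\in J\}$ sits on $X^\sigma$-balls (which are weakly compact) together with a uniform integrability argument provided by the $X^\sigma$-second moment estimate; this is the main obstacle, and I expect it to be handled by an argument of Ammari--Nier type already employed in the extraction in Proposition~\ref{top4uniquewignermeasure}.
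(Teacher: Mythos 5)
Your argument for~(i) is exactly the paper's: the uniform $X^\sigma$-second-moment bound \eqref{top4boundd} from Proposition~\ref{top4uniquewignermeasure} combined with a Markov/Chebyshev inequality forces $\tilde\mu_t(X^0\setminus X^\sigma)=0$. For~(ii) the paper does not give an internal proof but cites \cite[Lemma 5.5]{Z} ``using Prokhorov's theorem'', and your outline --- absolute continuity of $t\mapsto\widehat{\tilde\mu_t}(y)$ obtained from the characteristic equation \eqref{top4chara} together with the $L^1_{\mathrm{loc}}$ bound in Lemma~\ref{top4integr}, tightness via Prokhorov from the uniform second moment, and identification of every narrow limit point through Bochner-type uniqueness of the characteristic functional on a separable Hilbert space --- is the skeleton of precisely that cited argument, so the approach is essentially the same. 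The final subtlety you flag (that the weak topology on $X^\sigma$-balls must be used so that the second moment yields genuine tightness, and that narrow convergence is then taken against weakly continuous test functions, which is what the paper's ``weakly narrowly'' qualifier signals) is indeed the crux; it is not filled in here, but the paper also leaves it to \cite{Z}, so there is no gap relative to the source.
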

\begin{proof}
For the first assertion (i), we have from Proposition \ref{top4uniquewignermeasure}  that 
 \[\int_{X^0} \Vert u\Vert_{X^{\sigma}}^2 \ d\tilde{\mu}_t(u)<C.\]
 And, from the Markov's inequality, we have
 \[\tilde{\mu}_t(\{  u\in X^0:  \  \Vert u \Vert_{X^\sigma} \geq \eps  \}) \leq \frac{1}{\eps^2} \ \tilde{\mu}_t(  \Vert u \Vert_{X^\sigma}).\] 
 Let $\eps \rightarrow \infty$, we get 
 \[ \tilde{\mu}_t(\{  u\in X^0;  \  u \notin X^\sigma  \}) =0.\]
 Hence, we get that the measure $\tilde\mu_t$ is concentrated in $X^\sigma$.
 The second assertion (ii) is proved in a  similar fashion as in \cite[Lemma 5.5]{Z} using Prokhorov's Theorem. 
\end{proof}

\noindent In the coming discussions, for more details, we refer the reader to Appendix A in \cite{Z}. 
Let I be an open bounded interval. Define the space of smooth cylindrical functions on $I \times X^\sigma$, denoted by $\cC_{0,{\it cyl}}^\infty(I\times X^\sigma)$, as follows 
\[ \begin{aligned}
\cC_{0,{\it cyl}}^{\infty}(I\times X^\sigma)&:=\Big \{ \phi:I\times X^{\sigma} \rightarrow \R; \ \phi(t,u)=\psi(t, \pi(u)), \ \psi \in \cC_0^\infty(I \times \R^{d^{\prime}}),
\\ & \qquad    \ \pi:X^{\sigma} \rightarrow \R^{d^{\prime}} , \  {d^{\prime}}\in \N   \Big\},
\end{aligned}\]
where $\pi:X^\sigma \rightarrow \R^{d^\prime}$ is a projection of the form $ \pi:u\rightarrow \pi(u)=(\Re e\langle u,e_1 \rangle_{X^\sigma},\cdots, \Re e\langle u,e_{d^\prime}\rangle_{X^\sigma}),$ with $(e_1,\cdots,e_{d^\prime})$ is an arbitrary orthonormal family of $X^\sigma$.

\begin{proposition}\label{top4satsifyliou}
The family of Wigner measures $(\tilde{\mu}_t)_{t\in \R}$ defined in Proposition \ref{top4uniquewignermeasure} is a weakly narrowly continuous solution to the following Liouville equation
\begin{equation} \label{top4le}\int_{I}\int_{X^\sigma} \{  \partial_t \phi (t,u) +\Re e\langle v(t,u),\nabla \phi(t,u) \rangle_{X^\sigma}\}d \tilde{\mu}_t(u) \ dt=0,  \tag{LE}
\end{equation} 
for any bounded open interval $I$ containing the origin with $\phi\in \cC_{0,cyl}^\infty(I\times X^\sigma).$
\end{proposition}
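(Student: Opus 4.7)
\medskip
\noindent\textbf{Proof proposal.} The strategy is to reduce a general smooth cylindrical test function to a superposition of exponential characters $u\mapsto e^{2\pi i\Re e\langle y,u\rangle_{X^\sigma}}$ via Fourier inversion on $\R^{d'}$, then invoke the characteristic equation \eqref{top4chara} of Corollary \ref{top4characcc} for each such character, and finally reassemble by Fubini.

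Fix $\phi\in\cC_{0,{\it cyl}}^\infty(I\times X^\sigma)$ of the form $\phi(t,u)=\psi(t,\pi(u))$ with $\pi(u)=(\Re e\langle u,e_1\rangle_{X^\sigma},\dots,\Re e\langle u,e_{d'}\rangle_{X^\sigma})$ for an orthonormal family $(e_j)_{j=1}^{d'}$ of $X^\sigma$ and $\psi\in\cC_0^\infty(I\times\R^{d'})$. Write the inverse Fourier representation $\psi(t,x)=\int_{\R^{d'}}\widehat{\psi}(t,y)\,e^{2\pi i x\cdot y}\,dy$, where $\widehat{\psi}(t,\cdot)$ is Schwartz uniformly in $t$ and vanishes outside $I$. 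Setting $Y(y):=\sum_{j=1}^{d'}y_j\,e_j\in X^\sigma$ we have $\pi(u)\cdot y=\Re e\langle Y(y),u\rangle_{X^\sigma}$, so
\[
\phi(t,u)=\int_{\R^{d'}}\widehat{\psi}(t,y)\,e^{2\pi i\Re e\langle Y(y),u\rangle_{X^\sigma}}\,dy,\qquad \nabla\phi(t,u)=\sum_{j=1}^{d'}\partial_{x_j}\psi(t,\pi(u))\,e_j.
\]

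I would then apply the characteristic equation to the exponential in $y$. Writing $F_t(y):=\int_{X^\sigma}e^{2\pi i\Re e\langle Y(y),u\rangle_{X^\sigma}}d\tilde{\mu}_t(u)$, Corollary \ref{top4characcc} yields that $t\mapsto F_t(y)$ is absolutely continuous with
\[
\partial_t F_t(y)=2\pi i\int_{X^\sigma}e^{2\pi i\Re e\langle Y(y),u\rangle_{X^\sigma}}\Re e\langle v(t,u),Y(y)\rangle_{X^\sigma}\,d\tilde{\mu}_t(u)\qquad\text{for a.e.\ }t\in I.
\]
Testing this identity against $\widehat{\psi}(t,y)$, integrating $\partial_t$ by parts in $t$ (legitimate because $\widehat{\psi}$ is compactly supported in $I$), and using Fubini in $(t,y,u)$ gives, for the $\partial_t\phi$-piece,
\[
\int_I\!\!\int_{X^\sigma}\!\!\partial_t\phi(t,u)\,d\tilde{\mu}_t(u)\,dt=-\int_I\!\!\int_{X^\sigma}\!\!\int_{\R^{d'}}\!2\pi i\,\widehat{\psi}(t,y)\,e^{2\pi i\Re e\langle Y(y),u\rangle_{X^\sigma}}\Re e\langle v(t,u),Y(y)\rangle_{X^\sigma}\,dy\,d\tilde{\mu}_t(u)\,dt.
\]
Expanding $\Re e\langle v(t,u),Y(y)\rangle_{X^\sigma}=\sum_j y_j\Re e\langle v(t,u),e_j\rangle_{X^\sigma}$ and recognising $\int_{\R^{d'}}2\pi i y_j\widehat{\psi}(t,y)e^{2\pi i\Re e\langle Y(y),u\rangle_{X^\sigma}}dy=\partial_{x_j}\psi(t,\pi(u))$ recovers exactly $\Re e\langle v(t,u),\nabla\phi(t,u)\rangle_{X^\sigma}$, which is the desired Liouville identity \eqref{top4le}. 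The weak narrow continuity follows from Lemma \ref{top4regularprp}.

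The main technical obstacle is the legitimacy of Fubini, which hinges on joint integrability of the integrand in $(t,y,u)$. Here the exponential is bounded by $1$, $\Vert Y(y)\Vert_{X^\sigma}=|y|$ because $(e_j)$ is orthonormal in $X^\sigma$, so the integrand is dominated by $|\widehat{\psi}(t,y)|\,|y|\,\Vert v(t,u)\Vert_{X^\sigma}\,\mathbf{1}_I(t)$. The factor $|\widehat{\psi}(t,y)|\,|y|$ is integrable in $y$ uniformly in $t$ since $\psi$ is smooth and compactly supported, while $\int_I\int_{X^\sigma}\Vert v(t,u)\Vert_{X^\sigma}\,d\tilde{\mu}_t(u)\,dt<\infty$ by Lemma \ref{top4integr}. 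Combining these bounds gives the hypothesis for Fubini and also justifies the integration by parts via a standard mollification of $\widehat{\psi}$ in $t$; once these ingredients are in place the computation is routine.
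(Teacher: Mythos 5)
Your proof is correct, but it takes a more self-contained route than the paper's. The paper's proof simply verifies the three hypotheses of Lemma \ref{top4relation} — namely the characteristic equation from Corollary \ref{top4characcc}, the weak narrow continuity from Lemma \ref{top4regularprp}, and the integrability of $v$ against $\tilde\mu_t$ from Lemma \ref{top4integr} — and then cites that lemma (itself imported from \cite{c}) as a black box to conclude. You instead prove the relevant implication of that equivalence from scratch: Fourier-invert the cylindrical test function to write it as a superposition of characters $e^{2\pi i\Re e\langle Y(y),u\rangle_{X^\sigma}}$ with $Y(y)=\sum_j y_j e_j$, invoke the characteristic equation \eqref{top4chara} for each such $Y(y)$, integrate by parts in $t$ against $\widehat\psi$, and reassemble by Fubini, the Fubini and dominated-convergence steps being justified by precisely the integrability bound of Lemma \ref{top4integr} together with the Schwartz decay of $\widehat\psi$. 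This is in effect the proof of the direction of Lemma \ref{top4relation} that is actually used here. What you gain is a fully explicit, reference-free argument; what the paper gains is brevity by outsourcing that step to the literature.
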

\begin{proof}
It is a direct consequence of Lemma \ref{top4relation} by selecting ${\rm H}\equiv X^\sigma$ which is a Hilbert space. 
More precisely, all the prerequists of Lemma \ref{top4relation} are satisfied. Indeed, we have 
\begin{itemize}
\item from Corollary \ref{top4characcc} that the set of Wigner measures $\{\tilde{\mu}_t\}_{t\in I}$ solves the characteristic equation \eqref{top4chara};
\vskip 2mm
\item from Lemma \ref{top4regularprp}, we have checked that $\tilde \mu_t \in \cP(X^\sigma)$ is a weakly narrowly continuous;
\vskip 2mm
\item from Lemma \ref{top4integr}, we have checked the integrability condition of $v$ with respect to $\tilde{\mu}_t$.
\end{itemize}
And thus the result follows.
\end{proof}

\section{Proof of the main result}
 In order to prove the main Theorems \ref{top4theorem1} and \ref{top4theorem2}, we must establish some identities. It is important to note that the statement of Theorem \ref{top4theorem1} is not related directly to  the quantum dynamics and does not require any restrictions on it. Therefore, our plan is to ensure that the assumptions   \eqref{top4S0} and \eqref{top4S1} are applied to a specific class of density matrices, namely the coherent states.
To achieve this, we must first define the coherent states for the particle and field components separately, and then generalize to the entire interacting space since we are dealing with an interaction between particles and field.
Let $u_0=(z_0,\alpha_0)\in X^0$ and consider the  family of  coherent states
\[\cC_\hbar(u_0)=\Big\vert W_1(\frac{\sqrt{2}}{i\hbar}z_0)\psi\otimes W_2(\frac{\sqrt{2}}{i\hbar}\alpha_0) \Omega \Big\rangle \Big\langle W_1(\frac{\sqrt{2}}{i\hbar}z_0)\psi\otimes W_2(\frac{\sqrt{2}}{i\hbar}\alpha_0) \Omega\Big \vert\]
where we have introduced
\begin{itemize}
\item [$\rightarrow$] the coherent vector: $W_1(\frac{\sqrt{2}}{i\hbar}z_0)\psi, $ centered on $z_0\in \C^{dn}$
where $\psi(x)=(\pi \hbar)^{-dn/4} \ e^{-x^2/2\hbar} \in L^2(\R^{dn},dx)$ is the normalized gaussian function on the particles related to the particle space $L^2(\R^{dn},\C)$. 
\item [$\rightarrow$] the coherent vector: $W_2(\frac{\sqrt{2}}{i\hbar}\alpha_0) \Omega$  in the Fock space, for $\alpha \in \cG^0$ and $\Omega$ is the vacuum vector on the fock space.  
\end{itemize}
It bears noting that these family of coherent states gives rise to a family of density matrices satisfying the assumptions   \eqref{top4S0} and \eqref{top4S1}.
\begin{lemma} [The family of coherent states]\label{top4coherentstates}
The family of coherent states  $(\cC_\hbar(u_0))_{\hbar\in(0,1)}.$ satisfies
\[\cM(\cC_\hbar(u_0),\hbar\in(0,1))=\{\delta_{u_0}\},\] where $\delta_{u_0}$ is the Dirac measure centered on $u_0$.
Moreover, if $u_0=(z_0,\alpha_0) \in X^\sigma$, then $(\cC_\hbar(u_0))_{\hbar\in(0,1)}$ satisfies \eqref{top4S0}and \eqref{top4S1}.
\end{lemma}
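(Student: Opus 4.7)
\medskip
\noindent\textbf{Proof proposal.} The plan is to verify the two claims separately, both by direct computation using the Weyl commutation relations \eqref{top4weylheseinbergtranslation}--\eqref{top4comm4} together with explicit gaussian/vacuum characteristic functions.

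For the first claim, I compute the characteristic function of $\cC_\hbar(u_0)$ entering Definition \ref{top4definitionwigner}. The tensor product structure $\cW(\xi)=W_1(\cdot)\otimes W_2(\cdot)$ and the fact that $\cC_\hbar(u_0)$ is a pure product state let me factor
\[
{\rm Tr}\big[\cW(\xi)\,\cC_\hbar(u_0)\big]
=\big\langle W_1(\tilde z_0)\psi,\,W_1(z_\xi)\,W_1(\tilde z_0)\psi\big\rangle
\;\big\langle W_2(\tilde\alpha_0)\Omega,\,W_2(\alpha_\xi)\,W_2(\tilde\alpha_0)\Omega\big\rangle,
\]
where $\tilde z_0=\tfrac{\sqrt 2}{i\hbar}z_0$, $\tilde\alpha_0=\tfrac{\sqrt 2}{i\hbar}\alpha_0$, and $z_\xi,\alpha_\xi$ are the scaled test parameters appearing in $\cW(2\pi q_0,-2\pi p_0,\sqrt 2\pi\alpha_0)$. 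On each factor, applying the Weyl composition rule twice to pull $W_1(\tilde z_0)^\ast$ and $W_2(\tilde\alpha_0)^\ast$ through yields a pure phase $e^{i\hbar\Im m\langle \tilde z_0,z_\xi\rangle}$ (resp.\ $e^{i\hbar\Im m\langle\tilde\alpha_0,\alpha_\xi\rangle}$) multiplying the vacuum/gaussian matrix element $\langle\psi,W_1(z_\xi)\psi\rangle$ (resp.\ $\langle\Omega,W_2(\alpha_\xi)\Omega\rangle$). A standard gaussian integral gives $\langle\psi,W_1(z_\xi)\psi\rangle=e^{-\hbar|z_\xi|^2/4}$, and the well-known vacuum formula gives $\langle\Omega,W_2(\alpha_\xi)\Omega\rangle=e^{-\hbar\|\alpha_\xi\|^2/4}$; both tend to $1$ as $\hbar\to 0$. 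By the very choice of the prefactor $\tfrac{\sqrt 2}{i\hbar}$ in the coherent-state ansatz, the surviving phase collapses to $e^{2\pi i\,\Re e\langle\xi,u_0\rangle_{X^0}}$, which is the Fourier transform of $\delta_{u_0}$ evaluated at $\xi$. Thus every accumulation point of the characteristic function coincides with that of $\delta_{u_0}$, which identifies $\cM(\cC_\hbar(u_0),\hbar\in(0,1))=\{\delta_{u_0}\}$.

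For the second claim, I verify \eqref{top4S0} and \eqref{top4S1} by conjugating each observable through the Weyl shifts defining $\cC_\hbar(u_0)$. Using \eqref{top4comm1}--\eqref{top4comm2},
\[
W_1(\tilde z_0)^{\ast}\hat q\,W_1(\tilde z_0)=\hat q+\hbar\,\Re e(\tilde z_0),
\qquad
W_1(\tilde z_0)^{\ast}\hat p\,W_1(\tilde z_0)=\hat p+\hbar\,\Im m(\tilde z_0),
\]
so the expectations of $\hat q^2$ and $\hat p^2$ in the coherent state split as a classical shift (of order $|z_0|^2$, independent of $\hbar$) plus the $\hbar$--dependent second moments of the normalized gaussian $\psi$, which are $O(\hbar)$. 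This yields $\mathrm{Tr}[\cC_\hbar(u_0)(\hat p^2+\hat q^2)]\le C_1$ uniformly in $\hbar\in(0,1)$. Similarly, using \eqref{top4comm3}--\eqref{top4comm4} in distributional form, $W_2(\tilde\alpha_0)^{\ast}\hat a_\hbar(k)W_2(\tilde\alpha_0)=\hat a_\hbar(k)+\alpha_0(k)$ (and conjugate for $\hat a^\ast_\hbar$). Plugging this into $d\Gamma(\omega^{2\sigma})=\int\omega^{2\sigma}(k)\hat a^\ast_\hbar(k)\hat a_\hbar(k)\,dk$ and expanding, the cross terms are linear in $\hat a^\sharp_\hbar(\omega^{2\sigma}\alpha_0)$ and $d\Gamma(\omega^{2\sigma})$ itself, both of which have zero vacuum expectation; only the constant $\|\omega^\sigma\alpha_0\|_{L^2}^2=\|\alpha_0\|_{\cG^\sigma}^2$ remains. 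Hence $\mathrm{Tr}[\cC_\hbar(u_0)\,d\Gamma(\omega^{2\sigma})]=\|\alpha_0\|_{\cG^\sigma}^2$, which is finite because $u_0\in X^\sigma$, giving \eqref{top4S0} uniformly in $\hbar$.

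The main obstacle is purely bookkeeping: tracking the prefactors $\tfrac{\sqrt 2}{i\hbar}$, $2\pi$, $\sqrt 2\pi$, and the sign conventions in $\Im m\langle z,z'\rangle=q\cdot p'-p\cdot q'$ so that the residual phase in the first step is exactly $e^{2\pi i\Re e\langle\xi,u_0\rangle_{X^0}}$ (and not a rescaled or conjugated version). No analytic difficulty beyond the dominated/explicit gaussian convergence is needed; every estimate is elementary once the Weyl calculus is unwound, and the bounds produced in the second step are in fact explicit, not merely qualitative.
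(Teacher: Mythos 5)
Your proposal is correct and follows essentially the same route as the paper, namely a direct Weyl-calculus computation on each tensor factor of $\cC_\hbar(u_0)$. The paper's proof is far more telegraphic -- it records only the three trace identities ${\rm Tr}(\cC_\hbar(u_0)\, d\Gamma(\omega^{2\sigma}))=\|\alpha_0\|_{\cG^\sigma}^2$ and the second moments of $\hat q,\hat p$ (with what looks like a sign typo in the latter, though this does not affect the uniform bound), and it omits entirely the characteristic-function computation identifying $\cM(\cC_\hbar(u_0),\hbar\in(0,1))=\{\delta_{u_0}\}$, so the first half of your argument fills a gap that the paper leaves implicit as a standard fact.
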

\begin{proof}
 We have
\begin{align*}
& {\rm Tr}(\cC_\hbar(u_0) \  d\Gamma(\omega^{2\sigma}))= \Vert \alpha_0 \Vert_{\cG^\sigma}^2
\\ & {\rm Tr}(\cC_\hbar(u_0) \  \hat{p}^2)= \langle \psi, \hat{p}^2 \psi \rangle-2p_0^2
\\ &{\rm Tr}(\cC_\hbar(u_0) \  \hat{q}^2)=  \langle \psi, \hat{q}^2 \psi \rangle-2q_0^2.
\end{align*}
\end{proof}

\noindent Below, we give useful lemma which relates the Wigner measure $\tilde{\mu}_t$ to $\mu_t$ in terms of the free field flow $\Phi^{\it f}_t$.
\begin{lemma}[Relations between the sets of Wigner measure]\label{top4reee} Let $(\varrho_\hbar )_{h\in(0,1)}$ be a family of density matrices satisfying \eqref{top4S0} and \eqref{top4S1}.  Define 
\[\tilde \varrho_\hbar(t):= e^{i \frac{t}{\hbar} d\Gamma(\omega)} \ \varrho_\hbar \  e^{-i \frac{t}{\hbar} d\Gamma(\omega)}. \]
Then, we can assert that 
\begin{enumerate}
\item the family of states $ (\tilde \varrho_\hbar(t))_{\hbar\in(0,1)}$ satisfies \eqref{top4S0} and \eqref{top4S1};
\item for all sequences $(\hbar_{n})_{n \in \N}$ with $\hbar_{n} \rightarrow 0$, there exists a subsequence $\hbar_{n_\ell}$ with $\hbar_{n_\ell} \rightarrow 0$ such that 
\[\cM(\tilde \varrho_{\hbar_{n_\ell}}(t) , \ell \in \N)= \{ (\Phi^{\it f}_{-t})_\sharp \mu ; \ \mu \in \cM(  \varrho_{\hbar_{n_\ell}} , \ell \in \N)\}, \]
where $\Phi^{\it f}_t$ is the free field flow as in \eqref{top4freeflow} .
\end{enumerate}
\end{lemma}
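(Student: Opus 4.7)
\textbf{Strategy overview.} Since the free field evolution $e^{-i\frac{t}{\hbar} d\Gamma(\omega)}$ acts only on the Fock factor of $\cH$ and commutes with the multiplication operators $d\Gamma(\omega^{2\sigma})$ (they are diagonal in the same spectral decomposition), and a fortiori with operators on the particle space, both claims will reduce to explicit conjugation computations. The key tool is the identity $e^{-i\frac{t}{\hbar} d\Gamma(\omega)} \, W_2(\alpha_0) \, e^{i\frac{t}{\hbar} d\Gamma(\omega)} = W_2(e^{-it\omega}\alpha_0)$, which is the exponentiated form of the Weyl commutation rule with the free field Hamiltonian.

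\textbf{Step 1: Propagation of the uniform bounds.} For $\eqref{top4S0}$, use that $d\Gamma(\omega^{2\sigma})$ commutes with $d\Gamma(\omega)$ (they are the second quantizations of commuting multiplication operators on $L^2(\R^d_k,\C)$). Hence cyclicity of the trace gives
\[
{\rm Tr}[\tilde\varrho_\hbar(t) \, d\Gamma(\omega^{2\sigma})] \;=\; {\rm Tr}[\varrho_\hbar \, d\Gamma(\omega^{2\sigma})] \;\leq\; C_0.
\]
For $\eqref{top4S1}$, observe that $\hat{q}$ and $\hat{p}$ act on the particle factor $L^2(\R^{dn},\C)$ only, while $d\Gamma(\omega)$ acts on the Fock factor only; therefore they commute and the same trace argument yields ${\rm Tr}[\tilde\varrho_\hbar(t)(\hat q^2+\hat p^2)] = {\rm Tr}[\varrho_\hbar(\hat q^2+\hat p^2)]\leq C_1$.

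\textbf{Step 2: Existence of a subsequence and computation of the Wigner measure.} Given any sequence $\hbar_n \to 0$, since $(\varrho_\hbar)$ satisfies \eqref{top4S0}-\eqref{top4S1}, invoke Proposition \ref{top4existwigner} to extract a subsequence $\hbar_{n_\ell}\to 0$ along which $\cM(\varrho_{\hbar_{n_\ell}},\ell\in\N)=\{\mu\}$ is a singleton. We then compute, for any $\xi=(p_0,q_0,\alpha_0)\in X^0$, using cyclicity of the trace and the fact that $W_1(\cdot)$ commutes with $e^{i\frac{t}{\hbar} d\Gamma(\omega)}$:
\[
{\rm Tr}\!\left[\cW(2\pi q_0,-2\pi p_0,\sqrt{2}\pi\alpha_0)\,\tilde\varrho_\hbar(t)\right]
= {\rm Tr}\!\left[\cW\!\big(2\pi q_0,-2\pi p_0,\sqrt{2}\pi\,e^{-it\omega}\alpha_0\big)\,\varrho_\hbar\right].
\]
Here we used the conjugation formula $e^{-i\frac{t}{\hbar} d\Gamma(\omega)}\,W_2(\alpha_0)\,e^{i\frac{t}{\hbar} d\Gamma(\omega)}=W_2(e^{-it\omega}\alpha_0)$, obtained by differentiating in $t$ and exploiting $[d\Gamma(\omega),\hat a^\sharp_\hbar(k)]=\pm\hbar\,\omega(k)\,\hat a^\sharp_\hbar(k)$. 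Note that $e^{-it\omega}\alpha_0\in\cG^0$ since $\vert e^{-it\omega}\vert=1$, so the right-hand side makes sense as a Weyl operator evaluated at a point of $X^0$.

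\textbf{Step 3: Passage to the limit and pushforward identification.} Passing to the limit $\hbar_{n_\ell}\to 0$ on the right-hand side, by definition of the Wigner measure $\mu$ of $(\varrho_{\hbar_{n_\ell}})$, we obtain
\[
\lim_{\ell\to\infty}{\rm Tr}\!\left[\cW(2\pi q_0,-2\pi p_0,\sqrt{2}\pi\alpha_0)\,\tilde\varrho_{\hbar_{n_\ell}}(t)\right]
= \int_{X^0} e^{2\pi i\,\Re e\langle (p_0,q_0,e^{-it\omega}\alpha_0),u\rangle_{X^0}}\,d\mu(u).
\]
Using that $\Re e\langle e^{-it\omega}\alpha_0,\alpha\rangle_{L^2}=\Re e\langle \alpha_0,e^{it\omega}\alpha\rangle_{L^2}$ and the definition of $\Phi^{\it f}_t$, this rewrites as $\int_{X^0} e^{2\pi i\,\Re e\langle \xi,\Phi^{\it f}_{-t}(u)\rangle_{X^0}}\,d\mu(u)$, which by the change-of-variable formula equals $\int_{X^0} e^{2\pi i\,\Re e\langle \xi,u'\rangle_{X^0}}\,d(\Phi^{\it f}_{-t})_\sharp\mu(u')$. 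Since the characteristic functional uniquely determines a Borel probability measure on $X^0$, this shows $\cM(\tilde\varrho_{\hbar_{n_\ell}}(t),\ell\in\N)=\{(\Phi^{\it f}_{-t})_\sharp\mu\}$; running through all $\mu\in\cM(\varrho_{\hbar_{n_\ell}},\ell\in\N)$ (possibly extracting a further subsequence to recover all Wigner measure cluster points) yields the full set equality claimed.

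\textbf{Main obstacle.} There is no substantial analytic obstacle; the only care required is the book-keeping with signs and conjugations, in particular deriving the formula $e^{-i\frac{t}{\hbar} d\Gamma(\omega)}W_2(\alpha_0)e^{i\frac{t}{\hbar} d\Gamma(\omega)}=W_2(e^{-it\omega}\alpha_0)$ and matching the convention for $\Phi^{\it f}_t(p,q,\alpha)=(p,q,e^{-it\omega}\alpha)$ so that the correct sign of $t$ appears in the pushforward.
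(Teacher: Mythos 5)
Your proof is correct and follows essentially the same route as the paper: conjugate the Weyl operator by the free field propagator (using $e^{-i\frac{t}{\hbar}d\Gamma(\omega)}W_2(\alpha_0)e^{i\frac{t}{\hbar}d\Gamma(\omega)}=W_2(e^{-it\omega}\alpha_0)$), pass to the limit via Definition \ref{top4definitionwigner}, and identify the resulting characteristic functional as that of $(\Phi^{\it f}_{-t})_\sharp\mu$. For item (1) your direct commutation/cyclicity argument (giving exact preservation, not merely a bound) is actually cleaner and more targeted than the paper's citation of Lemma \ref{top4uniformestimate}, which is stated for the fully interacting evolution $\varrho_\hbar(t)=e^{-it\hat H/\hbar}\varrho_\hbar e^{it\hat H/\hbar}$ rather than the pure free-field conjugation defined in this lemma; the remark at the end about ``possibly extracting a further subsequence'' is superfluous, since after the single extraction both Wigner-measure sets are singletons and the computed characteristic-function identity already forces the set equality.
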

\begin{proof}
The first  assertion is a consequence of Lemma \ref{top4uniformestimate}. 
Let $\mu\in  \cM( \varrho_{\hbar_{\ell}} , \ell \in \N)$ and $ \tilde \mu_t \in \cM(\tilde \varrho_{\hbar_{\ell}}(t) , \ell \in \N)$. 
On one hand, we have 
\[\begin{aligned}
&\lim_{\ell } {\rm Tr} \big[ \tilde \varrho_{\hbar_\ell}(t)\ \cW(\xi ) \big]= \lim_{\ell } {\rm Tr} \big[  \varrho_{\hbar_\ell} \ e^{-i \frac{t}{\hbar} d\Gamma(\omega)} \cW(\xi ) \ e^{i \frac{t}{\hbar} d\Gamma(\omega)} \big]
 = \lim_{\ell } {\rm Tr} \big[  \varrho_{\hbar_\ell} \  \cW(\Phi^{\it f}_t(\xi )) \ \big]
\\ &= \int_{X^0} e^{Q(\Phi^{\it f}_t(\xi ) \  ,u)} \ d\mu(u)
= \int_{X^0} e^{Q(\xi , \ \Phi^{\it f}_{-t}(u))} \ d\mu(u)
= \int_{X^0} e^{Q(\xi , u)} \ d  (\Phi^{\it f}_{-t})_\sharp \mu(u).
\end{aligned}\]
On the other hand, we have 
\[\begin{aligned}
\lim_{\ell } {\rm Tr} \big[ \tilde \varrho_{\hbar_\ell}(t)\ \cW(\xi ) \big]= \int_{X^0} e^{Q(\xi , u)} \ d  \tilde \mu_t(u).
\end{aligned}\]
We conclude then that 
\[\tilde \mu_t= (\Phi^{\it f}_{-t})_\sharp \mu .\]
\end{proof}
\noindent Below, we start the proof Theorems \ref{top4theorem1} and \ref{top4theorem2}.
\paragraph{\it Proof of Theorem \ref{top4theorem1}.}\label{top4proof1}
Let $u_0 \in X^\sigma$  and defines the density matrices  for all $\hbar \in (0,1)$ as follows
\[\varrho_\hbar:=\cC_\hbar(u_0).\]
Then since $u_0 \in X^\sigma$, we can assert by Lemma \ref{top4coherentstates} that the family of density matrices  $ (\varrho_\hbar)_{\hbar\in(0,1)}$ satisfies \eqref{top4S0} and \eqref{top4S1}.
Thus, with this choice of density matrices and using the arguments of 
Proposition \ref{top4uniquewignermeasure}, we can assert that for each sequence $(\hbar_{n})_{n\in \N}$ with  $\underset{n\rightarrow\infty}{\hbar_n\rightarrow 0}$, there exists a subsequence  $(\hbar_{n_\ell})_{\ell \in \N} $ with  $\underset{\ell \rightarrow\infty}{\hbar_{n_\ell}\rightarrow 0}$ and a family of Borel probability measure $\{\tilde \mu_t\}_{t\in \R}$ in $X^0$ such that 
\[ \cM(e^{i \frac{t}{\hbar_{n_\ell}} d\Gamma(\omega)} \ e^{-i \frac{t}{\hbar_{n_\ell}} \hat{H}} \  \cC_{\hbar_{n_\ell}}(u_0) \ e^{i \frac{t}{\hbar_{n_\ell}} \hat H} \ e^{-i \frac{t}{\hbar_{n_\ell}} d\Gamma(\omega)},\ell \in \N)=\{\tilde{\mu}_t\}.\]
Now, on one hand , we do have   from Proposition \ref{top4satsifyliou} that $\{\tilde \mu_t\}_{t\in \R}$ is weakly narrowly continuous solution to the Liouville equation \eqref{top4le}; from the other hand, 
 from Lemma \ref{top4integr} , we can assert that all the prerequists to apply Theorem \ref{top4globalflow} are in our hand.
 To recover the proof of Theorem \ref{top4theorem1}, we follow the steps below.  
\begin{itemize}
\item [$ \triangleleft$] We apply Theorem \ref{top4globalflow}  with the measure $\tilde{\mu}_t$ obtained above, we get the global well posedness of the initial value problem \eqref{top4ivp} $\tilde \mu_0$-almost all initial data in $X^\sigma$  as well as the existence of  a generalized Borel measurable global flow $\tilde{\Phi}_t$  as follows
 \[\begin{array}{rcl}
\tilde{\Phi}_t \, : \,\mathfrak{G} & \longrightarrow & X^\sigma\\
u_0 & \longmapsto & u(t),
\end{array}  \]
where $\mathfrak{G}$ is the ensemble of initial data obtained from Theorem \ref{top4globalflow}.
\vskip 3mm
\item[$ \triangleleft$] Let $u_0\in X^\sigma$. From  Lemma \ref{top4coherentstates}, we have 
$\tilde{\mu}_0(\mathfrak{G})=\delta_{u_0}(\mathfrak{G})=1$. This implies $u_0\in \mathfrak{G}$;
\vskip 3mm
\item[$ \triangleleft$] Use the equivalence between the solution to \eqref{top4ivp} and \eqref{top4particlefieldequation}, we can show the existence and uniqueness of the  solution of \eqref{top4particlefieldequation} with a generalized global  flow  
\[\Phi_t(u_0)=\Phi_t^{\it f}\circ\tilde{\Phi}_t(u_0),\]
where $\Phi_t^{\it f}$ is the free flow and $\tilde \Phi_t$ is the generalized flow of $\eqref{top4ivp}$;
\end{itemize}

\paragraph{\it Proof of Theorem \ref{top4theorem2}}
We have here to prove the validity of Bohr's correspondence principle. Assume $(\varrho_{\hbar})_\hbar$ is  a family of density matrices satisfying the Assumptions \eqref{top4S0} and \eqref{top4S1}. Then, using  Proposition \ref{top4uniquewignermeasure}, we can assert that for each sequence $(\hbar_{n})_{n \in \N}$ with  $\underset{n\rightarrow\infty}{\hbar_n\rightarrow 0}$, there exists a subsequence  $(\hbar_{n_\ell})_{\ell \in \N} $ with  $\underset{\ell \rightarrow\infty}{\hbar_{n_\ell}\rightarrow 0}$ and a family of Borel probability measures $ \{\tilde \mu_t\}_{t\in \R}$  in $X^0$ such that
\[
\cM(\tilde \varrho_{\hbar_{n_\ell}}(t),\ell \in \N)=\{\tilde \mu_t\}.
\]
By Lemma \ref{top4reee}, we have 
\[\cM(\varrho_{\hbar_{n_\ell}}(t),\ell \in \N)= \{(\Phi_t^{\it f})_\sharp \tilde{\mu}_t; \ \tilde{\mu}_t\in \cM(\tilde \varrho_{\hbar_{n_\ell}}(t),\ell \in \N)  \} .\]
This implies that 
\[\cM(\varrho_{\hbar_{n_\ell}}(t),\ell \in \N)=\{\mu_t\}=\{(\Phi_t^{\it f})_\sharp \tilde{\mu}_t \} \]
From (ii) in Porbabilistic representation, we can assert that for any bounded Borel functions $\psi:X^\sigma \rightarrow \R$
\[\int_{X^\sigma} \psi(u)  \ d\tilde{\mu}_t(u)=\int_{\cF_I} \psi(e_t(u_0,u(\cdot))) \ d\eta(u_0,u(\cdot)).\]
Since, we have the generalized global flow $\tilde{\Phi}_t$ to \eqref{top4ivp}, we get 
\[e_t(u_0,u(\cdot))=\tilde{\Phi}_t(e_0(u_0,u(\cdot)))=\tilde{\Phi}_t(u_0).\]
This gives
\[\begin{aligned}
\int_{X^\sigma} \psi(u) \ d\tilde{\mu}_t(u)= \int_{X^\sigma \times \cC(\overline{I},X^\sigma)} \psi \circ \tilde{\Phi}_t(e_0(u_0,u(\cdot))) \ d\eta(u_0,u(\cdot))
 =\int_{X^\sigma}  \psi \circ \tilde{\Phi}_t(u) \ d \tilde{\mu}_0(u).
\end{aligned}
\]
We conclude that $\tilde{\mu}_t=(\tilde{\Phi}_t)_\sharp \tilde{\mu}_0.$
This implies that:
\[\begin{aligned}
\mu_t=(\Phi_t^{\it f})_\sharp \tilde{\mu}_t
=( \Phi_t^{\it f}\circ \tilde \Phi_t)_\sharp\tilde{\mu}_0  
= (\Phi_t)_\sharp \tilde\mu_0
= (\Phi_t)_\sharp \mu_0
\end{aligned}\]
and where we have used $ \tilde{\mu}_0=\mu_0$ as a consequence of 
\[\tilde{\varrho}_\hbar(0)=\varrho_\hbar(0)=\varrho_\hbar.\]
\hfill $\square$
\appendix
\section{Prokhorov theorem}
Let $X$ be separable metric space. The proof of the following result is  proved in \cite[Theorem 5.1.3]{AmbrosioLuigi2005GFIM}.
\begin{theorem}[Prokhorov Theorem]\label{prokhorov}
If a set $\cK \subset \cP(X)$ is tight i.e.
\[\forall \eps >0, \ \exists K_\eps \ \text{compact in $X$ such that } \mu(X \setminus K_\eps) \leq \eps, \  \forall \mu \in \cK,\]
then $\cK $ is relatively compact in $\cP(X)$.
\end{theorem}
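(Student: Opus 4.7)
The plan is to follow the classical tightness-compactness argument: from any sequence $(\mu_n)_{n\in\N}\subset\cK$ I would extract a subsequence converging narrowly to a limit in $\cP(X)$. First, using tightness, I would select compact sets $K_k\subset X$ with $\mu_n(X\setminus K_k)\leq 1/k$ for every $n\in\N$, and replace them by their successive unions so that $(K_k)_{k\in\N}$ is increasing. Since $X$ is separable metric, each $K_k$ is a compact metric space, hence $C(K_k)$ is a separable Banach space, and the restrictions $\mu_n|_{K_k}$ form a bounded family in the dual $C(K_k)^*$.

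Next, combining the Banach--Alaoglu theorem with separability of $C(K_k)$, equivalently a countable diagonal extraction against a dense sequence of test functions, I would produce, for each fixed $k$, a subsequence along which $\int_{K_k} f \, d\mu_n$ converges for every $f\in C(K_k)$. A further Cantor-type diagonalization over $k$ yields a single subsequence $(\mu_{n_\ell})_\ell$ such that $\mu_{n_\ell}|_{K_k}$ converges weakly-$*$ to a nonnegative Radon measure $\nu_k$ on $K_k$ for every $k$. Testing against functions in $C(K_{k+1})$ restricted to $K_k$ yields the consistency relation $\nu_{k+1}|_{K_k}=\nu_k$, which allows me to glue the family $(\nu_k)_k$ into a single Borel measure $\mu$ on the $\sigma$-compact set $K:=\bigcup_k K_k$, extended by zero to the rest of $X$.

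The two remaining checks are that $\mu\in\cP(X)$ and that $\mu_{n_\ell}\to\mu$ narrowly. Testing weak-$*$ convergence against the constant function $1\in C(K_k)$ gives $\nu_k(K_k)=\lim_\ell \mu_{n_\ell}(K_k)\geq 1-1/k$, hence $\mu(X)\geq 1-1/k$ for all $k$, while the monotone identity $\mu(X)=\sup_k \nu_k(K_k)\leq 1$ forces $\mu(X)=1$. For narrow convergence, given $f\in C_b(X)$ and $\varepsilon>0$, I pick $k$ with $4\|f\|_\infty/k<\varepsilon$, extend $f|_{K_k}$ to a bounded continuous function on $X$ via Tietze, and split $\int f\, d\mu_{n_\ell}-\int f\, d\mu$ into a contribution localized to $K_k$, which tends to zero by the weak-$*$ convergence, and a tail on $X\setminus K_k$, bounded uniformly by $2\|f\|_\infty/k$ thanks to tightness.

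The main technical obstacle is the gluing/consistency step for the limit measures $\nu_k$: continuous test functions on $K_{k+1}$ do not vanish on $\partial K_k$, so identifying $\nu_{k+1}|_{K_k}$ with $\nu_k$ requires approximating indicator-type tests by Urysohn cutoffs and passing to the limit carefully. Separability of $X$ enters tacitly, since it is precisely what guarantees metrizability of the compacts $K_k$ and hence separability of each $C(K_k)$, which is the engine driving the two diagonal extractions.
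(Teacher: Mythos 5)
The paper itself does not prove this statement; it is cited directly from Ambrosio--Gigli--Savar\'e, so the only thing to assess is whether your sketch stands on its own. The overall skeleton you use -- restrict to an increasing exhaustion of tight compacts $K_k$, apply Banach--Alaoglu/Riesz on each $C(K_k)^*$, diagonalize in $k$, reassemble, and then split each $\int f\,d\mu_{n_\ell}$ into a $K_k$-part and a tail -- is a correct and standard route, and separability of $X$ is used exactly where you say it is.

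There is, however, a genuine gap at the gluing step: the exact consistency relation $\nu_{k+1}|_{K_k}=\nu_k$ is simply false in general, and no amount of Urysohn approximation will rescue it. The weak-$*$ limit $\nu_{k+1}$ can carry mass on $\partial K_k$ that came from outside $K_k$ and is therefore invisible to $\nu_k$. A minimal example: take $X=\R$, $\mu_n=\tfrac12\delta_0+\tfrac12\delta_{1+1/n}$, $K_1=[-1,1]$, $K_k=[-k,k]$ for $k\ge 2$ (so $\mu_n(X\setminus K_k)\le 1/k$ for all $n,k$). Then $\mu_n|_{K_1}\equiv\tfrac12\delta_0$ gives $\nu_1=\tfrac12\delta_0$, while $\mu_n|_{K_2}\to\tfrac12\delta_0+\tfrac12\delta_1=\nu_2$, so $\nu_2|_{K_1}=\tfrac12\delta_0+\tfrac12\delta_1\neq\nu_1$. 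What \emph{is} true -- and is all you need -- is the one-sided inequality $\nu_{k+1}|_{K_k}\ge\nu_k$: for $f\in C(K_k)$, $f\ge 0$, take Tietze extensions $\tilde f_m\in C(K_{k+1})$ with $\tilde f_m|_{K_k}=f$, $0\le\tilde f_m\le\|f\|_\infty$, and $\tilde f_m\to f\one_{K_k}$ pointwise; pass to the limit in $\ell$ and then in $m$. With this monotonicity you define $\mu(A):=\lim_k\nu_k(A\cap K_k)$ as an increasing limit, countable additivity follows by monotone convergence, and $\mu(X)=\lim_k\nu_k(K_k)\in[1-1/k,1]$ forces $\mu(X)=1$. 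Your narrow-convergence check then needs a corresponding adjustment: the localized term $\int_{K_k}f\,d\mu_{n_\ell}$ converges to $\int_{K_k}f\,d\nu_k$, which is \emph{not} $\int_{K_k}f\,d\mu$ but differs from it by at most $\|f\|_\infty\big(\mu(K_k)-\nu_k(K_k)\big)\le\|f\|_\infty/k$, so the three error terms are each $O(\|f\|_\infty/k)$ and the conclusion still follows by letting $k\to\infty$ after $\ell\to\infty$. Incidentally, the Tietze extension you invoke for the narrow-convergence step is unnecessary: $f\in C_b(X)$ is already globally defined, so one simply splits the integral over $K_k$ and $X\setminus K_k$. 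Finally, a cleaner and very common alternative to the whole exhaustion argument is to embed the separable metric $X$ homeomorphically into the Hilbert cube $[0,1]^{\N}$, extract a weak-$*$ convergent subsequence there by compactness of $\cP([0,1]^{\N})$, and use tightness to show the limit charges only (the image of) $X$; this avoids the consistency issue entirely.
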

\section{Useful results}
The following results relate the Liouville equations and the Characteristic equations satisfied by a family of Wigner measures. For more details, we refer the reader to \cite[Proposition 4.2]{c}. Let ${\rm H}$ be a Hilbert space.
\begin{lemma}[Equivalence]\label{top4relation}
Let $v:\R \times {\rm H} \rightarrow {\rm H} $ be a continuous vector field such that it is bounded on bounded sets. Let $ I \ni t \rightarrow \mu_t$ a weakly narrowly continuous curve in $\cP({\rm H})$ such that we have the following integrability condition 
\[ \int_I \int_{{\rm H}} \Vert v(t,u) \Vert_{\rm H} \ d\mu_t(u) \ dt<+\infty.  \]
Then, the following statements are equivalent:
\begin{itemize}
\item [(i)] $\{\mu_t\}_{t\in I}$ is a solution of  Liouville equation \eqref{top4le};
\item [(ii)] $\{\mu_t\}_{t\in I}$ solves the characteristic equation \eqref{top4chara} for all $t\in I$and for all $y\in {\rm H}$.
\end{itemize}
\end{lemma}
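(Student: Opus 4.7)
\medskip

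\noindent\textbf{Proof proposal for Lemma \ref{top4relation}.}

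My plan is to treat this as the standard Fourier/distributional duality between the Liouville equation and its characteristic (Fourier) form on a separable Hilbert space, adapted to the real–symplectic pairing $\Re e \langle \cdot , \cdot \rangle_{\mathrm H}$ used throughout the paper. The test function class $\cC_{0,\mathrm{cyl}}^\infty(I \times \mathrm H)$ and the class of characters $(t,u) \mapsto \chi(t) \, e^{2\pi i \Re e \langle y, u\rangle_{\mathrm H}}$ are each dense in a suitable sense with respect to the measures $\mu_t$ and the vector field $v$, and the integrability condition is exactly what is needed to move derivatives around.

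First I would prove the implication (ii) $\Rightarrow$ (i). Given $\phi \in \cC_{0,\mathrm{cyl}}^\infty(I\times \mathrm H)$ with $\phi(t,u) = \psi(t,\pi(u))$ for some $\psi \in \cC_0^\infty(I\times \R^{d'})$ and orthonormal family $(e_1,\ldots,e_{d'})$ in $\mathrm H$, I would apply finite-dimensional Fourier inversion in the $\R^{d'}$ variable: $\psi(t,x) = \int_{\R^{d'}} \widehat\psi(t,\xi)\, e^{2\pi i \xi \cdot x}\, d\xi$, with $\widehat\psi \in \cS$ rapidly decaying. Writing $\xi \cdot \pi(u) = \Re e \langle y_\xi, u\rangle_{\mathrm H}$ for $y_\xi = \sum_j \xi_j e_j$, this exhibits $\phi$ as a superposition of characters. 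Differentiating the characteristic equation \eqref{top4chara} in $t$ (legitimate via the fundamental theorem of calculus and continuity in $t$), multiplying by $\widehat\psi(t,\xi)\, \partial_t\chi(t)$ or $\widehat\psi(t,\xi)$ appropriately, and integrating in $\xi$ and $t$, Fubini (justified by the integrability $\int_I\!\int_{\mathrm H} \|v\|_{\mathrm H}\, d\mu_t\, dt < \infty$ together with rapid decay of $\widehat\psi$) yields exactly \eqref{top4le}.

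Next I would prove the converse (i) $\Rightarrow$ (ii). Fix $y \in \mathrm H$ and $t_0, t \in I$. The naive test function $\phi(s,u) = \chi_{[t_0,t]}(s)\, e^{2\pi i \Re e \langle y, u\rangle_{\mathrm H}}$ is not cylindrical (no compact $u$-support, complex-valued, no smoothness in $s$), so I would build approximating cylindrical test functions as follows. Complete $y/\|y\|$ to an orthonormal family $(e_1,\ldots,e_{d'})$ of $\mathrm H$ and write $\Re e \langle y,u\rangle = y^* \cdot \pi(u)$ with $y^* \in \R^{d'}$. Take $\phi_{R,\eps}(s,u) = \zeta_\eps(s)\, \cos(2\pi y^*\cdot \pi(u))\, \eta_R(\pi(u))$ (and similarly with $\sin$), where $\zeta_\eps$ is a smooth approximation of $\chi_{[t_0,t]}$ and $\eta_R \in \cC_0^\infty(\R^{d'};[0,1])$ is $1$ on $\{|x|\le R\}$ with $|\nabla \eta_R| \le C/R$. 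Plugging into \eqref{top4le} and passing $\eps \to 0$ first, then $R \to \infty$, I recover the real part of \eqref{top4chara}; the imaginary part follows from the $\sin$ version. Combining them gives the full characteristic equation.

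The main obstacle is controlling the cut-off error $\nabla\eta_R \circ \pi$ as $R \to \infty$. The term to bound is $\int_{t_0}^{t}\!\int_{\mathrm H} \Re e \langle v(s,u), \nabla(\eta_R\circ \pi)(u)\rangle_{\mathrm H}\, \mathrm{(bounded\ factor)}\, d\mu_s(u)\, ds$, which is dominated by $\frac{C}{R} \int_I\!\int_{\{|\pi(u)|\ge R\}} \|v(s,u)\|_{\mathrm H}\, d\mu_s(u)\, ds$. This vanishes as $R \to \infty$ by dominated convergence, using precisely the integrability hypothesis on $v$. The remaining passage to the limit in the time regularization $\zeta_\eps$ is standard once one uses the weak narrow continuity $s \mapsto \mu_s$ to handle the evaluation at the endpoints $t_0$ and $t$.
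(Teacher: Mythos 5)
The paper does not actually present a proof of this lemma: after the statement, it simply directs the reader to an external reference (``we refer the reader to \cite[Proposition 4.2]{c}''). So there is no ``paper's own proof'' to compare against line by line; your proposal fills a gap the paper leaves to the literature.

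That said, your argument is correct and complete as sketched. The direction (ii) $\Rightarrow$ (i) via finite-dimensional Fourier inversion of the cylindrical profile $\psi(t,\cdot)$ is sound: writing $\phi(t,u)=\int_{\R^{d'}}\widehat\psi(t,\xi)\,e^{2\pi i\,\Re e\langle y_\xi,u\rangle}\,d\xi$ with $y_\xi=\sum_j\xi_je_j$, and pairing with the weak-in-$t$ identity $\partial_t\int e^{2\pi i\Re e\langle y_\xi,u\rangle}d\mu_t=2\pi i\int e^{2\pi i\Re e\langle y_\xi,u\rangle}\Re e\langle v,y_\xi\rangle d\mu_t$, the Fubini swap is justified because $|\Re e\langle v,y_\xi\rangle|\le\|v\|_{\rm H}|\xi|$, $\widehat\psi(t,\cdot)$ is Schwartz uniformly in $t$, and $\int_I\int_{\rm H}\|v\|\,d\mu_t\,dt<\infty$; the $t$-integration then collapses by compact support of $\widehat\psi$ in $I$. (One can in fact avoid differentiating the characteristic equation at all: pair directly against $\chi'(t)$ for $\chi\in\cC_0^\infty(I)$ and integrate by parts on the time integral in (ii).) The converse (i) $\Rightarrow$ (ii) via the truncated oscillatory test functions $\zeta_\eps(s)\cos(2\pi y^*\cdot\pi(u))\eta_R(\pi(u))$ (and the $\sin$ companion) is also correct: the cut-off error is supported where $|\pi(u)|\ge R$ and bounded by $\frac{C}{R}\int_I\int_{\{|\pi(u)|\ge R\}}\|v\|\,d\mu_s\,ds$, which vanishes as $R\to\infty$ by the integrability hypothesis; and the passage $\zeta_\eps\to\chi_{[t_0,t]}$, $\zeta_\eps'\to\delta_{t_0}-\delta_t$ is handled precisely by the weak narrow continuity of $s\mapsto\mu_s$ applied to the bounded continuous characters. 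Combining the $\cos$ and $\sin$ outputs gives the complex characteristic equation for all $y\in{\rm H}$ and $t_0,t\in I$. This is the standard duality proof one would expect the cited reference to contain, and you have the key technical points (Schwartz decay for Fubini, integrability of $v$ for the cut-off, weak narrow continuity for the endpoints) in the right places.
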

The subsequent outcomes illustrate how to build the global solution to the \eqref{top4ivp} utilizing measure-theoretical approaches and certain probabilistic representations of the measure-valued solutions for the Liouville equation. Additional information on the topic can be found in \cite{ammari2023sure} or in the Appendices of \cite{Z}.

\begin{theorem}[Global flow of the initial value problem]\label{top4globalflow}
Let $v:\R\times X^\sigma\rightarrow X^\sigma$ be a continuous vector field bounded on bounded sets. Assume 
\begin{itemize}
\item $\exists t\in \R \rightarrow \tilde\mu_t \in \cP(X^\sigma) $ a weakly narrowly continuous solution to \eqref{top4le} satisfying the integrability condition \eqref{top4integrability} on $I$; 
\item There is at most one solution of the initial value problem \eqref{top4ivp} over any bounded open interval $I$ containing the origin .
\end{itemize}
Then for $\tilde \mu_0$-almost all initial conditions $u_0$ in $X^\sigma$, there exists a unique global strong solution to \eqref{top4ivp}. In addition, the set 
\[\begin{aligned}& \mathfrak{G}:=\{ u_0 \in X^\sigma: \exists u(\cdot)  \ \text{a global strong solution of \eqref{top4ivp}} \\& \qquad \qquad \qquad  \quad  \  \text{with the initial condition $u_0$}\},
\end{aligned}\]
is Borel subset of $X^\sigma $ with $\tilde \mu_0(\mathfrak{G})=1$ and for any time $t\in \R$ the map $ u_0 \in \mathfrak{G} \rightarrow \tilde{\Phi}_t(u_0)=u(t)$
is Borel measurable.
\end{theorem}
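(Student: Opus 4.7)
The plan is to invoke a superposition (probabilistic representation) principle for measure-valued solutions to the Liouville equation \eqref{top4le}, and then use the pathwise uniqueness hypothesis to promote statistical information about trajectories into a measurable single-valued flow. This is the infinite-dimensional analogue of Ambrosio's superposition theorem, developed in the Hilbert-space setting in the cited references.

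\emph{Step 1 (Superposition representation).} Fix a bounded open interval $I\ni 0$. The assumptions --- continuity of $v$ with boundedness on bounded sets, weak narrow continuity of $t\mapsto\tilde\mu_t$, and the $L^1$-integrability condition \eqref{top4integrability} --- are precisely those required to construct a Borel probability measure $\eta_I$ on $X^\sigma\times\cC(\bar I,X^\sigma)$ concentrated on pairs $(u_0,\gamma)$ such that $\gamma$ is an absolutely continuous strong solution of \eqref{top4ivp} with $\gamma(0)=u_0$, and such that $(e_t)_\sharp\eta_I=\tilde\mu_t$ for every $t\in\bar I$, where $e_t(u_0,\gamma):=\gamma(t)$. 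One constructs $\eta_I$ by first projecting the Liouville problem onto finite-dimensional cylindrical subspaces of $X^\sigma$, where Ambrosio's classical superposition theorem applies, obtaining candidate path-space measures; one then passes to the limit via a tightness/compactness argument driven by the integrability bound on $v$, as carried out in \cite{MR3721874,alc2020,ammari2023sure}.

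\emph{Step 2 (From statistics to trajectories via uniqueness, then globalization).} By the uniqueness hypothesis, for each initial datum there is at most one strong solution in $\cC(\bar I,X^\sigma)$. Disintegrating $\eta_I$ with respect to the projection $e_0$ yields a Borel family of conditional probabilities $\{\eta_{I,u_0}\}_{u_0\in X^\sigma}$ on $\cC(\bar I,X^\sigma)$, each supported on solutions starting at $u_0$; uniqueness forces $\eta_{I,u_0}=\delta_{\gamma_{u_0}}$ for $\tilde\mu_0$-almost every $u_0$, where $\gamma_{u_0}$ denotes the unique strong solution. Defining $\mathfrak{G}_I$ as the Borel set of initial data admitting a strong solution on $\bar I$, the representation forces $\tilde\mu_0(\mathfrak{G}_I)=1$. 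Now exhaust $\R$ by $I_n=(-n,n)$; pathwise uniqueness on each $\bar I_{n+1}$ implies that the restriction of a solution on $\bar I_{n+1}$ remains a solution on $\bar I_n$, so $\mathfrak{G}_{I_{n+1}}\subseteq\mathfrak{G}_{I_n}$, and $\mathfrak{G}:=\bigcap_{n\ge1}\mathfrak{G}_{I_n}$ is Borel with $\tilde\mu_0(\mathfrak{G})=1$. On $\mathfrak{G}$ the locally defined solutions glue uniquely into a global strong solution $u(\cdot)\in\cC(\R,X^\sigma)$, and one sets $\tilde\Phi_t(u_0):=u(t)$. Borel measurability of $\tilde\Phi_t$ on $\mathfrak{G}$ follows by composing the Borel assignment $u_0\mapsto\gamma_{u_0}$ (obtained as the $\eta_I$-almost-sure section of $e_0$, which is Borel measurable after redefinition on a $\tilde\mu_0$-null set) with the continuous evaluation $e_t:\cC(\bar I_n,X^\sigma)\to X^\sigma$ for any $I_n$ containing $t$.

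\emph{Main obstacle.} The single substantial point is Step 1: while the superposition principle is classical in $\R^d$ (Ambrosio), in the infinite-dimensional space $X^\sigma$ the standard compactness used to pass from approximate to exact flows fails without additional structure. The remedy --- approximating by finite-dimensional cylindrical Liouville equations, applying the $\R^d$ superposition theorem to each projection, and lifting the resulting measures back to $X^\sigma$ by a tightness argument that exploits the bound $\int_I\int_{X^\sigma}\|v(t,u)\|_{X^\sigma}\,d\tilde\mu_t\,dt<\infty$ --- is the technical heart of the result and the reason the paper imports it from \cite{MR3721874,alc2020,ammari2023sure}. Secondary but routine points are the fact that $X^\sigma$ is a separable Hilbert space, hence Polish, so disintegration and measurable selection are legitimate, and the verification that pathwise uniqueness propagates consistently across the exhausting intervals $I_n$.
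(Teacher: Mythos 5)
Your proposal follows the same outline the paper itself indicates: the paper states Theorem~\ref{top4globalflow} in Appendix~B without proof, deferring to \cite{MR3721874,alc2020,ammari2023sure} and to the appendices of \cite{Z}, and it lists the superposition-principle proposition immediately afterwards precisely because that is the engine of the proof. Your Step~1 (superposition via cylindrical projections, Ambrosio in finite dimension, tightness from the $L^1$ bound on $v$) and Step~2 (disintegration by $e_0$, uniqueness collapsing the conditional to a Dirac mass, exhaustion of $\R$ by intervals $I_n$, and gluing into a global flow) correctly reconstruct the argument of those references.

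The one place where you should be more careful is the Borel-measurability claim at the end of Step~2. You write that the section $u_0\mapsto\gamma_{u_0}$ is ``Borel measurable after redefinition on a $\tilde\mu_0$-null set.'' Redefinition on a null set yields at best completion-measurability, not Borel measurability, and a general measurable-selection argument would also only give universal measurability; yet the theorem asserts that both $\mathfrak{G}$ and $\tilde\Phi_t$ are genuinely Borel. The correct route, which exploits the uniqueness hypothesis that your argument otherwise uses, is via the Lusin--Souslin theorem: the set $\cF_I\subset X^\sigma\times\cC(\bar I,X^\sigma)$ of pairs $(u_0,\gamma)$ with $\gamma$ a strong solution starting at $u_0$ is Borel (it is defined by closed Duhamel-type conditions, and $X^\sigma$ and $\cC(\bar I,X^\sigma)$ are Polish), the restriction of $e_0$ to $\cF_I$ is Borel and \emph{injective} by pathwise uniqueness, hence its image $\mathfrak{G}_I=e_0(\cF_I)$ is Borel and its inverse $u_0\mapsto(u_0,\gamma_{u_0})$ is Borel. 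Composing with the continuous evaluation $e_t$ then gives Borel measurability of $\tilde\Phi_t$ on $\mathfrak{G}=\bigcap_n\mathfrak{G}_{I_n}$ with no redefinition needed. With that correction your proof is sound.
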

\begin{proposition}[Superposition principle]
There  exists $\eta\in \cP(X^\sigma\times \cC(\overline{I},X^\sigma))$ satisfying:
\begin{itemize}
\item [(i)] $\eta(\cF_I)=1$ where 
\[\cF_I:=\Big\{ (u_0,u(\cdot))\in X^\sigma\times \cC(\overline{I},X^\sigma): \ u(\cdot) \ \text{satisfies \eqref{top4ivp} on $I$ with  $u_0$} \Big\}\]
\item[(ii)] $\tilde{\mu}_t= (e_t)_\sharp\eta, \quad \forall t\in I$, where the map \[e_t:(u_0,u(\cdot))\in X^\sigma\times \cC(\overline{I},X^\sigma)\rightarrow u(t)\in X^\sigma\] is the evaluation map.
\end{itemize}
\end{proposition}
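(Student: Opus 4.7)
The plan is to deduce this statement from the infinite-dimensional superposition principle developed in the references \cite{MR3721874,alc2020,Z} (which itself is a Hilbert-space extension of Ambrosio's theorem for continuity equations in finite dimensions). All the required ingredients have been collected in the previous sections, so the work reduces to checking hypotheses and invoking the abstract machinery.

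First, I would recall/verify the following three facts, which together form precisely the hypothesis block required by the abstract superposition theorem on the separable Hilbert space $X^{\sigma}$: (a) the vector field $v:\R\times X^\sigma\rightarrow X^\sigma$ is continuous and bounded on bounded sets, which is Lemma \ref{top4continuityofv}; (b) the curve $t\in I\longmapsto\tilde{\mu}_t\in \cP(X^\sigma)$ is weakly narrowly continuous with $\tilde{\mu}_t(X^\sigma)=1$ for each $t$, which is Lemma \ref{top4regularprp}; (c) the global integrability estimate
\[
\int_I\int_{X^\sigma}\|v(t,u)\|_{X^\sigma}\,d\tilde{\mu}_t(u)\,dt<+\infty,
\]
which is Lemma \ref{top4integr}. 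In addition, Proposition \ref{top4satsifyliou} ensures that the family $\{\tilde\mu_t\}_{t\in I}$ is a weakly narrowly continuous solution of the Liouville equation \eqref{top4le} tested against cylindrical smooth functions.

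Second, I would invoke the abstract superposition principle in this setting: given a Hilbert space $\rm H$, a Borel vector field $v$ as above and a weakly narrowly continuous solution $\mu_t$ of the Liouville equation satisfying the integrability condition, there exists a Borel probability measure $\eta\in\cP({\rm H}\times \cC(\overline{I},{\rm H}))$ concentrated on the set of pairs $(u_0,u(\cdot))$ such that $u(\cdot)$ is a strong solution of the initial value problem with initial datum $u_0$, and such that $\mu_t=(e_t)_\sharp\eta$ for every $t\in I$. Applying this with ${\rm H}=X^\sigma$, $\mu_t=\tilde\mu_t$ and the vector field $v$ of \eqref{top4ivp} yields both assertions (i) and (ii) at once: the concentration property is the statement $\eta(\cF_I)=1$, and the marginal identity is $\tilde{\mu}_t=(e_t)_\sharp\eta$.

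The main obstacle, which is the technical heart of the statements in \cite{MR3721874,alc2020,Z}, is the infinite-dimensional nature of $X^\sigma$: one cannot directly apply Ambrosio's finite-dimensional theorem. The standard strategy, which I would outline rather than redo, is to approximate by finite-dimensional projections $\pi_{d'}:X^\sigma\to \R^{d'}$ onto orthonormal families $(e_1,\dots,e_{d'})$; push forward $\tilde\mu_t$ and the cylindrical test functions to obtain a family of measures on $\R^{d'}$ which solves a finite-dimensional Liouville equation (this is where the Liouville equation in cylindrical form proved in Proposition \ref{top4satsifyliou} plays its role); apply the classical Ambrosio superposition theorem to produce a probabilistic representation $\eta_{d'}$ at each finite level; and finally obtain tightness of the family $\{\eta_{d'}\}$ in $\cP(X^\sigma\times\cC(\overline{I},X^\sigma))$ using the integrability estimate on $\|v\|_{X^\sigma}$ together with an Ascoli-type argument on the continuous curves, extracting a weakly convergent subsequence via Prokhorov (Theorem \ref{prokhorov}) whose limit $\eta$ satisfies both (i) and (ii). Since this construction is carried out in detail in the references, I would conclude by citing \cite[Appendix]{Z} or \cite{alc2020} and checking that our choices of $X^\sigma$, $v$ and $\tilde\mu_t$ fit exactly into that framework.
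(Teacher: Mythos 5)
Your proposal is correct and takes essentially the same approach as the paper: the paper states this proposition as a known result in the appendix and simply refers to \cite{ammari2023sure} and the appendices of \cite{Z} without reproducing a proof, while you additionally verify that the hypotheses (continuity and boundedness of $v$ from Lemma \ref{top4continuityofv}, regularity of $\tilde\mu_t$ from Lemma \ref{top4regularprp}, integrability from Lemma \ref{top4integr}, and the Liouville equation from Proposition \ref{top4satsifyliou}) are exactly those required by the abstract superposition principle, and you sketch the finite-dimensional projection strategy behind it. This is a faithful reconstruction of what the cited references do and is consistent with how the paper uses this proposition.
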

\section{Technical results about convergence}
\medskip
Finally, we prove two technical lemmas which are  useful for the study of the quantum-classical convergence in Subsection \ref{top4convv}. We denote by $\mathcal{F}$ the Fourier transform on $\R^d$.
\begin{lemma}\label{chap4.corr.convergence.ca}
Let   $(\varrho_{\hbar})_{\hbar \in (0,1)}$ be a family of density matrices on the Hilbert space $\cH$ satisfying \eqref{top4S0}-\eqref{top4S1} for $\sigma=\frac 1 2$. Assume that  for some sequence $(\hbar_\ell)_{\ell\in\N}\subset (0,1)$, $\hbar_\ell \to 0$, there exists a (unique) Borel measure $\mu\in \mathcal{P}(X^0)$ such that
$$
\mathcal{M}(\varrho_{\hbar_\ell}, \ell \in\N)=\{\mu\}.
$$
Then for  any  $\varphi\in L^2(\R^d)$, $\beta\in \mathcal{F}(L^1(\R^d))$ and $\xi\in X^0$, $j=1,\cdots,n$,
\begin{eqnarray*}
\lim_{\ell \to\infty} {\rm Tr}\Big[ \beta(\hat q_j) \,  \hat a_{\hbar_\ell}(\varphi)  \, \mathcal{W}(\xi)\,\varrho_{\hbar_\ell}\Big]&=& \int_{X^0} \beta(q_j) \langle \varphi, \alpha\rangle_{L^2(\R^d)} \,\, e^{Q(\xi,u)} \, d\mu(u)\,, \\
\lim_{\ell \to\infty} {\rm Tr}\Big[ \beta(\hat q_j) \, \hat a^*_{\hbar_\ell}(\varphi) \, \mathcal{W}(\xi)\,\varrho_{\hbar_\ell}\Big]&=& \int_{X^0} \beta(q_j) \langle \alpha, \varphi\rangle_{L^2(\R^d)} \,\, e^{Q(\xi,u)} \, d\mu(u)\,,
\end{eqnarray*}
with $u=(p,q,\alpha)\in X^0$ and $Q(\cdot,\cdot)$ is the phase given in \eqref{chap4.eq.Q.phase}.
\end{lemma}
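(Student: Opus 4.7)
The approach is to reduce the statement to the canonical ``Weyl times annihilation'' Wigner convergence via a Fourier inversion representation of $\beta$. Since $\beta\in\mathcal{F}(L^1(\R^d))$, I would write $\beta(q)=\int_{\R^d}\hat\beta(k)\,e^{2\pi i k\cdot q}\,dk$ with $\hat\beta\in L^1(\R^d)$, yielding the strongly convergent integral $\beta(\hat q_j)=\int \hat\beta(k)\,e^{2\pi i k\cdot\hat q_j}\,dk$. Each exponential $e^{2\pi i k\cdot\hat q_j}$ is a particle Weyl operator $W_1(2\pi i k_j)$, where $k_j\in\C^{dn}$ carries $2\pi k$ in the $j$-th imaginary slot and $0$ elsewhere. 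Combining the CCR \eqref{top4weylheseinbergtranslation} with the fact that $\hat a_\hbar(\varphi)$ acts only on the Fock factor and obeys $\hat a_\hbar(\varphi)W_2(\alpha_0)=W_2(\alpha_0)\bigl[\hat a_\hbar(\varphi)+\tfrac{i\hbar}{\sqrt 2}\langle \varphi,\alpha_0\rangle\bigr]$ by \eqref{top4comm4}, one obtains after a short computation
\begin{equation*}
\beta(\hat q_j)\,\hat a_\hbar(\varphi)\,\mathcal{W}(\xi)=\int_{\R^d}\hat\beta(k)\,e^{i\pi\hbar k\cdot q_{0j}}\,\mathcal{W}(\xi_k)\Bigl[\hat a_\hbar(\varphi)+\tfrac{i\hbar}{\sqrt 2}\langle \varphi,\alpha_0\rangle\Bigr]\,dk,
\end{equation*}
with $\xi_k=(z_0+2\pi i k_j,\alpha_0)$. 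A direct computation using $Q(\xi,u)=i\Im m\langle z,z_0\rangle+\sqrt 2\,i\Re e\langle\alpha_0,\alpha\rangle$ gives the key phase identity $Q(\xi_k,u)=Q(\xi,u)+2\pi i k\cdot q_j$.

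Next, I would take the trace against $\varrho_{\hbar_\ell}$, interchange the $k$-integral with the trace (Fubini), and pass to the limit in two stages. The scalar $\tfrac{i\hbar}{\sqrt 2}\langle\varphi,\alpha_0\rangle$-term contributes $O(\hbar)$ uniformly since $|{\rm Tr}[\mathcal{W}(\xi_k)\varrho_{\hbar_\ell}]|\le 1$, and the phase $e^{i\pi\hbar k\cdot q_{0j}}\to 1$. For the main term I need the pointwise-in-$k$ limit of ${\rm Tr}[\mathcal{W}(\xi_k)\hat a_{\hbar_\ell}(\varphi)\varrho_{\hbar_\ell}]$. A uniform majorant in $k$ and $\ell$ is supplied by Lemma~\ref{top4lemma1}\eqref{top4numberestimate1}, Lemma~\ref{top4lemma2} and \eqref{top4S0}:
\begin{equation*}
\bigl|{\rm Tr}[\mathcal{W}(\xi_k)\hat a_{\hbar_\ell}(\varphi)\varrho_{\hbar_\ell}]\bigr|\le \|\varphi\|_{L^2}\,{\rm Tr}[\hat N_{\hbar_\ell}\varrho_{\hbar_\ell}]^{1/2}\le C\,\|\varphi\|_{L^2}.
\end{equation*}
Hence, once the pointwise convergence is settled, dominated convergence against $|\hat\beta|\in L^1$ allows exchanging limit and $k$-integral. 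A final Fubini step together with $\int\hat\beta(k)e^{2\pi i k\cdot q_j}dk=\beta(q_j)$ reconstructs the claimed right-hand side. The second identity of the lemma, concerning $\hat a^*_\hbar(\varphi)$, follows from a verbatim repetition of the argument using \eqref{top4comm3} in place of \eqref{top4comm4}.

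The main obstacle is the pointwise-in-$k$ convergence
\begin{equation*}
{\rm Tr}[\mathcal{W}(\xi')\hat a_{\hbar_\ell}(\varphi)\varrho_{\hbar_\ell}]\xrightarrow[\ell\to\infty]{}\int_{X^0}e^{Q(\xi',u)}\,\langle\varphi,\alpha\rangle_{L^2}\,d\mu(u),\qquad \forall \xi'\in X^0,
\end{equation*}
which is \emph{not} immediate from the definition of a Wigner measure since $\hat a_\hbar(\varphi)$ is unbounded and does not vanish as $\hbar\to 0$. I would invoke the standard Wick-symbol convergence theorem from \cite{ammari2008ahp} (applicable here because \eqref{top4S0} furnishes the required $N_\hbar$-tameness), or derive it directly by complex differentiation of the Wigner identity ${\rm Tr}[\mathcal{W}(\xi'+\lambda(0,\varphi))\varrho_{\hbar_\ell}]\to\int e^{Q(\xi'+\lambda(0,\varphi),u)}d\mu(u)$ with respect to $\bar\lambda$ at $\lambda=0$, the exchange of derivative and limit being again justified by the uniform number-operator bound. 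Once this step is in hand, the remaining ingredients are routine.
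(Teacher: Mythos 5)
Your proof is correct and follows essentially the same route as the paper: both Fourier-expand $\beta$ using $\beta\in\mathcal{F}(L^1)$, reduce by dominated convergence (justified by the uniform $N_\hbar$ bound from \eqref{top4S0} and Lemma \ref{top4lemma1}) to a pointwise-in-$k$ limit of a trace of a Weyl operator against a creation/annihilation operator, and then appeal to a known Weyl--Wick convergence result. The only (cosmetic) divergences are that the paper first symmetrizes to the self-adjoint field operators $\hat\phi_\hbar,\hat\pi_\hbar$ instead of keeping $\hat a^\sharp_\hbar$ directly, and it cites \cite[Lemma B.2]{Z} for the pointwise limit where you invoke the Wick-symbol convergence of \cite{ammari2008ahp} (or a derivative of the characteristic function); either reference does the job under the tameness provided by \eqref{top4S0}.
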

\begin{proof}
The two limits are similar. By linear combinations one can use instead the fields operators
$$
\hat\phi_\hbar(\varphi)=\frac{\hat a_\hbar^*(\varphi)+\hat a_\hbar(\varphi)}{\sqrt{2}}  \qquad \text{ and } \qquad \hat\pi_\hbar(\varphi)=\frac{i \hat a_\hbar^*(\varphi)-i\hat a_\hbar(\varphi)}{\sqrt{2}}\,.
$$
So, it is enough to show
\begin{equation}\label{chap4.proof.lem1.tec.0}
\lim_{\ell \to\infty} {\rm Tr}\Big[ \beta(\hat q_j) \, \hat\phi_{\hbar_\ell }(\varphi) \, \mathcal{W}(\xi)\,\varrho_{\hbar_\ell}\Big]= \sqrt{2} \int_{X^0} \beta(q_j)  \Re e\langle \alpha, \varphi\rangle_{L^2(\R^d)} \,\, e^{Q(\xi,u)} \, d\mu(u)\,.
\end{equation}
Our goal is to prove \eqref{chap4.proof.lem1.tec.0}. Since $\beta=\mathcal{F}(g)$ for some $g\in L^1(\R^d)$, one can write
\begin{equation}\label{chap4.proof.lem1.tec.1}
{\rm Tr}\Big[ \beta(\hat q_j) \, \hat\phi_{\hbar_\ell}(\varphi) \, \mathcal{W}(\xi)\,\varrho_{\hbar_\ell}\Big]=
\int_{\R^d} g(y) \,  
{\rm Tr}\Big[ e^{-2\pi i y\cdot \hat q_j} \, \hat\phi_{\hbar_\ell}(\varphi) \, \mathcal{W}(\xi)\,\varrho_{\hbar_\ell}\Big] \, dy. 
\end{equation}
Furthermore, dominated convergence applies to the right hand side of  \eqref{chap4.proof.lem1.tec.1} thanks to the assumptions \eqref{top4S0}-\eqref{top4S1} and the estimates in Lemma \ref{top4lemma1}. Thus, the limit
\eqref{chap4.proof.lem1.tec.0} reduces to
\begin{equation}\label{chap4.proof.lem1.tec.2}
\lim_{\ell \to\infty} {\rm Tr}\Big[ e^{-2\pi i y\cdot \hat q_j} \, \hat\phi_{\hbar_\ell}(\varphi) \, \mathcal{W}(\xi)\,\varrho_{\hbar_\ell}\Big]= \sqrt{2} \int_{X^0} e^{-2\pi i y\cdot  q_j} \,  \Re e\langle \alpha, \varphi\rangle_{L^2(\R^d)} \,\, e^{Q(\xi,u)} \, d\mu(u)\,.
\end{equation}
Now, applying \cite[Lemma B.2]{Z}, we obtain \eqref{chap4.proof.lem1.tec.2} for all $y\in\R^d$ since 
$$
e^{-2\pi i y\cdot \hat q_j}  \mathcal{W}(\xi)= W_1(-2\pi y,0) \,\mathcal{W}(\xi)=\mathcal{W}(-2\pi y,0,0) \,\mathcal{W}(\xi)= e^{i\hbar_{\ell} \pi y\cdot p_0}\mathcal{W}(\tilde\xi)\,,
$$
with $\tilde\xi=(-2\pi y,0,0)+\xi$ and $\xi=(p_0,q_0,\alpha)$. Recall that the Weyl-Heisenberg operator $W_1(\cdot)$ is given in \eqref{top4weylheseinbergtranslation} while $\mathcal{W}(\cdot)$ is defined by \eqref{chap4.eq.weyl.fock}-\eqref{chap4.eq.wcal}. 
\end{proof}

\begin{lemma}\label{chap4.corr.convergence.eixk.ca}
Let   $(\varrho_{\hbar})_{\hbar \in (0,1)}$ be a family of density matrices on the Hilbert space $\cH$ satisfying \eqref{top4S0}-\eqref{top4S1} for $\sigma=\frac 1 2$. Assume that  for some sequence $(\hbar_\ell )_{\ell \in\N}\subset (0,1)$, $\hbar_\ell \to 0$, there exists a (unique) Borel measure $\mu\in \mathcal{P}(X^0)$ such that
$$
\mathcal{M}(\varrho_{\hbar_\ell}, \ell\in\N)=\{\mu\}.
$$
Then for  any  $\varphi\in L^2(\R_k^d), \xi\in X^0$ and $j=1,\cdots,n$,
\begin{eqnarray*}
\lim_{\ell \to\infty} {\rm Tr}\Big[ \hat a_{\hbar_\ell }(e^{-2\pi i \,k\cdot \hat q_j} \varphi) \, \mathcal{W}(\xi)\,\varrho_{\hbar_\ell}\Big]&=& \int_{X^0}  \langle e^{-2\pi i\,k \cdot q_j} \varphi, \alpha\rangle_{L^2(\R_k^d)} \, \, e^{Q(\xi,u)} \, d\mu(u)\,, \label{chap4.eq.a.cv.tec.1}\\
\lim_{\ell \to\infty} {\rm Tr}\Big[ \hat a^*_{\hbar_\ell}(e^{-2\pi i\, k \cdot \hat q_j}\varphi))\,  \mathcal{W}(\xi)\,\varrho_{\hbar_\ell }\Big]&=& \int_{X^0} \langle \alpha, e^{-2\pi i\,k\cdot q_j} \varphi\rangle_{L^2(\R_k^d)} \,\, e^{Q(\xi,u)} \, d\mu(u)\,, \label{chap4.eq.a.cv.tec.2}
\end{eqnarray*}
with $u=(p,q,\alpha)\in X^0$ and $Q(\cdot,\cdot)$ is the phase given in \eqref{chap4.eq.Q.phase}.
\end{lemma}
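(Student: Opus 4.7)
The two convergence statements are symmetric; we focus on the annihilation case, the creation one following by analogous arguments (or, as in the proof of Lemma~\ref{chap4.corr.convergence.ca}, by working with the self-adjoint field combinations $\hat{\phi}_\hbar(\cdot),\hat{\pi}_\hbar(\cdot)$). Throughout the plan, we view $F_\varphi(k):=\varphi(k)\,e^{-2\pi i k\cdot\hat q_j}$ as an operator in $\cL(L^2(\R^{dn}),L^2(\R^{dn})\otimes L^2(\R^d_k))$.

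\textbf{Density reduction in $\varphi$.} Pointwise unitarity of $e^{-2\pi i k\cdot\hat q_j}$ gives $\|F_\varphi\|_\cL=\|\varphi\|_{L^2}$. Combining Lemma~\ref{top4lemma1}(i), Lemma~\ref{top4weylheisenberg} and the bound ${\rm Tr}[\varrho_\hbar\,\hat N_\hbar]\leq C$ (a consequence of \eqref{top4S0} with $\sigma=\tfrac12$), we deduce the uniform estimate
\[
|{\rm Tr}[\hat a_\hbar(F_\varphi)\,\cW(\xi)\,\varrho_\hbar]|\;\leq\;C_\xi\,\|\varphi\|_{L^2},\qquad\hbar\in(0,1).
\]
A matching $L^2$-bound for the right-hand side of the claimed identity follows from Cauchy--Schwarz together with Proposition~\ref{top4existwigner}. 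Hence it suffices to prove the convergence for $\varphi$ in a dense subclass of $L^2(\R^d_k)$, e.g.\ $\varphi\in\cC_c^\infty(\R^d)$.

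\textbf{Discretization of $e^{-2\pi i k\cdot\hat q_j}$ in $k$.} For $\varphi\in\cC_c^\infty(\R^d)$ compactly supported, partition the support into cubes $\{Q_i\}$ of side $\epsilon$ with centers $k_i$, and set
\[
F_\varphi^{(\epsilon)}(k):=\sum_i\chi_{Q_i}(k)\,\varphi(k)\,e^{-2\pi i k_i\cdot\hat q_j}.
\]
The pointwise inequality $|e^{-2\pi i k\cdot x}-e^{-2\pi i k_i\cdot x}|\leq 2\pi\epsilon|x|$ for $k\in Q_i$ produces, at the operator level, the relative bound $\|(F_\varphi-F_\varphi^{(\epsilon)})(1+|\hat q_j|)^{-1}\|_\cL\leq 2\pi\epsilon\|\varphi\|_{L^2}$. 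Combined with Lemma~\ref{top4lemma1}, Lemma~\ref{top4weylheisenberg} and the uniform bounds \eqref{top4S0}--\eqref{top4S1} (propagated in time by Lemma~\ref{top4uniformestimate}), this gives
\[
\sup_{\ell}\bigl|{\rm Tr}[(\hat a_{\hbar_\ell}(F_\varphi)-\hat a_{\hbar_\ell}(F_\varphi^{(\epsilon)}))\,\cW(\xi)\,\varrho_{\hbar_\ell}]\bigr|\underset{\epsilon\to 0}{\longrightarrow}0.
\]
A direct computation from the definition of the generalized annihilation operator yields $\hat a_\hbar(\beta(\hat q_j)\,\phi)=\overline{\beta(\hat q_j)}\,\hat a_\hbar(\phi)$, so that
\[
\hat a_\hbar(F_\varphi^{(\epsilon)})=\sum_i e^{2\pi i k_i\cdot\hat q_j}\,\hat a_\hbar(\chi_{Q_i}\varphi).
\]

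\textbf{Invocation of Lemma~\ref{chap4.corr.convergence.ca} and passage to the limit.} Each summand has exactly the structure treated in Lemma~\ref{chap4.corr.convergence.ca}, except that $\beta(q_j)=e^{2\pi i k_i\cdot q_j}$ does not lie in $\cF(L^1(\R^d))$. We remedy this via a Schwartz cutoff: choose $\chi_R\in\cF(L^1)$ with $\chi_R\nearrow 1$, so that $\chi_R(q_j)\,e^{2\pi i k_i\cdot q_j}\in\cF(L^1)$ for every $R>0$; the substitution error is uniformly controlled by ${\rm Tr}[\varrho_\hbar(1-\chi_R(\hat q_j)^2)]\lesssim R^{-2}\,{\rm Tr}[\varrho_\hbar\,\hat q_j^2]\leq C\,R^{-2}$, uniformly in $\hbar$ thanks to \eqref{top4S1}. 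Applying Lemma~\ref{chap4.corr.convergence.ca} summand by summand, then letting $R\to\infty$ and $\epsilon\to 0$ while invoking dominated convergence on the $\mu$-side (justified by $\|\alpha\|_{L^2}\in L^2(X^0,d\mu)$ thanks to \eqref{top4integrabilityformulasfixedtimes}), we recover the claimed limit.

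\textbf{Main obstacle.} The principal technical difficulty is orchestrating the three successive limits $\hbar_\ell\to 0$, $\epsilon\to 0$ and $R\to\infty$ with controls uniform in $\hbar$. This is enabled by the uniform quadratic bound on $\hat q_j$ coming from \eqref{top4S1} together with the $L^2$-boundedness of the map $\varphi\mapsto\hat a_\hbar(F_\varphi)(\hat N_\hbar+1)^{-1/2}$, both uniform in $\hbar$; the crucial preliminary step---passing from the plane waves $e^{2\pi i k_i\cdot q_j}$ (which lie outside $\cF(L^1)$) to functions in $\cF(L^1)$ via the Schwartz cutoff---is the most delicate ingredient of the argument.
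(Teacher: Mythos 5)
Your proposal is correct in its essential architecture and reaches the same conclusion, but the route you take differs from the paper's in a structurally meaningful way, so it is worth comparing. The paper expands $e^{-2\pi i k\cdot\hat q_j}\varphi$ onto an orthonormal basis $\{e_m\}$ of $L^2(\R^d_k)$, writes the truncated sum as $\sum_{m\le R}\hat a^*_\hbar(e_m)\,\mathcal F[\varphi\bar e_m](\hat q_j)$, and applies Lemma \ref{chap4.corr.convergence.ca} term by term with $\beta_m=\mathcal F[\varphi\bar e_m]\in\mathcal F(L^1)$ — no extra cutoff is needed since $\varphi\bar e_m\in L^1$ automatically. The price is that the uniform-in-$\hbar$ control of the tail (their term $(1)$) relies on a Dini-plus-localization argument for the locally uniform decay of $\sum_{m>R}|\langle e_m,e^{-2\pi ik\cdot x}\varphi\rangle|^2$. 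You instead reduce to $\varphi\in\cC_c^\infty$, discretize the $k$-dependence of the phase by a Riemann sum over cubes, and apply Lemma \ref{chap4.corr.convergence.ca} to the resulting separable pieces $\beta_i(\hat q_j)\hat a_\hbar(\chi_{Q_i}\varphi)$; the tail control is then elementary (just $|e^{iy}-1|\lesssim|y|$). The cost is an extra limit in a cutoff parameter $R$ to force $\beta_i\in\mathcal F(L^1)$, and a preliminary density reduction. Both routes are valid; yours trades the paper's Dini argument for an extra limit and a density step.

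A few small imprecisions worth noting. First, your invocation of Lemma~\ref{top4uniformestimate} is not needed here — the statement concerns $\varrho_\hbar$ at time zero, so no time-propagation of \eqref{top4S0}--\eqref{top4S1} is involved. Second, your cutoff error should not be bounded literally by ${\rm Tr}[\varrho_\hbar(1-\chi_R(\hat q_j)^2)]$; the correct estimate goes through a Cauchy--Schwarz using the spectral decomposition $\varrho_\hbar=\sum\lambda_m|\psi_m\rangle\langle\psi_m|$, which gives
\[
\bigl|{\rm Tr}\big[(1-\chi_R(\hat q_j))\,\hat a_\hbar(F_\varphi^{(\epsilon)})\,\cW(\xi)\,\varrho_\hbar\big]\bigr|
\lesssim\|\varphi\|_{L^2}\,\bigl({\rm Tr}[(1-\chi_R(\hat q_j))^2\varrho_\hbar]\bigr)^{1/2}\bigl({\rm Tr}[(\hat N_\hbar+1)\varrho_\hbar]\bigr)^{1/2},
\]
so both \eqref{top4S0} and \eqref{top4S1} are needed, not just the $\hat q^2$-bound; together they give the claimed $O(R^{-1})$ decay uniformly in $\hbar$ and $\epsilon$, which is what you need. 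Finally, the step you flag as ``most delicate'' can actually be avoided entirely: the pure plane-wave case $\beta(\hat q_j)=e^{2\pi i k_i\cdot\hat q_j}$ is precisely the base case that the proof of Lemma~\ref{chap4.corr.convergence.ca} reduces to via \cite[Lemma B.2]{Z}, since $e^{2\pi ik_i\cdot\hat q_j}\cW(\xi)$ is, up to an $\hbar$-dependent phase, another Weyl operator $\cW(\tilde\xi)$. Invoking that step directly removes the need for the $\chi_R$ cutoff and the associated third limit.
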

\begin{proof}
According to Definition \ref{top4definitionwigner}  of Wigner measures  and   \cite[Theorem 6.2 and Proposition 6.4]{ammari2008ahp}, we deduce that
$$
\mathcal{M}(\mathcal{W}(\xi)\varrho_{\hbar_\ell}, \ell \in\N)=\{e^{Q(\xi, \cdot)} \mu\}\,.
$$
Here, we have used the extension of the notion of Wigner measures to trace-class operators which are not necessary non-negative nor trace normalized (see \cite[Proposition 6.4]{ammari2008ahp}).
Let $\{e_m\}_{m\in\N}$ be O.N.B of the Hilbert space $L^2(\R_k^d)$.   The two limits are similar (almost conjugate) and it is enough to explain the argument for  the second one.   We denote $\varrho_{\hbar_\ell }(\xi):=\mathcal{W}(\xi)\varrho_{\hbar_\ell}$ and
$b(\hat q_j)=e^{-2\pi i k\cdot \hat q_j} \varphi$.  We have
\[\begin{aligned}
&\Big \vert {\rm Tr}\Big[ \hat a^*_{\hbar_{\ell}}(b(\hat q_j)) \ {\varrho}_{\hbar_{\ell}}(\xi)\Big]- \int_{X^0}   e^{Q(\xi,u)}  \ \langle   \alpha, b(q_j) \rangle_{L^2} \, d{\mu}(u)\Big \vert
\\&\leq \underbrace{ \Big \vert {\rm Tr}\Big[\Big( \hat a^*_{\hbar_{\ell}}( b(\hat q_j))- \sum_{m=1}^R
\hat a^*_{\hbar_{\ell}} ( e_m )  \mathcal{F}[\varphi \bar e_m](\hat q_j) \Big)
\  \ {\varrho}_{\hbar_{\ell}}(\xi)\Big]\Big \vert }_{(1)}
\\ & +\sum_{m=1}^R \underbrace{\Big\vert  {\rm Tr}\Big[\hat a^*_{\hbar_{\ell}} ( e_m )  \mathcal{F}[\varphi \bar e_m](\hat q_j)
\  \ {\varrho}_{\hbar_{\ell}}(\xi)\Big]
-\int_{X^0}   e^{Q(\xi,u)}   \langle\alpha, e_m \rangle_{L^2} \mathcal{F}[\varphi \bar e_m](q_j) \,
 \, d{\mu}(u)\Big \vert}_{(2)} ,
\\ & +\underbrace{\Big\vert \int_{X^0}   e^{Q(\xi,u)}  \big(\sum_{m=1}^R \langle\alpha, e_m \rangle_{L^2} \mathcal{F}[\varphi \bar e_m](q_j) \,
- \langle\alpha, b(q_j) \rangle_{L^2} \Big) \, d{\mu}(u)\Big \vert}_{(3)} .
\end{aligned}\]
Using estimates as in Lemma \ref{top4lemma1} and assumptions \eqref{top4S0}-\eqref{top4S1}, one proves
\[\begin{aligned}
(1)^2 & \lesssim   \Big\Vert \langle x\rangle^{-1} \Big (e^{-2\pi i k\cdot x } \varphi-\sum_{m=1}^R   e_m  \langle e_m, e^{-2\pi i k\cdot x }  \varphi  \rangle_{L^2(\R^d_k)} \Big) \Big\Vert^2_{L^\infty(\R^d_x, L^2(\R^d_k))}
\\ & \lesssim \sup_{x\in\R^d} \sum_{m=R+1}^{\infty}   \langle x\rangle^{-1} \Big \vert \langle e_m, e^{-2\pi i k\cdot x }  \varphi  \rangle_{L^2(\R^d_k)} \Big \vert ^2.
\end{aligned}\]
So, thanks to a further localization argument in the variable $x$ combined to Dini's theorem, one concludes that $(1)$ converges to zero uniformly in $\hbar\in(0,1)$ as $R\to \infty$. Similarly, using the pointwise convergence for any $x\in\R^d$,
\[\begin{aligned}
\sum_{m=1}^R \langle\alpha, e_m \rangle_{L^2} \mathcal{F}[\varphi \bar e_m](x) \,
- \langle\alpha, e^{-2\pi i k\cdot x} \varphi \rangle_{L^2}  \underset{R\to\infty}{\rightarrow} 0,
\end{aligned}\]
and dominated convergence, one concludes that $(3)$ converges to zero as $R\to\infty$. Now, applying Lemma  \ref{chap4.corr.convergence.ca} with $ \beta:=\mathcal{F}[\varphi \bar e_m]\in \mathcal{F}(L^1(\R^d))$, we obtain that $(2)$ converges also to  zero for any fixed $R\in\N$ as $\hbar_\ell \to 0$. Hence, using an $\varepsilon/3$-argument we prove the claimed statement.
\end{proof}


\bibliographystyle{plain}
\bibliography{./biblio}

\end{document}